\newtheorem{Def}{Definition}[section]
\newtheorem{Thm}{Theorem}[section]
\newtheorem{Lem}{Lemma}[section]
\newtheorem{Cor}{Corollary}[section]
\DeclareMathOperator*{\argmin}{arg\,min}
\newcommand{\PolyLog}{\operatorname{polylog}}
\newcommand{\Lap}{\operatorname{Lap}}
\newcommand{\BigO}{\mathcal O}
\newcommand{\sign}{\operatorname{sign}}
\newcommand{\Gap}{\operatorname{Gap}}
\newcommand{\LSE}{\operatorname{LSE}}
\newcommand{\TV}{\operatorname{TV}}
\newcommand{\A}{\mathcal A}
\newcommand{\D}{\mathcal D}
\newcommand{\E}{\mathbb E}
\newcommand{\Err}{\mathcal E}
\newcommand{\F}{\mathcal F}
\newcommand{\LogL}{\mathcal L}
\newcommand{\N}{\mathcal N}
\newcommand{\R}{\mathbb R}
\newcommand{\T}{\mathcal T}
\let\svsqrt\sqrt
\newsavebox\Nsqrt
\def\sr#1{\ThisStyle{%
		\savebox\Nsqrt{\scalebox{.5}[1]{$\SavedStyle\svsqrt{\phantom{\cramped{#1#1}}}$}}%
		\ooalign{\usebox{\Nsqrt}\cr\kern.2pt\usebox{\Nsqrt}\cr\hfil$\SavedStyle\cramped{#1}$}}}
\def\*#1{\mathbf{#1}}
\def\!{\vvvert}
\icmltitlerunning{Learning-Augmented Private Algorithms for Multiple Quantile Release}
\begin{document}

\twocolumn[
\icmltitle{Learning-Augmented Private Algorithms for Multiple Quantile Release}




\begin{icmlauthorlist}
\icmlauthor{Mikhail Khodak}{cmu}
\icmlauthor{Kareem Amin}{goog}
\icmlauthor{Travis Dick}{goog}
\icmlauthor{Sergei Vassilvitskii}{goog}
\end{icmlauthorlist}

\icmlaffiliation{cmu}{Carnegie Mellon University; work done in part as an intern at Google Research - New York.}
\icmlaffiliation{goog}{Google Research - New York}
\icmlcorrespondingauthor{Mikhail Khodak}{khodak@cmu.edu}

\icmlkeywords{Machine Learning, ICML}

\vskip 0.3in
]



\printAffiliationsAndNotice{}  

\begin{abstract}
	When applying differential privacy to sensitive data, we can often improve performance using external information such as other sensitive data, public data, or human priors. 
	We propose to use the learning-augmented algorithms (or algorithms with predictions) framework---previously applied largely to improve time complexity or competitive ratios---as a powerful way of designing and analyzing privacy-preserving methods that can take advantage of such external information to improve utility.
	This idea is instantiated on the important task of multiple quantile release, for which we derive error guarantees that scale with a natural measure of prediction quality while (almost) recovering state-of-the-art prediction-independent guarantees.
	Our analysis enjoys several advantages, including minimal assumptions about the data, a natural way of adding robustness, and the provision of useful surrogate losses for two novel ``meta" algorithms that learn predictions from other (potentially sensitive) data. 
	We conclude with experiments on challenging tasks demonstrating that learning predictions across one or more instances can lead to large error reductions while preserving privacy.\looseness-1
\end{abstract}

\vspace{-2.5mm}
\section{Introduction}
\vspace{-1mm}

The differentially private (DP) release of statistics such as the quantile $q$ of a private dataset $\*x\in\R^n$ is an inevitably error-prone task because we are by definition precluded from revealing exact information about the instance at hand~\citep{dwork2014dp}.
However, DP instances rarely occur in a vacuum:
even in the simplest practical settings, we usually know basic information such as the fact that all individuals have a nonnegative age.
Often, the dataset we are considering is drawn from a similar population as a public dataset $\*z\in\R^N$ and should thus have similar quantiles, a case known as the {\em public-private} setting~\citep{liu2021leveraging,bie2022private}.
Alternatively, in what we call {\em sequential release}, we aim to release the quantiles of each of a sequence of datasets $\*x_1,\dots,\*x_T$ one-by-one.
These could be generated by a stationary or other process that allows information derived from prior releases to inform predictions of future releases.
In all of these settings, we might hope to incorporate external information to reduce error, but approaches for doing so tend to be {\em ad hoc} and assumption-heavy.\looseness-1

We propose that the framework of {\em learning-augmented algorithms}---a.k.a. {\em algorithms with predictions}~\citep{mitzenmacher2021awp}---provides the right tools for deriving DP algorithms in this setting, and instantiate this idea for multiple quantile release~\citep{gillenwater2021differentially,kaplan2022quantiles}.
Algorithms with predictions is an expanding field of algorithm design that constructs methods whose instance-dependent performance improves with the accuracy of some prediction about the instance.
The goal is to bound the cost $C_{\*x}(\*w)$ of running on instance $\*x$ given a prediction $\*w$ by some metric $U_{\*x}(\*w)$ of the {\em quality} of the prediction on that instance.
Motivated by practical success~\citep{liu2012renewable,kraska2018case} and as a type of beyond-worst-case analysis~\citep{roughgarden2020beyond}, such algorithms can target a wide variety of cost measures, e.g. competitive ratios in online algorithms~\citep{anand2020customizing,bamas2020primal,diakonilakis2021learning,dutting2021secretary,indyk2020online,yu2022competitive,christianson2023optimal,jiang2020online,kumar2018improving, lykouris2021competitive,rohatgi2020nearoptimal}, space complexity in streaming algorithms~\citep{du2021putting}, and time complexity in graph algorithms~\citep{dinitz2021duals,chen2022faster,sakaue2022dca} and distributed systems~\citep{lattanzi2020scheduling,lindermayr2022permutation,scully2022uniform}.
Departing from such work, we instead aim to design learning-augmented algorithms whose cost $C_{\*x}(\*w)$ captures the error of some statistic---in our case quantiles---computed privately on instance an $\*x$ given a prediction $\*w$. 
We are interested in bounding this cost in terms of the quality of the external information provided to our algorithm, $U_{\*x}(\*w)$.\looseness-1

While incorporating external information into DP is well-studied, c.f. public-private methods~\cite{bie2022private,liu2021leveraging} and private posterior inference~\cite{dimitrakakis2017differential,geumlek2017renyi,seeman2020private}, 
by deriving and analyzing a learning-augmented algorithm for multiple quantiles we show numerous comparative advantages, including:
\ifdefined\arxiv
\begin{enumerate}[itemsep=1pt]
\else
\begin{enumerate}[leftmargin=*,topsep=-2pt,noitemsep]\setlength\itemsep{1pt}
\fi
	\item Minimal assumptions about the data, in our case even fewer than needed by the unaugmented baseline.
	\item Existing tools for studying the robustness of algorithms to noisy predictions~\citep{lykouris2021competitive}.
	\item Co-designing algorithms with predictions together with methods for {\em learning} those predictions from data~\citep{khodak2022awp}, which we show is crucial for both the public-private and sequential release settings.
\end{enumerate}

As part of this analysis we derive a learning-augmented extension of the \texttt{ApproximateQuantiles} (AQ) method of \citet{kaplan2022quantiles} that (nearly) matches its worst-case guarantees while being much better if a natural measure $U_{\*x}(\*w)$ of prediction quality is small.
By studying $U_{\*x}$, we make the following contributions to multiple quantiles:\looseness-1
\ifdefined\arxiv
\begin{enumerate}[itemsep=1pt]
\else
\begin{enumerate}[leftmargin=*,topsep=-2pt,noitemsep]\setlength\itemsep{1pt}
\fi
	\item The first robust algorithm, even for one quantile, that avoids assuming the data is bounded on some interval, specifically by using a heavy-tailed prior.
	\item A provable way of ensuring robustness to poor priors, without losing the consistency of good ones.
	\item A novel connection between DP quantiles and censored regression that leads to (a)~a public-private release algorithm and (b)~a sequential release scheme, both with runtime and error guarantees.
\end{enumerate}
Finally, we integrate these techniques to significantly improve the accuracy of public-private and sequential quantile release on several real and synthetic datasets.

\vspace{-2mm}
\section{Related work}
\vspace{-1mm}

There has been significant work on incorporating external information to improve DP methods.
A major line of work is the public-private framework, where we have access to public data that is related in some way to the private data~\cite{liu2021leveraging,amid2022public,li2022private,bie2022private,bassily2022private}.
The use of public data can be viewed as using a prediction, but such work starts by making (often strong) distributional assumptions on the public and private data;
we instead derive instance-dependent upper bounds with minimal assumptions that we then apply to such public-private settings.
Furthermore, our framework allows us to ensure robustness to poor predictions without distributional assumptions, and to derive learning algorithms using training data that may itself be sensitive.
Another approach is to treat DP mechanisms (e.g. the exponential) as Bayesian posterior sampling~\cite{dimitrakakis2017differential,geumlek2017renyi,seeman2020private}. 
Our work can be viewed as an adaptation where we give explicit prior-dependent utility bounds. To our knowledge, no such guarantees exist in the literature. Moreover, while our focus is quantile estimation, the predictions-based framework that we advocate is much broader, as many DP methods---including for multiple quantiles---combine multiple queries that must be considered jointly.\looseness-1

Our approach for augmenting DP with external information centers the algorithms with predictions framework, where past work has focused on using predictions to improve metrics related to time, space, and communication complexity. 
We make use of existing techniques from this literature, including robustness-consistency tradeoffs~\cite{lykouris2021competitive} and the online learning of predictions~\cite{khodak2022awp}.
Tuning DP algorithms has been an important topic in private machine learning, e.g. for hyperparameter tuning~\cite{chaudhari2013stability} and federated learning~\cite{andrew2021differentially}, but these have not to our knowledge considered incorporating per-instance predictions.\looseness-1

The specific task we focus on is DP quantiles, a well-studied problem~\citep{gillenwater2021differentially,kaplan2022quantiles}, but we are not aware of work adding outside information.
We also make the important contribution of an effective method for removing data-boundedness assumptions.
Our algorithm builds upon the state-of-the-art work of~\citet{kaplan2022quantiles}, which is also our main source for empirical comparison.

\vspace{-2mm}
\section{Augmenting a private algorithm}\label{sec:algorithm}
\vspace{-1mm}

The basic requirement for a learning-augmented algorithm is that the cost $C_{\*x}(\*w)$ of running it on an instance $\*x$ with prediction $\*w$ should be upper bounded---usually up to constant or logarithmic factors---by a metric $U_{\*x}(\*w)$ of the quality of the prediction on the instance.
We denote this by $C_{\*x}\lesssim U_{\*x}$.
In our work the cost $C_{\*x}(\*w)$ will be the error of a privately released statistic, as compared to some ground truth.
We will use the following privacy notion:\looseness-1
\begin{Def}
	[\citet{dwork2014dp}]
	Algorithm $\A$ is {\bf $(\varepsilon,\delta)$-differentially private} if for all subsets $S$ of its range, $\Pr\{\A(\*x)\in S\}\le e^\varepsilon\Pr\{\A(\*{\tilde x})\in S\}+\delta$ whenever $\*x\sim\*{\tilde x}$ are {\bf neighboring}, i.e. they differ in at most one element.\looseness-1
\end{Def}
Using $\varepsilon$-DP to denote $(\varepsilon,0)$-DP, the broad goal of this work will be to reduce the error $C_{\*x}(\*w)$ of  $\varepsilon$-DP multiple quantile release while fixing the privacy level $\varepsilon$.

\vspace{-1mm}
\subsection{Problem formulation}\label{sec:formulation}
\vspace{-1mm}

A good guarantee for a learning-augmented algorithm will have several important properties that formally separate its performance from naive upper bounds $U_{\*x}\gtrsim C_{\*x}$.
The first, {\em consistency}, requires it to be a reasonable indicator of strong performance in the limit of perfect prediction:
\begin{Def}\label{def:consistency}
	A learning-augmented guarantee $C_{\*x}\lesssim U_{\*x}$ is {\bf $c_{\*x}$-consistent} if $C_{\*x}(\*w)\le c_{\*x}$ whenever $U_{\*x}(\*w)=0$.\looseness-1
\end{Def}
Here $c_{\*x}$ is a prediction-independent quantity that should depend weakly or not at all on problem difficulty (in the case of quantiles, the minimum separation between data points).
Consistency is often presented via a tradeoff with {\em robustness}~\citep{lykouris2021competitive}, which bounds how poorly the method can do when the prediction is bad, in a manner similar to a standard worst-case bound:
\begin{Def}\label{def:robustness}
	A learning-augmented guarantee $C_{\*x}\lesssim U_{\*x}$ is {\bf $r_{\*x}$-robust} if it implies $C_{\*x}(\*w)\le r_{\*x}$ for all predictions $\*w$.
\end{Def}
Unlike consistency, robustness usually depends strongly on the difficulty of the instance $\*x$, with the goal being to not do much worse than a prediction-free approach.
Note that the latter is trivially robust but not (meaningfully) consistent, since it ignores the prediction;
this makes clear the need for considering the two properties via tradeoff between them.\looseness-1

As discussed further in Section~\ref{sec:robustness}, this existing language for quantifying robustness is one of the advantages of using the framework of learning-augmented algorithms for incorporating external information into DP methods.
We report robustness-consistency trade-offs for our quantile release algorithms in the same section.\looseness-1

A last desirable property of the prediction quality measure $U_{\*x}(\*w)$ is that it should be useful for making good predictions.
One way to formalize this is to require $U_{\*x_t}$ to be {\em learnable} from multiple instances $\*x_t$.
For example, we could ask for {\em online} learnability, i.e. the existence of an algorithm that makes predictions $\*w_t\in W$ in some action space $W$ given instances $\*x_1,\dots,\*x_{t-1}$ whose {\em regret} is sublinear in $T$:\looseness-1
\begin{Def}
	The {\bf regret} of actions $\*w_1,\dots,\*w_T\in W$ on the sequence of functions $U_{\*x_1},\dots,U_{\*x_T}$ is $\max_{\*w\in W}\sum_{t=1}^TU_{\*x_t}(\*w_t)-U_{\*x_t}(\*w)$.
\end{Def}
Sublinear regret implies average prediction quality as good as that of the optimal prediction in hindsight, up to an additive term that vanishes as $T\to\infty$.
Since $U_{\*x_t}$ roughly upper-bounds the error $C_{\*x_t}$, this means that asymptotically the average error is governed by the average prediction quality $\min_{\*w\in W}\frac1T\sum_{t=1}^TU_{\*x_t}(\*w)$ of the optimal $\*w\in W$.
A crucial observation here is that sublinear regret can often be obtained by making the function $U_{\*x}$ amenable to familiar gradient-based online convex optimization methods such as online gradient descent~\citep{khodak2022awp}.
Doing so also enables instance-dependent linear prediction: 
setting $\*w_t$ using a learned function of some instance features $\*f_t$.\looseness-1

We demonstrate the usefulness of both learning and robustness-consistency analysis in two applications where it is reasonable to have external information about the sensitive dataset(s).
In the {\bf public-private} setting, the prediction $\*w$ is obtained from a public dataset $\*x'$ that is assumed to be similar to $\*x$ but is not subject to privacy-protection.
In {\bf sequential release}, we privately release information about each dataset in a sequence $\*x_1,\dots,\*x_T$;
the release at time $t$ can depend on $\*x_t$ and on a prediction $\*w_t$, which can be derived (privately) from past observations.
In Section~\ref{sec:applications} we show that sequential release can be posed directly as a private online learning problem, while the public-private setting can be approached via online-to-batch conversion~\citep{cesa-bianchi2004online2batch}.
Both are thus directly enabled by treating the prediction quality measures $U_{\*x_t}$ as surrogate objectives for the actual cost functions $C_{\*x}$ and applying standard optimization techniques~\citep{khodak2022awp}.\looseness-1

With these desiderata of algorithms with predictions guarantees in-mind, we now move to deriving them for quantile release. 
The robustness and learnability of the resulting prediction quality measures $U_{\*x}$ are discussed in Section~\ref{sec:advantages}.\looseness-1

\vspace{-1mm}
\subsection{Warm-up: Releasing one quantile}\label{sec:single}
\vspace{-1mm}

Given a quantile $q\in(0,1)$ and a sorted dataset $\*x\in\R^n$ of $n$ distinct points, we want to release $o\in[\*x_{[\lfloor qn\rfloor]},\*x_{[\lfloor qn\rfloor+1]})$, i.e. such that the proportion of entries less than $o$ is $q$.
As in prior work~\cite{kaplan2022quantiles}, the error of $o$ will be the number of points between it and the desired interval:
\begin{equation}\label{eq:gap}
\Gap_q(\*x,o)
=||\{i:\*x_{[i]}<o\}|-\lfloor qn\rfloor|
=|\max_{\*x_{[i]}<o}i-\lfloor qn\rfloor|
\end{equation}
$\Gap_q(\*x,o)$ is constant on intervals $I_k=(\*x_{[k]},\*x_{[k+1]}]$ in the partition by $\*x$ of $\R$ (let $I_0=(-\infty,\*x_{[1]}]$ and $I_n=(\*x_{[n]},\infty)$), so we also say that $\Gap_q(\*x,I_k)$ is the same as $\Gap_q(\*x,o)$ for some $o$ in the interior of $I_k$.\looseness-1

For single quantile release we choose perhaps the most natural way of specifying a prediction for a DP algorithm:
via the base measure $\mu:\R\mapsto\R_{\ge0}$ of the exponential mechanism:\looseness-1
\begin{Thm}[\citet{mcsherry2007mechanism}]
	If the {\bf utility} $u(\*x,o)$ of an outcome $o$ of a query over dataset $\*x$ has {\bf sensitivity} $\max_{o,\*x\sim\*{\tilde x}}|u(\*x,o)-u(\*{\tilde x},o)|\le\Delta$ then the {\bf exponential mechanism}, which releases $o$ w.p. $\propto\exp(\frac\varepsilon{2\Delta}u(\*x,o))\mu(o)$ for some base measure $\mu$, is $\varepsilon$-DP.\looseness-1
\end{Thm}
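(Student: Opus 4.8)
The plan is to run the classical density-ratio argument for the exponential mechanism, taking care only that $\mu$ here is an arbitrary base measure rather than a probability distribution. Set $Z_{\*x}=\int\exp(\frac\varepsilon{2\Delta}u(\*x,o))\mu(o)\,do$, so that $\A(\*x)$ releases $o$ with density $p_{\*x}(o)=\exp(\frac\varepsilon{2\Delta}u(\*x,o))\mu(o)/Z_{\*x}$; implicitly, the mechanism being well-defined means $0<Z_{\*x}<\infty$, which I would record as a standing assumption (it holds, e.g., when $u$ is bounded and $\mu$ is finite, but is in any case needed just to make sense of the phrase ``releases $o$ w.p.\ $\propto\cdots$'').

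First I would bound the pointwise density ratio at an arbitrary fixed outcome $o$ for neighboring $\*x\sim\*{\tilde x}$. The numerator contributes a factor $\exp\big(\frac\varepsilon{2\Delta}(u(\*x,o)-u(\*{\tilde x},o))\big)\le\exp(\varepsilon/2)$ by the sensitivity hypothesis $|u(\*x,o)-u(\*{\tilde x},o)|\le\Delta$. For the normalizers, that same hypothesis gives $u(\*{\tilde x},o')\ge u(\*x,o')-\Delta$ for every $o'$, hence $Z_{\*{\tilde x}}\ge\exp(-\varepsilon/2)Z_{\*x}$, i.e.\ $Z_{\*x}/Z_{\*{\tilde x}}\le\exp(\varepsilon/2)$. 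Multiplying the two bounds yields $p_{\*x}(o)\le\exp(\varepsilon)\,p_{\*{\tilde x}}(o)$ for every $o$. Note that $\mu$ drops out of both inequalities because it is the same function for $\*x$ and $\*{\tilde x}$ --- which is precisely why supplying $\mu$ as a ``prediction'' incurs no privacy cost.

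Finally I would integrate the pointwise bound: for any measurable subset $S$ of the range, $\Pr\{\A(\*x)\in S\}=\int_S p_{\*x}(o)\,do\le\exp(\varepsilon)\int_S p_{\*{\tilde x}}(o)\,do=\exp(\varepsilon)\Pr\{\A(\*{\tilde x})\in S\}$, which is exactly the definition of $(\varepsilon,0)$-DP, i.e.\ $\varepsilon$-DP (the additive term is $\delta=0$). Swapping the roles of $\*x$ and $\*{\tilde x}$ gives the symmetric inequality, though only this direction is required.

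There is no genuine obstacle here --- the result is standard and the argument is only a few lines. The sole point deserving care is the degenerate case $Z_{\*x}\in\{0,\infty\}$, which I would simply exclude by assuming the mechanism is well-defined, an assumption already implicit in writing the release probability as proportional to $\exp(\frac\varepsilon{2\Delta}u(\*x,o))\mu(o)$; if one works instead with $\mu$ as a general (possibly atomic) measure, the identical computation goes through with $\mu(o)\,do$ replaced throughout by integration against $\mu$.
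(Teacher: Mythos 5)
Your proof is correct and is the standard density-ratio argument: bound the numerator by $e^{\varepsilon/2}$ via the sensitivity hypothesis, bound the normalizer ratio by another $e^{\varepsilon/2}$ using the same hypothesis pointwise under the integral, and integrate over $S$. Note that the paper does not actually prove this statement --- it is imported verbatim from \citet{mcsherry2007mechanism} (specifically their Theorem~6, as the paper's proof of Lemma~A.1 later makes explicit) --- so there is no in-paper proof to compare against; your argument is the one that reference gives, and your remark that $\mu$ cancels from the ratio (so an arbitrary base measure costs nothing in privacy) is exactly the observation the paper is leaning on when it uses $\mu$ as the prediction.
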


The utility function we use is $u_q=-\Gap_q$, so since this is constant on each interval $I_k$ the mechanism here is equivalent to sampling $k$ w.p. $\propto\exp(\varepsilon u_q(\*x,I_k)/2)\mu(I_k)$ and then sampling $o$ from $I_k$ w.p. $\propto\mu(o)$.
While the idea of specifying a prior for EM is well-known, the key idea here is to obtain a prediction-dependent bound on the error that reveals a useful measure of the {\em quality} of the prediction.
In particular, we can show (c.f.~Lemma~\ref{lem:quantile}) that running EM in this way yields $o$ that w.p. $\ge1-\beta$ satisfies\vspace{-1mm}
\begin{equation}\label{eq:basic}
\Gap_q(\*x,o)\le\frac2\varepsilon\log\frac{1/\beta}{\Psi_{\*x}^{(q,\varepsilon)}(\mu)}\le\frac2\varepsilon\log\frac{1/\beta}{\Psi_{\*x}^{(q)}(\mu)}\vspace{-1mm}
\end{equation}
where the quantity $\Psi_{\*x}^{(q,\varepsilon)}=\int\exp(-\frac\varepsilon2\Gap_q(\*x,o))\mu(o)do$ is the inner product between the prior and the EM score while $\Psi_{\*x}^{(q)}=\lim_{\varepsilon\to\infty}\Psi_{\*x}^{(q,\varepsilon)}=\mu((\*x_{[\lfloor qn\rfloor]},\*x_{[\lfloor qn\rfloor+1]}])$ is the probability that the prior assigns to the optimal interval.

This suggests two metrics of prediction quality:
the negative log-inner-products $U_{\*x}^{(q,\varepsilon)}(\mu)=-\log\Psi_{\*x}^{(q,\varepsilon)}(\mu)$ and $U_{\*x}^{(q)}(\mu)=-\log\Psi_{\*x}^{(q)}(\mu)$.
Both make intuitive sense:
we expect predictions $\mu$ that assign a high probability to intervals that the EM score weighs heavily to perform well, and EM assigns the most weight to the optimal interval.
There are also many ways that these metrics are useful.
For one, in the case of perfect prediction---i.e. if $\mu$ assigns probability one to the optimal interval $I_{\lfloor qn\rfloor}$---then $\Psi_{\*x}^{(q,\varepsilon)}(\mu)=\Psi_{\*x}^{(q)}(\mu)=1$, yielding an upper bound on the error of only $\frac2\varepsilon\log\frac1\beta$.
Secondly, as we will see, both are also amenable for analyzing robustness (the mechanism's sensitivity to \emph{incorrect} priors) and learning.
A final and important quality is that the guarantees using these metrics hold under no extra assumptions. Between the two, the first metric provides a tighter bound on the utility loss while the second does not depend on $\varepsilon$, which may be desirable.

It is also fruitful to analyze the metrics for specific priors.
When $\*x$ is in a bounded interval $(a,b)$ and $\mu(o)=\frac{1_{o\in(a,b)}}{b-a}$ is the uniform measure, then $\Psi_{\*x}^{(q)}(\mu)\ge\frac{\psi_{\*x}}{b-a}$, where $\psi_{\*x}$ is the minimum distance between entries;
thus we recover past bounds, e.g. \citet[Lemma~A.1]{kaplan2022quantiles}, that implicitly use this measure to guarantee $\Gap_q(\*x,o)\le\frac2\varepsilon\log\frac{b-a}{\beta\psi_{\*x}}$. 
Here the support of the uniform distribution is correct by assumption as the data is assumed bounded. 
However, analyzing $\Psi_{\*x}^{(q)}$ also yields a novel way of removing this assumption: 
if we suspect the data lies in $(a,b)$, we set $\mu$ to be the Cauchy prior with location $\frac{a+b}2$ and scale $\frac{b-a}2$.
Even if we are wrong about the interval, there exists an $R>0$ s.t. the data lies in the interval $(\frac{a+b}2\pm R)$, so using the Cauchy yields $\Psi_{\*x}^{(q)}\ge\frac{2(b-a)\psi_{\*x}/\pi}{(b-a)^2+4R^2}$ and thus the following guarantee:\looseness-1
\begin{Cor}[of Lem.~\ref{lem:quantile}]\label{cor:cauchy}
	If the data lies in the interval $(\frac{a+b}2\pm R)$ and $\mu$ is the Cauchy measure with location $\frac{a+b}2$ and scale $\frac{b-a}2$ then the output of the exponential mechanism satisfies $\Gap_q(\*x,o)\le\frac2\varepsilon\log\left(\pi\frac{b-a+\frac{4R^2}{b-a}}{2\beta\psi_{\*x}}\right)$ w.p. $\ge1-\beta$.\looseness-1
\end{Cor}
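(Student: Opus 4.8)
The plan is to invoke Lemma~\ref{lem:quantile} in the form of~\eqref{eq:basic}, which already delivers $\Gap_q(\*x,o)\le\frac2\varepsilon\log\frac{1/\beta}{\Psi_{\*x}^{(q)}(\mu)}$ with probability $\ge1-\beta$, where $\Psi_{\*x}^{(q)}(\mu)=\mu\big((\*x_{[\lfloor qn\rfloor]},\*x_{[\lfloor qn\rfloor+1]}]\big)$ is the mass the prior puts on the optimal interval. All that remains is to lower bound this mass for the stated Cauchy prior, then substitute it and simplify the logarithm. Write $c=\tfrac{a+b}2$ and $\gamma=\tfrac{b-a}2$ for its location and scale, so the density is $\mu(o)=\frac{\gamma}{\pi(\gamma^2+(o-c)^2)}$.

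First I would note that because every data point lies in $(c-R,c+R)$, so do the two endpoints $\*x_{[\lfloor qn\rfloor]},\*x_{[\lfloor qn\rfloor+1]}$ of the optimal interval, and hence every $o$ in that (closed, bounded) interval satisfies $|o-c|\le R$. Since the Cauchy density is decreasing in $|o-c|$, this gives the uniform pointwise bound $\mu(o)\ge\frac{\gamma}{\pi(\gamma^2+R^2)}$ on the interval. Second, the length of the optimal interval, $\*x_{[\lfloor qn\rfloor+1]}-\*x_{[\lfloor qn\rfloor]}$, is at least the minimum separation $\psi_{\*x}$. Integrating the pointwise bound over the interval and substituting $\gamma=\tfrac{b-a}2$ (clearing the factor of $4$) yields
\[
\Psi_{\*x}^{(q)}(\mu)\ \ge\ \frac{\gamma\,\psi_{\*x}}{\pi(\gamma^2+R^2)}\ =\ \frac{2(b-a)\psi_{\*x}/\pi}{(b-a)^2+4R^2}.
\]
Plugging this into~\eqref{eq:basic} and using monotonicity of $\log$ to replace $\Psi_{\*x}^{(q)}$ by its lower bound gives $\Gap_q(\*x,o)\le\frac2\varepsilon\log\frac{\pi((b-a)^2+4R^2)}{2\beta(b-a)\psi_{\*x}}$, and factoring $(b-a)$ out of the numerator rewrites this as $\frac2\varepsilon\log\big(\pi\frac{b-a+4R^2/(b-a)}{2\beta\psi_{\*x}}\big)$, the claimed bound.

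No step here is a genuine obstacle; the argument is a two-line density estimate plus bookkeeping. The only points needing care are (i) that the optimal interval is a genuine bounded interval between two data points, i.e.\ $1\le\lfloor qn\rfloor\le n-1$, which is the regime of interest for $q\in(0,1)$, so that both its length bound and the $|o-c|\le R$ bound make sense, and (ii) that replacing $\Psi_{\*x}^{(q)}$ by its lower bound inside the logarithm is valid, which it is since $\log$ is increasing and $\Psi_{\*x}^{(q)}\le1$. The conceptual content, rather than any technical difficulty, is why the Cauchy is the right choice: its polynomial tail means $\mu(o)$ decays no faster than $1/(o-c)^2$, so even a wildly misspecified scale costs only an additive $\frac2\varepsilon\log\big(1+\tfrac{4R^2}{(b-a)^2}\big)$ over the well-specified case, whereas a light-tailed prior would incur a log-density penalty growing polynomially in $R$.
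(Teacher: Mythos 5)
Your proof is correct and takes essentially the same route as the paper: lower-bound the Cauchy density pointwise on the optimal interval using $|o - \tfrac{a+b}{2}|\le R$, integrate over an interval of length at least $\psi_{\*x}$ to get $\Psi_{\*x}^{(q)}(\mu)\ge\frac{2(b-a)\psi_{\*x}/\pi}{(b-a)^2+4R^2}$, then substitute into the bound of Lemma~\ref{lem:quantile}. The paper's version keeps an intermediate $\max_k(\nu-\*x_{[k]})^2$ before relaxing to $R^2$, but this is cosmetic and both derivations are identical in substance.
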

If $R=\frac{b-a}2$, i.e. we get the interval right, then the bound is only an additive factor $\frac2\varepsilon\log\pi$ worse than before, but if we are wrong then performance degrades as $\BigO(\log(1+R^2))$, unlike the $\BigO(R)$ error of the uniform prior.
Note our use of a heavy-tailed distribution here:
a sub-exponential density decays too quickly and leads to error $\BigO(R)$ rather than $\BigO(\log(1+R^2))$.
We can also adapt this technique if we know only a single-sided bound, e.g. if values must be positive, by using an appropriate half-Cauchy distribution.

\vspace{-1mm}
\subsection{Releasing multiple quantiles}
\vspace{-1mm}

\ifdefined\arxiv
\begin{figure}[!t]
	\centering
	\includegraphics[width=.495\linewidth]{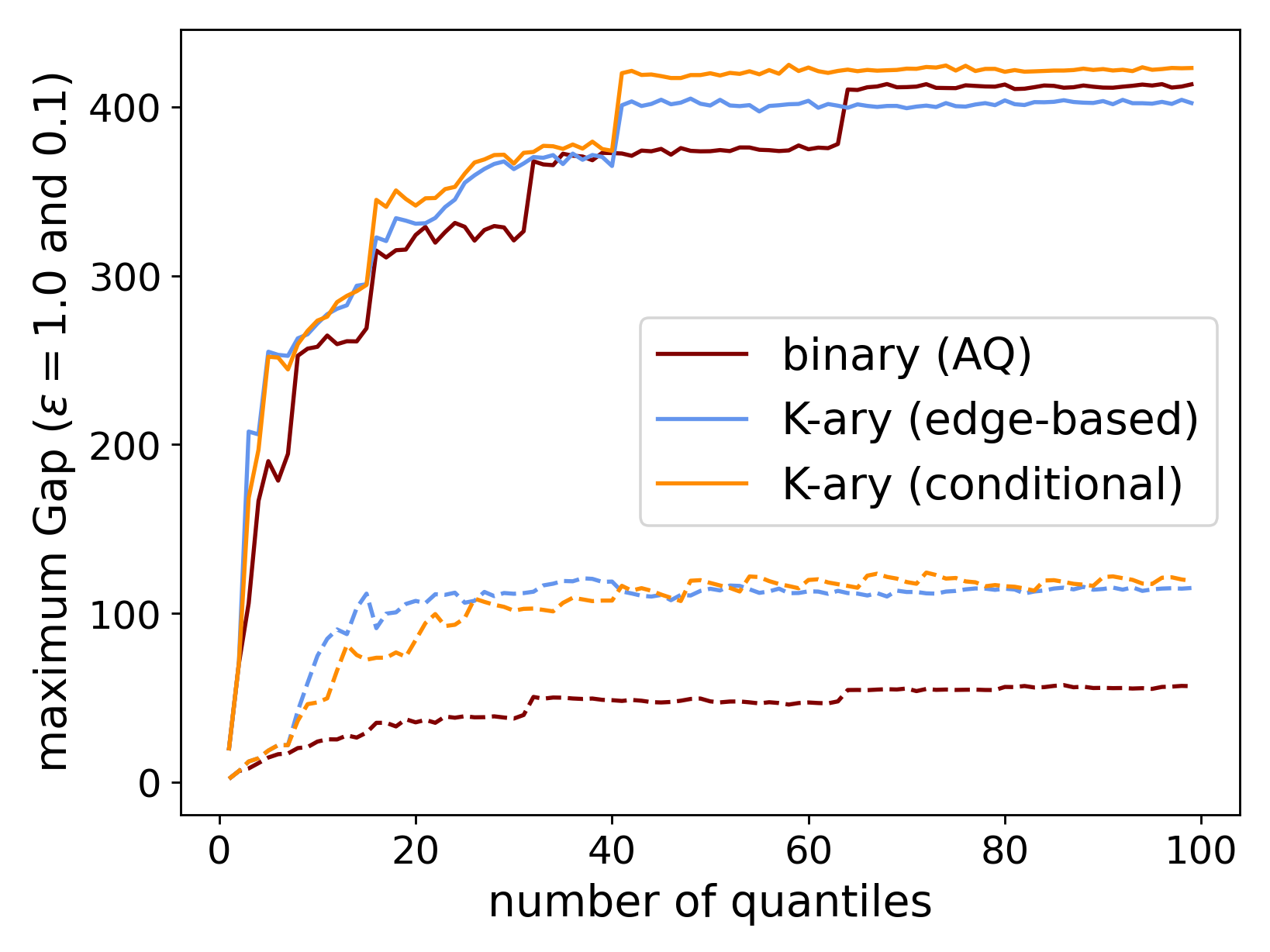}
	\hfill
	\includegraphics[width=.495\linewidth]{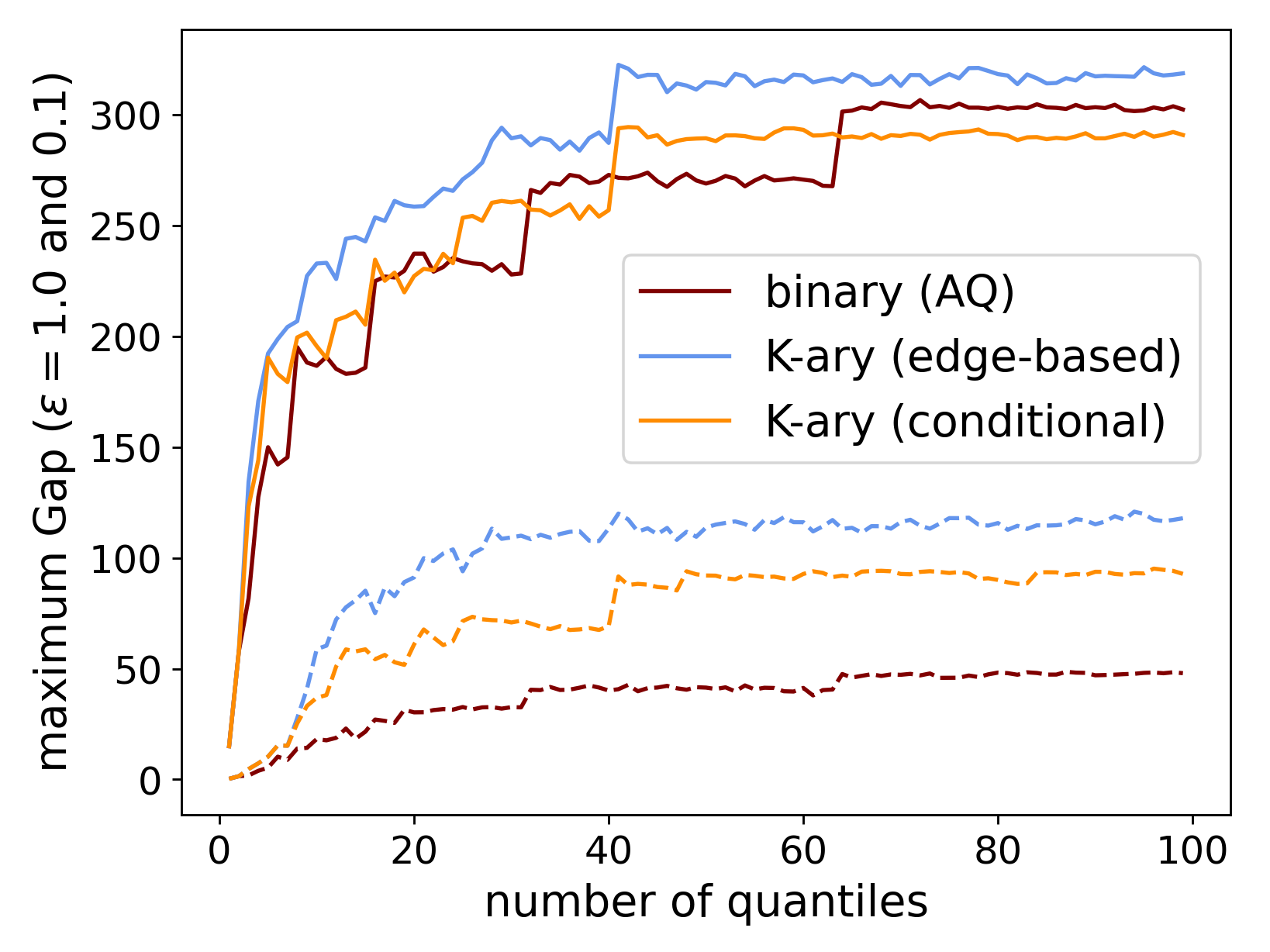}\vspace{-4mm}
	\caption{\label{fig:mplots}
		Maximum gap as a function of $m$ for different variants of AQ when using the Uniform prior, evaluated on 1000 samples from a standard Gaussian (left) and the Adult ``age" dataset (right).
		The dashed and solid lines correspond to $\varepsilon=1$ and 0.1, respectively.\looseness-1
	}
\end{figure}
\else
\fi

To simultaneously estimate quantiles $q_1,\dots,q_m$ we adapt the \texttt{ApproximateQuantiles} method of \cite{kaplan2022quantiles}, which assigns each $q_i$ to a node in a binary tree and, starting from the root, uses EM with the uniform prior to estimate a quantile before sending the data below the outcome $o$ to its left child and the data above $o$ to its right child.
Thus each entry is only involved in $\lceil\log_2m\rceil$ exponential mechanisms, and so for data in $(a,b)$ the maximum $\Gap_{q_i}$ across quantiles is $\BigO\left(\frac{\log^2m}\varepsilon\log\frac{m(b-a)}{\beta\psi_{\*x}}\right)$, which is much better than the naive bound of a linear function of $m$.\looseness-1

Given one prior $\mu_i$ for each $q_i$, a naive extension of~\eqref{eq:basic} gets a similar $\PolyLog(m)$ bound (c.f. Lem~\ref{lem:empirical-quantiles}); 
notably we extend the Cauchy-unboundedness result to multiple quantiles (c.f. Cor.~\ref{cor:multi-cauchy}).
However the upper bound is not a deterministic function of $\mu_i$, as it depends on restrictions of $\*x$ and $\mu_i$ to subsets $(o_j,o_k)$ of the domain induced by the outcomes of EM for quantiles $q_j$ and $q_k$ earlier in the tree.
It thus does not encode a direct relationship between the prediction and instance data and is less amenable for learning.\looseness-1

We instead want guarantees depending on a more natural metric, e.g. one aggregating $\Psi_{\*x}^{(q_i,\varepsilon_i)}(\mu_i)$ from the previous section across pairs $(q_i,\mu_i)$.
The core issue is that the data splitting makes the probability assigned by a prior $\mu_i$ to data outside the interval $(o_j,o_k)$ induced by the outcomes of quantiles $q_j$ and $q_k$ earlier in the tree not affect the distribution of $o_i$. 
One way to handle this is to assign this probability mass to the edges of $(o_j,o_k)$, rather than the more natural conditional approach of \texttt{ApproximateQuantiles}.
We refer to this as ``edge-based prior adaptation" and use it to bound $\Gap_{\max}=\max_i\Gap_{q_i}(\*x,o_i)$ via the harmonic mean $\Psi_{\*x}^{(\varepsilon)}$ of the inner products $\Psi_{\*x}^{(q_i,\varepsilon_i)}(\mu_i)$:\looseness-1
\begin{Thm}[c.f. Thm.~\ref{thm:binary}]
	If $m=2^k-1$ for some $k$, quantiles $q_1,\dots,q_m$ are uniformly spaced, and for each we have a prior $\mu_i:\R\mapsto\R_{\ge0}$, then running \texttt{ApproximateQuantiles} with edge-based prior adaptation (c.f. Algorithm~\ref{alg:quantiles}) is $\varepsilon$-DP, and w.p. $\ge1-\beta$\looseness-1\vspace{-2mm}
	\ifdefined\arxiv
	\begin{equation}
		\Gap_{\max}
		\le\frac2\varepsilon\phi^{\log_2(m+1)}\lceil\log_2(m+1)\rceil\log\frac{m/\beta}{\Psi_{\*x}^{(\varepsilon)}}
		\qquad\textrm{for}\qquad\Psi_{\*x}^{(\varepsilon)}=\left(\sum_{i=1}^m\frac{1/m}{\Psi_{\*x}^{(q_i,\varepsilon_i)}(\mu_i)}\right)^{-1}
	\end{equation}
	\else
	\begin{align}
		\begin{split}
			\Gap_{\max}
			&\le\frac2\varepsilon\phi^{\log_2(m+1)}\lceil\log_2(m+1)\rceil\log\frac{m/\beta}{\Psi_{\*x}^{(\varepsilon)}}\\
			\qquad\quad\textrm{for}\quad&\Psi_{\*x}^{(\varepsilon)}=\left(\sum_{i=1}^m\frac{1/m}{\Psi_{\*x}^{(q_i,\varepsilon_i)}(\mu_i)}\right)^{-1}
		\end{split}
	\end{align}
	\fi
	Here $\varepsilon_i=\frac\varepsilon{\lceil\log_2(m+1)\rceil}$ and $\phi=\frac{1+\sqrt 5}2$ is the golden ratio.\looseness-1
\end{Thm}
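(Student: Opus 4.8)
The plan is to apply the single-quantile guarantee of Lemma~\ref{lem:quantile} at every node of the binary tree and then stitch the per-node errors together with an error-accumulation argument in the spirit of \citet{kaplan2022quantiles}. Privacy is the easy part: at each of the $\lceil\log_2(m+1)\rceil$ levels the data is partitioned among the nodes, so running an $\varepsilon_i$-DP exponential mechanism at every node of a level is $\varepsilon_i$-DP by parallel composition, and sequential composition over the levels gives $\lceil\log_2(m+1)\rceil\varepsilon_i=\varepsilon$-DP; edge-based prior adaptation only reweights $\mu_i$ using already-released outcomes, so it leaves this argument untouched. For utility, union-bound over the $m$ nodes so that, with probability $\ge1-\beta$, Lemma~\ref{lem:quantile} holds simultaneously at every node $v$ with failure probability $\beta/m$: writing $g_v$ for the gap of the mechanism at $v$ measured against the rank it actually targets (the local-midpoint rank $\lfloor n_v/2\rfloor$ of its sub-dataset $\*x_v$), this gives $g_v\le\frac2{\varepsilon_i}\log\frac{m/\beta}{\tilde\Psi_v}$, where $\tilde\Psi_v$ is the inner product of the edge-adapted prior $\tilde\mu_v$ with that node's score.

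The first substantive step is to replace each $\tilde\Psi_v$ by the clean global quantity $\Psi_{\*x}^{(q_{i(v)},\varepsilon_i)}(\mu_{i(v)})$. Let $\Delta_v$ be the discrepancy between the rank $v$ targets, expressed in full-data coordinates, and the true rank $\lfloor q_{i(v)}n\rfloor$ --- the ``cumulative rank error'' inherited from the ancestors defining the endpoints of the interval on which $\*x_v$ lives. On that interval the node's score equals $\exp(-\frac{\varepsilon_i}2|\,\Gap_{q_{i(v)}}(\*x,\cdot)\pm\Delta_v|)\ge e^{-\varepsilon_i|\Delta_v|/2}\exp(-\frac{\varepsilon_i}2\Gap_{q_{i(v)}}(\*x,\cdot))$, and because $\Gap_{q_{i(v)}}(\*x,\cdot)$ is monotone outside the interval (the true target rank lies between the ranks of the two endpoints unless the errors are already catastrophic), the mass that edge-based adaptation pushes onto each endpoint dominates the corresponding tail integral of $\exp(-\frac{\varepsilon_i}2\Gap_{q_{i(v)}}(\*x,\cdot))\mu_{i(v)}$. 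Summing the interior and the two edge contributions yields $\tilde\Psi_v\ge e^{-\varepsilon_i|\Delta_v|/2}\Psi_{\*x}^{(q_{i(v)},\varepsilon_i)}(\mu_{i(v)})$, hence $g_v\le|\Delta_v|+\frac2{\varepsilon_i}\log\frac{m/\beta}{\Psi_{\*x}^{(q_{i(v)},\varepsilon_i)}(\mu_{i(v)})}$. This is where edge-based adaptation earns its keep: it turns a per-node bound that a priori depends on restrictions of $\mu_i$ and $\*x$ into one depending only on the global inner product.

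It remains to bound $\Gap_{\max}=\max_v|e_v|$, where $e_v$ is the gap of the released $o_v$ against the full data. Since the full-data rank of $o_v$ equals the local midpoint rank of $\*x_v$ up to the mechanism error $g_v$, we get $|e_v|\le|\Delta_v|+g_v\le2|\Delta_v|+\frac2{\varepsilon_i}\log\frac{m/\beta}{\Psi_{\*x}^{(q_{i(v)},\varepsilon_i)}(\mu_{i(v)})}$; moreover $\Delta_v$ is, up to rounding, the average of the errors $e_{a_v},e_{b_v}$ of the two boundary ancestors $a_v,b_v$ of $v$ (zero when a boundary is $\pm\infty$), so $2|\Delta_v|\le|e_{a_v}|+|e_{b_v}|$. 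This gives a Fibonacci-type recursion $|e_v|\le|e_{a_v}|+|e_{b_v}|+\frac2{\varepsilon_i}\log\frac{m/\beta}{\Psi_{\*x}^{(q_{i(v)},\varepsilon_i)}(\mu_{i(v)})}$ over a set of strictly shallower ancestors, which I would solve by induction on depth: the total weight on the level-$\ell$ contributions grows at most like the Fibonacci numbers, all bounded by $\phi^{\log_2(m+1)}$, and because the sum of a geometric series with ratio $\phi$ is itself $\BigO(\phi^{\log_2(m+1)})$, the aggregate does not pick up an extra factor of the depth beyond the $\lceil\log_2(m+1)\rceil$ already hidden in $\frac2{\varepsilon_i}=\frac{2\lceil\log_2(m+1)\rceil}\varepsilon$. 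Finally, since $\Psi_{\*x}^{(q_i,\varepsilon_i)}(\mu_i)\ge\Psi_{\*x}^{(\varepsilon)}/m$ for every $i$ by definition of the harmonic mean, each per-node inner product can be replaced uniformly by $\Psi_{\*x}^{(\varepsilon)}$ inside the logarithm, the resulting $\log m$ being absorbed since $\Psi_{\*x}^{(\varepsilon)}\le1\le1/\beta$; assembling the pieces gives the stated bound, and matches the full statement of Theorem~\ref{thm:binary}.

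\textbf{Main obstacle.} The delicate part is making the error-accumulation recursion rigorous: identifying exactly which ancestors play the roles of $a_v,b_v$, handling the rounding in the midpoint-rank identity, and controlling the boundary cases in which a target rank falls outside its sub-dataset (so the mechanism clips) and where the monotonicity claim used above could fail. This is precisely the bookkeeping done for the prediction-free algorithm in \citet{kaplan2022quantiles}, whose induction I would follow; the only genuinely new ingredients are the per-node substitution $\tilde\Psi_v\ge e^{-\varepsilon_i|\Delta_v|/2}\Psi_{\*x}^{(q_{i(v)},\varepsilon_i)}(\mu_{i(v)})$ enabled by edge-based adaptation and the final harmonic-mean aggregation. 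A secondary subtlety is that the bound on $g_v$ depends on $\Delta_v$, which depends on shallower $e_u$'s, so the two recursions must be resolved together in a single top-down induction rather than separately.
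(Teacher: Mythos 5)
Your proposal tracks the paper's argument: both reduce to a per-node bound via edge-based adaptation --- your substitution $\tilde\Psi_v\ge e^{-\varepsilon_i|\Delta_v|/2}\Psi_{\*x}^{(q_{i(v)},\varepsilon_i)}(\mu_{i(v)})$ is exactly what Lemma~\ref{lem:prior} establishes (the paper proves it via the bookkeeping in Lemma~\ref{lem:error}, and your ``monotonicity outside the interval'' intuition is why the edge-atoms dominate the tail integrals) --- and both stitch the nodes together by a Fibonacci recurrence solved by induction on depth, with the recursion $|e_v|\le|e_{a_v}|+|e_{b_v}|+\frac{2}{\varepsilon_i}\log(\cdot)$ matching the paper's $\Gap_{q_i}(\*x,o_i)\le\Gap_{\underline q_i}(\*x,\hat a_i)+\Gap_{\overline q_i}(\*x,\hat b_i)+\frac{2}{\bar\varepsilon}\log(\cdot)$ when $\tilde q_i=1/2$.

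There are two gaps. The more substantive one is the final aggregation step. You take a uniform $\beta/m$ union bound and then replace each per-node inner product by the lower bound $\Psi_{\*x}^{(q_i,\varepsilon_i)}(\mu_i)\ge\Psi_{\*x}^{(\varepsilon)}/m$, which turns $\log\frac{m/\beta}{\Psi_{\*x}^{(q_i,\varepsilon_i)}(\mu_i)}$ into $\log\frac{m^2/\beta}{\Psi_{\*x}^{(\varepsilon)}}$. That extra $\log m$ is not ``absorbed'': it can blow the bound up by nearly a factor of two, and the slack between $\phi^{\lceil\log_2(m+1)\rceil+1}/\sqrt5$ (what the recurrence actually delivers) and $\phi^{\log_2(m+1)}$ (what is claimed) is only $\phi/\sqrt5\approx 0.724$, not enough to cover it. The paper sidesteps this with a \emph{weighted} union bound: it sets the failure probability at node $i$ to $\frac{\beta\Psi_{\*x}^{(\varepsilon)}}{m\,\Psi_{\*x}^{(q_i,\varepsilon_i)}(\mu_i)}$, which sums to $\beta$ by the definition of the harmonic mean and cancels the per-node inner product exactly, making every per-node log term identically $\log\frac{m}{\beta\Psi_{\*x}^{(\varepsilon)}}$ with no residual $\log m$. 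The second gap is that your recursion is stated only as running ``over strictly shallower ancestors.'' To get the Fibonacci recurrence $A_k=1+A_{k-1}+A_{k-2}$ you need the structural fact that one boundary endpoint comes from the parent (depth $\le k-1$) while the other is inherited from a grandparent or higher (depth $\le k-2$); if both were merely at depth $k-1$ the recursion would be $A_k=1+2A_{k-1}=\Theta(2^k)=\Theta(m)$, not $\Theta(\phi^k)$. This needs to be stated explicitly, as it is the only place the binary-tree structure enters.
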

The golden ratio is due to a Fibonacci-type recurrence bounding the maximum $\Gap_{q_i}$ at each depth of the tree.
$\Psi_{\*x}^{(\varepsilon)}$ depends only on $\*x$ and predictions $\mu_i$, and it yields a nice error metric $U_{\*x}^{(\varepsilon)}
=-\log\Psi_{\*x}^{(\varepsilon)}=\log\sum_{i=1}^me^{U_{\*x}^{(q_i,\varepsilon_i)}}$.
However, the dependence of the error on $m$ is worse than of \texttt{ApproximateQuantiles}, as $\phi^{\log_2m}$ is roughly $\BigO(m^{0.7})$.
The bound is still sublinear and thus better than the naive baseline of running EM $m$ times.\looseness-1

The $\tilde\BigO(\phi^{\log_2m})$ dependence results from error compounding across depths of the tree, so we can try to reduce depth by going from a binary to a $K$-ary tree.
This involves running EM $K-1$ times at each node---and paying $K-1$ more in budget---to split the data into $K$ subsets;
the resulting estimates may also be out of order.
However, by showing that sorting them back into order does not increase the error and then controlling the maximum $\Gap_{q_i}$ at each depth via another recurrence relation, we prove the following:
\begin{Thm}[c.f. Thm.~\ref{thm:kary}]\label{thm:kary-main}
	For any $q_1,\dots,q_m$, using $K=\lceil\exp(\sqrt{\log2\log(m+1)})\rceil$ and edge-based adaptation guarantees $\varepsilon$-DP and w.p. $\ge1-\beta$ has \ifdefined\arxiv
	\begin{equation}
		\Gap_{\max}\le\frac{2\pi^2}\varepsilon\exp\left(2\sqrt{\log(2)\log(m+1)}\right)\log\frac{m/\beta}{\Psi_{\*x}^{(\varepsilon)}}
	\end{equation}
	\else $\Gap_{\max}\le\frac{2\pi^2}\varepsilon\exp\left(2\sqrt{\log(2)\log(m+1)}\right)\log\frac{m/\beta}{\Psi_{\*x}^{(\varepsilon)}}$.\looseness-1
	\fi
\end{Thm}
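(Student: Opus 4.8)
The plan is to adapt the $K$-ary analysis sketched just before the theorem, making the two sources of $m$-dependence explicit and then optimizing over $K$. First I would set up the $K$-ary tree: with $m$ quantiles assigned to nodes of a tree of branching factor $K$, the depth is $d=\lceil\log_K(m+1)\rceil\le 1+\frac{\log(m+1)}{\log K}$, and each data point participates in at most $d$ rounds of EM (each round being $K-1$ invocations of the exponential mechanism to split a block into $K$ sub-blocks). Allocating budget $\varepsilon_i=\frac{\varepsilon}{(K-1)d}$ to each invocation and invoking basic composition gives $\varepsilon$-DP; the factor $K-1$ is the price of the wider split. Applying the single-quantile bound \eqref{eq:basic} together with edge-based prior adaptation at each node, and taking a union bound over the at most $m$ invocations with failure probability $\beta/m$ each, gives a per-node gap bound of the form $\frac{2(K-1)d}{\varepsilon}\log\frac{m/\beta}{\Psi}$ where $\Psi$ is the relevant (edge-adapted) inner product, and $\Psi_{\*x}^{(\varepsilon)}$ is recovered as their harmonic mean exactly as in the binary case.

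Next I would handle error propagation down the tree. The subtlety flagged in the text is that the $K$ estimates produced at a node may come out of order; I would first prove the monotone-rearrangement lemma, namely that sorting the outputs of one node's $K$ exponential mechanisms back into increasing order does not increase any $\Gap_{q_i}$ (this is the standard fact that projecting onto the monotone cone in $\ell_\infty$ cannot increase the distance to a monotone target — here the true quantiles are monotone in $q$). With order restored, let $g_\ell$ denote a bound on $\Gap_{\max}$ over all nodes at depth $\ell$. The children of a node inherit a shifted target: an error of $g_{\ell}$ in the parent's split point translates into an additive $g_\ell$ offset in the rank target each child must hit, on top of that child's own fresh EM error of roughly $a:=\frac{2(K-1)d}{\varepsilon}\log\frac{m/\beta}{\Psi_{\*x}^{(\varepsilon)}}$. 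This yields the recurrence $g_{\ell+1}\le g_\ell + a$ — or, being more careful about how errors from the $K-1$ split points within a single node combine, something like $g_{\ell+1}\le (K-1)g_\ell + a$ in the worst arrangement; I'd need to check which is tight, but even the crude linear-in-depth bound $g_d\le a\cdot d$ suffices. Wait — the binary case produced $\phi^{\log_2 m}$, i.e. geometric growth, so the honest recurrence is $g_{\ell+1}\le c\, g_\ell + a$ with $c$ a small constant (in the binary case $c$ gives the golden ratio); unrolling gives $g_d\lesssim a\cdot c^{d}$, so $\Gap_{\max}\lesssim \frac{(K-1)d}{\varepsilon}\,c^{d}\log\frac{m/\beta}{\Psi_{\*x}^{(\varepsilon)}}$.

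Finally I would optimize the branching factor. Collecting the $m$-dependent prefactor, it is (up to constants and the $\log$) of order $(K-1)\cdot d\cdot c^{d}$ with $d\approx \frac{\log(m+1)}{\log K}$. The dominant competing terms are the $K$ out front and the $c^{d}=\exp\!\big(\frac{\log c\,\log(m+1)}{\log K}\big)$ from depth-compounding: writing $K\approx e^{s}$, we are minimizing roughly $\exp\!\big(s + \frac{\log(m+1)\log c}{s}\big)$, whose AM–GM-optimal choice is $s=\sqrt{\log c\,\log(m+1)}$, i.e. $K=\lceil\exp(\sqrt{\log 2\,\log(m+1)})\rceil$ (the constant $\log 2$ being the value of $\log c$ that falls out of the precise recurrence), giving prefactor $\exp\!\big(2\sqrt{\log 2\,\log(m+1)}\big)$ times the mild polynomial factor $(K-1)d$, which I would absorb into the constant $\pi^2$ (the $\pi^2$ presumably coming from bounding $\sum 1/\ell^2$-type terms in a sharper treatment of the per-depth budget split rather than the crude $d$). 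The main obstacle I anticipate is pinning down the exact recurrence constant $c$ in the out-of-order $K$-ary setting — i.e. showing that after monotone rearrangement the compounding constant is still effectively $2$ (not $K$), since a loose bound here would ruin the optimization and the clean $\exp(2\sqrt{\log 2\log(m+1)})$ form. Everything else — composition for privacy, the union bound, the harmonic-mean bookkeeping for $\Psi_{\*x}^{(\varepsilon)}$, and the final AM–GM step — is routine given the earlier lemmas.
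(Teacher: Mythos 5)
Your proposal follows the paper's route closely: a $K$-ary tree of depth $\lceil\log_K(m+1)\rceil$, per-level budget $\Theta(\varepsilon/(K-1))$ split further in a depth-dependent way so that $\sum_k k^{-p}$ (with $p=2$, giving $\pi^2/6$) supplies the $\pi^2$, the sorting lemma (the paper's Lemma~\ref{lem:shuffle}) to restore monotonicity of the $K-1$ split points, a geometric recurrence on per-depth error, and the AM--GM balancing of $(K-1)$ against $2^{\log_K(m+1)}$. The one step you flag as unresolved---why the compounding constant is exactly $2$ and not something $K$-dependent---has a clean answer that does not require a separate argument. The edge-based per-node bound (Lemma~\ref{lem:prior}) reads $\Gap_{q_i}(\*x,o_i)\le\frac{2}{\varepsilon_i}\log\frac{m/\beta}{\Psi_{\*x}^{(\varepsilon)}}+2\hat\gamma_i$, where $\hat\gamma_i=(1-\tilde q_i)\Gap_{\underline q_i}(\*x,\hat a_i)+\tilde q_i\Gap_{\overline q_i}(\*x,\hat b_i)$ is a \emph{convex combination} of the errors at the two endpoints of $q_i$'s interval. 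Even though a $K$-ary node emits $K-1$ split points, each child's interval is delimited by exactly two of them, and in the $K$-ary case both are at depth $\le k-1$; so $\hat\gamma_i\le E_{k-1}$ and the additive contribution is $2\hat\gamma_i\le 2E_{k-1}$ regardless of $K$. With depth-dependent budget $\varepsilon_i\propto\varepsilon/((K-1)k_i^p)$ this gives $A_k=k^p+2A_{k-1}$; your guess $\log c=\log 2$ is correct, and the factor $2$ is the price of edge-based (versus conditional) adaptation, not of branching. By contrast, the binary case's $\phi^{\log_2 m}$ comes from a Fibonacci-type recurrence $A_k=1+A_{k-1}+A_{k-2}$ because there one endpoint sits one depth above and the other two depths above. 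Your two preliminary guesses for the recurrence ($g_{\ell+1}\le g_\ell+a$, which holds only for conditional adaptation, and $g_{\ell+1}\le(K-1)g_\ell+a$, which would ruin the rate) should therefore be discarded in favor of $A_k=k^p+2A_{k-1}$; with $p=2$ one gets $A_k\le 6\cdot 2^k$ and the rest is the substitution and AM--GM step you describe.
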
 
The rate in $m$ is both sub-polynomial and super-poly-logarithmic ($o(m^\alpha)$ and $\omega(\log^\alpha m)~\forall~\alpha>0$);
while asymptotically worse than the prediction-free original result~\cite{kaplan2022quantiles}, 
for almost any practical value of $m$ (e.g. $m\in [3,10^{12}]$)
 it does not exceed a small constant (e.g. nine) times $\log^3m$.
Thus if the error $-\log\Psi_{\*x}^{(\varepsilon)}$ of the prediction is small---i.e. the inner products between priors and EM scores are large on (harmonic) average---then we may do much better with this approach.\looseness-1

\ifdefined\arxiv\else
\begin{figure}[!t]
	\centering
	\includegraphics[width=.495\linewidth]{plots/static/Gaussian_mplot.png}
	\hfill
	\includegraphics[width=.495\linewidth]{plots/static/Adult_age_mplot.png}\vspace{-4mm}
	\caption{\label{fig:mplots}
		Maximum gap as a function of $m$ for different variants of AQ when using the Uniform prior, evaluated on 1000 samples from a standard Gaussian (left) and the Adult ``age" dataset (right).
		The dashed and solid lines correspond to $\varepsilon=1$ and 0.1, respectively.\looseness-1
	}
\end{figure}
\fi

We compare K-ary AQ with edge-based adaptation to regular AQ on two datasets in Figure~\ref{fig:mplots}.
The original is better at higher $\varepsilon$ but similar or worse at higher privacy.
We also find that conditional adaptation is only better on discretized data that can have repetitions, a case where neither method provides guarantees.
Overall, we find that our prior-dependent analysis covers a useful algorithm, but for consistency with past work and due to its better performance at high $\varepsilon$ we will focus on the original binary approach in experiments.

\vspace{-2mm}
\section{Utility of learning-augmented algorithms}\label{sec:advantages}
\vspace{-1mm}

In the previous section we derived a data-dependent function $U_{\*x}^{(\varepsilon)}=-\log\Psi_{\*x}^{(\varepsilon)}$ that upper bounds the error of quantile release using priors $\mu_1,\dots,\mu_m$.
As in the single-quantile case, we can construct a looser, $\varepsilon$-independent upper bound
\begin{equation}
	U_{\*x}
	=-\log\Psi_{\*x}
	=\log\sum_{i=1}^me^{U_{\*x}^{(q_i)}}
	\ge U_{\*x}^{(\varepsilon)}
\end{equation}
using the harmonic mean $\Psi_{\*x}$ of $\Psi_{\*x}^{(q_i)}$.
We next summarize the usefulness of these upper bounds for understanding and applying DP methods with external information.
Note that all three aspects below are crucial in our experiments.

\vspace{-1mm}
\subsection{Minimal assumptions and new insights}
\vspace{-1mm}

Our guarantees require no extra data assumptions:
in-fact, the first outcome of our analysis was {\em removing} a boundedness assumption.
This contrasts with past public-private work~\citep{liu2021leveraging,bie2022private}, which makes distributional assumptions, and is why we can apply these results to two very distinct settings in Section~\ref{sec:applications}.\looseness-1

\vspace{-1mm}
\subsection{Ensuring robustness}\label{sec:robustness}
\vspace{-1mm}

While we incorporate external information into DP-algorithms because we hope to improve performance, if not done carefully it may lead to worse results.
For example, a quantile prior concentrated away from the data may have error depending linearly on the distance to the optimal interval.
Ideally an algorithm that uses a prediction will be robust, i.e. revert back to worst-case guarantees if the prediction is poor, without significantly sacrificing consistency, i.e. performing well if the prediction is good.

Using the formalization of these properties in Definitions~\ref{def:consistency} and~\ref{def:robustness}, algorithms with predictions provides a convenient way to deploy them by {\em parameterizing} the robustness-consistency tradeoff, in which methods are designed to be $r_{\*x}(\lambda)$-robust and $c_{\*x}(\lambda)$-consistent for a user-specified parameter $\lambda\in[0,1]$~\citep{bamas2020primal,lykouris2021competitive}.
For quantiles, we can obtain an elegant parameterized tradeoff by interpolating prediction priors with a ``robust" prior.
In particular, since $\Psi_{\*x}^{(q,\varepsilon)}$ is linear we can pick $\rho$ to be a trusted prior such as the uniform or Cauchy and for any prediction $\mu$ use $\mu^{(\lambda)}=(1-\lambda)\mu+\lambda\rho$ instead.
Setting $\Psi_{\*x}^{(q,\varepsilon)}(\mu^{(\lambda)})=(1-\lambda)\Psi_{\*x}^{(q,\varepsilon)}(\mu)+\lambda\Psi_{\*x}^{(q,\varepsilon)}(\rho)$ in \eqref{eq:basic} yields:\looseness-1
\begin{Cor}[of Lem.~\ref{lem:quantile}; c.f. Cor.~\ref{cor:quantile}]
	For quantile $q$, applying EM with prior $\mu^{(\lambda)}=(1-\lambda)\mu+\lambda\rho$ is $\left(\frac2\varepsilon\log\frac{1/\beta}{\lambda\Psi_{\*x}^{(q,\varepsilon)}(\rho)}\right)$-robust and $\left(\frac2\varepsilon\log\frac{1/\beta}{1-\lambda}\right)$-consistent.
\end{Cor}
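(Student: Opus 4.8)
The plan is to stack the single-quantile guarantee \eqref{eq:basic} of Lemma~\ref{lem:quantile} on top of two elementary observations about the inner product $\Psi_{\*x}^{(q,\varepsilon)}(\nu)=\int\exp(-\frac\varepsilon2\Gap_q(\*x,o))\nu(o)\,do$: that it is \emph{linear} in the base measure $\nu$, and that $0\le\Psi_{\*x}^{(q,\varepsilon)}(\nu)\le1$ for any probability measure $\nu$. First I would note that $\mu^{(\lambda)}=(1-\lambda)\mu+\lambda\rho$ is itself a valid probability measure, so Lemma~\ref{lem:quantile} applies verbatim with base measure $\mu^{(\lambda)}$: the exponential mechanism is $\varepsilon$-DP and, with probability $\ge1-\beta$, its output $o$ satisfies $\Gap_q(\*x,o)\le\frac2\varepsilon\log\frac{1/\beta}{\Psi_{\*x}^{(q,\varepsilon)}(\mu^{(\lambda)})}$. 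By linearity, $\Psi_{\*x}^{(q,\varepsilon)}(\mu^{(\lambda)})=(1-\lambda)\Psi_{\*x}^{(q,\varepsilon)}(\mu)+\lambda\Psi_{\*x}^{(q,\varepsilon)}(\rho)$, and the whole argument now reduces to bounding this convex combination from below in two different ways.

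For robustness, I would keep only the trusted term. Since the integrand $\exp(-\frac\varepsilon2\Gap_q(\*x,\cdot))$ is nonnegative and $\mu$ is a nonnegative measure, $\Psi_{\*x}^{(q,\varepsilon)}(\mu)\ge0$, hence $\Psi_{\*x}^{(q,\varepsilon)}(\mu^{(\lambda)})\ge\lambda\Psi_{\*x}^{(q,\varepsilon)}(\rho)$. Substituting into the display gives $\Gap_q(\*x,o)\le\frac2\varepsilon\log\frac{1/\beta}{\lambda\Psi_{\*x}^{(q,\varepsilon)}(\rho)}$ with probability $\ge1-\beta$, and the right-hand side is a function of $\rho$ and $\lambda$ alone --- independent of the prediction $\mu$ --- which is exactly the $r_{\*x}(\lambda)$-robustness claim of Definition~\ref{def:robustness}.

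For consistency, I would instead use the perfect-prediction case. The relevant quality measure is $U_{\*x}^{(q,\varepsilon)}(\mu)=-\log\Psi_{\*x}^{(q,\varepsilon)}(\mu)$; since $\exp(-\frac\varepsilon2\Gap_q)\le1$ pointwise and $\mu$ integrates to $1$, we have $\Psi_{\*x}^{(q,\varepsilon)}(\mu)\le1$, with equality (i.e. $U_{\*x}^{(q,\varepsilon)}(\mu)=0$) precisely when $\mu$ is supported on the optimal interval $I_{\lfloor qn\rfloor}$. In that case linearity yields $\Psi_{\*x}^{(q,\varepsilon)}(\mu^{(\lambda)})=(1-\lambda)+\lambda\Psi_{\*x}^{(q,\varepsilon)}(\rho)\ge1-\lambda$, so the display gives $\Gap_q(\*x,o)\le\frac2\varepsilon\log\frac{1/\beta}{1-\lambda}$, which is $c_{\*x}(\lambda)$-consistency in the sense of Definition~\ref{def:consistency}. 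The $\varepsilon$-free variant (with $\Psi_{\*x}^{(q)}$ and $U_{\*x}^{(q)}$ in place of the $\varepsilon$-dependent versions) follows identically by taking $\varepsilon\to\infty$, using the second inequality in \eqref{eq:basic}.

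I do not expect a serious obstacle: the proof is bookkeeping on top of Lemma~\ref{lem:quantile} together with the affine structure of $\Psi_{\*x}^{(q,\varepsilon)}$. The only points needing a little care are (i) that the robustness and consistency bounds are read off the \emph{same} $(1-\beta)$-probability event supplied by \eqref{eq:basic}, so no union bound is incurred; (ii) that $\mu$ and $\rho$ are normalized, which is what makes $\Psi^{(q,\varepsilon)}\le1$ and hence makes the consistency characterization exact; and (iii) that for the stated robustness bound to be finite one needs $\Psi_{\*x}^{(q,\varepsilon)}(\rho)>0$, which holds for the uniform and Cauchy choices of $\rho$ by the explicit lower bounds computed in Section~\ref{sec:single}.
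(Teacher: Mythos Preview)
Your proposal is correct and follows essentially the same route as the paper: apply Lemma~\ref{lem:quantile} with base measure $\mu^{(\lambda)}$, expand $\Psi_{\*x}^{(q,\varepsilon)}(\mu^{(\lambda)})=(1-\lambda)\Psi_{\*x}^{(q,\varepsilon)}(\mu)+\lambda\Psi_{\*x}^{(q,\varepsilon)}(\rho)$ by linearity, then lower-bound by $\lambda\Psi_{\*x}^{(q,\varepsilon)}(\rho)$ for robustness and by $1-\lambda$ when $\Psi_{\*x}^{(q,\varepsilon)}(\mu)=1$ for consistency. The paper's proof is a one-liner invoking exactly this linearity; your additional remarks (points (i)--(iii)) are sound and make explicit what the paper leaves implicit.
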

Thus w.h.p. error is simultaneously at most $\frac2\varepsilon\log\frac1\lambda$ worse than that of only using the robust prior $\rho$ and we only have error $\frac2\varepsilon\log\frac{1/\beta}{1-\lambda}$ if the prediction $\mu$ is perfect, i.e. if it is only supported on the optimal interval.
This is easy to extend to the multiple-quantile metric $-\log\Psi_{\*x}^{(\varepsilon)}$.
In fact, we can even interpolate between the $\PolyLog(m)$ prediction-free guarantee of past work and our learning-augmented guarantee with the worse dependence on $m$; 
thus if the prediction is not good enough to overcome this worse rate we can still ensure that we do not do much worse than the original guarantee.\looseness-1
\begin{Cor}[of Lem.~\ref{lem:empirical-quantiles} \& Thm.~\ref{thm:binary}; c.f. Cor.~\ref{cor:multiple}]\label{cor:cauchy-robust}
	If we run binary AQ on data in the interval $(\frac{a+b}2\pm R)$ for unknown $R>0$ and use the prior $\mu_i^{(\lambda)}=(1-\lambda)\mu+\lambda\rho$ for each $q_i$, where $\rho$ is Cauchy $(\frac{a+b}2,\frac{b-a}2)$, then the algorithm is  $\left(\frac2\varepsilon\lceil\log_2m\rceil^2\log\left(\pi m\frac{b-a+\frac{4R^2}{b-a}}{2\lambda\beta\psi_{\*x}}\right)\right)$-robust and $\left(\frac2\varepsilon\phi^{\log_2m}\lceil\log_2m\rceil\log\frac{m/\beta}{1-\lambda}\right)$-consistent.
\end{Cor}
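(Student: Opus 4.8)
The plan is to establish the three parts of the claim---privacy, consistency, and robustness---separately, treating the interpolated prior $\mu_i^{(\lambda)}=(1-\lambda)\mu+\lambda\rho$ as a convex combination and pushing each piece through an already-established guarantee. Privacy is immediate: the privacy analysis of binary \texttt{ApproximateQuantiles} with edge-based adaptation (Theorem~\ref{thm:binary}) does not depend on the choice of the data-independent base measures, so $\varepsilon$-DP holds for any $\mu$, $\rho$, and $\lambda$.

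For consistency I would use that $\mu\mapsto\Psi_{\*x}^{(q_i,\varepsilon_i)}(\mu)$ is linear, so $\Psi_{\*x}^{(q_i,\varepsilon_i)}(\mu_i^{(\lambda)})=(1-\lambda)\Psi_{\*x}^{(q_i,\varepsilon_i)}(\mu)+\lambda\Psi_{\*x}^{(q_i,\varepsilon_i)}(\rho)\ge(1-\lambda)\Psi_{\*x}^{(q_i,\varepsilon_i)}(\mu)$. A perfect prediction means $U_{\*x}^{(\varepsilon)}(\mu)=0$, i.e. the harmonic mean of the $\Psi_{\*x}^{(q_i,\varepsilon_i)}(\mu)$ equals $1$; since each of these is at most $1$, this forces every $\Psi_{\*x}^{(q_i,\varepsilon_i)}(\mu)=1$, hence $\Psi_{\*x}^{(q_i,\varepsilon_i)}(\mu_i^{(\lambda)})\ge1-\lambda$ for all $i$. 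Monotonicity of the harmonic mean then gives $\Psi_{\*x}^{(\varepsilon)}(\mu^{(\lambda)})\ge1-\lambda$, and substituting this into the guarantee of Theorem~\ref{thm:binary} and using $\log\frac{m/\beta}{\Psi_{\*x}^{(\varepsilon)}}\le\log\frac{m/\beta}{1-\lambda}$ produces the stated consistency bound.

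For robustness I would run the single-quantile Cauchy argument (Corollary~\ref{cor:cauchy}) through the restriction structure behind the prediction-free-style bound of Lemma~\ref{lem:empirical-quantiles}, which controls $\Gap_{\max}$ by $\frac2\varepsilon\lceil\log_2m\rceil^2$ times $\log(m/\beta)$ minus the log of the inner products $\Psi_{\*x}^{(q_i,\varepsilon_i)}$ evaluated for the restriction of $\mu_i^{(\lambda)}$ and $\*x$ to the subinterval $(o_j,o_k)$ induced by the EM outcomes above $q_i$ in the tree. Fix any such subinterval $J$. Edge-based adaptation of $(1-\lambda)\mu+\lambda\rho$ equals, by linearity, the same combination of the edge-adapted pieces, and edge adaptation only relocates out-of-$J$ mass to the endpoints of $J$, so the adapted measure dominates $\lambda$ times $\rho$ restricted to the interior of $J$. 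Since the data lie in $(\frac{a+b}2\pm R)$, so does the optimal (width $\ge\psi_{\*x}$) interval for $q_i$ inside $J$, and on that range the density of Cauchy$(\frac{a+b}2,\frac{b-a}2)$ is at least $\frac1\pi\cdot\frac{(b-a)/2}{((b-a)/2)^2+R^2}=\frac{2(b-a)}{\pi((b-a)^2+4R^2)}$. Because the inner product is at least the mass placed on the optimal interval (where $\Gap_{q_i}=0$), the restricted $\Psi_{\*x}^{(q_i,\varepsilon_i)}$ is at least $\lambda\cdot\frac{2(b-a)\psi_{\*x}}{\pi((b-a)^2+4R^2)}$ for every $i$ and every realization of the EM outcomes, hence uniformly over all predictions $\mu$. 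Plugging $-\log$ of this into Lemma~\ref{lem:empirical-quantiles} and writing $(b-a)^2+4R^2=(b-a)(b-a+\tfrac{4R^2}{b-a})$ gives the robustness bound.

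I expect the robustness step to be the main obstacle: one must verify that the Cauchy density lower bound is not weakened by passing to the data-dependent subintervals $(o_j,o_k)$---the key point being that $R$, a measure of how badly the support was misjudged, is a property of all of $\*x$ and hence unchanged under restriction to a sub-dataset---and that edge-based reassignment of out-of-interval mass can only increase the mass on the relevant optimal interval. Once the uniform bound $\Psi_{\*x}^{(q_i,\varepsilon_i)}\ge\lambda\cdot\frac{2(b-a)\psi_{\*x}}{\pi((b-a)^2+4R^2)}$ is in hand, both halves of the statement reduce to bookkeeping on top of Lemma~\ref{lem:empirical-quantiles} and Theorem~\ref{thm:binary}.
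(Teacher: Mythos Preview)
Your proposal is correct and follows essentially the same approach as the paper: the paper's proof of Corollary~\ref{cor:multiple} is the single sentence ``Apply Lemma~\ref{lem:empirical-quantiles}, Theorem~\ref{thm:binary}, and the linearity of inner products making up $\hat\Psi_{\*x}^{(\varepsilon)}$ and $\Psi_{\*x}^{(\varepsilon)}$,'' and you have unpacked exactly that---using linearity of $\Psi_{\*x}^{(q_i,\varepsilon_i)}$ plus Theorem~\ref{thm:binary} for consistency, and linearity of edge-based adaptation plus the Cauchy density lower bound through Lemma~\ref{lem:empirical-quantiles} (cf.\ Corollary~\ref{cor:multi-cauchy}) for robustness. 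Your observation that edge adaptation is linear in the prior (unlike conditional adaptation) and hence preserves the convex combination is precisely the ``linearity of the inner products making up $\hat\Psi_{\*x}^{(\varepsilon)}$'' that the paper invokes.
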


These results show the advantage of our framework in designing algorithms that make robust use of possibly noisy predictions.
Notably, related public-private work that studies robustness still assumes source and target data are Gaussian~\cite{bie2022private}, whereas we make no distributional assumptions.
We demonstrate the importance of this robustness technique throughout our experiments in Section~\ref{sec:applications}.

\vspace{-1mm}
\subsection{Learning}
\vspace{-1mm}

A last important use for prior-dependent bounds is as surrogate objectives for optimization.
As we show in Section~\ref{sec:applications}, being able to learn across upper bounds $U_{\*x_1},\dots,U_{\*x_T}$ of a sequence of (possibly sensitive) datasets $\*x_t$ is useful for both the public-private and sequential release.
Algorithms with predictions guarantees are often sufficiently nice to do this using off-the-shelf online learning~\citep{khodak2022awp}, a property that largely holds for our upper bounds as well.\looseness-1

Most saliently, the bound $U_{\*x}^{(q,\varepsilon)}\hspace{-.5mm}=\hspace{-.5mm}-\log\Psi_{\*x}^{(q,\varepsilon)}$ is a convex function of an inner product $\Psi_{\*x}^{(q,\varepsilon)}$ between the EM score and the prior $\mu$;
thus by discretizing one can learn over a large family of piecewise-constant priors, which themselves Lipschitz priors over a bounded domain.
The same is true of the multiple quantile bound $U_{\*x}^{(\varepsilon)}$ because it is the log-sum-exp over $U_{\*x}^{(q_i,\varepsilon_i)}$ and thus also convex.
Thus in theory we can (privately) online learn the sequence $U_{\*x_t}^{(\varepsilon)}$ with low-regret w.r.t. any set of $m$ Lipschitz priors (c.f.~Theorem~\ref{thm:quantile-regret}).
However, in-practice we may not want to learn in the high dimensions needed by the discretization, and rather than fixed priors we may wish to learn a mapping from dataset-specific features.
In Section~\ref{sec:applications} we thus focus on learning the less-expressive family of location-scale models.\looseness-1

\vspace{-2mm}
\section{Applications}\label{sec:applications}
\vspace{-1mm}

We now consider our two applications from the introduction---public-private and sequential release---using a specific class of location-scale priors, which for some measure $f:\R\mapsto\R_{\ge0}$ have form $\mu_{\nu,\sigma}(x)=\frac1\sigma f\left(\frac{x-\nu}\sigma\right)$ for $\nu\in\R$ and $\sigma>0$.
Such families allows us to model both the location of a quantile using $\nu=\langle\*w,\*f\rangle$---where $\*w\in\R^d$ is a linear model from public features $\*f\in\R^d$ about the dataset $\*x\in\R^n$---and our uncertainty about it using $\sigma$, all while staying in reasonable dimensions.
Note that in this section we use only the $\varepsilon$-independent bound $U_{\*x}$, as $U_{\*x}^{(\varepsilon)}$ does not yield a convex objective;
furthermore, while we mainly discuss the single-quantile bound $U_{\*x}^{(q)}$ for simplicity, the general results (c.f.~Section~\ref{app:applications}) extend naturally to the case of $m>1$ because it is the log-sum-exp of the former.\looseness-1

\vspace{-1mm}
\subsection{Convexity vs. robustness of location-scale models}
\vspace{-1mm}

We must first determine which location-scale family to use, as this include Gaussians with mean $\nu$ and variance $\sigma^2$, Laplace with mean $\nu$ and scale $\sigma$, Cauchy with location $\nu$ and scale $\sigma$, and more.
To make this decision, we consider two desiderata:
(1) the prior should be robust in the way the Cauchy is robust, i.e. being wrong about the data location should not harm us too much, and (2) it should be easy to learn the parameters $\nu$ and $\sigma$, e.g. by optimizing $U_{\*x}^{(q)}(\mu_{\nu,\sigma})$.\looseness-1

While not necessary, one way of ensuring (2) is convexity of $U_{\*x}^{(q)}$, which we focus on as it enables efficient algorithms.
Here we make use of a connection between these upper bounds and the likelihood of {\bf censored regression} \citep{pratt1981concavity}, which for noise $\xi_i\in\R$ models a relationship between features $\*f_i\in\R^d$ and a variable $y_i=\langle\*w,\*f_i\rangle+\xi_i$ when information about $y_i$ is only provided in terms of an interval $[a_i,b_i)$ containing it (e.g. an individual's income bracket, not their exact income).
If $\xi_i$ is from a location-scale distribution with $\nu=0$ the log-likelihood given datapoints $(a_i,b_i,\*f_i)$ is\looseness-1\vspace{-2mm}
\begin{equation}\label{eq:censored}
	\ifdefined\arxiv
	\LogL_{\{a_i,b_i,\*f_i\}_{i=1}^n}(\*w,\sigma)
	=\hspace{-1mm}\sum_{i=1}^n\log\int_{a_i}^{b_i}\frac1\sigma f\left(\frac{y-\langle\*w,\*f_i\rangle}\sigma\right)dy
	\else
	\LogL_{\{a_i,b_i,\*f_i\}_{i=1}^n}\hspace{-.5mm}(\*w,\sigma)\hspace{-.5mm}
	=\hspace{-1mm}\sum_{i=1}^n\hspace{-.5mm}\log\hspace{-.5mm}\int_{a_i}^{b_i}\hspace{-.5mm}\frac1\sigma f\hspace{-.5mm}\left(\hspace{-.5mm}\frac{y-\langle\*w,\*f_i\rangle}\sigma\hspace{-.5mm}\right)\hspace{-.5mm}dy
	\fi
\end{equation}
Observe that for $a=\*x_{[\lfloor qn\rfloor]}$ and $b=\*x_{[\lfloor qn\rfloor]+1}$ we have\vspace{-1mm}
\ifdefined\arxiv
\begin{equation}
	U_{\*x}^{(q)}(\mu_{\langle\*w,\*f\rangle,\sigma})
	=-\log\mu_{\langle\*w,\*f\rangle,\sigma}((a,b])
	=-\log\int_a^b\frac1\sigma f\left(\frac{o-\langle\*w,\*f\rangle}\sigma\right)do
\end{equation}
\else
\begin{align}
	\begin{split}
		U_{\*x}^{(q)}(\mu_{\langle\*w,\*f\rangle,\sigma})
		&=-\log\mu_{\langle\*w,\*f\rangle,\sigma}((a,b])\\
		&=-\log\int_a^b\frac1\sigma f\left(\frac{o-\langle\*w,\*f\rangle}\sigma\right)do
	\end{split}
\end{align}
\fi
which is the negative of $\LogL_{a,b,\*f}(\*w,\sigma)$.
We thus adopt the reparameterization of \citet{burridge1981note}, who showed that  \eqref{eq:censored} is concave w.r.t. $(\*v,\phi)=(\frac{\*w}\sigma,\frac1\sigma)$ whenever $f$ is {\bf log-concave}, a property satisfied by the Gaussian and Laplace families but not the Cauchy.
Therefore, for such $f$ we have that $\ell_{\*x}^{(q)}(\langle\*v,\*f\rangle,\phi)=U_{\*x}^{(q)}(\mu_{\frac{\langle\*v,\*f\rangle}\phi,\frac1\phi})$ is convex w.r.t. $(\*v,\phi)$.\looseness-1

Unfortunately, we show that no log-concave $f$ is robust, in the sense that for any $R>0$ there exists a dataset of points in the interval $(\theta\pm R)^n$ s.t. $U_{\*x}^{(q)}(\mu_{\theta,1})=\Omega(R)$ (rather than $\BigO(\log(1+R^2))$ as shown for the Cauchy family in Corollary~\ref{cor:cauchy}).
On the other hand, log-concave location-scale families are the only ones for which $U_{\*x}^{(q)}$ is convex, both for the original parameterization and that of \citet{burridge1981note}.
We record these facts in the following theorem:\looseness-1
\begin{Thm}[c.f. Thm.~\ref{thm:impossibility}]
	Let $\mu_{\nu,\sigma}$ be a location-scale family associated with a continuous measure $f:\R\mapsto\R_{\ge0}$.
	\ifdefined\arxiv
	\begin{enumerate}[itemsep=1pt]
	\else
	\begin{enumerate}[leftmargin=*,topsep=-2pt,noitemsep]\setlength\itemsep{1pt}
	\fi
		\item If $f$ is log-concave then $\exists~a,b>0$ s.t. for any $R>0$, $\psi\in(0,\frac R{2n}]$, $q\ge\frac1n$, and $\theta\in\R$ there exists $\*x\in(\theta\pm R)^n$ with $\min_i\*x_{[i+1]}\hspace{-.25mm}-\hspace{-.25mm}\*x_{[i]}\hspace{-.25mm}=\hspace{-.25mm}\psi$ s.t. $U_{\*x}^{(q)}(\mu_{\theta,1})\hspace{-.5mm}=\hspace{-.5mm}aR+\log\frac b\psi$.
		\item If $f$ is not log-concave then there exists $\*x\in\R^n$ with $\min_i\*x_{[i+1]}\hspace{-.5mm}-\hspace{-.5mm}\*x_{[i]}\hspace{-.25mm}>\hspace{-.25mm}0$ s.t. $U_{\*x}^{(q)}(\mu_{\theta,1})$ is non-convex in $\theta$.\looseness-1
	\end{enumerate}
\end{Thm}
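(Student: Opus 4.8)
The plan is to route both parts through the identity $U_{\*x}^{(q)}(\mu_{\theta,1})=-\log\int_{\*x_{[k]}}^{\*x_{[k+1]}}f(o-\theta)\,do$ with $k=\lfloor qn\rfloor$; since translating the whole dataset by $\theta$ changes neither this integral nor the containment constraint, we may assume $\theta=0$. It then suffices to understand $-\log\int_{\*x_{[k]}}^{\*x_{[k+1]}}f$ as the ``optimal interval'' $(\*x_{[k]},\*x_{[k+1]}]$ is slid around (and, when $n\ge3$, widened), with the other $n-2$ points chosen to keep every gap at least $\psi$ and the whole dataset in $(-R,R)$ for part~1, or in $\R$ for part~2.

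For part~1 the key analytic input is that log-concave densities have at least exponential tails: picking a point $x_0$ past the mode of $f$ where $\log f$ has a finite negative one-sided slope $-\gamma$ (which exists by concavity of $\log f$ and integrability of $f$), the tangent-line bound $\log f(x)\le\log f(x_0)-\gamma(x-x_0)$ gives $f(x)\le Ce^{-\gamma|x|}$ for large $|x|$, and symmetrically on the left. I would then pack all $n$ points with spacing $\psi$ so that $(\*x_{[k]},\*x_{[k+1]}]$ sits at distance $\Theta(R)$ from the origin -- concretely near $R/2$, which is feasible since $(n-1)\psi\le R/2$ leaves room for the remaining points on both sides. This forces $\int_{\*x_{[k]}}^{\*x_{[k+1]}}f\le\psi Ce^{-\gamma R/2}$, hence $U_{\*x}^{(q)}(\mu_{0,1})\ge\frac\gamma2R+\log\frac{1/C}\psi$, and choosing $a=\gamma/2$ together with $b$ small enough to also cover bounded $R$ (where one uses only $\int_{\*x_{[k]}}^{\*x_{[k+1]}}f\le\psi\|f\|_\infty$) yields a single pair $(a,b)$ depending on $f,n$. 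To turn this into the stated equality, I would use that $t\mapsto\int_t^{t+w}f$ is continuous in the interval position $t$ and width $w$, that parking the interval over the mode produces a dataset with $U$ below the target, and that the intermediate value theorem then hits $aR+\log\frac b\psi$ exactly (widening the optimal interval when $R$ is small, which is available for $n\ge3$); log-concavity makes the sliding-window integral unimodal, which keeps the achievable set of $U$-values a clean interval.

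For part~2 I would use that $f$ failing log-concavity gives a triple $y_1<y_2<y_3$ with $y_2=\lambda y_1+(1-\lambda)y_3$ and $-\log f(y_2)>\lambda(-\log f(y_1))+(1-\lambda)(-\log f(y_3))$, where $f(y_1),f(y_3)>0$ (and possibly $f(y_2)=0$). Fix any center $c$, let the optimal interval be the width-$w$ window $(\*x_{[k]},\*x_{[k+1]}]=(c-\tfrac w2,c+\tfrac w2]$, and place the remaining $n-2$ points with gaps exceeding $w$, so the dataset is valid with positive minimum gap. Then $U_{\*x}^{(q)}(\mu_{\theta,1})=-\log\int_{c-\theta-w/2}^{c-\theta+w/2}f$, and by continuity of $f$, $\frac1w\int_{c-\theta-w/2}^{c-\theta+w/2}f\to f(c-\theta)$ as $w\to0$, so $U_{\*x}^{(q)}(\mu_{\theta,1})+\log w\to-\log f(c-\theta)$ pointwise in $[0,\infty]$. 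Evaluating at $\theta_i=c-y_i$ (which satisfy $\theta_2=\lambda\theta_1+(1-\lambda)\theta_3$) and invoking the strict reversed inequality for $-\log f$, for all small enough $w$ we get $U_{\*x}^{(q)}(\mu_{\theta_2,1})>\lambda U_{\*x}^{(q)}(\mu_{\theta_1,1})+(1-\lambda)U_{\*x}^{(q)}(\mu_{\theta_3,1})$, i.e. non-convexity; the case $f(y_2)=0$ only makes the left side diverge to $+\infty$, which is even easier.

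I expect the main obstacle to be the uniform bookkeeping in part~1: producing one pair $(a,b)$ valid for \emph{every} admissible $(R,\psi,q,\theta)$ requires separating the large-$R$ regime (where exponential tail decay makes exponentially small interval-masses achievable) from the small-$R$ regime (where the optimal interval is confined near the mode, the achievable $U$-window is narrow, and the intermediate-value step must be combined with the freedom to vary the interval width), and it requires handling degenerate log-concave $f$ such as compactly supported ones or ones with faster-than-exponential decay. Part~2, by contrast, is essentially a single limiting argument once the right test triple is identified.
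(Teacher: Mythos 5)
Your Part~1 takes essentially the same route as the paper: both rely on log-concave densities having exponential tails (you derive $f(x)\le Ce^{-c|x|}$ from a tangent-line bound on $\log f$, while the paper cites Cule \& Samworth, Lemma~1, which gives the same fact), and both place the entire dataset in $[\theta+R/2,\theta+R)$ with spacing $\psi$ so the optimal interval has mass at most $C\psi e^{-cR/2}$. Where you diverge is in reading the displayed ``$=$'' literally and proposing an intermediate-value argument to hit $aR+\log\frac b\psi$ exactly. The paper's proof stops at the inequality $U_{\*x}^{(q)}(\mu_{\theta,1})\ge\frac c2R+\log\frac{1/C}\psi$, and the informal description in Section~5.1 (``$U_{\*x}^{(q)}(\mu_{\theta,1})=\Omega(R)$'') confirms a lower bound is all that is intended. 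So your worry about pinning down the small-$R$ regime is well placed, but the equality is a looseness in the theorem statement rather than an obligation the paper's proof discharges.

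Your Part~2 is a genuinely different argument, and in one respect more careful than the paper's. The paper produces an interval $[a,b]$ on which $f$ is \emph{strictly} midpoint log-convex ($\sqrt{f(x)f(y)}>f(\frac{x+y}2)$ for all distinct $x,y\in[a,b]$) and applies H\"older's inequality to get $\Psi_{\*x}^{(q)}(\mu_{(\theta+\theta')/2,1})<\sqrt{\Psi_{\*x}^{(q)}(\mu_{\theta,1})\Psi_{\*x}^{(q)}(\mu_{\theta',1})}$ for any dataset whose optimal interval has length at most $\frac{b-a}2$. You instead fix one triple $y_1<y_2<y_3$ witnessing failure of log-concavity and shrink the optimal interval: by continuity $U_{\*x}^{(q)}(\mu_{\theta,1})+\log w\to-\log f(c-\theta)$ pointwise, and the strict convexity violation of $-\log f$ at $(\theta_1,\theta_2,\theta_3)$ then persists for all small enough $w>0$. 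Your approach needs only one violating triple, which is exactly what non-log-concavity of a continuous $f$ supplies; by contrast, the existence of an interval of \emph{strict} log-convexity does not obviously follow (for instance, a density whose $-\log f$ is piecewise linear with a single concave kink is not log-concave, yet every interval contains a subinterval on which it is affine, so no interval of strict midpoint log-convexity exists). The trade-off is that you give up the paper's quantitative remark that \emph{every} dataset with a short enough optimal interval inside $[a,b]$ is a witness; your argument produces a single such dataset, which is what the theorem asks for.
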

Note the latter dataset is not degenerate:
for $f$ strictly log-convex over $[a,b]$, any $\*x$ whose optimal interval has length $<\frac{b-a}2$ has non-convex $U_{\*x}^{(q)}(\mu_{\theta,1})\hspace{-.5mm}=\hspace{-.5mm}-\log\Psi_{\*x}^{(q)}(\mu_{\theta,1})$.
 
We must thus choose between having a robust location-scale family like the Cauchy or an easy-to-optimize log-concave one.
As we can ensure robustness of the learned prior {\em post-hoc} using the approach of Section~\ref{sec:robustness}, we choose the latter.
Specifically, we use the Laplace prior, as it is in some sense the most robust log-concave distribution (it has loss $\Theta(R)$ if $\*x\in(\theta\pm R)^n$, whereas e.g. the Gaussian has loss $\Theta(R^2)$) and because it yields a numerically stable closed-form expression~\eqref{eq:closed-form} for $\ell_{\*x}^{(q)}(\theta,\phi)$ (unlike e.g. the Gaussian).\looseness-1

\vspace{-1mm}
\subsection{Augmenting quantile release using public data}
\vspace{-1mm}

We turn to two applications that depend on optimizing upper bounds $\ell_{\*x}^{(q)}(\theta,\phi)$ on the performance of quantile release using the Laplace prior with scale $\frac1\phi$ and location $\frac\theta\phi$.
While our final objective is small $\Gap_q$, we will mainly discuss optimizing $\ell_{\*x}^{(q)}=U_{\*x}^{(q)}$, or its expectation if $\*x$ is drawn from some distribution.
In the former case this directly bounds (w.h.p.) the cost of multiple quantile release via the theoretical results in Section~\ref{sec:algorithm} because $U_{\*x}\ge-\log\Psi_{\*x}^{(\varepsilon)}$, while a bound on $\E_{\*x}U_{\*x}$ can bound $\E\Gap_{\max}$ by setting $\beta$.
 For example,
$\beta=\frac{2\pi^2}{\varepsilon n}\exp(2\sqrt{\log(2)\log(m+1)})$ in Theorem~\ref{thm:kary-main} implies $\Gap_{\max}$ has expectation at most\vspace{-2mm}
\begin{equation}\label{eq:expectation-bound}
	\BigO\left(\hspace{-.75mm}\exp\hspace{-.25mm}\left(2\sqrt{\log(2)\log(m\hspace{-.25mm}+\hspace{-.25mm}1)}\right)\hspace{-.5mm}\frac{\log(\varepsilon mn)\hspace{-.5mm}+\hspace{-.5mm}\E_{\*x}U_{\*x}}\varepsilon\hspace{-.5mm}\right)\vspace{-1mm}
\end{equation}
Our first application is the frequently studied setting where we have a large public dataset $\*x'\in\R^N$ and want to use it to improve the release of statistics of a smaller private dataset $\*x\in\R^n$.
To apply our quantile release method, we must use $\*x'$ to construct a prior $\mu'$ for each that makes $U_{\*x}^{(q)}(\mu')$ small.
If the entries of $\*x$ and $\*x'$ are sampled i.i.d. from similar distributions $\D$ and $\D'$, respectively, the convexity of $U_{\*x}^{(q)}$ suggests using stochastic optimization find a prior $\mu$ that approximately minimizes the expectation $\E_{\*z\sim{\D'}^n}U_{\*z}(\mu)$ using samples of size $n$ drawn from $\*x'$.
We provide a guarantee for a variant of this generic approach that runs online gradient descent (OGD) with separate learning rates for $\theta$ and $\phi$ on samples drawn without replacement from $\*x'$:
\begin{Thm}[c.f. Thm.~\ref{thm:pubpri}]
	If $\D$ and $\D'$ have bounded densities with bounded support then there exists an algorithm optimizing $U_{\*x_t'}$ over $T$ datasets $\*x_t'$ of size $n$ drawn from $\*x'\in\R^N$ without replacement that runs in time $\BigO(mN)$ and returns a set $\mu'$ of $m$ Laplace priors s.t. w.h.p.\vspace{-2mm}
	\ifdefined\arxiv
	\begin{equation}
		\E_{\*x\sim\D^n}U_{\*x}(\mu')
		\le\min_{\mu\in\Lap_{B,\sigma_{\min},\sigma_{\max}}^m}\E_{\*x\sim\D^n}U_{\*x}(\mu)
		+\tilde\BigO\left(\TV_q(\D,\D')+\sqrt{\frac{mn}N}\right)
	\end{equation}
	\else
	\begin{align}
	\begin{split}
		\E_{\*x\sim\D^n}U_{\*x}(\mu')
		&\le\min_{\mu\in\Lap_{B,\sigma_{\min},\sigma_{\max}}^m}\E_{\*x\sim\D^n}U_{\*x}(\mu)\\
		&\qquad+\tilde\BigO\left(\TV_q(\D,\D')+\sqrt{\frac{mn}N}\right)\vspace{-4mm}
	\end{split}
	\end{align}
	\fi
	where $\Lap_{B,\sigma_{\min},\sigma_{\max}}$ is the set of Laplace priors with locations in $[\pm B]$ and scales in $[\sigma_{\min},\sigma_{\max}]$ and $\TV_q(\D,\D')$ is the total variation distance between the joint distributions of the order statistics $\left\{(\*x_{[\lfloor q_in\rfloor]},\*x_{[\lfloor q_in\rfloor+1]})\right\}_{i=1}^m$ for $\*x\sim\D^n$ and $\left\{(\*x_{[\lfloor q_in\rfloor]}',\*x_{[\lfloor q_in\rfloor+1]}')\right\}_{i=1}^m$ for $\*x'\sim{\D'}^n$.
\end{Thm}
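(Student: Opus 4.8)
The plan is to treat $U_{\*x}$ as a surrogate loss and combine four ingredients: (i) joint convexity of $U_{\*x}$ on the restricted Laplace class in the Burridge reparameterization; (ii) a regret bound for projected online gradient descent (OGD) on $U_{\*x_1'},\dots,U_{\*x_T'}$; (iii) an online-to-batch conversion; and (iv) a change-of-measure step trading the sampling distribution $\D'$ for the target $\D$ at a cost of $\TV_q(\D,\D')$. \emph{Setup and convexity.} Partition $\*x'$ into $T=\lfloor N/n\rfloor$ disjoint sub-datasets $\*x_1',\dots,\*x_T'$ of size $n$; because the $N$ entries of $\*x'$ are i.i.d.\ from $\D'$, the product structure makes $\*x_1',\dots,\*x_T'$ i.i.d.\ draws from ${\D'}^n$. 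Reparameterize each Laplace prior $\mu_i$ by $(\theta_i,\phi_i)=(\nu_i/\sigma_i,1/\sigma_i)$ and let $\Theta$ be the compact set corresponding to $\Lap_{B,\sigma_{\min},\sigma_{\max}}^m$. Since the Laplace density is log-concave, each $U_{\*x}^{(q_i)}(\mu_i)=\ell_{\*x}^{(q_i)}(\theta_i,\phi_i)$ is convex on its coordinate block (\citet{burridge1981note}; cf.\ Theorem~\ref{thm:impossibility}), and $U_{\*x}=\log\sum_i e^{U_{\*x}^{(q_i)}}$, being a log-sum-exp of convex functions on disjoint blocks, is jointly convex on $\Theta$. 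The closed form \eqref{eq:closed-form} moreover gives gradients of norm at most some $G=G(B,\sigma_{\min},\sigma_{\max})$ (the ratio $f'/f$ stays bounded for the Laplace), and $\|\nabla U_{\*x}\|\le\max_i\|\nabla_{\mu_i}U_{\*x}^{(q_i)}\|\le G$ because the softmax weights of the log-sum-exp are nonnegative and sum to one; meanwhile $\operatorname{diam}(\Theta)=\BigO(\sqrt m)$.

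\emph{Regret and online-to-batch.} Run projected OGD on $\Theta$ with separate step sizes for the location and inverse-scale coordinates (they live on different scales, which is what makes the tuning clean), producing $\mu_1,\dots,\mu_T$, and output $\mu'=\bar\mu=\frac1T\sum_{t}\mu_t$. The standard analysis gives regret $\BigO(\operatorname{diam}(\Theta)\,G\sqrt T)=\tilde\BigO(\sqrt{mT})$ against any competitor in $\Theta$, hence regret per round $\tilde\BigO(\sqrt{mn/N})$. Since $\mu_t$ depends only on $\*x_1',\dots,\*x_{t-1}'$, the terms $U_{\*x_t'}(\mu_t)-L'(\mu_t)$ with $L'(\mu)=\E_{\*z\sim{\D'}^n}U_{\*z}(\mu)$ form a martingale-difference sequence; truncating the (a.s.\ finite but unbounded) losses at level $\BigO(\log(Tmn))$ and applying Azuma, together with convexity of $L'$, yields w.h.p.\ that $L'(\bar\mu)\le\min_{\mu\in\Theta}L'(\mu)+\tilde\BigO(\sqrt{mn/N})$.

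\emph{Change of measure and conclusion.} The key structural fact is that $U_{\*z}(\mu)$ depends on $\*z$ only through the $m$ pairs of order statistics $(\*z_{[\lfloor q_in\rfloor]},\*z_{[\lfloor q_in\rfloor+1]})$, since $U_{\*z}^{(q_i)}(\mu_i)=-\log\mu_i((\*z_{[\lfloor q_in\rfloor]},\*z_{[\lfloor q_in\rfloor+1]}])$; thus $L'(\mu)$ and $L(\mu)=\E_{\*x\sim\D^n}U_{\*x}(\mu)$ integrate the same function against the two joint order-statistic laws, so $|L'(\mu)-L(\mu)|\le\tilde\BigO(\TV_q(\D,\D'))$ for every $\mu\in\Theta$ (bound the truncated integrand by $\BigO(\log(\cdot))$ times $\TV_q$ and control the truncation tail as above). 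Applying this on both sides of the previous display---on the left to move $\bar\mu$ from $\D'$ to $\D$, on the right to replace the $\D'$-optimal competitor by the $\D$-optimal one---gives exactly $\E_{\*x\sim\D^n}U_{\*x}(\mu')\le\min_{\mu\in\Lap_{B,\sigma_{\min},\sigma_{\max}}^m}\E_{\*x\sim\D^n}U_{\*x}(\mu)+\tilde\BigO(\TV_q(\D,\D')+\sqrt{mn/N})$. The runtime is dominated by extracting the $m$ order-statistic pairs from each of the $T$ sub-datasets ($\BigO(mn)$ per dataset via linear-time selection, $\BigO(mN)$ total) plus $\BigO(m)$ per-round gradient and projection work, i.e.\ $\BigO(mN)$ overall. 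The main obstacle is not any individual step but handling the unbounded losses uniformly: one must check that the $-\log(\text{gap})$-type tail of $U_{\*z}$ is light enough---which is precisely where ``bounded densities with bounded support'' is used, via a union bound showing no relevant order-statistic gap is polynomially small---so that a single truncation makes both the Azuma concentration and the change-of-measure estimate go through with only polylogarithmic overhead.
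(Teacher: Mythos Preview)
Your proposal is correct and follows essentially the same approach as the paper: run OGD with separate step-sizes on the convex (via Burridge reparameterization) surrogate $U_{\*x_t'}$, apply online-to-batch conversion, and change measure from $\D'$ to $\D$ using that $U_{\*z}$ depends only on the order-statistic pairs, with the bounded-density assumption controlling the $-\log\psi_{\*z}$ tail. The only cosmetic difference is that the paper handles the unbounded losses by explicitly conditioning on the event $\{\psi_{\*z}\ge\psi\}$ (introducing an auxiliary distribution ${\D_\psi'}^n$ and bounding $\Pr(\psi_{\*z}<\psi)\le\kappa n^2\psi$) rather than by truncation, but this is equivalent to your tail argument.
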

For $N\gg mn$, the suboptimality of $\mu'$ for the upper bound $U_{\*x}$ will depend on the statistical distance between the quantile intervals of $\D$ and $\D'$:
even if $\D$ and $\D'$ are dissimilar, similar order statistic distributions will ensure good performance.
Note, as in Section~\ref{sec:robustness}, we can hedge against large $\TV_q(\D,\D')$ by mixing the output $\mu'$ with a robust prior.\looseness-1

We evaluate this approach, which we call {\em Public Fit} or \texttt{PubFit}, on Adult~\citep{kohavi1996adult} and Goodreads~\citep{wan2018goodreads}, both used previously for DP quantiles~\citep{gillenwater2021differentially,kaplan2022quantiles}.
Because our guarantees improve with different step-sizes for $\theta$ and $\phi$, we use COCOB~\citep{orabona2017training}---an OGD variant that provably sets per-coordinate step-sizes without the need for tuning---as \texttt{PubFit}'s stochastic solver. 
We also test a robust version where its output is mixed with a half-Cauchy distribution, and three baselines:
the Uniform prior, just using the quantiles of the public data (\texttt{public quantiles}), and using the public quantiles to set the location parameters of $m$ Cauchy priors (\texttt{public Cauchy}).\looseness-1

\begin{figure}[!t]
	\centering
	\includegraphics[width=.495\linewidth]{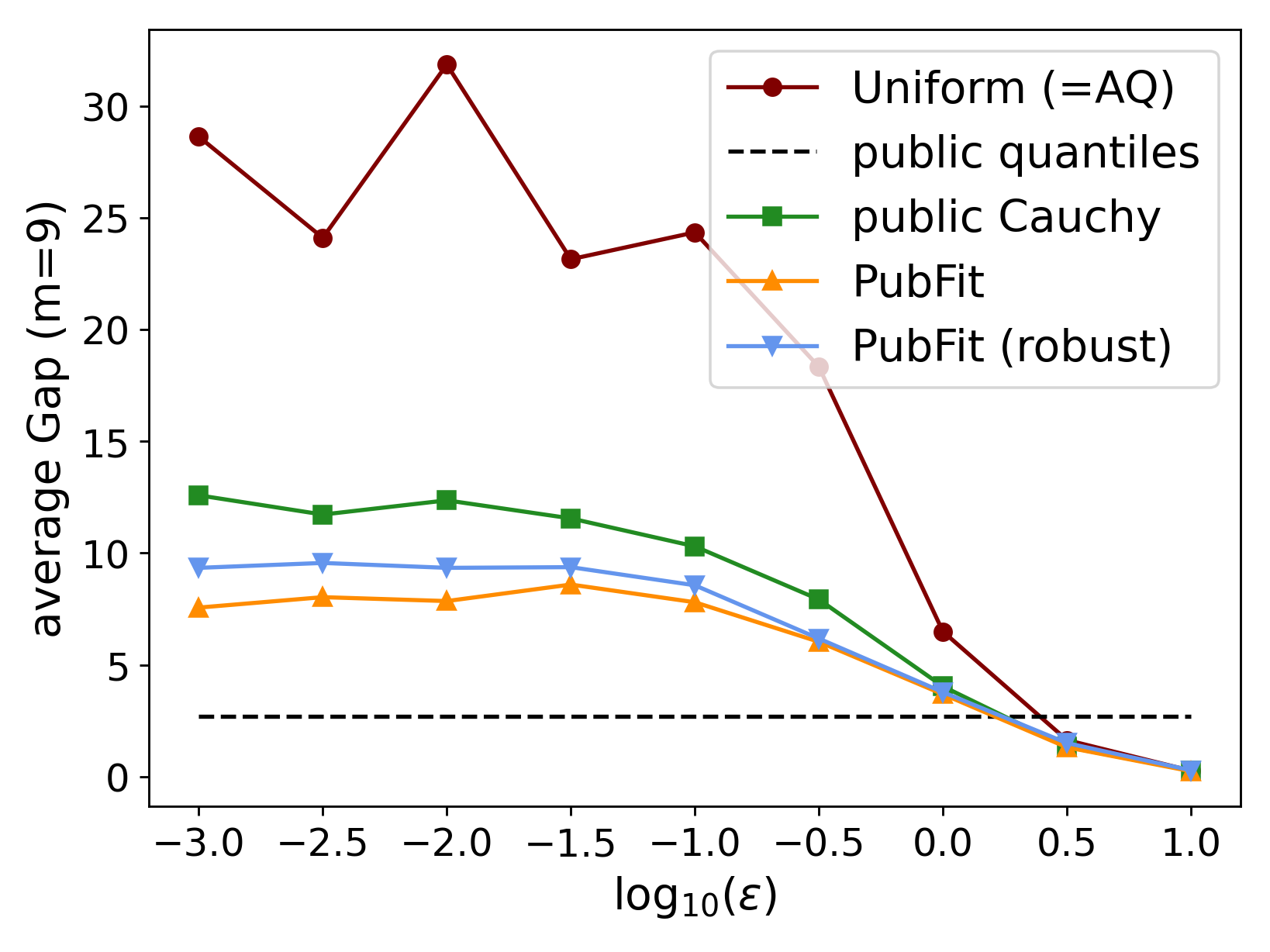}
	\hfill
	\includegraphics[width=.495\linewidth]{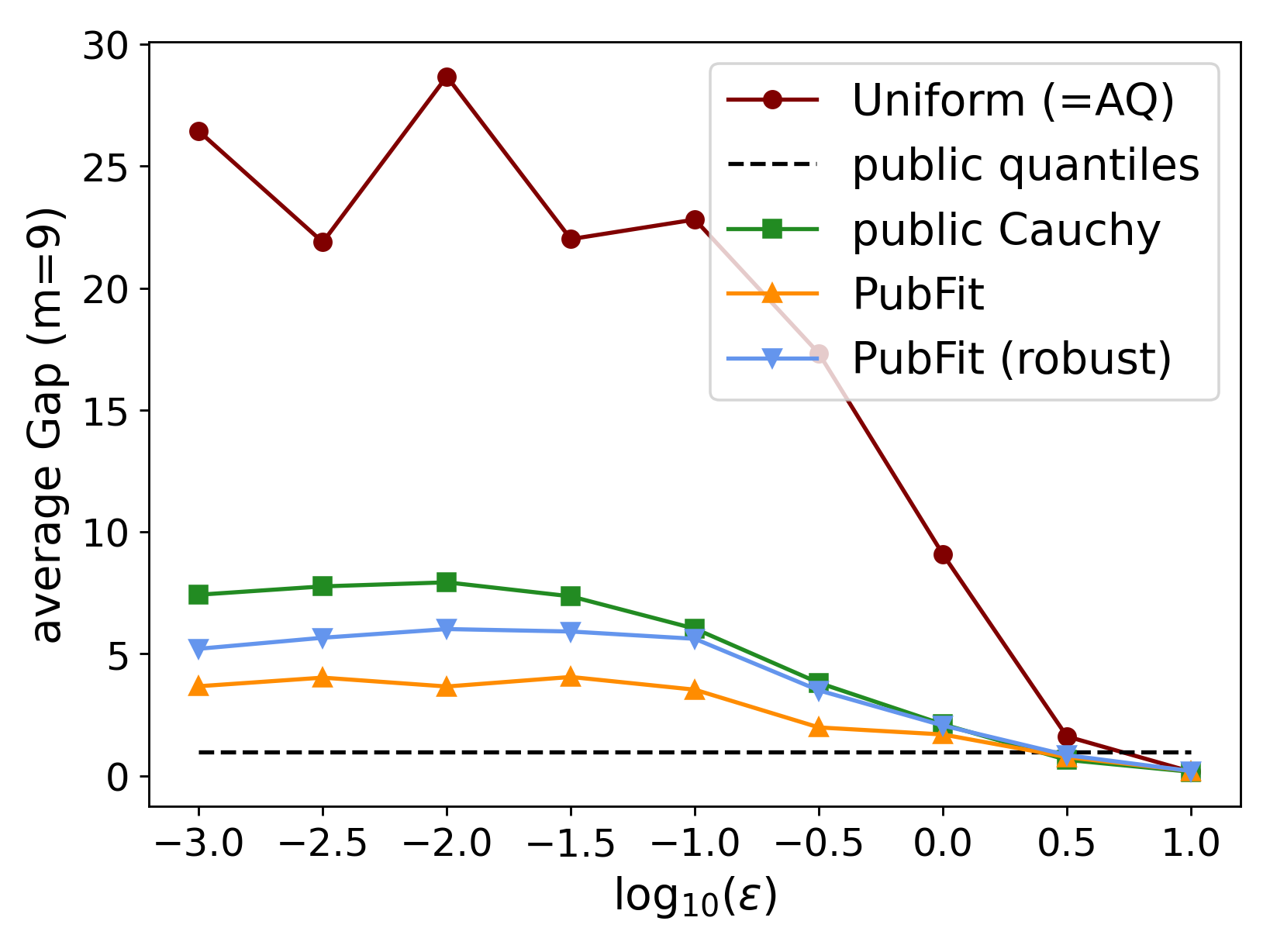}\vspace{-4mm}
	\caption{\label{fig:pubpri-logepsplots}
		Public-private release of nine quantiles using one hundred samples from the Adult age (left) and hours (right) datasets.
		The public data is the Adult training set while private data is test.\looseness-1
	}
\end{figure}

\ifdefined\arxiv\else
\begin{figure}[!t]
	\centering\vspace{-2mm}
	\includegraphics[width=.495\linewidth]{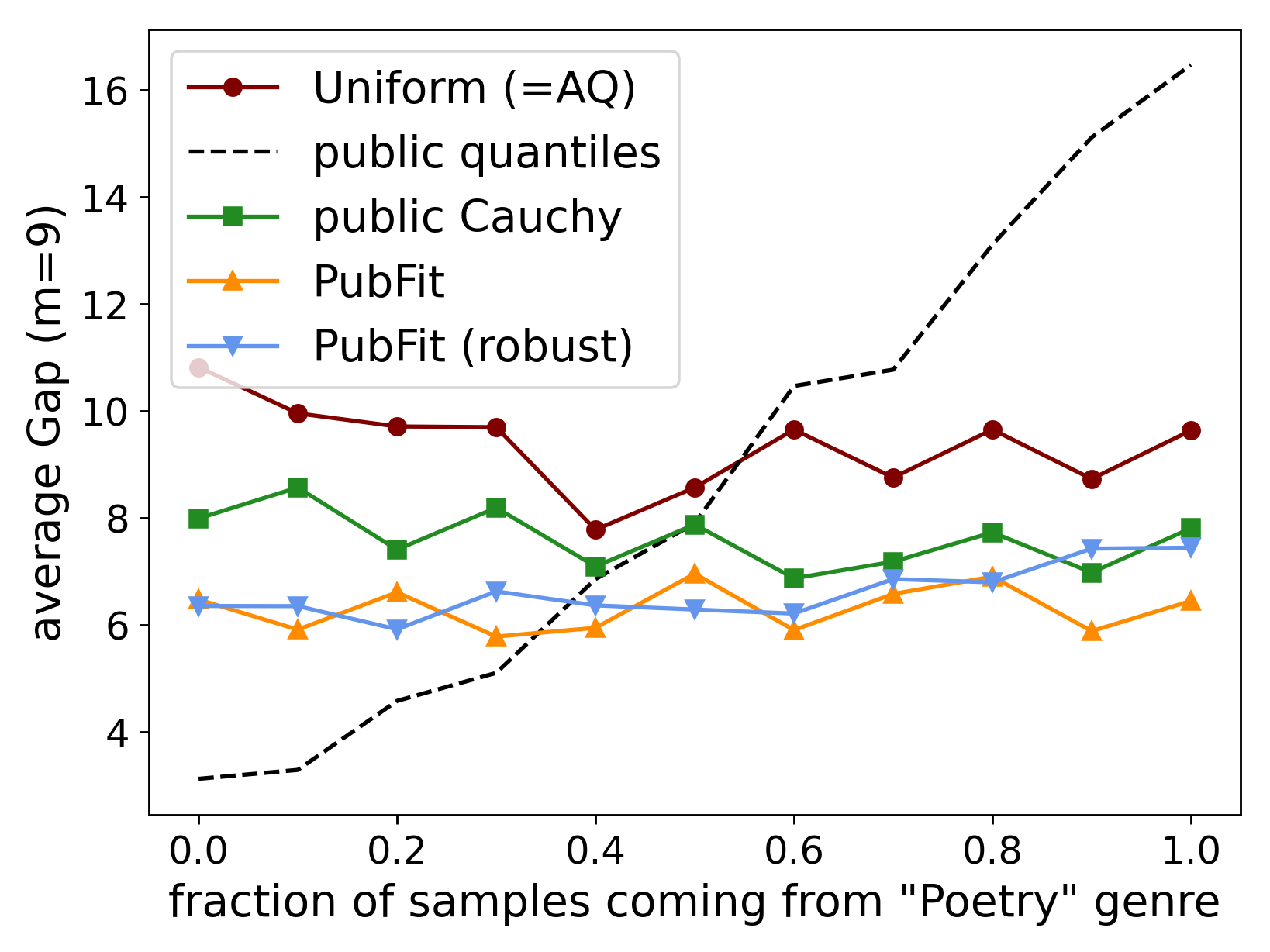}
	\hfill
	\includegraphics[width=.495\linewidth]{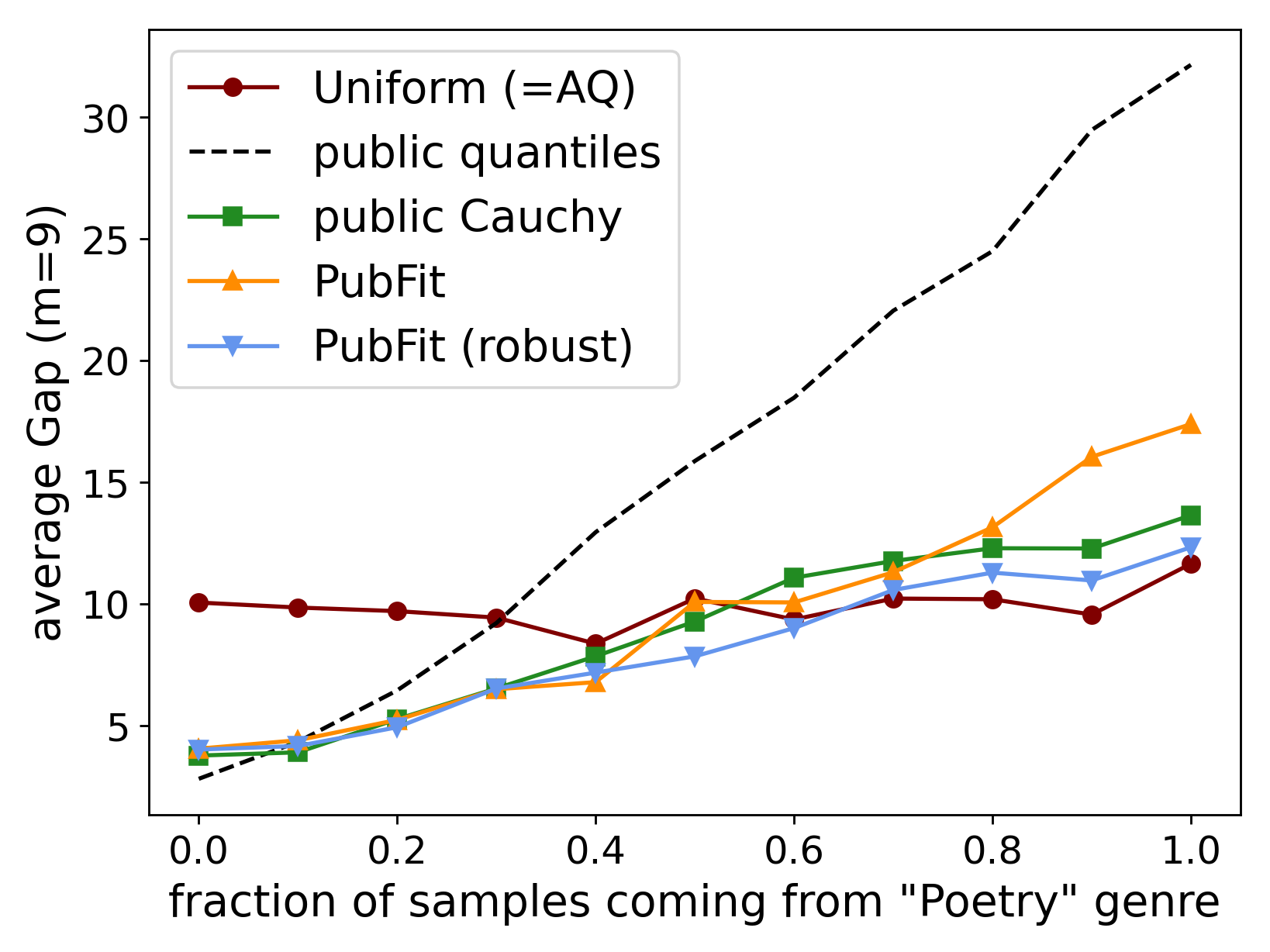}\vspace{-4mm}
	\caption{\label{fig:mixplots}
		Public-private release of nine quantiles on one hundred samples from the Goodreads rating (left) and page count (right) datasets, with $\varepsilon=1$.
		The public data is the ``History" genre while private data is sampled from a mixture of it and ``Poetry."\looseness-1
	}
\end{figure}
\fi

Adult tests the $\D=\D'$ case, with its ``train" set the public dataset and a hundred samples from ``test" as private.
Figure~\ref{fig:pubpri-logepsplots} shows that \texttt{public quantiles} does best at small $\varepsilon$, as is expected with no distribution shift, but it cannot adapt to the empirical distribution of a small number of private points, and so is worse at $\varepsilon>1$.
Among the rest, \texttt{PubFit} is most similar to \texttt{public-quantiles} at small $\varepsilon$ but still does well at large $\varepsilon$.\looseness-1

We use the Goodreads ``History" and ``Poetry" genres to evaluate under distribution shift by fitting on all but a small fraction of data from the former and releasing quantiles of samples from varying mixtures of the two datasets. 
As expected, the performance of \texttt{public quantiles} deteriorates with more samples from ``Poetry."
For book ratings, \texttt{PubFit} is best among the remaining methods, but without much change with distribution shift, possibly due to an incomplete fit of the data.
For page counts, the \texttt{PubFit} methods and \texttt{public Cauchy} both do as well as \texttt{public-quantiles} when most data is from ``History," but \texttt{PubFit (robust)} deteriorates least---and much less than regular \texttt{PubFit}---as the distribution shifts. 
This highlights the importance of robustness analysis, and suggest the former as a good method to start with, as it takes advantage of similar public and private distributions (Fig.~\ref{fig:pubpri-logepsplots}) while never doing much worse than the default method (Uniform) when the the distributions are dissimilar (Fig.~\ref{fig:mixplots}).\looseness-1

\vspace{-1mm}
\subsection{Sequentially setting priors using past sensitive data}
\vspace{-1mm}

\ifdefined\arxiv
\begin{figure}[!t]
	\centering\vspace{-2mm}
	\includegraphics[width=.495\linewidth]{plots/pubpri/average_rating_mixplot.png}
	\hfill
	\includegraphics[width=.495\linewidth]{plots/pubpri/num_pages_mixplot.png}\vspace{-4mm}
	\caption{\label{fig:mixplots}
		Public-private release of nine quantiles on one hundred samples from the Goodreads rating (left) and page count (right) datasets, with $\varepsilon=1$.
		The public data is the ``History" genre while private data is sampled from a mixture of it and ``Poetry."\looseness-1
	}
\end{figure}
\else
\fi

Our second application is sequential release, which we do not believe has been studied, but arises naturally if e.g. we wish to release daily statistics from a continuous stream of data.
Here we have a sequence of datasets $\*x_1,\dots,\*x_T$, each with associated {\em public} features $\*f_1,\dots,\*f_T\in\R^d$ (e.g. day of the week), and we wish to minimize the average maximum gap $\frac1T\sum_{t=1}^T\max_i\Gap_{q_i}(\*x_t,o_{t,i})$, whose expectation can be bounded~\eqref{eq:expectation-bound} in terms of $\frac1T\sum_{t=1}^TU_{\*x_t}$.
For simplicity, we assume individuals do not occur in multiple datasets $\*x_t$, e.g. we are releasing the median age of new users of a service.
Note the natural way to avoid this assumption is to compose the privacy budgets at each time;
empirically our methods are especially useful in the low privacy regime this entails.\looseness-1

\ifdefined\arxiv\newpage\else\fi
Our analysis suggests that we can apply online learning here, e.g. doing the following at each $t$ starting with a prior $\mu_1$:\looseness-1
\ifdefined\arxiv
\begin{enumerate}[itemsep=1pt]
\else
\begin{enumerate}[leftmargin=*,topsep=-2pt,noitemsep]\setlength\itemsep{1pt}
\fi
	\item release $o_t$ using the prior $\mu_t$ and suffer $\Gap_q(\*x_t,o_t)$
	\item update to $\mu_{t+1}$ using online learning on the loss $\ell_{\*x_t}^{(q)}$
\end{enumerate}
Because $\ell_{\*x_t}^{(q)}(\theta,\phi)=U_{\*x_t}^{(q)}(\mu_{\frac\theta\phi,\frac1\phi})$ is convex for Laplace priors, online convex optimization (OCO)~\citep{shalev-shwartz2011oco} lets us compete with the best prior in hindsight according to the upper bounds $U_{\*x_t}^{(q)}(\mu_t)$, or with the best linear map $\*w$ to locations $\langle\*w,\*f_t\rangle$.
We can again hedge against poor predictions by mixing with a constant robust distribution.\looseness-1

However, we face the difficulty that online learning on losses $\ell_{\*x_t}^{(q)}$ leaks information about $\*x_t$.
There are two natural solutions. 
One is to use part of the budget $\varepsilon'<\varepsilon$ on a DP online learner~\citep{jain2012dp,smith2013optimal} and hope that the reduction in budget allocated to quantile release is made up for by the improved priors.
Alternatively, we can replace $\ell$ with a {\em proxy} loss $\hat\ell$ that does not depend on the data and optimize it using regular OCO.
The first can be done with provable guarantees by applying DP-FTRL~\citep{kairouz2021practical}, again using two different step-sizes:\looseness-1
\begin{Thm}[c.f. Thm.~\ref{thm:sequential}]
	Consider a sequence of datasets $\*x_t\in[\pm B]^{n_t}$ with bounded features $\*f_t$ and suppose we set Laplace priors $\mu_{t,i}=\mu_{\frac{\langle\*v_{t,i},\*f_t\rangle}{\phi_{t,i}},\frac1{\phi_{t,i}}}$ via two DP-FTRL algorithms applied separately to the variables $\*v_i$ and $\phi_i$ of the losses $\ell_{\*x_t}(\langle\*v_i,\*f_t\rangle,\phi_i)$ with budgets $\frac{\varepsilon'}2$, with respective step-sizes $\tilde\Theta\left(\sqrt{\frac{\varepsilon'}{\sigma_{\min}^2T}\sqrt{\frac md}}\right)$ and $\tilde\Theta\left(\sqrt{\frac{\varepsilon'\sqrt m}{\sigma_{\min}^2\sigma_{\max}^2T}}\right)$.
	This is $(\varepsilon',\delta')$-DP and w.h.p. has regret\vspace{-1mm}
	\ifdefined\arxiv
	\begin{equation}
		\frac1T\sum_{t=1}^TU_{\*x_t}(\mu_t)-\min_{\begin{smallmatrix}\*w_i\in[\pm B]^d\\\sigma_i\in[\sigma_{\min},\sigma_{\max}]\end{smallmatrix}}\frac1T\sum_{t=1}^TU_{\*x_t}(\mu_{\langle\*w_i,\*f_t\rangle,\sigma_i})
		=\tilde\BigO\left(\frac{d^\frac34+\sigma_{\max}}{\sigma_{\min}}\sqrt{\frac m{\varepsilon'T}\sqrt{m\log\frac2{\delta'}}}\right)
	\end{equation}
	\else
	\begin{align}
	\begin{split}
	\frac1T\sum_{t=1}^TU_{\*x_t}(\mu_t&)-\hspace{-2mm}\min_{\begin{smallmatrix}\*w_i\in[\pm B]^d\\\sigma_i\in[\sigma_{\min},\sigma_{\max}]\end{smallmatrix}}\hspace{-.5mm}\frac1T\sum_{t=1}^TU_{\*x_t}(\mu_{\langle\*w_i,\*f_t\rangle,\sigma_i})\\
	&=\tilde\BigO\left(\frac{d^\frac34+\sigma_{\max}}{\sigma_{\min}}\sqrt{\frac m{\varepsilon'T}\sqrt{m\log\frac2{\delta'}}}\right)\vspace{-3mm}
	\end{split}
	\end{align}
	\fi
\end{Thm}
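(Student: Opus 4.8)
The plan is to separate the claim into a privacy half and a utility half. The learner is the sequential composition of two DP-FTRL instances run over the same stream: one maintains the location variables $\*v_i$, feeding it at round $t$ the gradients $\nabla_{\*v_i}\ell_{\*x_t}^{(q_i)}(\langle\*v_i,\*f_t\rangle,\phi_{t,i})$; the other maintains the inverse-scale variables $\phi_i$, feeding it $\nabla_{\phi_i}\ell_{\*x_t}^{(q_i)}(\langle\*v_{t,i},\*f_t\rangle,\phi_i)$. Since (by assumption) individuals do not repeat across the $\*x_t$, changing one individual perturbs the gradient stream of each instance in a single round, which is exactly the continual-release setting DP-FTRL \citep{kairouz2021practical} is built for; with tree-aggregated Gaussian noise calibrated to budget $(\varepsilon'/2,\delta'/2)$ and to the (bounded-domain, hence bounded) gradient norms, each instance is $(\varepsilon'/2,\delta'/2)$-DP. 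Adaptive sequential composition—noting the $\phi_i$-instance is DP for every fixed realization of the $\*v_{t,i}$ it conditions on—gives $(\varepsilon',\delta')$-DP for the joint sequence of iterates, and the priors $\mu_{t,i}$, being a function of these, follow by post-processing.

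\textbf{Regret via joint convexity and block decomposition.} Write the per-round objective as $g_t(\*V,\bm\Phi)=U_{\*x_t}\big(\{\mu_{\langle\*v_i,\*f_t\rangle/\phi_i,\,1/\phi_i}\}_i\big)=\LSE_i\,\ell_{\*x_t}^{(q_i)}(\langle\*v_i,\*f_t\rangle,\phi_i)$, with $(\*V,\bm\Phi)=((\*v_i)_i,(\phi_i)_i)$. This is jointly convex: each $\ell_{\*x_t}^{(q_i)}$ is convex in $(\*v_i,\phi_i)$ by the Burridge reparameterization (log-concavity of the Laplace density, as established earlier), affine substitution and stacking preserve convexity, and $\LSE$ is convex and coordinatewise nondecreasing. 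Running the two online learners on the linearized losses $\langle\nabla_{\*V}g_t(\*V_t,\bm\Phi_t),\cdot\rangle$ and $\langle\nabla_{\bm\Phi}g_t(\*V_t,\bm\Phi_t),\cdot\rangle$ and applying the subgradient inequality for $g_t$ at $(\*V_t,\bm\Phi_t)$ gives, for any comparator $(\*V^\star,\bm\Phi^\star)$ in the convex product domain $[\pm B]^{md}\times[1/\sigma_{\max},1/\sigma_{\min}]^m$,
\[
\sum_{t=1}^T g_t(\*V_t,\bm\Phi_t)-g_t(\*V^\star,\bm\Phi^\star)\ \le\ \mathrm{Reg}_{\*V}(T)+\mathrm{Reg}_{\bm\Phi}(T),
\]
whose left-hand side is exactly $T$ times the quantity in the theorem (the comparator $(\*V^\star,\bm\Phi^\star)$ corresponds to the best $(\*w_i,\sigma_i)$).

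\textbf{Bounding the blocks.} It then remains to bound each $\mathrm{Reg}$ by the DP-FTRL regret guarantee, which for a $d'$-dimensional domain of diameter $D$ and $G$-Lipschitz losses at budget $(\varepsilon'/2,\delta'/2)$, with the step-size that balances the $D^2/\eta$ bias against the $\eta T$ (noise-inflated) variance, is $\tilde\BigO$ of a product of $GD$, $\sqrt T$, and a $\sqrt{d'\log(1/\delta')}/\varepsilon'$-type noise factor, and holds with high probability by Gaussian-tail concentration (union bound over the two instances). The crucial quantitative input is the $\LSE$ contraction: $\nabla_{\*v_i}g_t=p_{t,i}\nabla_{\*v_i}\ell_{\*x_t}^{(q_i)}$ with softmax weights $p_{t,i}\ge0$, $\sum_i p_{t,i}=1$, so $\|\nabla_{\*V}g_t\|\le\max_i\|\nabla_{\*v_i}\ell_{\*x_t}^{(q_i)}\|$ and likewise for $\bm\Phi$—the effective Lipschitz constants are those of a \emph{single} quantile, which from the closed form of $\ell_{\*x}^{(q)}$ for the Laplace prior with $\*x_t\in[\pm B]^{n_t}$ and bounded features are $\tilde\BigO(1/\sigma_{\min})$ and $\tilde\BigO(B+\sigma_{\max})$. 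Plugging $d'=md$, $D=\BigO(B\sqrt{md})$ and $G_{\*V}$ for the $\*v$-instance, and $d'=m$, $D=\BigO(\sqrt m/\sigma_{\min})$ and $G_{\bm\Phi}$ for the $\phi$-instance, using the stated step-sizes, summing, and dividing by $T$ produces the claimed $\tilde\BigO\!\big(\tfrac{d^{3/4}+\sigma_{\max}}{\sigma_{\min}}\sqrt{\tfrac{m}{\varepsilon'T}\sqrt{m\log(2/\delta')}}\big)$, with the $d^{3/4}$ term tracing to the high-dimensional $\*v$-instance and the $\sigma_{\max}$ term to the $\phi$-instance.

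\textbf{Main obstacle.} The conceptual crux is the block-coordinate step: that two separate \emph{noisy} online learners driven by mutually coupled losses still compete with the \emph{jointly} optimal $(\*w_i,\sigma_i)$—this hinges on joint convexity together with evaluating each block's linearization at the other block's current iterate. The most delicate bookkeeping, however, is propagating the $\LSE$ Lipschitz contraction and the Laplace closed-form derivatives through the DP-FTRL regret and step-size optimization so that the dimension, privacy, and $\sigma_{\min}/\sigma_{\max}$ dependences land on exactly the stated exponents.
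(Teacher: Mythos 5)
Your proof takes essentially the same route as the paper's. The privacy half is identical (compose two DP-FTRL instances, each $(\varepsilon'/2,\delta'/2)$-DP, and post-process to get the priors), and the regret half is, structurally, what the paper does: bound the per-block gradient norms (the paper's $\sum_j\|\nabla_{\*v_j}\LSE_i\ell^{(q_i)}\|_2^2\le F^2d$ and $\sum_j(\partial_{\phi_j}\LSE_i\ell^{(q_i)})^2\le(4B+\sigma_{\max})^2$ are exactly your $\sum_i p_{t,i}^2\le1$ contraction applied to the single-quantile Laplace derivatives from Section C.1), plug those into Theorem~\ref{thm:dpftrl} twice, optimize the two step-sizes separately, and sum. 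One thing you spell out that the paper leaves implicit is why two independently run noisy learners on coupled blocks still compete with the jointly optimal $(\*w_i,\sigma_i)$: your joint-convexity-plus-subgradient-inequality decomposition of $g_t$ into two linearized regrets evaluated at $(\*V_t,\bm\Phi_t)$ is precisely what makes the paper's two applications of Theorem~\ref{thm:dpftrl} (each of which is itself a linearized OMD/FTRL bound) add up correctly; the paper just writes the two regret terms and adds them.

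One bookkeeping imprecision worth flagging: you attribute the $\tilde\BigO(1/\sigma_{\min})$ factor to the $\*v$-block Lipschitz constant and write the $\*v$-domain diameter as $\BigO(B\sqrt{md})$. The paper has $|\partial_\theta\ell^{(q)}|\le1$, so the Lipschitz constant w.r.t. $\*v$ is $\|\*f_t\|_2\le F\sqrt d$, while the $1/\sigma_{\min}$ lives in the reparameterized domain $\*v_i\in[\pm B/\sigma_{\min}]^d$, i.e.\ in the diameter. You get the right $GD$ product (and hence the right final rate, since the regret after step-size optimization depends on $G$ and $D$ only through $GD$), but your breakdown doesn't match the closed-form analysis of $\ell_{\*x}^{(q)}$ in Section C.1 that the paper relies on. Similarly, the $d^{3/4}$ in the final bound arises because both the Lipschitz constant $L$ and the $\ell_2$-sensitivity $\Delta_2\propto L$ carry a $\sqrt d$, and one of these enters under the inner square root that appears after optimizing $\eta$; it's worth being explicit about this double appearance rather than gesturing at "the noise-inflated variance," since it is the reason the rate in $d$ is sub-linear but worse than $\sqrt d$.
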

\ifdefined\arxiv\newpage\else\fi
Thus we can do as well as any sequence of Laplace priors $\mu_t$ with locations determined by a fixed linear map from $\*f_t$, up to a term that decreases at rate $\tilde\BigO(\frac1{\sqrt T})$.
Furthermore, running quantile release with budget $\varepsilon-\varepsilon'$ ensures $(\varepsilon,\delta')$-DP for each dataset $\*x_t$.
Note that using different step-sizes allows us to separate the difficulty of learning a $d$-dimensional linear map from the difficulty of learning a scale parameter of magnitude at most $\sigma_{\max}$.\looseness-1

Unfortunately, DP-FTRL is too noisy to learn competitive priors, except with a lot of stationary data~(c.f. Fig.~\ref{fig:timeplots} (left)).
One issue is that its DP guarantee is too strong, as it
it allows swapping out the entire dataset $\*x_t$ rather than a single entry.
It is unclear if a better sensitivity is possible for $U_{\*x_t}$, as changing an entry can flip the sign of the gradient while preserving magnitude.
We show (c.f. Lem.~\ref{lem:refined-sensitivity}) that it is possible for the $\varepsilon$-dependent bound $U_{\*x_t}^{(\varepsilon)}$ over piecewise-constant priors---remarkably sensitivity {\em decreases} with $\varepsilon$---but that upper bound is non-convex for location-scale families, which are preferable for model learning.\looseness-1

\ifdefined\arxiv
\begin{figure}[!t]
	\centering
	\includegraphics[width=.495\linewidth]{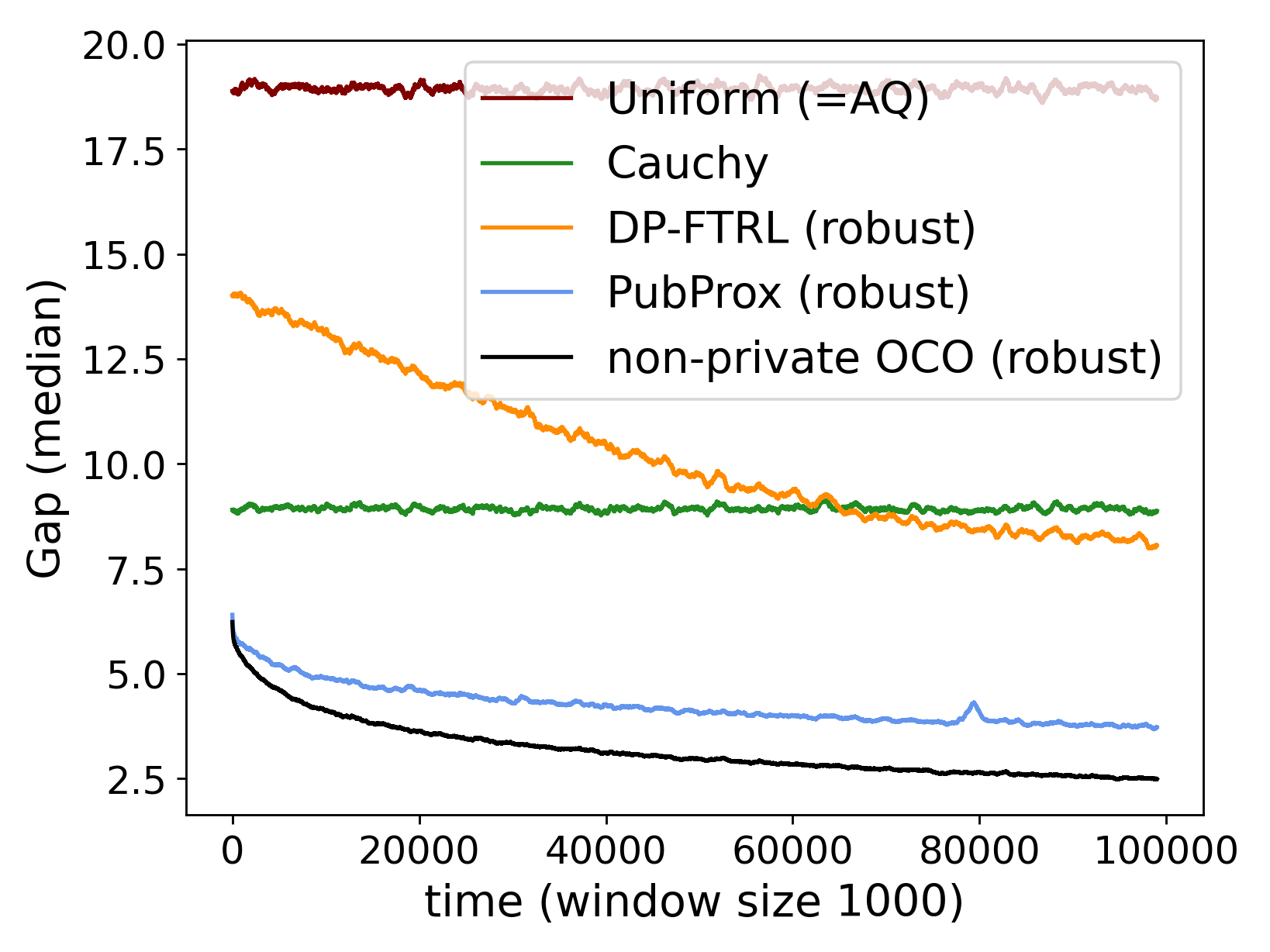}
	\hfill
	\includegraphics[width=.495\linewidth]{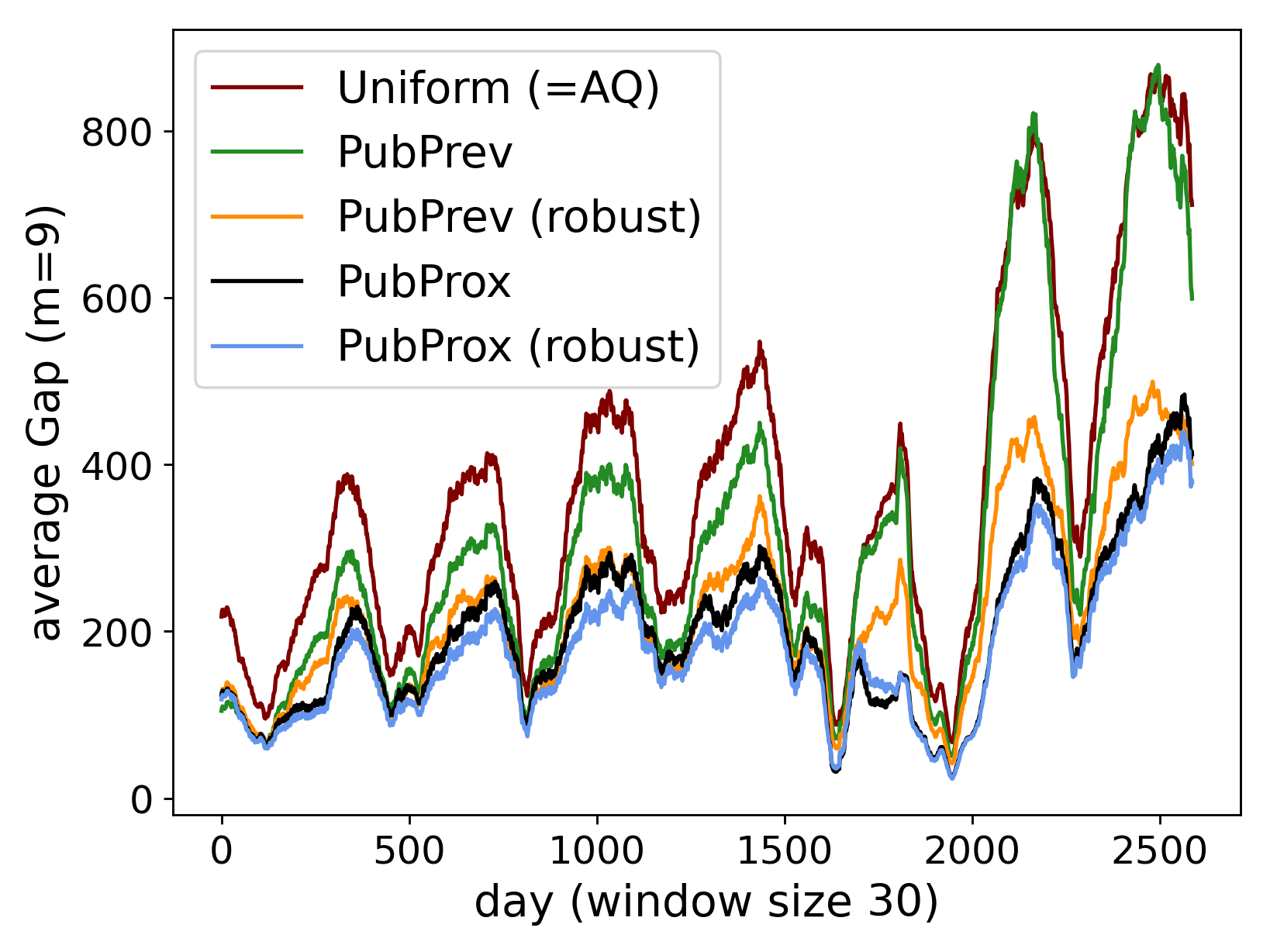}\vspace{-4mm}
	\caption{\label{fig:timeplots}
		Comparison of sequential release over time on Synthetic (left, $\log_{10}\varepsilon=-1/2$) and CitiBike (right, $\log_{10}\varepsilon=-2$) tasks.
	}
\end{figure}
\else
\fi

Our second solution involves recognizing that $U_{\*x_t}^{(q)}$ depends only on the optimal interval $[\*x_{t[\lfloor qn\rfloor]},\*x_{t[\lfloor qn\rfloor+1]})$, whose location and size we have (public) estimates for:
the former via the quantile estimate $o_t$ and the size is lower-bounded by the underlying data discretization, which we have access to in-practice (e.g. age is reported in years, bicycle trip length in seconds).
We use this information to construct {\em proxy} losses $\hat\ell_{o_t}^{(q)}(\langle\*v,\*f_t\rangle,\phi)$, which do not depend on $\*x_t$ and so be learned with (standard) OCO.
As our DP-FTRL analysis again showed the importance of different step-sizes, we again use the COCOB optimizer here.\looseness-1

We evaluate sequential release on three online tasks, each consisting of a sequence of datasets needing quantiles:
\ifdefined\arxiv
\begin{enumerate}[itemsep=1pt]
\else
\begin{enumerate}[leftmargin=*,topsep=-2pt,noitemsep]\setlength\itemsep{1pt}
\fi
	\item Synthetic: each dataset is generated such that the quantiles are fixed linear functions of a random Gaussian feature vector, plus noise.
	\item CitiBike: the data are the lengths of a day's bicycle trips, with the date and NYC weather information features.
	\item BBC: the data are the Flesch readability scores of the comments on a headline posted to Reddit's \texttt{worldnews} forum, with date and headline text information features.
\end{enumerate}

\ifdefined\arxiv\newpage\else\fi
In addition to the proxy approach, which we call \texttt{PubProx}, we evaluate static priors---the uniform, Cauchy, and half-Cauchy (if nonnegative)---and an approach we call \texttt{PubPrev}, which uses a Laplace prior centered around the previous step's released quantile.
Note that using the Uniform is equivalent to  \texttt{ApproximateQuantiles} (AQ).
For both \texttt{PubProx} and \texttt{PubPrev} we ensure robustness by mixing with a Cauchy (or half-Cauchy, if nonnegative) distribution with coefficient 0.1;
this nearly always improves performance for these methods, likely by ensuring their training data is not too noisy.
To see its effectiveness, note how in Figure~\ref{fig:timeplots} (right) both augmented methods are almost always better when made robust, especially \texttt{PubPrev};
in fact, non-robust \texttt{PubPrev} is unable to do better than Uniform after around day 1600, when the start of the COVID-19 pandemic significantly affects bicycle trips.

Our main comparisons is time-aggregated performance as a function of $\varepsilon$ (c.f. Figs.~\ref{fig:online-logepsplots} and~\ref{fig:worldnews-logepsplots}).
All except perhaps Synthetic demonstrate significant improvement by our methods over the Uniform (AQ) baseline, especially at small $\varepsilon$.
On Synthetic and CitiBike, both tasks with features for which a linear model should provide some benefit, we see in Figure~\ref{fig:online-logepsplots} that \texttt{PubProx} is indeed the best across all except perhaps the lowest privacy settings.
For BBC, Figure~\ref{fig:worldnews-logepsplots} reveals a large difference between mean and median performance (note the difference in y-axis scales), with \texttt{PubProx} doing best for the typical headline but the Cauchy doing better on-average due to better performance on headlines with many comments.
The result suggests that in highly noisy settings, the learning-based scheme should help, but it might not overcome the robustness of a static Cauchy prior in-expectation.\looseness-1

Overall, the results demonstrate the strength of the Cauchy and half-Cauchy priors, both as unbounded substitutes for the Uniform and as a means of robustifying learning-augmented algorithms.
They also demonstrate the utility of our upper bound in providing an objective for learning, albeit using proxy data rather the DP online learning:
\texttt{PubProx} usually does better than \texttt{PubPrev} despite using the same information.
Overall, \texttt{PubProx} performs the best at most privacy levels in all evaluation settings (Synthetic, CitiBike, and BBC) except when the mean is used as the metric for BBC (Fig.~\ref{fig:worldnews-logepsplots}, left), where it does almost as well as the best.
Narrowing the performance gap with non-private OCO (c.f. Fig.~\ref{fig:timeplots} (left), where we run COCOB directly on $\ell_{\*x_t}^{(q)}$)---remains an important research direction.\looseness-1

Code to reproduce our results is available at \url{https://github.com/mkhodak/private-quantiles}.

\ifdefined\arxiv\else
\begin{figure}[!t]
	\centering
	\includegraphics[width=.495\linewidth]{plots/online/synthetic_timeplot.png}
	\hfill
	\includegraphics[width=.495\linewidth]{plots/online/citibike_timeplot.png}\vspace{-4mm}
	\caption{\label{fig:timeplots}
		Comparison of sequential release over time on Synthetic (left, $\log_{10}\varepsilon=-1/2$) and CitiBike (right, $\log_{10}\varepsilon=-2$) tasks.
	}
\end{figure}
\fi

\begin{figure}[!t]
	\centering\vspace{-2mm}
	\includegraphics[width=.495\linewidth]{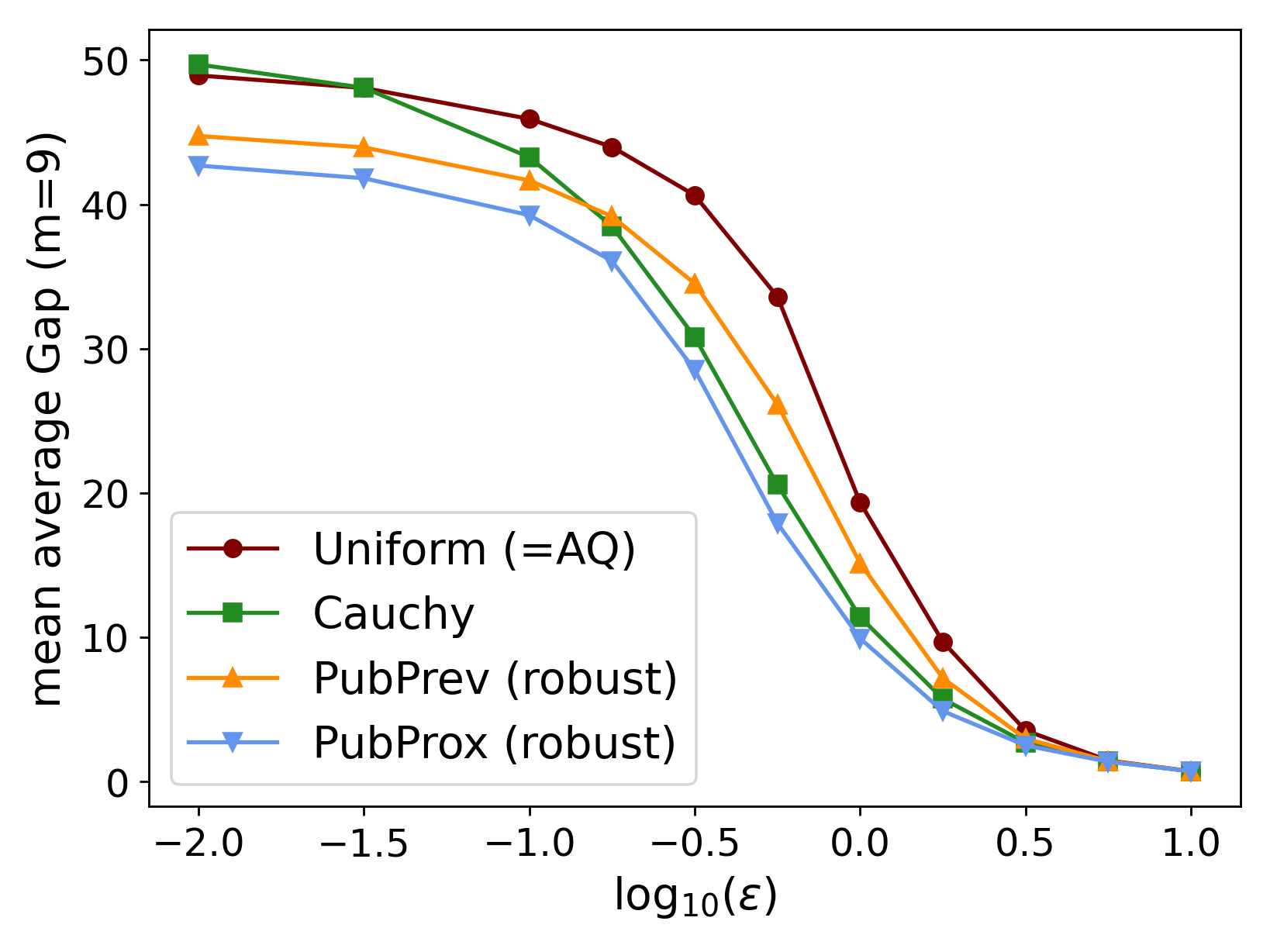}
	\hfill
	\includegraphics[width=.495\linewidth]{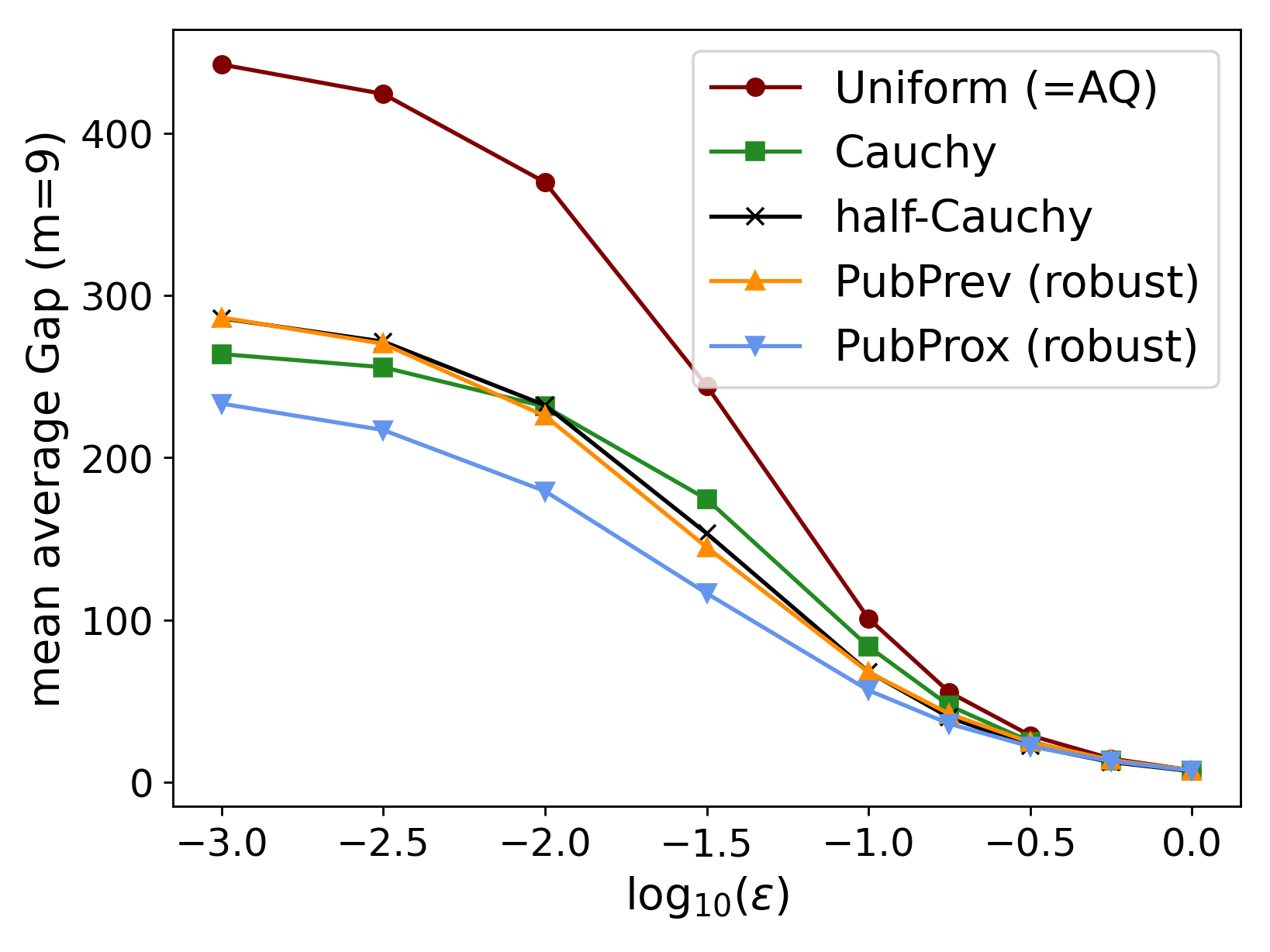}\vspace{-4mm}
	\caption{\label{fig:online-logepsplots}
		Time-averaged performance of the sequential release of nine quantiles on the Synthetic (left) and CitiBike (right) tasks. 
	}
\end{figure}

\begin{figure}[!t]
	\centering\vspace{-2mm}
	\includegraphics[width=.495\linewidth]{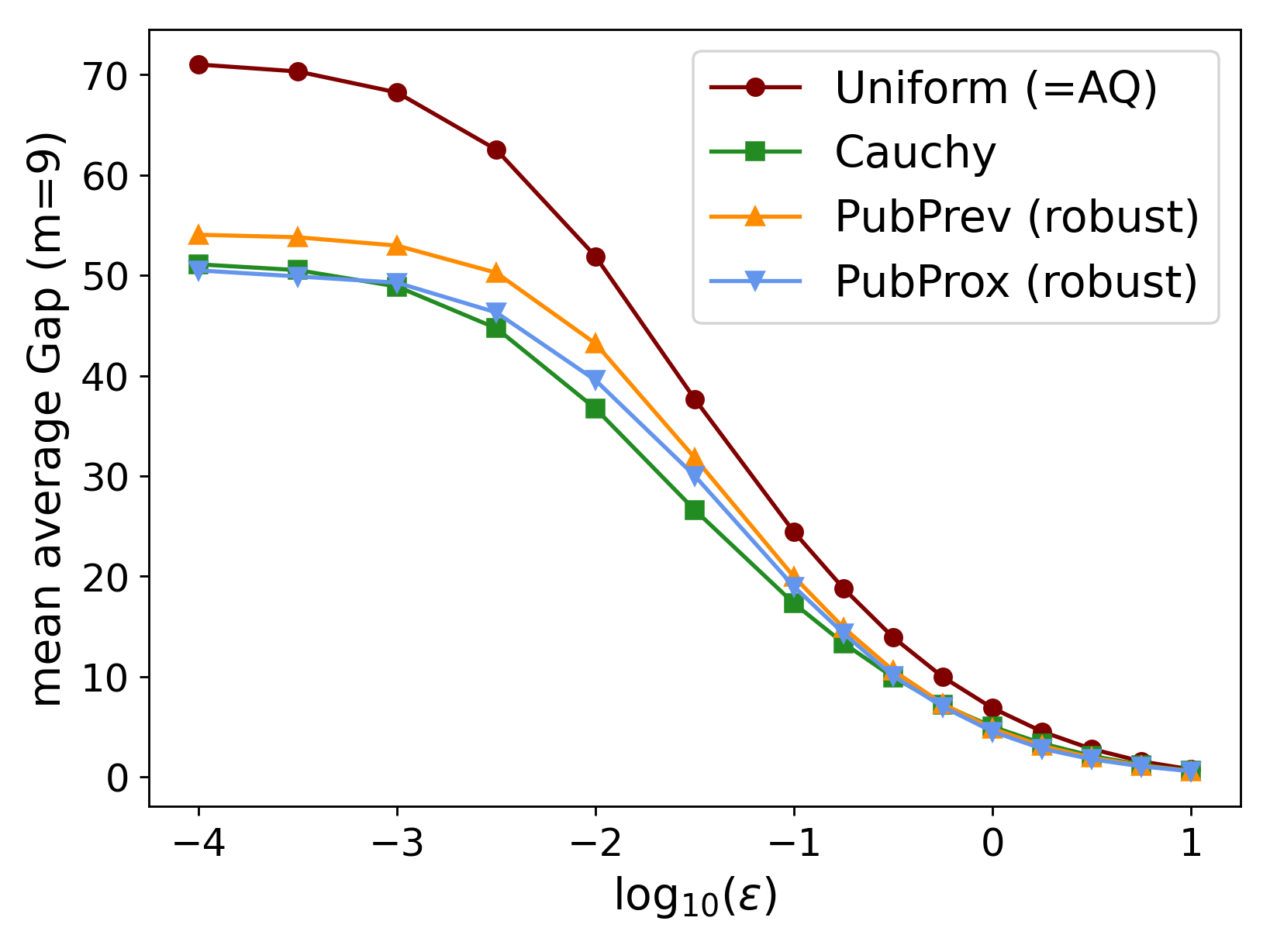}
	\hfill
	\includegraphics[width=.495\linewidth]{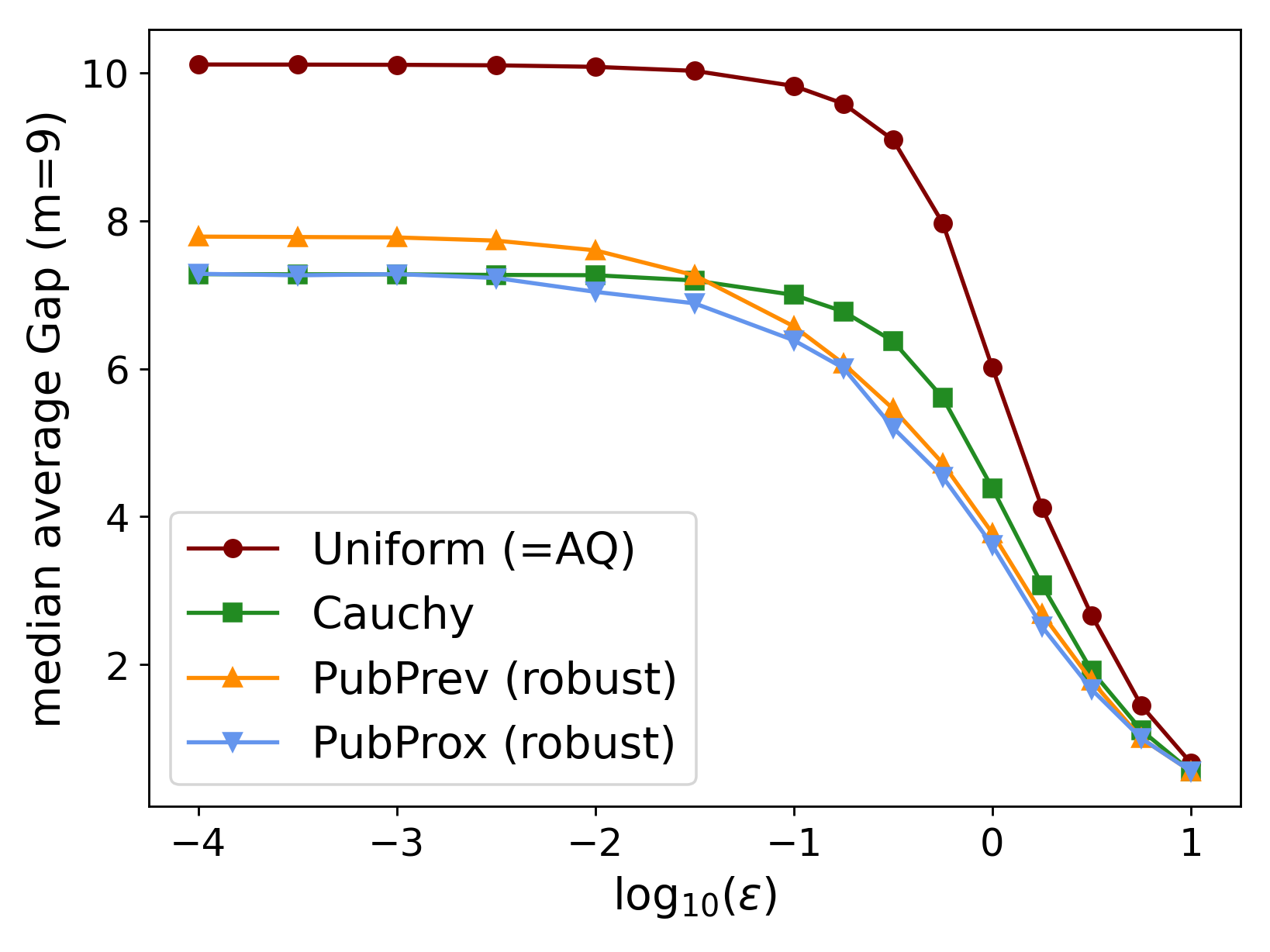}\vspace{-4mm}
	\caption{\label{fig:worldnews-logepsplots}
		Time-aggregated mean (left) and median (right) performance of sequential release of nine quantiles on the BBC task.
	}
\end{figure}

\section{Conclusion}

This work introduces the framework of private algorithms with private predictions, an extension of the algorithms with predictions setup to DP methods. 
Using the application of multiple quantile release, we provide strong theoretical and empirical evidence of its utility as a way of integrating external information in privacy-preserving algorithms.
We believe this way of studying DP methods is highly applicable and will see a great deal of future work in finding new applications for incorporating predictions or improving the approaches described here.
Some specific areas to explore include other important dataset statistics~\cite{biswas2020coinpress} and iterative data analysis methods~\cite{gupta2012iterative,hardt2010mw} and .
For multiple quantile release, our work directly suggests open questions, including obtaining algorithms with poly-logarithmic dependence on $m$, more natural prior adaptation schemes, and ways to use non-independent priors.

\section*{Acknowledgments}
This work was supported in part by a Facebook PhD Fellowship.\looseness-1

\bibliography{refs}

\begin{thebibliography}{62}
\providecommand{\natexlab}[1]{#1}
\providecommand{\url}[1]{\texttt{#1}}
\expandafter\ifx\csname urlstyle\endcsname\relax
  \providecommand{\doi}[1]{doi: #1}\else
  \providecommand{\doi}{doi: \begingroup \urlstyle{rm}\Url}\fi

\bibitem[Agarwal \& Singh(2017)Agarwal and Singh]{agarwal2017price}
Agarwal, N. and Singh, K.
\newblock The price of differential privacy for online learning.
\newblock In \emph{Proceedings of the 34th International Conference on Machine
  Learning}, 2017.

\bibitem[Amid et~al.(2022)Amid, Ganesh, Mathews, Ramaswamy, Song, Steinke,
  Suriyakumar, Thakkar, and Thakurta]{amid2022public}
Amid, E., Ganesh, A., Mathews, R., Ramaswamy, S., Song, S., Steinke, T.,
  Suriyakumar, V.~M., Thakkar, O., and Thakurta, A.
\newblock Public data-assisted mirror descent for private model training.
\newblock In \emph{Proceedings of the 39th International Conference on Machine
  Learning}, 2022.

\bibitem[Anand et~al.(2020)Anand, Ge, and Panigrahi]{anand2020customizing}
Anand, K., Ge, R., and Panigrahi, D.
\newblock Customizing {ML} predictions for online algorithms.
\newblock In \emph{Proceedings of the 37th International Conference on Machine
  Learning}, 2020.

\bibitem[Andrew et~al.(2021)Andrew, Thakkar, McMahan, and
  Ramaswamy]{andrew2021differentially}
Andrew, G., Thakkar, O., McMahan, H.~B., and Ramaswamy, S.
\newblock Differentially private learning with adaptive clipping.
\newblock In \emph{Advances in Neural Information Processing Systems}, 2021.

\bibitem[Balcan et~al.(2021)Balcan, Khodak, Sharma, and
  Talwalkar]{balcan2021ltl}
Balcan, M.-F., Khodak, M., Sharma, D., and Talwalkar, A.
\newblock Learning-to-learn non-convex piecewise-{L}ipschitz functions.
\newblock In \emph{Advances in Neural Information Processing Systems}, 2021.

\bibitem[Bamas et~al.(2020)Bamas, Maggiori, and Svensson]{bamas2020primal}
Bamas, E., Maggiori, A., and Svensson, O.
\newblock The primal-dual method for learning augmented algorithms.
\newblock In \emph{Advances in Neural Information Processing Systems}, 2020.

\bibitem[Bassily et~al.(2022)Bassily, Mohri, and Suresh]{bassily2022private}
Bassily, R., Mohri, M., and Suresh, A.~T.
\newblock Private domain adaptation from a public source.
\newblock arXiv, 2022.

\bibitem[Bie et~al.(2022)Bie, Kamath, and Singhal]{bie2022private}
Bie, A., Kamath, G., and Singhal, V.
\newblock Private estimation with public data.
\newblock In \emph{Advances in Neural Information Processing Systems}, 2022.

\bibitem[Biswas et~al.(2020)Biswas, Dong, Kamath, and
  Ullman]{biswas2020coinpress}
Biswas, S., Dong, Y., Kamath, G., and Ullman, J.
\newblock Coin{P}ress: Practical private mean and covariance estimation.
\newblock In \emph{Advances in Neural Information Processing Systems}, 2020.

\bibitem[Boucheron et~al.(2012)Boucheron, Lugosi, and
  Massart]{boucheron2012concentration}
Boucheron, S., Lugosi, G., and Massart, P.
\newblock \emph{Concentration Inequalities: A Nonasymptotic Theory of
  Independence}.
\newblock Caledon Press, 2012.

\bibitem[Burridge(1981)]{burridge1981note}
Burridge, J.
\newblock A note on maximum likelihood estimation for regression models using
  grouped data.
\newblock \emph{Journal of the Royal Statistical Society. Series B
  (Methodological)}, 43\penalty0 (1):\penalty0 41--45, 1981.

\bibitem[Cesa-Bianchi et~al.(2004)Cesa-Bianchi, Conconi, and
  Gentile]{cesa-bianchi2004online2batch}
Cesa-Bianchi, N., Conconi, A., and Gentile, C.
\newblock On the generalization ability of on-line learning algorithms.
\newblock \emph{IEEE Transactions on Information Theory}, 50\penalty0
  (9):\penalty0 2050--2057, 2004.

\bibitem[Chaudhuri \& Vinterbo(2013)Chaudhuri and
  Vinterbo]{chaudhari2013stability}
Chaudhuri, K. and Vinterbo, S.~A.
\newblock A stability-based validation procedure for differentially private
  machine learning.
\newblock In \emph{Advances in Neural Information Processing Systems}, 2013.

\bibitem[Chen et~al.(2022)Chen, Silwal, Vakilian, and Zhang]{chen2022faster}
Chen, J.~Y., Silwal, S., Vakilian, A., and Zhang, F.
\newblock Faster fundamental graph algorithms via learned predictions.
\newblock In \emph{Proceedings of the 40th International Conference on Machine
  Learning}, 2022.

\bibitem[Christianson et~al.(2023)Christianson, Shen, and
  Wierman]{christianson2023optimal}
Christianson, N., Shen, J., and Wierman, A.
\newblock Optimal robustness-consistency tradeoffs for learning-augmented
  metrical task systems.
\newblock In \emph{Proceedings of the 26th International Conference on
  Artificial Intelligence and Statistics}, 2023.

\bibitem[Cover(1991)]{cover1991universal}
Cover, T.~M.
\newblock Universal portfolios.
\newblock \emph{Mathematical Finance}, 1:\penalty0 1--29, 1991.

\bibitem[Cule \& Samworth(2010)Cule and Samworth]{cule2010theoretical}
Cule, M. and Samworth, R.
\newblock Theoretical properties of the log-concave maximum likelihood
  estimator of a multidimensional density.
\newblock \emph{Electronic Journal of Statistics}, 4:\penalty0 254--270, 2010.

\bibitem[David \& Nagaraja(2003)David and Nagaraja]{david2003order}
David, H.~A. and Nagaraja, H.~N.
\newblock \emph{Order Statistics}.
\newblock John Wiley \& Sons, Inc., 2003.

\bibitem[Diakonikolas et~al.(2021)Diakonikolas, Kontonis, Tzamos, Vakilian, and
  Zarifis]{diakonilakis2021learning}
Diakonikolas, I., Kontonis, V., Tzamos, C., Vakilian, A., and Zarifis, N.
\newblock Learning online algorithms with distributional advice.
\newblock In \emph{Proceedings of the 38th International Conference on Machine
  Learning}, 2021.

\bibitem[Dimitrakakis et~al.(2017)Dimitrakakis, Nelson, Zhang, Mitrokotsa, and
  Rubinstein]{dimitrakakis2017differential}
Dimitrakakis, C., Nelson, B., Zhang, Z., Mitrokotsa, A., and Rubinstein, B.
  I.~P.
\newblock Differential privacy for bayesian inference through posterior
  sampling, 2017.

\bibitem[Dinitz et~al.(2021)Dinitz, Im, Lavastida, Moseley, and
  Vassilvitskii]{dinitz2021duals}
Dinitz, M., Im, S., Lavastida, T., Moseley, B., and Vassilvitskii, S.
\newblock Faster matchings via learned duals.
\newblock In \emph{Advances in Neural Information Processing Systems}, 2021.

\bibitem[Du et~al.(2021)Du, Wang, and Mitzenmacher]{du2021putting}
Du, E., Wang, F., and Mitzenmacher, M.
\newblock Putting the ``learning'' into learning-augmented algorithms for
  frequency estimation.
\newblock In \emph{Proceedings of the 38th International Conference on Machine
  Learning}, 2021.

\bibitem[D\"{u}tting et~al.(2021)D\"{u}tting, Lattanzi, Leme, and
  Vassilvitskii]{dutting2021secretary}
D\"{u}tting, P., Lattanzi, S., Leme, R.~P., and Vassilvitskii, S.
\newblock Secretaries with advice.
\newblock In \emph{Proceedings of the 22nd {ACM} Conference on Economics and
  Computation}, 2021.

\bibitem[Dwork \& Roth(2014)Dwork and Roth]{dwork2014dp}
Dwork, C. and Roth, A.
\newblock The algorithmic foundations of differential privacy.
\newblock \emph{Foundations and Trends in Theoretical Computer Science},
  9\penalty0 (3-4):\penalty0 211--407, 2014.

\bibitem[Geumlek et~al.(2017)Geumlek, Song, and Chaudhuri]{geumlek2017renyi}
Geumlek, J., Song, S., and Chaudhuri, K.
\newblock R\'{e}nyi differential privacy mechanisms for posterior sampling.
\newblock In \emph{Advances in Neural Information Processing Systems}, 2017.

\bibitem[Gillenwater et~al.(2021)Gillenwater, Joseph, and
  Kulesza]{gillenwater2021differentially}
Gillenwater, J., Joseph, M., and Kulesza, A.
\newblock Differentially private quantiles.
\newblock In \emph{Proceedings of the 38th International Conference on Machine
  Learning}, 2021.

\bibitem[Gupta et~al.(2012)Gupta, Roth, and Ullman]{gupta2012iterative}
Gupta, A., Roth, A., and Ullman, J.
\newblock Iterative constructions and private data release.
\newblock In \emph{Theory of Cryptography Conference}, 2012.

\bibitem[Hardt \& Rothblum(2010)Hardt and Rothblum]{hardt2010mw}
Hardt, M. and Rothblum, G.
\newblock A multiplicative weights mechanism for privacy-preserving data
  analysis.
\newblock In \emph{51st Annual IEEE Symposium on Foundations of Computer
  Science}, 2010.

\bibitem[Hazan et~al.(2007)Hazan, Agarwal, and Kale]{hazan2007logarithmic}
Hazan, E., Agarwal, A., and Kale, S.
\newblock Logarithmic regret algorithms for online convex optimization.
\newblock \emph{Machine Learning}, 69:\penalty0 169--192, 2007.

\bibitem[Indyk et~al.(2022)Indyk, Mallmann-Trenn, Mitrovi\'{c}, and
  Rubinfeld]{indyk2020online}
Indyk, P., Mallmann-Trenn, F., Mitrovi\'{c}, S., and Rubinfeld, R.
\newblock Online page migration with {ML} advice.
\newblock In \emph{Proceedings of the 25th International Conference on
  Artificial Intelligence and Statistics}, 2022.

\bibitem[Jain et~al.(2012)Jain, Kothari, and Thakurta]{jain2012dp}
Jain, P., Kothari, P., and Thakurta, A.
\newblock Differentially private online learning.
\newblock In \emph{Proceedings of the 25th Annual Conference on Learning
  Theory}, 2012.

\bibitem[Jiang et~al.(2020)Jiang, Panigrahi, and Sun]{jiang2020online}
Jiang, Z., Panigrahi, D., and Sun, K.
\newblock Online algorithms for weighted paging with predictions.
\newblock In \emph{Proceedings of the 47th International Colloquium on
  Automata, Languages, and Programming}, 2020.

\bibitem[Kairouz et~al.(2021)Kairouz, McMahan, Song, Thakkar, Thakurta, and
  Xu]{kairouz2021practical}
Kairouz, P., McMahan, B., Song, S., Thakkar, O., Thakurta, A., and Xu, Z.
\newblock Practical and private (deep) learning without sampling or shuffling.
\newblock In \emph{Proceedings of the 38th International Conference on Machine
  Learning}, 2021.

\bibitem[Kaplan et~al.(2022)Kaplan, Schnapp, and Stemmer]{kaplan2022quantiles}
Kaplan, H., Schnapp, S., and Stemmer, U.
\newblock Differentially private approximate quantiles.
\newblock In \emph{Proceedings of the 39th International Conference on Machine
  Learning}, 2022.

\bibitem[Khodak et~al.(2022)Khodak, Balcan, Talwalkar, and
  Vassilvitskii]{khodak2022awp}
Khodak, M., Balcan, M.-F., Talwalkar, A., and Vassilvitskii, S.
\newblock Learning predictions for algorithms with predictions.
\newblock In \emph{Advances in Neural Information Processing Systems}, 2022.

\bibitem[Kivinen \& Warmuth(1997)Kivinen and Warmuth]{kivinen1997eg}
Kivinen, J. and Warmuth, M.~K.
\newblock Exponentiated gradient versus gradient descent for linear predictors.
\newblock \emph{Information and Computation}, 132:\penalty0 1--63, 1997.

\bibitem[Kohavi(1996)]{kohavi1996adult}
Kohavi, R.
\newblock Scaling up the accuracy of {N}aive-{B}ayes classifiers: a
  decision-tree hybrid.
\newblock In \emph{Proceedings of the Second International Conference on
  Knowledge Discovery and Data Mining}, 1996.

\bibitem[Kraska et~al.(2018)Kraska, Beutel, Chi, Dean, and
  Polyzotis]{kraska2018case}
Kraska, T., Beutel, A., Chi, E.~H., Dean, J., and Polyzotis, N.
\newblock The case for learned index structures.
\newblock In \emph{Proceedings of the 2018 International Conference on
  Management of Data}, 2018.

\bibitem[Kumar et~al.(2018)Kumar, Purohit, and Svitkina]{kumar2018improving}
Kumar, R., Purohit, M., and Svitkina, Z.
\newblock Improving online algorithms via {ML} predictions.
\newblock In \emph{Advances in Neural Information Processing Systems}, 2018.

\bibitem[Lattanzi et~al.(2020)Lattanzi, Lavastida, Moseley, and
  Vassilvitskii]{lattanzi2020scheduling}
Lattanzi, S., Lavastida, T., Moseley, B., and Vassilvitskii, S.
\newblock Online scheduling via learned weights.
\newblock In \emph{Proceedings of the 2020 ACM-SIAM Symposium on Discrete
  Algorithms}, 2020.

\bibitem[Li et~al.(2022)Li, Zaheer, Reddi, and Smith]{li2022private}
Li, T., Zaheer, M., Reddi, S., and Smith, V.
\newblock Private adaptive optimization with side information.
\newblock In \emph{Proceedings of the 39th International Conference on Machine
  Learning}, 2022.

\bibitem[Lindermayr \& Megow(2022)Lindermayr and
  Megow]{lindermayr2022permutation}
Lindermayr, A. and Megow, N.
\newblock Permutation predictions for non-clairvoyant scheduling.
\newblock In \emph{Proceedings of the 34th ACM Symposium on Parallelism in
  Algorithms and Architectures}, 2022.

\bibitem[Liu et~al.(2021)Liu, Vietri, Steinke, Ullman, and
  Wu]{liu2021leveraging}
Liu, T., Vietri, G., Steinke, T., Ullman, J., and Wu, Z.~S.
\newblock Leveraging public data for practical private query release.
\newblock In \emph{Proceedings of the 38th International Conference on Machine
  Learning}, 2021.

\bibitem[Liu et~al.(2012)Liu, Chen, Bash, Wierman, Gmach, Wang, Marwah, and
  Hyser]{liu2012renewable}
Liu, Z., Chen, Y., Bash, C., Wierman, A., Gmach, D., Wang, Z., Marwah, M., and
  Hyser, C.
\newblock Renewable and cooling aware workload management for sustainable data
  centers.
\newblock In \emph{ACM SIGMETRICS Performance Evaluation Review}, 2012.

\bibitem[Loper \& Bird(2002)Loper and Bird]{loper2002nltk}
Loper, E. and Bird, S.
\newblock {NLTK}: The natural language toolkit.
\newblock arXiv, 2002.

\bibitem[Lykouris \& Vassilvitskii(2021)Lykouris and
  Vassilvitskii]{lykouris2021competitive}
Lykouris, T. and Vassilvitskii, S.
\newblock Competitive caching with machine learned advice.
\newblock \emph{Journal of the ACM}, 68\penalty0 (4), 2021.

\bibitem[McMahan(2017)]{mcmahan2017survey}
McMahan, H.~B.
\newblock A survey of algorithms and analysis for adaptive online learning.
\newblock \emph{Journal of Machine Learning Research}, 18, 2017.

\bibitem[McSherry \& Talwar(2007)McSherry and Talwar]{mcsherry2007mechanism}
McSherry, F. and Talwar, K.
\newblock Mechanism design via differential privacy.
\newblock In \emph{Proceedings of the 48th Annual IEEE Symposium on Foundations
  of Computer Science}, 2007.

\bibitem[Mitzenmacher \& Vassilvitskii(2021)Mitzenmacher and
  Vassilvitskii]{mitzenmacher2021awp}
Mitzenmacher, M. and Vassilvitskii, S.
\newblock Algorithms with predictions.
\newblock In Roughgarden, T. (ed.), \emph{Beyond the Worst-Case Analysis of
  Algorithms}. Cambridge University Press, Cambridge, UK, 2021.

\bibitem[Orabona \& Tomassi(2017)Orabona and Tomassi]{orabona2017training}
Orabona, F. and Tomassi, T.
\newblock Training deep networks without learning rates through coin betting.
\newblock In \emph{Advances in Neural Information Processing Systems}, 2017.

\bibitem[Pennington et~al.(2014)Pennington, Socher, and
  Manning]{pennington2014glove}
Pennington, J., Socher, R., and Manning, C.~D.
\newblock Glo{V}e: Global vectors for word representation.
\newblock In \emph{Proceedings of the 2014 Conference on Empirical Methods in
  Natural Language Processing}, 2014.

\bibitem[Pratt(1981)]{pratt1981concavity}
Pratt, J.~W.
\newblock Concavity of the log likelihood.
\newblock \emph{Journal of the American Statistical Association}, 76\penalty0
  (373):\penalty0 103--106, 1981.

\bibitem[Rohatgi(2020)]{rohatgi2020nearoptimal}
Rohatgi, D.
\newblock Near-optimal bounds for online caching with machine learned advice.
\newblock In \emph{Proceedings of the 2020 ACM-SIAM Symposium on Discrete
  Algorithms}, 2020.

\bibitem[Roughgarden(2020)]{roughgarden2020beyond}
Roughgarden, T.
\newblock \emph{Beyond Worst-Case Analysis of Algorithms}.
\newblock Cambridge University Press, 2020.

\bibitem[Sakaue \& Oki(2022)Sakaue and Oki]{sakaue2022dca}
Sakaue, S. and Oki, T.
\newblock Discrete-convex-analysis-based framework for warm-starting algorithms
  with predictions.
\newblock In \emph{Advances in Neural Information Processing Systems}, 2022.

\bibitem[Scully et~al.(2022)Scully, Grosof, and
  Mitzenmacher]{scully2022uniform}
Scully, Z., Grosof, I., and Mitzenmacher, M.
\newblock Uniform bounds for scheduling with job size estimates.
\newblock In \emph{Proceedings of the 13th Innovations in Theoretical Computer
  Science Conference}, 2022.

\bibitem[Seeman et~al.(2020)Seeman, Slavkovic, and Reimherr]{seeman2020private}
Seeman, J., Slavkovic, A., and Reimherr, M.
\newblock Private posterior inference consistent with public information: A
  case study in small area estimation from synthetic census data.
\newblock In \emph{Proceedings of the International Conference on Privacy in
  Statistical Databases}, 2020.

\bibitem[Shalev-Shwartz(2011)]{shalev-shwartz2011oco}
Shalev-Shwartz, S.
\newblock Online learning and online convex optimization.
\newblock \emph{Foundations and Trends in Machine Learning}, 4\penalty0
  (2):\penalty0 107--194, 2011.

\bibitem[Smith \& Thakurta(2013)Smith and Thakurta]{smith2013optimal}
Smith, A. and Thakurta, A.
\newblock ({N}early) optimal algorithms for private online learning in
  full-information and bandit settings.
\newblock In \emph{Advances in Neural Information Processing Systems}, 2013.

\bibitem[Wan \& McAuley(2018)Wan and McAuley]{wan2018goodreads}
Wan, M. and McAuley, J.~J.
\newblock Item recommendation on monotonic behavior chains.
\newblock In \emph{Proceedings of the 12th ACM Conference on Recommender
  Systems}, 2018.

\bibitem[Yu et~al.(2022)Yu, Shi, Chung, Yue, and Wierman]{yu2022competitive}
Yu, C., Shi, G., Chung, S.-J., Yue, Y., and Wierman, A.
\newblock Competitive control with delayed imperfect information.
\newblock In \emph{Proceedings of the American Control Conference}, 2022.

\bibitem[Zinkevich(2003)]{zinkevich2003oco}
Zinkevich, M.
\newblock Online convex programming and generalized infinitesimal gradient
  ascent.
\newblock In \emph{Proceedings of the 20th International Conference on Machine
  Learning}, 2003.

\end{thebibliography}
\bibliographystyle{icml2023}
\newpage
\appendix
\onecolumn

\section{Section~\ref{sec:algorithm} details}\label{app:algorithm}

\subsection{Quantile estimation via a prediction-dependent prior}

The base measure $\mu$ of DP mechanisms such as the exponential is the starting point of many approaches to incorporating external information, especially ones focused on Bayesian posterior sampling~\cite{dimitrakakis2017differential,geumlek2017renyi,seeman2020private};
while it is also our approach to single-quantile estimation with predictions, a key difference here is the focus on utility guarantees depending on both the prediction and instance, which is missing from this past work.
In the quantile problem, given a quantile $q$ and a sorted dataset $\*x\in\R^n$ of $n$ distinct points, the goal is to release a number $o$ that upper bounds exactly $\lfloor qn\rfloor$ of the entries.
A natural error metric, $\Gap_q(\*x,o)$, is the number of entries between the released number $o$ and $\lfloor qn\rfloor$, and we can show that prediction-dependent bound using astraightforward application of EM with utility $-\Gap_q$:
\begin{Lem}\label{lem:quantile}
	Releasing $o\in\R$ w.p. $\propto\exp(-\varepsilon\Gap_q(\*x,o)/2)\mu(o)$ is $\varepsilon$-DP, and w.p. $1-\beta$
	\begin{equation}
	\Gap_q(\*x,o)
	\le\frac2\varepsilon\left(\log\frac1\beta-\log\Psi_{\*x}^{(q,\varepsilon)}(\mu)\right)
	\le\frac2\varepsilon\left(\log\frac1\beta-\log\Psi_{\*x}^{(q)}(\mu)\right)
	\end{equation}
	where $\Psi_{\*x}^{(q,\varepsilon)}(\mu)=\sum_{i=0}^n\exp(-\varepsilon\Gap_q(\*x,I_i)/2)\mu(I_i)=\int\exp(-\varepsilon\Gap_q(\*x,o)/2)\mu(o)do$ is the inner product between $\mu$ and the exponential score while $\Psi_{\*x}^{(q)}(\mu)=\mu(I_{\lfloor qn\rfloor})$ is the measure of the optimal interval (note $\max_ku_q(\*x,I_k)=-\Gap_q(\*x,I_{\lfloor qn\rfloor})=0$ and so $\Psi_{\*x}^{(q)}(\mu)\le\Psi_{\*x}^{(q,\varepsilon)}(\mu)~\forall~\varepsilon>0$).\looseness-1
\end{Lem}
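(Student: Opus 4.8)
The plan is to treat Lemma~\ref{lem:quantile} as a direct specialization of the exponential mechanism, handling the privacy claim and the utility claim separately; the second inequality in the displayed bound then drops out for free. First I would record that the utility $u_q(\*x,o)=-\Gap_q(\*x,o)$ has sensitivity at most $1$: for neighboring $\*x\sim\*{\tilde x}$ the count $|\{i:\*x_{[i]}<o\}|$ changes by at most one while $\lfloor qn\rfloor$ is unaffected, so by the reverse triangle inequality $|\Gap_q(\*x,o)-\Gap_q(\*{\tilde x},o)|\le 1$ for every $o$. The rule that releases $o$ with density $\propto\exp(-\frac\varepsilon2\Gap_q(\*x,o))\mu(o)$ is then precisely the exponential mechanism of \citet{mcsherry2007mechanism} with utility $u_q$, sensitivity $\Delta=1$, and base measure $\mu$, hence $\varepsilon$-DP. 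The only well-definedness point for the infinite range $\R$ is that the normalizer is finite and positive, which holds since $0<\Psi_{\*x}^{(q,\varepsilon)}(\mu)\le\mu(\R)<\infty$.

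For the utility bound I would partition $\R$ into the intervals $I_0,\dots,I_n$ on which $\Gap_q(\*x,\cdot)$ is constant (as noted below~\eqref{eq:gap}), which immediately gives the claimed identity $\Psi_{\*x}^{(q,\varepsilon)}(\mu)=\int\exp(-\frac\varepsilon2\Gap_q(\*x,o))\mu(o)\,do=\sum_{i=0}^n\exp(-\frac\varepsilon2\Gap_q(\*x,I_i))\mu(I_i)$. Then for any $t\ge0$, bounding the exponential score by $e^{-\varepsilon t/2}$ on the event $\{\Gap_q(\*x,o)\ge t\}$ and the residual $\mu$-mass by $\mu(\R)\le1$,
\[ \Pr\{\Gap_q(\*x,o)\ge t\}=\frac1{\Psi_{\*x}^{(q,\varepsilon)}(\mu)}\int_{\{\Gap_q(\*x,o)\ge t\}}\exp\left(-\tfrac\varepsilon2\Gap_q(\*x,o)\right)\mu(o)\,do\le\frac{e^{-\varepsilon t/2}}{\Psi_{\*x}^{(q,\varepsilon)}(\mu)}. \]
Choosing $t=\frac2\varepsilon(\log\frac1\beta-\log\Psi_{\*x}^{(q,\varepsilon)}(\mu))$ makes the right-hand side equal to $\beta$, so with probability $\ge1-\beta$ we get $\Gap_q(\*x,o)<t$, which is the first inequality. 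The second follows since $\Gap_q(\*x,I_{\lfloor qn\rfloor})=0$ — because $\max_{\*x_{[i]}<o}i=\lfloor qn\rfloor$ for $o$ in the interior of $I_{\lfloor qn\rfloor}=(\*x_{[\lfloor qn\rfloor]},\*x_{[\lfloor qn\rfloor+1]}]$ — so $\Psi_{\*x}^{(q,\varepsilon)}(\mu)\ge\mu(I_{\lfloor qn\rfloor})=\Psi_{\*x}^{(q)}(\mu)$ and $-\log$ is decreasing.

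There is no genuinely hard step here; this is the textbook exponential-mechanism analysis specialized to the gap score. The two points I would be careful about are (i) the normalization of $\mu$, which I would dispatch by noting that the output distribution is invariant under scaling $\mu$ by a positive constant, so we may \emph{assume} $\mu$ is a probability measure (equivalently carry a harmless $\mu(\R)$ factor through the tail bound), and (ii) verifying the two structural facts about $\Gap_q$ used above — constancy on each $I_i$ and vanishing on the optimal interval — both of which are immediate from~\eqref{eq:gap}. Everything else is bookkeeping, and the same computation is what Corollary~\ref{cor:cauchy} and the robustness corollaries later reuse.
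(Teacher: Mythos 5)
Your argument is correct and follows the same route as the paper's: both derive the tail bound $\Pr\{\Gap_q(\*x,o)\ge t\}\le e^{-\varepsilon t/2}/\Psi_{\*x}^{(q,\varepsilon)}(\mu)$ from the exponential-mechanism normalizer and then solve for $t$ at failure probability $\beta$, with the second inequality coming from $\Gap_q(\*x,I_{\lfloor qn\rfloor})=0$. The only cosmetic difference is that you phrase the tail bound as an integral over $\{\Gap_q\ge t\}$ while the paper sums $\Pr\{k=j\}\mathbf 1_{u_q(\*x,I_j)\le-\gamma}$ over intervals; both rely on the same implicit $\mu(\R)\le1$ normalization, which you helpfully make explicit.
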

\begin{proof}
	$\varepsilon$-DP follows from $u_q$ having sensitivity one and the guarantee of EM with base measure $\mu$~\citep[Theorem~6]{mcsherry2007mechanism}.
	For the error, since we sample an interval $I_k$ and then sample $o\in I_k$ we have\looseness-1
	\begin{align}\label{eq:quantile}
	\begin{split}
	\Pr\{\Gap_q(\*x,o)\ge\gamma\}
	=\Pr\{u_q(\*x,I_k)\le-\gamma\}
	&=\sum_{j=0}^n\Pr\{k=j\}1_{u_q(\*x,I_j)\le-\gamma}\\
	&\le\sum_{j=0}^n\frac{\exp(-\frac{\varepsilon\gamma}2)\mu(I_j)}{\sum_{i=0}^n\exp(\frac\varepsilon2u_q(\*x,I_i))\mu(I_i)}
	\le\frac{\exp(-\frac{\varepsilon\gamma}2)}{\Psi_{\*x}^{(q,\varepsilon)}(\mu)}
	\end{split}
	\end{align}
	The result follows by substituting $\beta$ for the failure probability and solving for $\gamma$.
\end{proof}
We can also analyze the error metrics in this bound for specific measures $\mu$.
In particular, if the points are in a bounded interval $(a,b)$ and we use the uniform measure $\mu(o)=1_{o\in(a,b)}/(b-a)$ then $\Psi_{\*x}^{(q,\varepsilon)}(\mu)\ge\frac{\psi_{\*x}}{b-a}$, where $\psi_{\*x}=\min_k\*x_{[k+1]}-\*x_{[k]}$, and we exactly recover the standard bound of $\frac2\varepsilon\log\frac{b-a}{\beta\psi_{\*x}}$, e.g. the one in~\citep[Lemma~A.1]{kaplan2022quantiles} (indeed their analysis implicitly uses this measure).
However, our approach also allows us to remove the boundedness assumption, which itself can be viewed as a type of prediction, as one needs external information to assume that the data, or at least the quantile, lies within the interval $(a,b)$.
Taking this view, we can use the prediction to set the location $\nu\in\R$ and scale $\sigma>0$ of a Cauchy prior $\mu_{\nu,\sigma}(o)=\sigma/(\pi(\sigma^2+(o-\nu)^2))$ without committing to $(a,b)$ actually containing the data.
Since we know that the optimal interval $(\*x_{[\lfloor qn\rfloor]},\*x_{[\lfloor qn\rfloor+1]}]$ is a subset of $(\frac{a+b}2\pm R)$ for some $R>0$, setting $\nu=\frac{a+b}2$ and $\sigma=\frac{b-a}2$ yields
\begin{align}
\begin{split}
\Psi_{\*x}^{(q)}(\mu_{\nu,\sigma})
\ge\frac\sigma\pi\frac{\*x_{[\lfloor qn\rfloor+1]}-\*x_{[\lfloor qn\rfloor]}}{\sigma^2+\max_{k\in\{\lfloor qn\rfloor,\lfloor qn\rfloor+1\}}(\nu-\*x_{[k]})^2}
\ge\frac\sigma\pi\min_k\frac{\*x_{[k+1]}-\*x_{[k]}}{\sigma^2+R^2}
\ge\frac{2(b-a)\psi_{\*x}/\pi}{(b-a)^2+4R^2}
\end{split}
\end{align}
If $R=\frac{b-a}2$, i.e. we get the interval containing the data correct, then substituting the above into Lemma~\ref{lem:quantile} recovers the guarantee of the uniform prior up to an additive factor $\frac2\varepsilon\log\pi$.
However, whereas for the uniform prior we have no performance guarantees if the interval is incorrect, using the Cauchy prior the performance degrades gracefully as the error ($R$) grows.
While this first result can be viewed as designing a better prediction-free algorithm, it can also be viewed as making more robust use of the external information about the interval containing the data.
\looseness-1

\ifdefined\arxiv\else
\newpage
\fi
\subsection{Multiple-quantile release using multiple priors}

To estimate $m>1$ quantiles $q_1,\dots,q_m$ at once, we adapt the recursive approach of \cite{kaplan2022quantiles}, whose method \texttt{ApproximateQuantiles} implicitly constructs a binary tree with a quantile $q_i$ at each node and uses the exponential mechanism to compute the quantile $\tilde q_i=(q_i-\underline q_i)/(\overline q_i-\underline q_i)$ of the dataset $\*{\hat x}_i$ of points in the original dataset $\*x$ restricted to the interval $(\hat a_i,\hat b_i)$;
here $\underline q_i<q_i$ and $\overline q_i>q_i$ are quantiles appearing earlier in the tree whose respective estimates $\hat a_i$ and $\hat b_i$ determine the sub-interval (if there is no earlier quantile on the left and/or right of $q_i$ we use $\underline q_i=0,\hat a_i=a$ and/or $\overline q_i=1,\hat b_i=b$).
Because each datapoint only participates in $\BigO(\log_2 m)$ exponential mechanisms, the approach is able to run each mechanism with budget $\Omega(\varepsilon/\log_2m)$ and thus only suffer error logarithmic in the number of quantiles $m$, a significant improvement upon running one EM with budget $\varepsilon/m$ on the entire dataset for each quantile, which has error $\BigO(m)$ in the number of quantiles.

We can apply prior-dependent guarantees to \texttt{ApproximateQuantiles}---pseudocode for a generalized version of which is provided in Algorithm~\ref{alg:quantiles}---by recognizing that implicitly the method assigns a uniform prior $\mu_i$ to each quantile $q_i$ and then running EM with the {\em conditional} prior $\hat\mu_i$ restricted to the interval $[\hat a_i,\hat b_i]$ determined by earlier quantiles in the binary tree.
An extension of the argument in Equation~\ref{eq:quantile} (c.f. Lemma~\ref{lem:empirical-quantiles}) then yields a bound on the error of the estimate $o_i$ returned for quantile $q_i$ in terms of the prior-EM inner-product computed with this conditional prior $\hat\mu_i$ over the subset $\*{\hat x}_i$:
\begin{equation}\label{eq:empirical-probability}
\Pr\{\Gap_{q_i}(\*x,o_i)\ge\gamma\}\le\frac{\exp\left(\frac{\varepsilon_i}2(\hat\gamma_i-\gamma)\right)}{\Psi_{\*{\hat x}_i}^{(\tilde q_i,\varepsilon_i)}(\hat\mu_i)}\qquad\textrm{for}\qquad\hat\gamma_i=(1-\tilde q_i)\Gap_{\underline q_i}(\*x,\hat a_i)+\tilde q_i\Gap_{\overline q_i}(\*x,\hat b_i)
\end{equation}
Note that the error is offset by a weighted combination $\hat\gamma_i$ of the errors of the estimates of quantiles earlier in the tree.
Controlling this error allows us to bound the maximum error of any quantile via the harmonic mean of the inner products between the exponential scores and conditional priors:
\begin{Lem}\label{lem:empirical-quantiles}
	Algorithm~\ref{alg:quantiles} with $K=2$ and $\varepsilon_i=\varepsilon/\lceil\log_2m\rceil~\forall~i$ is $\varepsilon$-DP and w.p. $\ge1-\beta$ has
	\begin{equation}
	\max_i\Gap_{q_i}(\*x,o_i)\le\frac2\varepsilon\lceil\log_2m\rceil^2\log\frac m{\beta\hat\Psi_{\*x}^{(\varepsilon)}}\qquad\textrm{for}\qquad\hat\Psi_{\*x}^{(\varepsilon)}=\left(\sum_{i=1}^m\frac{1/m}{\Psi_{\*{\hat x}_i}^{(\tilde q_i,\varepsilon_i)}(\hat\mu_i)}\right)^{-1}
	\end{equation}
\end{Lem}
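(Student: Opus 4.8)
\emph{Plan.} I would prove the privacy and utility claims separately, obtaining the utility bound by propagating the single-quantile guarantee of Lemma~\ref{lem:quantile} / inequality~\eqref{eq:empirical-probability} down the binary tree. Throughout write $\Psi_k:=\Psi_{\*{\hat x}_k}^{(\tilde q_k,\varepsilon_k)}(\hat\mu_k)$ for the prior--EM inner product at node $k$, and $\mathrm{path}(q_j)$ for the set of nodes on the root-to-$q_j$ path (of size at most the tree depth $\lceil\log_2 m\rceil$). Privacy is immediate: with $K=2$ each entry of $\*x$ participates in at most one exponential mechanism per tree level, hence in at most $\lceil\log_2 m\rceil$ of them; each uses utility $-\Gap_{q_i}$ of sensitivity $1$ (changing one entry of $\*x$ changes any restricted subdataset by at most one entry), so by Lemma~\ref{lem:quantile} it is $\varepsilon_i$-DP with $\varepsilon_i=\varepsilon/\lceil\log_2 m\rceil$; basic composition over the $\le\lceil\log_2 m\rceil$ levels --- with parallel composition within a level, since the subdatasets there partition $\*x$, plus the usual post-processing argument for the adaptive choice of partition --- gives $\varepsilon$-DP.

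\emph{Utility: a path-sum bound.} I would show that with probability $\ge1-\beta$, $\Gap_{q_j}(\*x,o_j)\le S_j:=\sum_{k\in\mathrm{path}(q_j)}\frac2{\varepsilon_k}\log\frac{m/\beta}{\Psi_k}$ for all $j$ at once. Order the nodes so ancestors precede descendants. Condition on the outputs $o_k$ of all strict ancestors of $q_j$; this fixes $\hat a_j,\hat b_j$ and hence $\*{\hat x}_j,\hat\mu_j,\Psi_j$ and the offset $\hat\gamma_j$, which by~\eqref{eq:empirical-probability} is a convex combination of $\Gap_{\underline q_j}(\*x,\hat a_j)$ and $\Gap_{\overline q_j}(\*x,\hat b_j)$. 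Since $\underline q_j,\overline q_j$ are ancestors of $q_j$ (by construction of Algorithm~\ref{alg:quantiles}) and the $S_k$ are nondecreasing along a path (all summands positive), on the event that every ancestor $k$ of $q_j$ already satisfies $\Gap_{q_k}\le S_k$ we get $\hat\gamma_j\le S_{\mathrm{parent}(j)}$. Plugging $\gamma=S_j=S_{\mathrm{parent}(j)}+\frac2{\varepsilon_j}\log\frac{m/\beta}{\Psi_j}$ into~\eqref{eq:empirical-probability} collapses its right-hand side to exactly $\beta/m$; taking expectations over the ancestors' randomness restricted to the good event preserves this. A union bound over the $\le m$ nodes --- charging each failure to the first node in the chosen order that violates its bound, whose ancestors are then automatically good --- finishes the claim.

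\emph{From the path-sum to the harmonic mean.} It remains to bound $\max_i S_i$ on the good event. Using $\varepsilon_k=\varepsilon/\lceil\log_2 m\rceil$ and then Jensen's inequality (concavity of $\log$),
\[
S_i\;\le\;\frac{2\lceil\log_2 m\rceil}{\varepsilon}\,|\mathrm{path}(q_i)|\,\log\!\left(\frac{m/\beta}{|\mathrm{path}(q_i)|}\sum_{k\in\mathrm{path}(q_i)}\frac1{\Psi_k}\right),
\]
and enlarging the sum to all $m$ nodes gives $\sum_{k\in\mathrm{path}(q_i)}\tfrac1{\Psi_k}\le\sum_{i=1}^m\tfrac1{\Psi_i}=m/\hat\Psi_{\*x}^{(\varepsilon)}$ by definition of the harmonic mean. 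With $|\mathrm{path}(q_i)|\le\lceil\log_2 m\rceil$ this yields $\max_i S_i\le\frac2\varepsilon\lceil\log_2 m\rceil^2\log\frac{m^2}{\beta\hat\Psi_{\*x}^{(\varepsilon)}}$, which is the claimed bound after absorbing the extra $\log m$ (note $\log\frac{m^2}{\beta\hat\Psi_{\*x}^{(\varepsilon)}}\le 2\log\frac m{\beta\hat\Psi_{\*x}^{(\varepsilon)}}$ since $\beta\hat\Psi_{\*x}^{(\varepsilon)}\le1$).

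\emph{Main obstacle.} The delicate part is the measurability bookkeeping: $\hat\gamma_i$, the restricted data $\*{\hat x}_i$, the conditional prior $\hat\mu_i$, and even the quantity $\hat\Psi_{\*x}^{(\varepsilon)}$ in the conclusion are all random functions of the exponential-mechanism outputs higher in the tree, so~\eqref{eq:empirical-probability} must be applied conditionally on those outputs and the good events chained level by level; the ``first violation in BFS order'' device is precisely what lets the single union bound over the $m$ nodes go through instead of a depth-by-depth one (which would cost another $\lceil\log_2 m\rceil$ factor). Verifying the exact shape of $\hat\gamma_j$ and the stated form of~\eqref{eq:empirical-probability} is routine from the proof of Lemma~\ref{lem:quantile} applied to the restricted problem.
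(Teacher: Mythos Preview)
Your overall architecture---induct down the tree using~\eqref{eq:empirical-probability}, bound $\hat\gamma_j$ by the error at an ancestor, and union-bound over the $m$ nodes---is exactly the paper's, and your measurability bookkeeping is in fact more careful than the paper's (your thresholds $S_j$ are adapted to the ancestor $\sigma$-field, whereas the paper's thresholds involve $\hat\Psi_{\*x}^{(\varepsilon)}$, which depends on \emph{all} the $\Psi_k$). The privacy part is identical.

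Where you diverge is in the allocation of failure probability, and this is why your final step does \emph{not} yield the claimed constant. You assign $\beta/m$ to each node, so the per-node threshold is $\frac{2}{\bar\varepsilon}\log\frac{m/\beta}{\Psi_k}$ and the path sum picks up a $\sum_{k\in\mathrm{path}}\log\frac1{\Psi_k}$; Jensen plus enlarging to all $m$ nodes turns this into $\log\frac{m}{\hat\Psi_{\*x}^{(\varepsilon)}}$, giving $\frac2\varepsilon\lceil\log_2 m\rceil^2\log\frac{m^2}{\beta\hat\Psi_{\*x}^{(\varepsilon)}}$. Your ``absorbing'' step $\log\frac{m^2}{\beta\hat\Psi}\le2\log\frac m{\beta\hat\Psi}$ is correct but leaves you with $\frac4\varepsilon\lceil\log_2 m\rceil^2\log\frac m{\beta\hat\Psi_{\*x}^{(\varepsilon)}}$, a factor of~$2$ off from the lemma as stated.

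The paper avoids this loss by allocating the failure probability \emph{non-uniformly}: node $i$ gets budget $\frac{\beta\hat\Psi_{\*x}^{(\varepsilon)}}{m\Psi_i}$ rather than $\beta/m$. Since $\sum_i\frac{\hat\Psi_{\*x}^{(\varepsilon)}}{m\Psi_i}=1$ by definition of the harmonic mean, the union bound still totals $\beta$, but now solving~\eqref{eq:empirical-probability} gives the \emph{same} per-node increment $\frac2{\bar\varepsilon}\log\frac m{\beta\hat\Psi_{\*x}^{(\varepsilon)}}$ for every $i$---the $\Psi_i$'s cancel. The induction on depth then gives $\Gap_{q_i}\le\frac{2k_i}{\bar\varepsilon}\log\frac m{\beta\hat\Psi_{\*x}^{(\varepsilon)}}$ directly, with no Jensen and no enlarging, hence the exact constant. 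If you want the stated bound rather than a constant multiple of it, this is the one idea to add.
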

\begin{proof}
	The privacy guarantee follows as in \citep[Lemma~3.1]{kaplan2022quantiles}.
	Setting the above probability bound~\eqref{eq:empirical-probability} to $\frac{\beta\hat\Psi_{\*x}^{(\varepsilon)}}{m\Psi_{\tilde q_i}^{(\varepsilon_i)}(\*{\hat x}_i,\hat\mu_i)}$ for each $i$ we have w.p. $\ge1-\beta$ that $\Gap_{q_i}(\*x,o_i)\le\frac2{\bar\varepsilon}\log\frac m{\beta\hat\Psi_{\*x}^{(\varepsilon)}}+\hat\gamma_i~\forall~i$.
	Now let $k_i$ be the depth of quantile $q_i$ in the tree.
	If $k_i=1$ then $i$ is the root node so $\hat\gamma_i=0$ and we have $\Gap_{q_i}(\*x,o_i)\le\frac2{\bar\varepsilon}\log\frac m{\beta\hat\Psi_{\*x}^{(\varepsilon)}}$.
	To make an inductive argument, we assume $\Gap_{q_i}(\*x,o_i)\le\frac{2k}{\bar\varepsilon}\log\frac m{\beta\hat\Psi^{(\varepsilon)}}~\forall~i$ s.t. $k_i\le k$, and so for any $i$ s.t. $k_i=k+1$ we have that
	\begin{equation}
	\Gap_{q_i}(\*x,o_i)
	\le\frac2{\bar\varepsilon}\log\frac m{\beta\hat\Psi_{\*x}^{(\varepsilon)}}+(1-\tilde q_i)\Gap_{\underline q_i}(\*x,\hat a_i)+\tilde q_i\Gap_{\overline q_i}(\*x,\hat b_i)
	\le\frac{2(k+1)}{\bar\varepsilon}\log\frac m{\beta\hat\Psi_{\*x}^{(\varepsilon)}}
	\end{equation}
	Thus $\Gap_{q_i}(\*x,o_i)\le\frac{2k_i}{\bar\varepsilon}\log\frac m{\beta\hat\Psi_{\*x}^{(\varepsilon)}}~\forall~i$, so using $k_i\le\lceil\log_2m\rceil$ and $\bar\varepsilon=\frac\varepsilon{\lceil\log_2m\rceil}$ yields the result.\looseness-1
\end{proof}
Setting $\hat\mu_i$ to be uniform on $[\hat a_i,\hat b_i]$ exactly recovers both the algorithm and guarantee of \citep[Theorem~3.3]{kaplan2022quantiles}.
As before, we can also extend the algorithm to the infinite interval:
\begin{Cor}\label{cor:multi-cauchy}
	If all priors are Cauchy with location $\frac{a+b}2$ and scale $\frac{b-a}2$ and the data lies in the interval $(\frac{a+b}2\pm R)$ then w.p. $\ge1-\beta$ the maximum error is at most
	$\frac2\varepsilon\lceil\log_2m\rceil^2\log\left(\pi m\frac{b-a+\frac{4R^2}{b-a}}{2\beta\psi_{\*x}}\right)$.
\end{Cor}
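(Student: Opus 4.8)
The plan is to apply Lemma~\ref{lem:empirical-quantiles} verbatim and then reduce the corollary to a single, $i$-independent lower bound on the prior--score inner products $\Psi_{\*{\hat x}_i}^{(\tilde q_i,\varepsilon_i)}(\hat\mu_i)$, which I would get by copying the Cauchy density estimate from Section~\ref{sec:single}. Lemma~\ref{lem:empirical-quantiles} already gives $\max_i\Gap_{q_i}(\*x,o_i)\le\frac2\varepsilon\lceil\log_2m\rceil^2\log\frac m{\beta\hat\Psi_{\*x}^{(\varepsilon)}}$ on an event of probability $\ge 1-\beta$, where $\hat\Psi_{\*x}^{(\varepsilon)}$ is the harmonic mean of the $m$ quantities $\Psi_{\*{\hat x}_i}^{(\tilde q_i,\varepsilon_i)}(\hat\mu_i)$. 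Since a harmonic mean is at least the minimum of its arguments, it suffices to prove the \emph{deterministic} bound $\Psi_{\*{\hat x}_i}^{(\tilde q_i,\varepsilon_i)}(\hat\mu_i)\ge\frac{2(b-a)\psi_{\*x}/\pi}{(b-a)^2+4R^2}$ for every $i$ (so that the same bound holds for $\hat\Psi_{\*x}^{(\varepsilon)}$ regardless of the random cut points $\hat a_i,\hat b_i$); substituting into $\log\frac m{\beta\hat\Psi_{\*x}^{(\varepsilon)}}$ and simplifying $\log\frac{(b-a)^2+4R^2}{b-a}=\log\left(b-a+\tfrac{4R^2}{b-a}\right)$ then yields exactly the claimed expression.

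For the per-$i$ estimate I would first discard $\varepsilon_i$ via $\Psi_{\*{\hat x}_i}^{(\tilde q_i,\varepsilon_i)}(\hat\mu_i)\ge\Psi_{\*{\hat x}_i}^{(\tilde q_i)}(\hat\mu_i)=\hat\mu_i(I^*_i)$, where $I^*_i$ is the optimal interval of $\tilde q_i$ in the restricted dataset $\*{\hat x}_i$. Because the conditional prior $\hat\mu_i$ is the Cauchy measure restricted to $[\hat a_i,\hat b_i]$ and renormalized by $\mu_i([\hat a_i,\hat b_i])\le 1$, we have $\hat\mu_i(I)\ge\mu_i(I)$ for every $I\subseteq[\hat a_i,\hat b_i]$; and since $\*{\hat x}_i$ consists of points of $\*x$ lying in $(\hat a_i,\hat b_i)$, the interval $I^*_i=(\*{\hat x}_{i[k]},\*{\hat x}_{i[k+1]}]$ is contained in $[\hat a_i,\hat b_i]$, so $\hat\mu_i(I^*_i)\ge\mu_i(I^*_i)$. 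Now the Cauchy density with location $\nu=\frac{a+b}2$ and scale $\sigma=\frac{b-a}2$ is unimodal around $\nu$, so it is at least $\frac{\sigma}{\pi(\sigma^2+\max_{j\in\{k,k+1\}}(\nu-\*{\hat x}_{i[j]})^2)}$ throughout $I^*_i$; multiplying by the width $\*{\hat x}_{i[k+1]}-\*{\hat x}_{i[k]}$ and using that all entries of $\*x$ (hence of $\*{\hat x}_i$) lie in $(\nu\pm R)$, so the $\max$ is $\le R^2$, gives $\mu_i(I^*_i)\ge\frac\sigma\pi\frac{\*{\hat x}_{i[k+1]}-\*{\hat x}_{i[k]}}{\sigma^2+R^2}$, which equals $\frac{2(b-a)\psi_{\*x}/\pi}{(b-a)^2+4R^2}$ once the width bound below is in hand. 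This is the identical computation to the single-quantile case displayed just before Corollary~\ref{cor:cauchy}.

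The one point needing a short argument is the width bound $\*{\hat x}_{i[k+1]}-\*{\hat x}_{i[k]}\ge\psi_{\*x}$: two consecutive points of the restriction $\*{\hat x}_i$ are in fact consecutive in $\*x$, since any $\*x$-entry strictly between them would itself lie in $(\hat a_i,\hat b_i)$ and hence belong to $\*{\hat x}_i$; therefore their separation is one of the gaps of $\*x$ and is at least $\psi_{\*x}=\min_k\*x_{[k+1]}-\*x_{[k]}$. (Degenerate subproblems where $\*{\hat x}_i$ has fewer than two points, together with the corresponding unbounded tail intervals $I^*_i$, are handled exactly as in Section~\ref{sec:single} and only improve the bound.) I do not expect any genuine obstacle here: the corollary is a direct consequence of Lemma~\ref{lem:empirical-quantiles} and the single-quantile Cauchy estimate, and the only care required is the bookkeeping for the conditional (restricted) prior and for correctly identifying the interval $I^*_i$.
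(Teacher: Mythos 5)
Your proof is correct and is the natural argument; the paper states Corollary~\ref{cor:multi-cauchy} without an explicit proof, remarking only that it follows ``as before'' from Lemma~\ref{lem:empirical-quantiles}, and your chain (Lemma~\ref{lem:empirical-quantiles} $\to$ harmonic mean $\ge$ minimum $\to$ per-$i$ deterministic lower bound on $\Psi_{\*{\hat x}_i}^{(\tilde q_i,\varepsilon_i)}(\hat\mu_i)$ via the single-quantile Cauchy estimate) is exactly the reduction the paper has in mind. The two non-obvious ingredients you isolate are both right and worth the sentences you give them: $\hat\mu_i(I)\ge\mu_i(I)$ for the conditional prior because the renormalizing constant $\mu_i([\hat a_i,\hat b_i])\le1$, and consecutive points of $\*{\hat x}_i$ are consecutive in $\*x$ (so the relevant width is $\ge\psi_{\*x}$). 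The only soft spot is the parenthetical about degenerate subproblems: Section~\ref{sec:single} does not actually ``handle'' the boundary intervals $I_0,I_n$---it tacitly assumes $1\le\lfloor qn\rfloor\le n-1$ so the optimal interval has two data endpoints inside $(\frac{a+b}{2}\pm R)$---and the same implicit non-degeneracy assumption (each subproblem's target index is interior) is needed here; this is standard in this line of work but ``only improve the bound'' overstates how automatic it is, since for $\hat n_i=1$ the optimal interval abuts $\hat a_i$ and its width is \emph{not} controlled by $\psi_{\*x}$ (one must instead note $\Gap_{\tilde q_i}\le1$ there, so $\Psi_{\*{\hat x}_i}^{(\tilde q_i,\varepsilon_i)}\ge e^{-\varepsilon_i/2}$, which is large).
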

However, while this demonstrates the usefulness of Lemma~\ref{lem:empirical-quantiles} for obtaining robust priors on infinite intervals, the associated prediction measure $\hat\Psi_{\*x}^{(\varepsilon)}$ is imperfect because it is non-deterministic: 
its value depends on the random execution of the algorithm, specifically on the data subsets $\*{\hat x}_i$ and priors $\hat\mu_i$, which for $i$ not at the root of the tree are affected by the DP mechanisms of $i$'s ancestor nodes.
In addition to not being given fully specified by the prediction and data, this makes $\hat\Psi^{(\varepsilon)}$ difficult to use as an objective for learning.
A natural more desirable prediction metric is the harmonic mean of the inner products between the exponential scores and {\em original} priors $\mu_i$ over the {\em original} dataset $\*x$, i.e. the direct generalization of our approach for single quantiles.

Unfortunately, the conditional restriction of $\mu_i$ to the interval $[\hat a_i,\hat b_i]$ removes the influence of probabilities assigned to intervals between points {\em not} in this interval.
To solve this, we propose a different {\em edge}-restriction of $\mu_i$ that assigns probabilities $\mu_i((-\infty,\hat a_i))$ and $\mu_i((\hat b_i,\infty))$ of being outside the interval $[\hat a_i,\hat b_i]$ to atoms on its edges $\hat a_i$ and $\hat b_i$, respectively.
Despite not using any information from points outside $\*{\hat x}_i$, this approach puts probabilities assigned to intervals outside $[\hat a_i,\hat b_i]$ to the edge closest to them, allowing us to extend the previous probability bound~\eqref{eq:empirical-probability} to depend on the original prior-EM inner-product (c.f. Lemma~\ref{lem:prior}):
\begin{equation}
\Pr\{\Gap_{q_i}(\*x,o_i)\ge\gamma\}
\le\exp(\varepsilon(\hat\gamma_i-\gamma/2))/\Psi_{\*x}^{(q_i,\varepsilon_i)}(\mu_i)
\end{equation}
However, the stronger dependence of this bound on errors $\hat\gamma_i$ earlier in the tree lead to an $\tilde\BigO(\phi^{\log_2m})=\BigO(m^{0.7})$ dependence on $m$, where $\phi=\frac{1+\sqrt5}2$ is the golden ratio:
\begin{Thm}\label{thm:binary}
	If the quantiles are uniform negative powers of two then Algorithm~\ref{alg:quantiles} with $K=2$, edge-based prior adaptation, and $\varepsilon_i=\varepsilon/\lceil\log_2(m+1)\rceil~\forall~i$ is $\varepsilon$-DP and w.p. $\ge1-\beta$ has
	\begin{equation}
	\max_i\Gap_{q_i}(\*x,o_i)
	\le\frac2\varepsilon\phi^{\log_2(m+1)}\lceil\log_2(m+1)\rceil\log\frac m{\beta\Psi_{\*x}^{(\varepsilon)}}\qquad\textrm{for}\qquad\Psi_{\*x}^{(\varepsilon)}=\left(\sum_{i=1}^m\frac{1/m}{\Psi_{\*x}^{(q_i,\varepsilon_i)}(\mu_i)}\right)^{-1}
	\end{equation}
\end{Thm}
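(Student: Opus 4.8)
The plan is to mirror the structure of the proof of Lemma~\ref{lem:empirical-quantiles} but to replace the conditional-prior probability bound~\eqref{eq:empirical-probability} with the edge-restricted bound $\Pr\{\Gap_{q_i}(\*x,o_i)\ge\gamma\}\le\exp(\varepsilon_i(\hat\gamma_i-\gamma/2))/\Psi_{\*x}^{(q_i,\varepsilon_i)}(\mu_i)$ (from Lemma~\ref{lem:prior}), and then to carry the tree recursion through carefully. First I would establish $\varepsilon$-DP: with $K=2$ each datapoint participates in at most $\lceil\log_2(m+1)\rceil$ exponential mechanisms, so running each with budget $\varepsilon_i=\varepsilon/\lceil\log_2(m+1)\rceil$ composes to $\varepsilon$-DP exactly as in \citet[Lemma~3.1]{kaplan2022quantiles}; edge-based adaptation only changes the base measure of each EM invocation, not the privacy accounting. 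Next, a union bound: for each $i$ set the right-hand side of the edge bound equal to $\frac{\beta\Psi_{\*x}^{(\varepsilon)}}{m\,\Psi_{\*x}^{(q_i,\varepsilon_i)}(\mu_i)}$ and solve for $\gamma$, so that with probability $\ge 1-\sum_i \frac{\beta\Psi_{\*x}^{(\varepsilon)}}{m\,\Psi_{\*x}^{(q_i,\varepsilon_i)}(\mu_i)} = 1-\beta$ (using the definition of the harmonic mean $\Psi_{\*x}^{(\varepsilon)}$) we have simultaneously
\begin{equation}
\Gap_{q_i}(\*x,o_i)\le 2\hat\gamma_i + \frac{2}{\varepsilon_i}\log\frac{m}{\beta\Psi_{\*x}^{(\varepsilon)}}\qquad\forall i.
\end{equation}

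The heart of the argument, and the step I expect to be the main obstacle, is converting this per-node inequality into a clean bound on $\max_i\Gap_{q_i}$. The key difference from Lemma~\ref{lem:empirical-quantiles} is the coefficient: here the offset is $2\hat\gamma_i$ rather than $\hat\gamma_i$, where $\hat\gamma_i=(1-\tilde q_i)\Gap_{\underline q_i}(\*x,\hat a_i)+\tilde q_i\Gap_{\overline q_i}(\*x,\hat b_i)$ is a convex combination of the gaps of the two ancestor quantiles immediately bracketing $q_i$. Let $G_k$ denote an upper bound on $\max\{\Gap_{q_i}:k_i\le k\}$ and let $D=\frac{2}{\varepsilon_i}\log\frac{m}{\beta\Psi_{\*x}^{(\varepsilon)}}$ be the common additive term. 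Since $\hat\gamma_i$ is a convex combination of gaps of nodes at depth $\le k_i-1$, we get $\hat\gamma_i\le G_{k_i-1}$, hence the recurrence $G_{k}\le 2G_{k-1}+D$ would only give $2^k$ growth; the improvement to the golden ratio comes from the fact that for uniformly-spaced quantiles that are negative powers of two, the two bracketing ancestors $\underline q_i,\overline q_i$ of a depth-$k$ node are {\em not} both at depth $k-1$ — one of them (the "inherited" endpoint) sits at some shallower depth. Concretely, one of $\Gap_{\underline q_i}$, $\Gap_{\overline q_i}$ is at depth exactly $k-1$ while the other is at depth $\le k-2$ (or is a hard boundary with gap $0$). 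This yields a Fibonacci-type recurrence $G_k \le G_{k-1}+G_{k-2}+D$, whose homogeneous solution grows like $\phi^k$; solving with the inhomogeneous term $D$ gives $G_k \le c\,\phi^k D$ for an absolute constant absorbed into the statement, and substituting $k\le\lceil\log_2(m+1)\rceil$ and $1/\varepsilon_i=\lceil\log_2(m+1)\rceil/\varepsilon$ produces the claimed bound.

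To make the tree bookkeeping rigorous I would first fix the indexing convention: for $m=2^K-1$ the quantiles $q_i = i/2^K$ sit at the nodes of a complete binary tree of depth $K$, and a node at depth $k$ has as its left bracket $\hat a_i$ either the global lower bound $a$ or the estimate of the nearest ancestor that branched left-to-reach-$i$, and symmetrically for $\hat b_i$; a short combinatorial lemma shows exactly one of these two brackets is the parent (depth $k-1$) and the other is a strictly-shallower ancestor. With that in hand the recursion is a clean induction on $k$ just as in Lemma~\ref{lem:empirical-quantiles}, carrying the inductive hypothesis $\Gap_{q_i}(\*x,o_i)\le F_{k_i}D$ where $F_k$ is the $k$-th Fibonacci number (up to a small shift), and using $F_k\le\phi^k$. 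The remaining details — verifying the base case at the root (where $\hat\gamma_i=0$ since $\hat a_i=a$, $\hat b_i=b$ are hard boundaries with zero gap), and checking that the convex-combination weights $1-\tilde q_i,\tilde q_i\in[0,1]$ never inflate the bound beyond the sum $G_{k-1}+G_{k-2}$ — are routine. The $\beta$-accounting and the final substitution of $\varepsilon_i$ are identical in spirit to the $K=2$ case already proven, so the only genuinely new ingredient is the depth-structure argument giving the Fibonacci (rather than doubling) recurrence.
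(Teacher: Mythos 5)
Your proposal is correct and follows essentially the same route as the paper's own proof: apply the edge-based probability bound from Lemma~\ref{lem:prior}, union-bound with failure probability $\frac{\beta\Psi_{\*x}^{(\varepsilon)}}{m\Psi_{\*x}^{(q_i,\varepsilon_i)}(\mu_i)}$ per node, exploit the tree structure to observe that one bracketing ancestor is at depth $k-1$ and the other at depth $\le k-2$, and solve the resulting Fibonacci-type recurrence $A_k = 1 + A_{k-1} + A_{k-2}$ with $A_0=0$, $A_1=1$. The only point worth tightening is your remark that the convex-combination weights $(1-\tilde q_i, \tilde q_i)$ "never inflate the bound beyond $G_{k-1}+G_{k-2}$": this is not merely because the weights lie in $[0,1]$, but because the uniform-powers-of-two hypothesis forces $\tilde q_i = 1/2$ exactly, so that $2\hat\gamma_i = \Gap_{\underline q_i} + \Gap_{\overline q_i}$ is an unweighted sum; for unequal weights the worse-positioned ancestor could receive the larger coefficient and break the Fibonacci recurrence.
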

\begin{proof}
	Since $\tilde q_i=1/2~\forall~i$, setting the new probability bound equal to $\frac{\beta\Psi_{\*x}^{(\varepsilon)}}{m\Psi_{\*x}^{(q_i\varepsilon_i)}(\mu_i)}$ yields that w.p. $\ge1-\beta$ 
	\begin{equation}
		\Gap_{q_i}(\*x,o_i)\le\frac2{\bar\varepsilon}\log\frac m{\beta\Psi_{\*x}^{(\varepsilon)}}+2\hat\gamma_i=\frac2{\bar\varepsilon}\log\frac m{\beta\Psi_{\*x}^{(\varepsilon)}}+\Gap_{\underline q_i}(\*x,\hat a_i)+\Gap_{\overline q_i}(\*x,\hat b_i)~\forall~i
	\end{equation}
	If for each $k\le\lceil\log_2m\rceil$ we define $E_k$ to be the maximum error of any quantile of at most depth $k$ in the tree then since one of $\underline q_i$ and $\overline q_i$ is at depth at least one less than $q_i$ and the other is at depth at least two less than $q_i$ we have $E_k\le\frac{2A_k}{\bar\varepsilon}\log\frac m{\beta\Psi_{\*x}^{(\varepsilon)}}$ for recurrent relation $A_k=1+A_{k-1}+A_{k-2}$ with $A_0=0$ and $A_1=1$.
	Since $A_k=F_{k+1}-1$ for Fibonacci sequence $F_j=\frac{\phi^j-(1-\phi)^j}{\sqrt 5}$, we have 
	\begin{equation}
		\max_i\Gap_{q_i}(\*x,o_i)=\max_kE_k\le\frac{2\phi^{\lceil\log_2(m+1)\rceil+1}}{\bar\varepsilon\sqrt5}\log\frac m{\beta\Psi_{\*x}^{(\varepsilon)}}=\frac{2\phi^{\lceil\log_2(m+1)\rceil+1}}{\varepsilon\sqrt5}\lceil\log_2(m+1)\rceil\log\frac m{\beta\Psi_{\*x}^{(\varepsilon)}}
	\end{equation}
\end{proof}
Thus while we have obtained a performance guarantee depending only on the prediction and the data via the harmonic mean $\Psi_{\*x}^{(\varepsilon)}$ of the true prior-EM inner-products, the dependence on $m$ is now polynomial.
Note that it is still sublinear, which means it is better than the naive baseline of running $m$ independent exponential mechanisms.
Still, we can do much better---in-fact asymptotically better than any power of $m$---by recognizing that the main issue is the compounding error induced by successive errors to the boundaries of sub-intervals.
We can reduce this by reducing the depth of the tree using a $K$-ary rather than binary tree and instead paying $K-1$ times the privacy budget at each depth in order to naively release values for $K-1$ quantiles.
This can introduce out-of-order quantiles, but by Lemma~\ref{lem:shuffle} swapping any two out-of-order quantiles does not increase the maximum error and so this issue can be solved by sorting the $K-1$ quantiles before using them to split the data.
We thus have the following prediction-dependent performance bound for multiple quantiles:\looseness-1
\begin{Thm}\label{thm:kary}
	If we run Algorithm~\ref{alg:quantiles} with $K=\lceil\exp(\sqrt{\log2\log(m+1)})\rceil$, edge-based adaptation, and $\varepsilon_i=\frac{\bar\varepsilon}{k_i^p}$ for some power $p>1$, $k_i$ the depth of $q_i$ in the $K$-ary tree, and $\bar\varepsilon=\frac\varepsilon{K-1}\left(\sum_{k=1}^{\lceil\log_K(m+1)\rceil}\frac1{k^p}\right)^{-1}$, then the result satisfies $\varepsilon$-DP and w.p. $\ge1-\beta$ we have 
	\ifdefined\arxiv
	\begin{equation}
		\max_i\Gap_{q_i}(\*x,o_i)
		\le\frac{2\pi^2}\varepsilon\exp\left(2\sqrt{\log(2)\log(m+1)}\right)\log\frac m{\beta\Psi_{\*x}^{(\varepsilon)}}
	\end{equation}
	\else
	$\max_i\Gap_{q_i}(\*x,o_i)
	\le\frac{2\pi^2}\varepsilon\exp\left(2\sqrt{\log(2)\log(m+1)}\right)\log\frac m{\beta\Psi_{\*x}^{(\varepsilon)}}$
	\fi 
	if $p=2$ and more generally $\max_i\Gap_{q_i}(\*x,o_i)
	\le\frac{c_p}\varepsilon\exp\left(2\sqrt{\log(2)\log(m+1)}\right)\log\frac m{\beta\Psi_{\*x}^{(\varepsilon)}}$, where $c_p$ depends only on $p$.
\end{Thm}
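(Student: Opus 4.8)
\textbf{The plan} is to reuse the skeleton of the proof of Theorem~\ref{thm:binary}, but with a $K$-ary search tree in which every internal node holds $K-1$ of the quantiles and splits its data into $K$ subsets; the two new issues are that each datapoint now enters $K-1$ exponential mechanisms per level rather than one, and that the $K-1$ estimates produced at a node may be out of order. For privacy, a datapoint lies in exactly one node at each of the $d=\lceil\log_K(m+1)\rceil$ depths and, at a depth-$k$ node, enters $K-1$ mechanisms run with budget $\varepsilon_i=\bar\varepsilon/k^p$, so its total exposure is $\sum_{k=1}^d(K-1)\bar\varepsilon/k^p=(K-1)\bar\varepsilon\sum_{k=1}^d k^{-p}=\varepsilon$ by the choice of $\bar\varepsilon$; basic composition then gives $\varepsilon$-DP.

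For utility I would start from the edge-based tail bound of Lemma~\ref{lem:prior}, $\Pr\{\Gap_{q_i}(\*x,o_i)\ge\gamma\}\le\exp(\varepsilon_i(\hat\gamma_i-\gamma/2))/\Psi_{\*x}^{(q_i,\varepsilon_i)}(\mu_i)$, which applies verbatim since each of the $K-1$ mechanisms at a node is an ordinary exponential mechanism on that node's sub-dataset with the edge-adapted prior, and set the failure probability for quantile $i$ to $\beta\,\Psi_{\*x}^{(\varepsilon)}/(m\,\Psi_{\*x}^{(q_i,\varepsilon_i)}(\mu_i))$; these sum to $\beta$ because $\sum_i 1/\Psi_{\*x}^{(q_i,\varepsilon_i)}(\mu_i)=m/\Psi_{\*x}^{(\varepsilon)}$, and the $\Psi_{\*x}^{(q_i,\varepsilon_i)}(\mu_i)$ factors cancel, leaving, with probability at least $1-\beta$ and simultaneously for all $i$, $\Gap_{q_i}(\*x,o_i)\le\frac{2k_i^p}{\bar\varepsilon}\log\frac{m}{\beta\Psi_{\*x}^{(\varepsilon)}}+2\hat\gamma_i$, where $k_i$ is the depth of $q_i$ and $\hat\gamma_i$ is the convex combination (with weights $1-\tilde q_i,\tilde q_i$) of the $\Gap$-errors of the two estimates delimiting $q_i$'s sub-interval.

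Next, by Lemma~\ref{lem:shuffle}, sorting the $K-1$ estimates at a node before using them as split points cannot increase their maximum error, so the $K$ child sub-intervals are well-defined and each endpoint carries an error at most that node's largest. Writing $E_k$ for the largest $\Gap_{q_i}$ over quantiles of depth $\le k$ and noting that the two endpoints of a depth-$k$ sub-interval come from the same depth-$(k-1)$ node (or are the exact outer bounds $a,b$), we get $\hat\gamma_i\le E_{k-1}$ and hence the recurrence $E_k\le\frac{2k^p}{\bar\varepsilon}\log\frac{m}{\beta\Psi_{\*x}^{(\varepsilon)}}+2E_{k-1}$ with $E_0=0$. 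Unrolling gives $E_d\le\frac{2}{\bar\varepsilon}\log\frac{m}{\beta\Psi_{\*x}^{(\varepsilon)}}\sum_{k=1}^d 2^{d-k}k^p\le\frac{2^{d+1}S_p}{\bar\varepsilon}\log\frac{m}{\beta\Psi_{\*x}^{(\varepsilon)}}$ with $S_p=\sum_{k\ge1}k^p2^{-k}<\infty$ (here $S_2=6$): this is exactly where the schedule $\varepsilon_i\propto k_i^{-p}$ matters, since the geometric weights $2^{d-k}$ dominate the polynomial $k^p$ and no extra $\Theta(d)$ factor appears, whereas a uniform split would incur one. Using $1/\bar\varepsilon=(K-1)\sum_{k\le d}k^{-p}/\varepsilon\le(K-1)\zeta(p)/\varepsilon$ (needing $p>1$, with $\zeta(2)=\pi^2/6$) then yields $E_d\le\frac{2^{d+1}S_p\zeta(p)(K-1)}{\varepsilon}\log\frac{m}{\beta\Psi_{\*x}^{(\varepsilon)}}$.

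Finally one controls $2^{d}(K-1)$: since $2^d\lesssim(m+1)^{\log 2/\log K}=\exp(\log 2\,\log(m+1)/\log K)$, we have $2^d(K-1)\lesssim\exp\big(\log K+\log 2\,\log(m+1)/\log K\big)$, whose exponent is minimized at $\log K=\sqrt{\log 2\,\log(m+1)}$ --- precisely the prescribed $K$ --- with value $2\sqrt{\log 2\,\log(m+1)}$, so $2^d(K-1)\le\exp(2\sqrt{\log 2\,\log(m+1)})$ after a careful accounting of the ceilings; plugging this in, together with $S_2\zeta(2)=\pi^2$ for $p=2$, collapses the constant to $2\pi^2$, and for general $p>1$ the same chain gives some $c_p$ proportional to $S_p\zeta(p)$. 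I expect the main obstacle to be the recurrence in the third paragraph: verifying that reordering the per-node estimates is harmless, so that the children's boundary errors stay tied to $E_{k-1}$ alone, and that the $k^{-p}$ budget schedule is exactly what keeps the unrolled sum at $O(2^d)$ rather than $O(d\,2^d)$; once those are in hand, the choice-of-$K$ optimization is routine.
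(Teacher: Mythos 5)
Your proposal is correct and follows essentially the same route as the paper's proof: you derive the same recurrence $E_k \le \frac{2k^p}{\bar\varepsilon}\log\frac{m}{\beta\Psi_{\*x}^{(\varepsilon)}} + 2E_{k-1}$ (the paper writes it as $A_k = k^p + 2A_{k-1}$), unroll it to a constant times $2^d$, bound $1/\bar\varepsilon$ by $(K-1)\zeta(p)/\varepsilon$, and optimize $2^d(K-1)$ over $K$. The only cosmetic difference is that you write the unrolled constant as $S_p=\sum_{k\ge1}k^p2^{-k}$ where the paper expresses the same quantity via the Lerch transcendent; your added remarks on why Lemma~\ref{lem:shuffle} licenses the sort and why the $k^{-p}$ budget schedule is what keeps the sum at $O(2^d)$ are accurate and useful glosses on what the paper leaves implicit.
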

\ifdefined\arxiv\else
\newpage
\fi
\begin{proof}
	The privacy guarantee follows as in \citep[Lemma~3.1]{kaplan2022quantiles} except before each split we compute $K-1$ quantiles with $K-1$ times less budget.
	As in the previous proof, we have w.p. $\ge1-\beta$ that 
	\begin{equation}
		\Gap_{q_i}(\*x,o_i)\le\frac2{\varepsilon_i}\log\frac m{\beta\Psi_{\*x}^{(\varepsilon)}}+2\hat\gamma_i=\frac{2k_i^2}{\bar\varepsilon}\log\frac m{\beta\Psi_{\*x}^{(\varepsilon)}}+2(1-\tilde q_i)\Gap_{\underline q_i}(\*x,\hat a_i)+2\tilde q_i\Gap_{\overline q_i}(\*x,\hat b_i)~\forall~i
	\end{equation}
	If for each $k\le\lceil\log_K(m+1)\rceil$ we define $E_k$ to be the maximum error of any quantile of at most depth $k$ in the tree then since both $\underline q_i$ and $\overline q_i$ are at depth at least one less than $q_i$ we have $E_k\le\frac{2A_k}{\bar\varepsilon}\log\frac m{\beta\Psi_{\*x}^{(\varepsilon)}}$, where $A_k=k^p+2A_{k-1}$ and $A_1=1$.
	For the case of $p=2$,  $A_k\le6\cdot2^k$ and $1/\bar\varepsilon=\frac{K-1}\varepsilon\sum_{k=1}^{\lceil\log_K(m+1)\rceil}\frac1{k^2}\le\frac{\pi^2}{6\varepsilon}(K-1)$ so we have that 
	\begin{equation}
		\max_i\Gap_{q_i}(\*x,o_i)=\max_kE_k\le\frac{12}{\bar\varepsilon}2^{\lceil\log_K(m+1)\rceil}\log\frac m{\beta\Psi_{\*x}^{(\varepsilon)}}\le\frac{2\pi^2}{\varepsilon}(K-1)2^{\lceil\log_K(m+1)\rceil}\log\frac m{\beta\Psi_{\*x}^{(\varepsilon)}}
	\end{equation}
	Substituting $K=\lceil\exp(\sqrt{\log2\log(m+1)})\rceil$ and simplifying yields the result.
	For $p>1$, $A_k\le2^{k-2}\left(2+\Phi\left(\frac12,-p,2\right)\right)$, where $\Phi$ is the Lerch transcendent, and $1/\bar\varepsilon\le\frac{K-1}\varepsilon\zeta(p)$, where $\zeta$ is the Riemann zeta function.
	Therefore
	\ifdefined\arxiv
	\begin{align}
	\begin{split}
		\max_i\Gap_{q_i}(\*x,o_i)
		=\max_kE_k
		&\le\frac{2^{\lceil\log_K(m+1)\rceil}}{2\bar\varepsilon}\left(2+\Phi\left(\frac12,-p,2\right)\right)\log\frac m{\beta\Psi_{\*x}^{(\varepsilon)}}\\
		&\le\frac{c_p}{\varepsilon}(K-1)2^{\lceil\log_K(m+1)\rceil}\log\frac m{\beta\Psi_{\*x}^{(\varepsilon)}}
	\end{split}
	\end{align}
	\else
	\begin{equation}
		\max_i\Gap_{q_i}(\*x,o_i)
		=\max_kE_k
		\le\frac{2^{\lceil\log_K(m+1)\rceil}}{2\bar\varepsilon}\left(2+\Phi\left(\frac12,-p,2\right)\right)\log\frac m{\beta\Psi_{\*x}^{(\varepsilon)}}
		\le\frac{c_p}{\varepsilon}(K-1)2^{\lceil\log_K(m+1)\rceil}\log\frac m{\beta\Psi_{\*x}^{(\varepsilon)}}
	\end{equation}
	\fi
	for $c_p=\left(1+\Phi\left(\frac12,-p,2\right)/2\right)\zeta(p)$.
\end{proof}

Similarly to Theorem~\ref{thm:binary}, the proof establishes a recurrence relationship between the maximum errors at each depth.
Note that in addition to the $K$-ary tree this bound uses depth-dependent budgeting to remove a $\BigO(\log_2m)$-factor;
the constant depending upon the parameter $p>1$ of the latter has a minimum of roughly $8.42$ at $p\approx1.6$.
As discussed before, the new dependence $\tilde\BigO\left(\exp\left(2\sqrt{\log(2)\log(m+1)}\right)\right)$ on $m$ is sub-polynomial, i.e $o(m^\alpha)~\forall~\alpha>0$.
While it is also super-polylogarithmic, its shape for any practical value of $m$ is roughly $\BigO(\log_2^2m)$, making the result of interest as a justification for the negative log-inner-product performance metric.

\subsection{Experimental details}

For the experiments in Section~\ref{sec:algorithm}, specifically Figures~\ref{fig:mixplots}, we evaluate three variants of the algorithm on data drawn from a standard Gaussian distribution and from the Adult ``age" dataset~\citep{kohavi1996adult}. 
In both cases we use 1000 samples and run each experiment 40 times, reporting the average performance.
As we do for all datasets, we use reasonable guesses of mean, scale, and bounds on each dataset to set priors.
As in this section we report the Uniform, we need to specify its range; for Gaussian we use $[-10,10]$, while for ``age" we use $[10,120]$.

The original AQ algorithm of~\citet{kaplan2022quantiles} is now fully specified.
We test two variants of our $K$-ary modification:
one with edge-based adaptation, and the other using the original conditional adaptation.
For both cases we set $K$ as a function of $m$ according to the formula in Theorem~\ref{thm:kary-main}, and we set the power $p$ of the depth-dependent budget discounting to 1.5, which is close to the theoretically optimal value of around 1.6 (c.f. Thm~\ref{thm:kary}).

\ifdefined\arxiv\else
\newpage
\fi
\section{Section~\ref{sec:advantages} details}\label{app:advantages}

\subsection{Robustness-consistency tradeoffs}

While prediction-dependent guarantees work well if the prediction is accurate, without safeguards they may perform catastrophically poorly if the prediction is incorrect.
Quantiles provide a prime demonstration of the importance of robustness, as using priors allows for approaches that may assign very little probability to the interval containing the quantile.
For example, if one is confident that it has a specific value $x\in(a,b)$ one can specify a more concentrated prior, e.g. the Laplace distribution around $x$.
Alternatively, if one believes the data is drawn i.i.d. from some a known distribution then $\mu$ can be constructed via its CDF using order statistics~\citep[Equation~2.1.5]{david2003order}.
These reasonable approaches can result in distributions with exponential or high-order-polynomial tails, using which directly may work poorly if the prediction is incorrect.\looseness-1

Luckily, for our negative log-inner-product error metric it is straightforward to show a parameterized robustness-consistency tradeoff by simply mixing the prediction prior $\mu$ with a robust prior $\rho$:\looseness-1
\begin{Cor}\label{cor:quantile}
	For any prior $\mu:\R\mapsto\R_{\ge0}$, robust prior $\rho:\R\mapsto\R_{\ge0}$, and robustness parameter $\lambda\in[0,1]$, releasing $o\in\R$ w.p. $\propto\exp(-\varepsilon\Gap_q(\*x,o)/2)\mu^{(\lambda)}(o)$ for $\mu^{(\lambda)}=(1-\lambda)\mu+\lambda\rho$ is $\left(\frac2\varepsilon\log\frac{1/\beta}{\lambda\Psi_{\*x}^{(q,\varepsilon)}(\rho)}\right)$-robust and $\left(\frac2\varepsilon\log\frac{1/\beta}{1-\lambda}\right)$-consistent w.p. $\ge1-\beta$.
\end{Cor}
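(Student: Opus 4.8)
The plan is to reduce the whole statement to Lemma~\ref{lem:quantile} by exploiting the linearity of the functional $\mu\mapsto\Psi_{\*x}^{(q,\varepsilon)}(\mu)$. First I would note that $\mu^{(\lambda)}=(1-\lambda)\mu+\lambda\rho$ is again a nonnegative measure (a probability measure, if $\mu$ and $\rho$ are), so it is a legitimate base measure for the exponential mechanism; since $u_q=-\Gap_q$ has sensitivity one, the release is $\varepsilon$-DP, and Lemma~\ref{lem:quantile} applied with base measure $\mu^{(\lambda)}$ gives, with probability $\ge1-\beta$, that $\Gap_q(\*x,o)\le\frac2\varepsilon(\log\frac1\beta-\log\Psi_{\*x}^{(q,\varepsilon)}(\mu^{(\lambda)}))$. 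The key identity is that $\Psi_{\*x}^{(q,\varepsilon)}(\mu)=\int\exp(-\varepsilon\Gap_q(\*x,o)/2)\mu(o)\,do$ is linear in $\mu$, whence $\Psi_{\*x}^{(q,\varepsilon)}(\mu^{(\lambda)})=(1-\lambda)\Psi_{\*x}^{(q,\varepsilon)}(\mu)+\lambda\Psi_{\*x}^{(q,\varepsilon)}(\rho)$; both summands are nonnegative, which is all the two bounds need.

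For robustness I would drop the first summand to obtain $\Psi_{\*x}^{(q,\varepsilon)}(\mu^{(\lambda)})\ge\lambda\Psi_{\*x}^{(q,\varepsilon)}(\rho)$, a bound holding for \emph{every} prediction $\mu$; substituting it into the inequality above yields $\Gap_q(\*x,o)\le\frac2\varepsilon\log\frac{1/\beta}{\lambda\Psi_{\*x}^{(q,\varepsilon)}(\rho)}$ uniformly in $\mu$, which is exactly $r_{\*x}$-robustness in the sense of Definition~\ref{def:robustness}. For consistency I would instead use that $\exp(-\varepsilon\Gap_q/2)\le1$ with equality precisely on the optimal interval, so $\Psi_{\*x}^{(q,\varepsilon)}(\mu)\le1$ always and the hypothesis $U_{\*x}^{(q,\varepsilon)}(\mu)=-\log\Psi_{\*x}^{(q,\varepsilon)}(\mu)=0$ forces $\Psi_{\*x}^{(q,\varepsilon)}(\mu)=1$; dropping now the $\lambda$-summand gives $\Psi_{\*x}^{(q,\varepsilon)}(\mu^{(\lambda)})\ge1-\lambda$, and the inequality above becomes $\Gap_q(\*x,o)\le\frac2\varepsilon\log\frac{1/\beta}{1-\lambda}$, i.e. $c_{\*x}$-consistency (Definition~\ref{def:consistency}).

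I do not expect a real obstacle: the corollary is essentially a two-line consequence of linearity plus Lemma~\ref{lem:quantile}. The only point needing a word of care is that robustness and consistency hold with the \emph{same} failure probability $\beta$, which is immediate because both are read off from the single $(1-\beta)$-probability event of Lemma~\ref{lem:quantile} for the measure $\mu^{(\lambda)}$, the two conclusions differing only in which deterministic lower bound on $\Psi_{\*x}^{(q,\varepsilon)}(\mu^{(\lambda)})$ is subsequently invoked. One could additionally remark that the identical argument runs with the $\varepsilon$-independent functional $\Psi_{\*x}^{(q)}$ in place of $\Psi_{\*x}^{(q,\varepsilon)}$ and---via the log-sum-exp form of $\Psi_{\*x}^{(\varepsilon)}$---extends to the multiple-quantile metric, though neither extension is needed here.
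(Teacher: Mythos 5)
Your argument is correct and is exactly what the paper does: apply Lemma~\ref{lem:quantile} with base measure $\mu^{(\lambda)}$ and use linearity of $\Psi_{\*x}^{(q,\varepsilon)}$, dropping one summand for the robustness bound and the other (together with $\Psi_{\*x}^{(q,\varepsilon)}(\mu)=1$ under perfect prediction) for the consistency bound. The paper's proof is a one-line version of this same reduction.
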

\begin{proof}
	Apply Lemma~\ref{lem:quantile} and linearity of $\Psi_{\*x}^{(q,\varepsilon)}(\mu^{(\lambda)})=(1-\lambda)\Psi_{\*x}^{(q,\varepsilon)}(\mu)+\lambda\Psi_{\*x}^{(q,\varepsilon)}(\rho)$.
\end{proof}
Thus if the interval is finite and we set $\rho$ to be the uniform prior, using $\mu^{(\lambda)}$ in the algorithm will have a high probability guarantee at most $\frac2\varepsilon\log\frac1\lambda$-worse than the prediction-free guarantee of~\citet[Lemma~A.1]{kaplan2022quantiles}, no matter how poor $\mu$ is for the data, while also guaranteeing w.p. $\ge1-\beta$ that the error will be at most $\frac2\varepsilon\log\frac{1/\beta}{1-\lambda}$ if $\mu$ is perfect.
A similar result holds for the case of an infinite interval if we instead use a Cauchy prior.
Corollary~\ref{cor:quantile} demonstrates the usefulness of the algorithms with predictions framework for not only quantifying improvement in utility using external information but also for making the resulting DP algorithms robust to prediction noise.\looseness-1

The above argument for single-quantiles is straightforward to extend to the negative log of the harmonic means of the inner products.
In-fact for the binary case with uniform quantiles we can trade-off between $\PolyLog(m)$-guarantees similar to those of~\citet{kaplan2022quantiles} and our prediction-dependent bounds:\looseness-1
\begin{Cor}\label{cor:multiple}
	Consider priors $\mu_1,\dots,\mu_m:\R\mapsto\R_{\ge0}$, Cauchy prior $\rho:\R\mapsto\R_{\ge0}$ with location $\frac{a+b}2$ and scale $\frac{b-a}2$, and robustness parameter $\lambda\in[0,1]$.
	Then running Algorithm~\ref{alg:quantiles} on quantiles that are uniform negative powers of two with $K=2$, edge-based prior adaptation, $\varepsilon_i=\bar\varepsilon=\varepsilon/\lceil\log_2m\rceil~\forall~i$, and priors $\mu_i^{(\lambda)}=\lambda\rho+(1-\lambda)\mu_i~\forall~i$ is $\left(\frac2\varepsilon\lceil\log_2m\rceil^2\log\left(\pi m\frac{b-a+\frac{4R^2}{b-a}}{2\lambda\beta\psi_{\*x}}\right)\right)$-robust and $\left(\frac2\varepsilon\phi^{\log_2m}\lceil\log_2m\rceil\log\frac{m/\beta}{1-\lambda}\right)$-consistent w.p. $\ge1-\beta$.
\end{Cor}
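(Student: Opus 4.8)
The plan is to read off the two bounds from two different analyses of the \emph{same} run of Algorithm~\ref{alg:quantiles}. Privacy needs nothing new: neither mixing priors nor the edge construction touches the budget split, so $\varepsilon$-DP is inherited verbatim from Theorem~\ref{thm:binary}/Lemma~\ref{lem:empirical-quantiles} (each datapoint still enters $\lceil\log_2 m\rceil$ exponential mechanisms of budget $\bar\varepsilon=\varepsilon/\lceil\log_2 m\rceil$), and I will freely use that $m=2^k-1$ makes $\lceil\log_2 m\rceil=\log_2(m+1)$. For \emph{consistency} I would simply invoke Theorem~\ref{thm:binary} with the priors $\mu_i^{(\lambda)}$: linearity gives $\Psi_{\*x}^{(q_i,\varepsilon_i)}(\mu_i^{(\lambda)})\ge(1-\lambda)\Psi_{\*x}^{(q_i,\varepsilon_i)}(\mu_i)$, while a perfect prediction ($U_{\*x}=0$) means each $\mu_i$ is supported on the optimal interval $I_{\lfloor q_in\rfloor}$, forcing $\Psi_{\*x}^{(q_i)}(\mu_i)=1$ and hence $\Psi_{\*x}^{(q_i,\varepsilon_i)}(\mu_i)=1$ (it lies between $\Psi_{\*x}^{(q_i)}(\mu_i)$ and $1$). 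Then every term of the harmonic mean is at least $1-\lambda$, so $\Psi_{\*x}^{(\varepsilon)}(\mu^{(\lambda)})\ge1-\lambda$ and Theorem~\ref{thm:binary} delivers the stated $\frac2\varepsilon\phi^{\log_2 m}\lceil\log_2 m\rceil\log\frac{m/\beta}{1-\lambda}$ consistency (reading $\phi^{\log_2 m}$ as $\phi^{\lceil\log_2 m\rceil}=\phi^{\log_2(m+1)}$).

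The \emph{robustness} bound is the part that needs an idea, since feeding $\mu_i^{(\lambda)}$ into Theorem~\ref{thm:binary}'s statement would only give the $\phi^{\log_2 m}=\tilde\BigO(m^{0.7})$ rate rather than the $\lceil\log_2 m\rceil^2$ rate claimed, so one must re-open its proof. Exactly as in the derivation of~\eqref{eq:empirical-probability}, the edge-adapted mechanism at node $i$ satisfies $\Pr\{\Gap_{q_i}(\*x,o_i)\ge\gamma\}\le\exp(\frac{\varepsilon_i}2(\hat\gamma_i-\gamma))/\widetilde\Psi_i$ with the \emph{single} offset $\hat\gamma_i=(1-\tilde q_i)\Gap_{\underline q_i}(\*x,\hat a_i)+\tilde q_i\Gap_{\overline q_i}(\*x,\hat b_i)$, where $\widetilde\Psi_i$ is the inner product of the edge-restricted prior $\widetilde{\mu_i^{(\lambda)}}$ with the EM score over the \emph{subset} $\*{\hat x}_i$; the offset is $\hat\gamma_i$ rather than $2\hat\gamma_i$ precisely because this is the subset inner product, not its further lower bound $e^{-\varepsilon_i\hat\gamma_i/2}\Psi_{\*x}^{(q_i,\varepsilon_i)}(\mu_i)$ in terms of the original data (which is what produces the Fibonacci recurrence and the $\phi^{\log_2 m}$ in Theorem~\ref{thm:binary}). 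Now the edge construction leaves every interior interval of $[\hat a_i,\hat b_i]$ untouched, so $\widetilde{\mu_i^{(\lambda)}}$ agrees with $\mu_i^{(\lambda)}\ge\lambda\rho$ on the optimal interval $J^\ast$ of $\tilde q_i$ within $\*{\hat x}_i$, which for $\tilde q_i=\frac12$ and $n_i\ge2$ is an interior gap between two consecutive points of $\*x$, hence of width $\ge\psi_{\*x}$ and contained in $(\frac{a+b}2\pm R)$. Since $\Gap_{\tilde q_i}(\*{\hat x}_i,J^\ast)=0$, the Cauchy density estimate used in Corollaries~\ref{cor:cauchy} and~\ref{cor:multi-cauchy} gives $\widetilde\Psi_i\ge\lambda\,\rho(J^\ast)\ge\lambda\,\frac{2(b-a)\psi_{\*x}/\pi}{(b-a)^2+4R^2}=:\lambda\Psi_{\min}$.

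Because $\hat\gamma_i\le\max\{\Gap_{\underline q_i}(\*x,\hat a_i),\Gap_{\overline q_i}(\*x,\hat b_i)\}$ and both brackets lie at strictly smaller depth, the induction-on-depth in Lemma~\ref{lem:empirical-quantiles} then goes through unchanged: a union bound over the $m$ quantiles gives $\Gap_{q_i}(\*x,o_i)\le\hat\gamma_i+\frac2{\bar\varepsilon}\log\frac m{\beta\lambda\Psi_{\min}}$ with probability $\ge1-\beta$, so the worst gap $E_k$ among quantiles of depth $\le k$ obeys $E_k\le E_{k-1}+\frac2{\bar\varepsilon}\log\frac m{\beta\lambda\Psi_{\min}}$, whence $E_k\le\frac{2k}{\bar\varepsilon}\log\frac m{\beta\lambda\Psi_{\min}}$; substituting $k\le\lceil\log_2 m\rceil$, $\bar\varepsilon=\varepsilon/\lceil\log_2 m\rceil$, and $\frac{(b-a)^2+4R^2}{b-a}=b-a+\frac{4R^2}{b-a}$ reproduces the claimed $\frac2\varepsilon\lceil\log_2 m\rceil^2\log\left(\pi m\frac{b-a+4R^2/(b-a)}{2\lambda\beta\psi_{\*x}}\right)$ robustness. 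I expect the main obstacle to be the observation in the middle paragraph, namely that one must lower-bound the \emph{subset} inner product $\widetilde\Psi_i$ directly through the robust component $\lambda\rho$ (preserving the $\hat\gamma_i$-offset and hence a linear-in-depth recurrence) rather than routing through $\Psi_{\*x}^{(q_i,\varepsilon_i)}(\rho)$ as Theorem~\ref{thm:binary} does at the cost of an extra $e^{-\varepsilon_i\hat\gamma_i/2}$, a doubled effective offset, and a $\phi^{\log_2 m}$ rate, together with the routine check that $J^\ast$ is an interior gap in every non-degenerate configuration of the (possibly badly-wrong) boundary estimates $\hat a_i,\hat b_i$.
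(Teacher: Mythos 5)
Your proposal is correct and follows essentially the same route as the paper, whose one-line proof is to "apply Lemma~\ref{lem:empirical-quantiles}, Theorem~\ref{thm:binary}, and the linearity of inner products making up $\hat\Psi_{\*x}^{(\varepsilon)}$ and $\Psi_{\*x}^{(\varepsilon)}$": consistency from Theorem~\ref{thm:binary} via $\Psi_{\*x}^{(q_i,\varepsilon_i)}(\mu_i^{(\lambda)})\ge(1-\lambda)$ when the $\mu_i$ are perfect, and robustness from Lemma~\ref{lem:empirical-quantiles} (the $\lceil\log_2 m\rceil^2$ rate with the data-dependent $\hat\Psi_{\*x}^{(\varepsilon)}$) combined with linearity $\Psi_{\*{\hat x}_i}^{(\tilde q_i,\varepsilon_i)}(\widehat{\mu_i^{(\lambda)}})\ge\lambda\Psi_{\*{\hat x}_i}^{(\tilde q_i,\varepsilon_i)}(\hat\rho)$ and the Cauchy lower bound from Corollary~\ref{cor:multi-cauchy}. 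You correctly identify the key point that Theorem~\ref{thm:binary} alone cannot give the robustness rate, and your lower bound on the subset inner product through the $\lambda\rho$ component is the same linearity argument the paper compresses into its citation; the only cosmetic difference is that you re-run the induction-on-depth rather than invoking Lemma~\ref{lem:empirical-quantiles} as a black box.
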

\begin{proof}
	Apply Lemma~\ref{lem:empirical-quantiles}, Theorem~\ref{thm:binary}, and the linearity of inner products making up $\hat\Psi_{\*x}^{(\varepsilon)}$ and $\Psi_{\*x}^{(\varepsilon)}$.\looseness-1
\end{proof}

\subsection{Learning predictions, privately}

Past work, e.g. the public-private framework~\cite{liu2021leveraging,bassily2022private,bie2022private}, has often focused on domain adaptation-type learning where we adapt a public source to private target.
We avoid assuming access to large quantities of i.i.d. public data and instead assume numerous tasks that can have sensitive data and may be adversarially generated.
As discussed before, this is the online setting where we see loss functions defined by a sequence of datasets $\*x_1,\dots,\*x_T$ and aim to compete with best fixed prediction in-hindsight.
Note such a guarantee can also be converted into excess risk bounds (c.f. Appendix~\ref{sec:o2b}).

\subsubsection{Non-Euclidean DP-FTRL}

Because the optimization domain is not well-described by the $\ell_2$-ball, we are able to obtain significant savings in dependence on the dimension and in some cases even in the number of instances $T$ by extending the DP-FTRL algorithm of \cite{kairouz2021practical} to use non-Euclidean regularizers, as in Algorithm~\ref{alg:dpftrl}.
For this we prove the following regret guarantee:
\ifdefined\arxiv\else
\newpage
\fi
\begin{Thm}\label{thm:dpftrl}
	Let $\theta_1,\dots,\theta_T$ be the outputs of Algorithm~\ref{alg:dpftrl} using a regularizer $\phi:\Theta\mapsto\R$ that is strongly-convex w.r.t. $\|\cdot\|$.
	Suppose $\forall~t\in[T]$ that $\ell_{\*x_t}(\cdot)$ is $L$-Lipschitz w.r.t. $\|\cdot\|$ and its gradient has $\ell_2$-sensitivity $\Delta_2$.
	Then w.p. $\ge1-\beta'$ we have $\forall~\theta^\ast\in\Theta$ that\looseness-1
	\begin{equation}
	\sum_{t=1}^T\ell(\theta_t;\*x_t)-\ell(\theta^\ast;\*x_t)
	\le\frac{\phi(\theta^\ast)-\phi(\theta_1)}\eta+\eta L\left(L+\left(G+C\sqrt{2\log\frac T{\beta'}}\right)\sigma\Delta_2\sqrt{\lceil\log_2T\rceil}\right)T
	\end{equation}
	where $G=\E_{\*z\sim\N(\*0_p,\*I_p)}\sup_{\|\*y\|\le1}\langle\*z,\*y\rangle=\E_{\*z\sim\N(\*0_p,1)}\|\*z\|_\ast$ is the Gaussian width of the unit $\|\cdot\|$-ball and $C$ is the Lipschitz constant of $\|\cdot\|_\ast$ w.r.t. $\|\cdot\|_2$.
	Furthermore, for any $\varepsilon'\le2\log\frac1{\delta'}$, setting $\sigma=\frac1{\varepsilon'}\sqrt{2\lceil\log_2T\rceil\log\frac1{\delta'}}$ makes the algorithm $(\varepsilon',\delta')$-DP.
\end{Thm}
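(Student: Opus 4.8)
The plan is to combine a standard follow-the-regularized-leader (FTRL) regret analysis for non-Euclidean regularizers with a high-probability bound on the accumulated noise injected by the tree-aggregation mechanism, and then separately account for the privacy via the composition properties of that mechanism. I would organize the argument in three stages: (i) the deterministic ``linearized'' FTRL regret bound, (ii) control of the noise terms, and (iii) the privacy calibration.

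\textbf{Step (i): FTRL regret on the noisy linear losses.} Since Algorithm~\ref{alg:dpftrl} runs FTRL on the noisy cumulative gradients $\tilde{\*g}_{1:t}=\sum_{s\le t}(\nabla\ell_{\*x_s}(\theta_s)+\text{noise})$, I would first invoke the standard FTRL lemma: for a regularizer $\phi$ that is $1$-strongly-convex w.r.t. $\|\cdot\|$ and step-size $\eta$, the regret against any $\theta^\ast$ for the \emph{linear} losses $\langle\tilde{\*g}_t,\cdot\rangle$ is at most $\frac{\phi(\theta^\ast)-\phi(\theta_1)}{\eta}+\eta\sum_t\|\tilde{\*g}_t\|_\ast^2$ (up to constants absorbed into the step-size choice). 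Then convexity of $\ell_{\*x_t}$ gives $\ell_{\*x_t}(\theta_t)-\ell_{\*x_t}(\theta^\ast)\le\langle\nabla\ell_{\*x_t}(\theta_t),\theta_t-\theta^\ast\rangle$, and I would split the right-hand side into the FTRL-on-$\tilde{\*g}$ term plus an error term $\langle\nabla\ell_{\*x_t}(\theta_t)-\tilde{\*g}_t\ \text{increment},\theta_t-\theta^\ast\rangle$ coming from the injected noise. The true-gradient part contributes the $\eta L^2 T$ term (using $L$-Lipschitzness, $\|\nabla\ell_{\*x_t}(\theta_t)\|_\ast\le L$), and the diameter of $\Theta$ in the $\|\cdot\|$-geometry is controlled by $\phi(\theta^\ast)-\phi(\theta_1)$.

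\textbf{Step (ii): controlling the tree-aggregation noise.} The noise added to each prefix sum $\tilde{\*g}_{1:t}$ is a sum of at most $\lceil\log_2 T\rceil$ independent Gaussian vectors $\N(\*0,\sigma^2\Delta_2^2\*I)$ (one per level of the binary counter). I would bound $\langle\text{noise}_t,\theta_t-\theta^\ast\rangle$ by $\|\text{noise}_t\|_\ast\cdot\mathrm{diam}_{\|\cdot\|}(\Theta)$ — but since the cleanest form in the statement packages the diameter into $\eta L$ and multiplies by $T$, I would instead keep the noise coupled with the per-step movement and use the FTRL ``stability'' bound, so that the noise enters as $\eta L\cdot\|\text{total noise at step }t\|_\ast$ summed over $t$. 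The key quantitative input is: a single Gaussian vector $\*z\sim\N(\*0,\*I_p)$ has $\E\|\*z\|_\ast=G$ (the Gaussian width of the unit $\|\cdot\|$-ball), and $\|\*z\|_\ast$ is $C$-Lipschitz in $\*z$ w.r.t. $\|\cdot\|_2$ (by definition of $C$), so Gaussian concentration gives $\|\*z\|_\ast\le G+C\sqrt{2\log(T/\beta')}$ with probability $\ge1-\beta'/T$; a union bound over the $T$ steps yields the high-probability guarantee. Since the noise at step $t$ is a sum of $\le\lceil\log_2 T\rceil$ such vectors each scaled by $\sigma\Delta_2$, its dual norm is $\le(G+C\sqrt{2\log(T/\beta')})\sigma\Delta_2\sqrt{\lceil\log_2 T\rceil}$ — this is exactly the bracketed factor in the theorem (the $\sqrt{\lceil\log_2T\rceil}$ rather than $\lceil\log_2T\rceil$ coming from treating the sum of independent Gaussians as one Gaussian of variance scaled by the number of levels). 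Summing $\eta L$ times this over $T$ steps produces the stated second term.

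\textbf{Step (iii): privacy.} This is the routine part: the tree-aggregation (binary counting) mechanism releases $\lceil\log_2 T\rceil$ partial sums, each a sum of gradients of $\ell_2$-sensitivity $\Delta_2$, perturbed by $\N(\*0,\sigma^2\Delta_2^2\*I)$; by the Gaussian mechanism and advanced (or zCDP) composition over the $\lceil\log_2 T\rceil$ levels, taking $\sigma=\frac{1}{\varepsilon'}\sqrt{2\lceil\log_2T\rceil\log(1/\delta')}$ gives $(\varepsilon',\delta')$-DP for $\varepsilon'\le 2\log(1/\delta')$ (the constraint under which the standard Gaussian-mechanism calibration is valid); this follows exactly as in \citet{kairouz2021practical}, so I would cite it rather than reprove it. \textbf{The main obstacle} is Step (ii): getting the noise contribution into precisely the claimed form $\eta L\cdot(\text{bracket})\cdot T$ requires the non-Euclidean FTRL stability bound to be set up so that the per-step error is multiplied by $\|\theta_t-\theta_{t+1}\|$-type quantities bounded by $\eta L$, and requires carefully tracking that the Gaussian-width constant $G$ and the Lipschitz constant $C$ are the right geometric quantities — in the Euclidean case both are $\Theta(\sqrt p)$ and one recovers the known DP-FTRL rate, but for, say, the $\ell_\infty$-geometry relevant to the quantile application $G=\Theta(\log d)$ while $C=\Theta(1)$, which is the source of the promised dimension savings and must be handled cleanly.
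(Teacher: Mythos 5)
Your proposal is essentially the paper's argument: the paper also decomposes the linearized regret by comparing the noisy iterate $\theta_t$ to a noiseless reference iterate $\tilde\theta_t$ (defined by the true cumulative gradients), bounds the reference's regret by the standard online-mirror-descent bound $\frac{\phi(\theta^\ast)-\phi(\theta_1)}\eta+\eta L^2T$, and controls the perturbation via an FTRL stability lemma (\citet[Lemma~7]{mcmahan2017survey}) giving $\|\theta_t-\tilde\theta_t\|\le\eta\|\*b_t\|_\ast$ for the prefix-sum noise $\*b_t$, then applies Gaussian width plus Lipschitz concentration to $\|\*b_t\|_\ast$ with a union bound over $t$ --- exactly your Step (ii). Your alternative in Step (i), running the FTRL regret lemma directly on the noisy linear losses, would produce an $\eta\sum_t\|\tilde{\*g}_t\|_\ast^2$ term that is quadratic rather than linear in the noise magnitude and would not match the stated bound, so it is good that you abandon it in favor of the stability-based route.
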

\begin{proof}
	The privacy guarantee follows from past results for tree aggregation~\cite{smith2013optimal,kairouz2021practical}.
	For all $t\in[T]$ we use the shorthand $\nabla_t=\nabla_\theta\ell_{\*x_t}(\theta_t)$;
	we can then define $\tilde\theta_t=\argmin_{\theta\in\Theta}\phi(\theta)+\eta\sum_{s=1}^t\langle\nabla_s,\theta\rangle$ and $\*b_t=\*g_t-\sum_{s=1}^t\nabla_s$.
	Then 
	\begin{align}
	\begin{split}
	\sum_{t=1}^T\ell_{\*x_t}(\theta_t)-\ell_{\*x_t}(\theta^\ast)
	\le\sum_{t=1}^T\langle\nabla_t,\theta_t-\theta^\ast\rangle
	&=\sum_{t=1}^T\langle\nabla_t,\tilde\theta_t-\theta^\ast\rangle+\sum_{t=1}^T\langle\nabla_t,\theta_t-\tilde\theta_t\rangle\\
	&\le\frac{\phi(\theta^\ast)-\phi(\theta_1)}\eta+\eta\sum_{t=1}^T\|\nabla_t\|_\ast^2+\sum_{t=1}^T\|\nabla_t\|_\ast\|\tilde\theta_t-\theta_t\|\\
	&\le\frac{\phi(\theta^\ast)-\phi(\theta_1)}\eta+\eta L\left(LT+\sum_{t=1}^T\|\*b_t\|_\ast\right)
	\end{split}
	\end{align}
	where the first inequality follows from the standard linear approximation in online convex optimization \citep{zinkevich2003oco}, the second by the regret guarantee for online mirror descent \citep[Theorem~2.15]{shalev-shwartz2011oco}, and the last by applying \citet[Lemma~7]{mcmahan2017survey} with $\phi_1(\cdot)=\phi(\cdot)+\eta\sum_{s=1}^t\langle\nabla_s,\cdot\rangle$, $\psi(\cdot)=\eta\langle\*b_t,\cdot\rangle$, and $\phi_2(\cdot)=\phi(\cdot)+\eta\langle\*g_t,\cdot\rangle$, yielding $\|\tilde\theta_t-\theta_t\|\le\eta\|\*b_t\|_\ast~\forall~t\in[T]$.
	The final guarantee follows by observing that the tree aggregation protocol adds noise $\*b_t\sim\N(\*0_p,\sigma^2\Delta_2^2\lceil\log_2t\rceil)$ to each prefix sum and applying the Gaussian concentration of Lipschitz functions \citep[Theorem~5.6]{boucheron2012concentration}.
\end{proof}

The above proof of this result follows that of the Euclidean case, which can be recovered by setting $G=\BigO(\sqrt d)$, $C=1$, and $\Delta_2=\BigO(L)$.\footnote{As of this writing, the most recent arXiv version of \citet[Theorem~C.1]{kairouz2021practical} has a typo leading to missing a Lipschitz constant in the bound, confirmed via correspondence with the authors.}
In addition to the Lipschitz constants $L$, a key term that can lead to improvement is the Gaussian width $G$ of the unit $\|\cdot\|$-ball, which for the Euclidean case is $\BigO(\sqrt d)$ but e.g. for $\|\cdot\|=\|\cdot\|_1$ is $\BigO(\sqrt{\log d})$.
Note that a related dependence on the Laplace width of $\Theta$ appears in \citet[Theorem~3.1]{agarwal2017price}, although their guarantee only holds for linear losses and is not obviously extendable.
Thus Theorem~\ref{thm:dpftrl} may be of independent interest for DP online learning.

\begin{algorithm}[!t]
	\DontPrintSemicolon
	\KwIn{Datasets $\*x_1,\dots,\*x_T$ arriving in a stream in arbitrary order, domain $\Theta\subset\R^p$, step-size $\eta>0$, noise scale $\sigma>0$, $\ell_2$-sensitivity $\Delta_2>0$, regularizer $\phi:\Theta\mapsto\R$}
	$\*g_1\gets\*0_p$\\
	$\T\gets$\texttt{InitializeTree}($T,\sigma^2,\Delta_2$)\tcp*{start tree aggregation}
	\For{$t=1,\dots,T$}{
		$\theta_t\gets\argmin_{\theta\in\Theta}\phi(\theta)+\eta\langle\*g_t,\theta\rangle$\\
		suffer $\ell_{\*x_t}(\theta_t)$\\
		$\T\gets$\texttt{AddToTree}($\T,t,\nabla_\theta\ell_{\*x_t}(\theta_t)$)\tcp*{add gradient to tree}
		$\*g_{t+1}\gets$\texttt{GetSum}($\T,t$)\tcp*{estimate $\sum_{s=1}^t\nabla_\theta\ell_{\*x_s}(\theta_s)$}
	}
	\caption{\label{alg:dpftrl}Non-Euclidean DP-FTRL. For the \texttt{InitializeTree}, \texttt{AddToTree}, and \texttt{GetSum} subroutines see \citet[Section~B.1]{kairouz2021practical}.}
\end{algorithm}

\subsubsection{Learning priors for one or more quantiles}

We now turn to learning priors $\mu_t=\begin{pmatrix}\mu_{t[1]},\cdots,\mu_{t[m]}\end{pmatrix}$ to privately estimate $m$ quantiles $q_1,\dots,q_m$ on each of a sequence of $T$ datasets $\*x_t$.
We will aim to set $\mu_1,\dots,\mu_T$ s.t. if at each time $t$ we run Algorithm~\ref{alg:quantiles} with privacy $\varepsilon>0$ then the guarantees given by Lemmas~\ref{thm:binary} and~\ref{thm:kary} will be asymptotically at least as good as those of the best set of measures in $\F^m$, where $\F$ is some class of measures on the finite interval $(a,b)$. 
The latter we will assume to be known and bounded.
Note that in this section almost all single-quantile results follow from setting $m=1$, so we study it jointly with learning for multiple quantiles.\looseness-1

Ignoring constants, the loss functions implied by our prediction-dependent upper bounds for multiple-quantiles are the following negative log-harmonic sums of prior-EM inner-products: 
\begin{equation}
U_{\*x_t}^{(\varepsilon)}(\mu)
=\log\sum_{i=1}^m\frac1{\Psi_{\*x_t}^{(q_i,\varepsilon_i)}(\mu_{[i]})}
=\log\sum_{i=1}^m\frac1{\int_a^b\exp(-\varepsilon_i\Gap_{q_i}(\*x_t,o)/2)\mu_{[i]}(o)do}
\end{equation}
We focus on minimizing regret $\max_{\mu\in\F^m}\sum_{t=1}^TU_{\*x_t}^{(\varepsilon)}(\mu_t)-U_{\*x_t}^{(\varepsilon)}(\mu)$ over these losses for priors $\mu_{[i]}$ in a class $\F_{V,d}$ of probability measures that are piecewise $V$-Lipschitz over each of $d$ intervals uniformly partitioning $[a,b)$.
This is chosen because it covers the class $\F_{V,1}$ of $V$-Lipschitz measures and the class of $\F_{0,d}$ of discrete measures that are constant on each of the $d$ intervals.
The latter can be parameterized by $\*W\in\triangle_d^m$, so that the losses have the form $U_{\*x_t}^{(\varepsilon)}(\mu_{\*W})=\log\sum_{i=1}^m\langle\*s_{t,i},\*W_{[i]}]\rangle^{-1}$ for $\*s_{t,i}\in\R_{\ge0}^d$.
This can be seen by setting $\*s_{t,i[j]}=\frac d{b-a}\int_{a+\frac{b-a}d(j-1)}^{a+\frac{b-a}dj}\exp(-\varepsilon_i\Gap_{q_i}(\*x_t,o)/2)do$ and $\mu_{\*W_{[i]}}(o)=\frac d{b-a}\*W_{[i,j]}$ over the interval $\left[a+\frac{b-a}d(j-1),a+\frac{b-a}dj\right)$.
Finally, for $\lambda\in[0,1]$ we also let $\F^{(\lambda)}=\{(1-\lambda)\mu+\frac\lambda{b-a}:\mu\in\F\}$ denote the class of mixtures of measures $\mu\in\F$ with the uniform measure.\looseness-1

As detailed in Appendix~\ref{sec:overlap}, losses of the form $-\log\langle\*s_t,\cdot\rangle$, i.e. those above when $m=1$, have been studied in (non-private) online learning~\cite{hazan2007logarithmic,balcan2021ltl}.
However, specialized approaches, e.g. those taking advantage exp-concavity, are not obviously implementable via prefix sums of gradients, the standard approach to private online learning~\cite{smith2013optimal,agarwal2017price,kairouz2021practical}.
Still, we can at least use the fact that we are optimizing over a product of simplices to improve the dimension-dependence by applying Non-Euclidean DP-FTRL with entropic regularizer $\phi(\*W)=m\langle\*W,\log\*W\rangle$, which yields an $m$-way exponentiated gradient~(EG) update~\cite{kivinen1997eg}.
To apply its guarantee for the problem of learning priors for quantile estimation, we need to bound the sensitivity of the gradients $\nabla_{\*W}U_{\*x_t}^{(\varepsilon)}(\mu_{\*W})$ to changes in the underlying datasets $\*x_t$.
This is often done via a bound on the gradient norm, which in our case is unbounded near the boundary of the simplex.
We thus restrict to $\gamma$-robust priors for some $\gamma\in(0,1]$ by constraining $\*W\in\triangle_d^m$ to have entries lower bounded by $\gamma/d$---a domain where $\|\nabla_{\*W}U_{\*x_t}^{(\varepsilon)}(\mu_{\*W})\|_1\le d/\gamma$ (c.f. Lemma~\ref{lem:lipschitz})---and bounding the resulting approximation error;
we are not aware of even a non-private approach that avoids this except by taking advantage of exp-concavity~\cite{hazan2007logarithmic}.\looseness-1

We thus have a bound of $2d/\gamma$ on the $\ell_2$-sensitivity.
However, this may be too loose since it allows for changing the entire dataset $\*x_t$, whereas we are only interested in changing one entry.
Indeed, for small $\varepsilon$ we can obtain a tighter bound:\looseness-1
\begin{Lem}\label{lem:refined-sensitivity}
	The $\ell_2$-sensitivity of $\nabla_{\*w}U_{\*x_t}^{(\varepsilon)}(\mu_{\*w})$ is $\frac d\gamma\min\{2,e^{\tilde\varepsilon_m}-1\}$, where $\tilde\varepsilon_m=(1+1_{m>1})\max_i\varepsilon_i$.
\end{Lem}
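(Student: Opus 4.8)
The plan is to bound $\sup_{\*x_t\sim\*{\tilde x}_t,\,\*w}\|\nabla_{\*w}U_{\*x_t}^{(\varepsilon)}(\mu_{\*w})-\nabla_{\*w}U_{\*{\tilde x}_t}^{(\varepsilon)}(\mu_{\*w})\|_2$ directly, passing to the $\ell_1$ norm and exploiting the fact that a single-element change perturbs each per-interval score $\*s_{t,i}$ only multiplicatively. Writing $\*s_i=\*s_{t,i}$, $f_i=\langle\*s_i,\*w_{[i]}\rangle>0$ (positive since the integrand $\exp(-\varepsilon_i\Gap_{q_i}(\*x_t,\cdot)/2)$ is), and $p_i=(1/f_i)/\sum_j(1/f_j)$, the block of $\nabla_{\*w}U_{\*x_t}^{(\varepsilon)}$ indexed by $\mu_{[i]}$ equals $-p_i\*s_i/f_i$; since $\*w_{[i]}$ has entries at least $\gamma/d$ we get $f_i\ge(\gamma/d)\|\*s_i\|_1$, hence $\|\*s_i/f_i\|_1\le d/\gamma$ and, as $\sum_ip_i=1$, $\|\nabla_{\*w}U_{\*x_t}^{(\varepsilon)}(\mu_{\*w})\|_1\le d/\gamma$ (recovering Lemma~\ref{lem:lipschitz}). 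This already yields the crude bound $\|\nabla U_{\*x_t}-\nabla U_{\*{\tilde x}_t}\|_2\le 2d/\gamma$, which is the ``$2$'' branch of the claimed minimum.

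For the $\varepsilon$-dependent branch, I would first observe that for every fixed $o$ the utility $\Gap_{q_i}(\cdot,o)$ has sensitivity one, so under $\*x_t\sim\*{\tilde x}_t$ the integrand $\exp(-\varepsilon_i\Gap_{q_i}(\cdot,o)/2)$ changes pointwise by a factor in $[e^{-\varepsilon_i/2},e^{\varepsilon_i/2}]$; consequently $\tilde{\*s}_{i[j]}=\alpha_{ij}\*s_{i[j]}$ with $\alpha_{ij}\in[e^{-\varepsilon_i/2},e^{\varepsilon_i/2}]$, and $\tilde f_i$, being a $\*w_{[i]}$-weighted average of the $\alpha_{ij}\*s_{i[j]}$, equals $\beta_if_i$ with $\beta_i\in[e^{-\varepsilon_i/2},e^{\varepsilon_i/2}]$. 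For $m=1$ this is enough: the lone gradient block $-\*s_1/f_1$ changes entrywise by $\alpha_{1j}/\beta_1\in[e^{-\varepsilon_1},e^{\varepsilon_1}]$, so $\|\nabla U_{\*x_t}-\nabla U_{\*{\tilde x}_t}\|_1=\|\tilde{\*s}_1/\tilde f_1-\*s_1/f_1\|_1\le(e^{\varepsilon_1}-1)\|\*s_1/f_1\|_1\le(e^{\varepsilon_1}-1)d/\gamma$, matching $\tilde\varepsilon_1=\varepsilon_1$. For $m>1$ I would additionally track the weights: $1/\tilde f_i\in[e^{-\bar\varepsilon/2},e^{\bar\varepsilon/2}](1/f_i)$ with $\bar\varepsilon=\max_i\varepsilon_i$ (controlling the numerator $1/f_i$ and the denominator $\sum_j1/f_j$ separately), so $\tilde p_i\in[e^{-\bar\varepsilon},e^{\bar\varepsilon}]p_i$; composing this with the $[e^{-\varepsilon_i},e^{\varepsilon_i}]$ perturbation of $\*s_i/f_i$ gives, blockwise and entrywise, $\tilde g_{i[j]}/g_{i[j]}\in[e^{-2\bar\varepsilon},e^{2\bar\varepsilon}]$, whence $\sum_{i,j}|\tilde g_{i[j]}-g_{i[j]}|\le(e^{2\bar\varepsilon}-1)\sum_{i,j}|g_{i[j]}|\le(e^{2\bar\varepsilon}-1)d/\gamma$. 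Since $\tilde\varepsilon_m=2\bar\varepsilon$ in this case and $\|\cdot\|_2\le\|\cdot\|_1$, combining with the crude $2d/\gamma$ bound gives exactly $\frac d\gamma\min\{2,e^{\tilde\varepsilon_m}-1\}$; note this is small precisely when $\varepsilon$ is small, explaining the improvement in the high-privacy regime.

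The main obstacle is the bookkeeping in the $m>1$ step: one must verify that the worst-case perturbations of the numerator $1/\tilde f_i$ and the denominator $\sum_j1/\tilde f_j$ of the weight $p_i$ can simultaneously point the wrong way — which is what produces the extra $e^{\bar\varepsilon}$ factor, and hence the $(1+1_{m>1})$ — while checking that no cross-block cancellation is available that would let us avoid it. The remaining pieces (positivity of $f_i$, the $d/\gamma$ gradient-norm bound, and $\|\cdot\|_2\le\|\cdot\|_1$) are routine.
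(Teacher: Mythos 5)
Your proposal is correct and mirrors the paper's own argument: establish that a one-element change perturbs each component of $\*s_{t,i}$ by a multiplicative factor $e^{\pm\varepsilon_i/2}$, propagate this through the gradient blocks to obtain a multiplicative perturbation factor $\kappa_{i,j}\in e^{\pm\tilde\varepsilon_m}$ on each gradient entry, and combine $\max_{i,j}|1-\kappa_{i,j}|$ with the $\ell_1$ bound $\|\nabla\|_1\le d/\gamma$ from Lemma~\ref{lem:lipschitz}, finally taking the minimum with the crude $2d/\gamma$ triangle-inequality bound. The decomposition into $p_i$ and $\*s_i/f_i$ is a cosmetic regrouping of the paper's direct computation of $\kappa_{i,j}$, but the bookkeeping and the resulting factor $2\max_i\varepsilon_i$ for $m>1$ are identical.
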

\begin{proof}[Proof for $m=1$; c.f. Appendix~\ref{sec:refined-sensitivity}]
	Let $\*{\tilde x}_t$ be a neighboring dataset of $\*x_t$ and let $U_{\*{\tilde x}_t}^{(\varepsilon)}(\mu_{\*W})=-\log\langle\*{\tilde s}_t,\*w\rangle$ be the corresponding loss.
	Note that $\max_{o\in[a,b]}|\Gap_q(\*x_t,o)-\Gap_q(\*{\tilde x}_t,o)|\le1$ so\vspace{-1mm}
	\begin{align}\label{eq:l2sensitivity}
	\begin{split}
	\*{\tilde s}_{t[j]}
	=\int_{a+\frac{b-a}d(j-1)}^{a+\frac{b-a}dj}\exp\left(-\frac\varepsilon2\Gap_q(\*{\tilde x}_t,o)\right)do
	\in e^{\pm\frac\varepsilon2}\int_{a+\frac{b-a}d(j-1)}^{a+\frac{b-a}dj}\exp\left(-\frac\varepsilon2\Gap_q(\*x_t,o)\right)do
	=e^{\pm\frac\varepsilon2}\*s_{t[j]}\vspace{-1mm}
	\end{split}
	\end{align}
	Therefore since $m=1$ we denote $\*w=\*W_{[1]}$, $\*s_t=\*s_{t,1}$, and $\*{\tilde s}_t=\*{\tilde s}_{t,1}$ and have\vspace{-1mm}
	\begin{align}
	\begin{split}
	\|\nabla_{\*w}U_{\*x_t}^{(\varepsilon)}(\mu_{\*w})-\nabla_{\*w}U_{\*{\tilde x}_t}^{(\varepsilon)}(\mu_{\*w})\|_2
	=\sqrt{\sum_{j=1}^d\left(\frac{\*s_{t[j]}}{\langle\*s_t,\*w\rangle}-\frac{\*{\tilde s}_{t[j]}}{\langle\*{\tilde s}_t,\*w\rangle}\right)^2}
	&=\sqrt{\sum_{j=1}^d\frac{\*s_{t[j]}^2}{\langle\*s_t,\*w\rangle^2}\left(1-\frac{\*{\tilde s}_{t[j]}\langle\*s_t,\*w\rangle}{\*s_{t[j]}\langle\*{\tilde s}_t,\*w\rangle}\right)^2}\\
	&\le\|\nabla_{\*w}U_{\*x_t}^{(\varepsilon)}(\mu_{\*w})\|_1\max_j|1-\kappa_j|\vspace{-1mm}
	\end{split}
	\end{align}
	where $\kappa_j=\frac{\*{\tilde s}_{t[j]}\langle\*s_t,\*w\rangle}{\*s_{t[j]}\langle\*{\tilde s}_t,\*w\rangle}\in\frac{\*s_{t[j]}\exp(\pm\frac\varepsilon2)\langle\*s_t,\*w\rangle}{\*s_{t[j]}\langle\*s_t,\*w\rangle\exp(\pm\frac\varepsilon2)}\in\exp(\pm\varepsilon)$ by Equation~\ref{eq:l2sensitivity}.
	The result follows by taking the minimum with the bound on the Euclidean norm of the gradient (Lemma~\ref{lem:lipschitz}).
\end{proof}
Since $e^\varepsilon-1\le2\varepsilon$ for $\varepsilon\in(0,1.25]$, for small $\varepsilon$ this allows us to add less noise in DP-FTRL.
With this sensitivity bound, we apply Algorithm~\ref{alg:dpftrl} using the entropic regularizer to obtain the following result:
\begin{Thm}\label{thm:quantile-regret}
	For $d\ge2,\gamma\in(0,1/2]$ if we run Algorithm~\ref{alg:dpftrl} on $U_{\*x_t}^{(\varepsilon)}(\mu_{\*W})=\log\sum_{i=1}^m\frac1{\Psi_{\*x_t}^{(q_i,\varepsilon_i)}(\mu_{\*W})}$ over $\gamma$-robust priors with step-size $\eta=\frac{\gamma m}d\sqrt{\frac{\log(d)/T}{1+\left(2\sqrt{\log(md)}+\sqrt{2\log\frac T{\beta'}}\right)\sigma\sqrt{\log\lceil\log_2T\rceil}\min\{1,\tilde\varepsilon_m\}}}$ and regularizer $\phi(\*W)=m\langle\*W,\log\*W\rangle$ then for any $V\ge0$, $\lambda\in[0,1]$, and $\beta'\in(0,1]$ we will have regret
	\ifdefined\arxiv
	\begin{align}
	\begin{split}
		\max_{\mu_{[i]}\in\F_{V,d}^{(\lambda)}}&\sum_{t=1}^TU_{\*x_t}^{(\varepsilon)}(\mu_{\*W_t})-U_{\*x_t}^{(\varepsilon)}(\mu)\\
		&\le\frac{VmT}{\gamma d\bar\psi}(b-a)^3+2\max\{\gamma-\lambda,0\}T\log2\\
		&\quad+\frac{2md}\gamma\sqrt{\left(1+\left(4\sqrt{\log(md)}+2\sqrt{2\log\frac T{\beta'}}\right)\sigma\sqrt{\lceil\log_2T\rceil}\min\{1,\tilde\varepsilon_m\}\right)T\log d}
	\end{split}
	\end{align}
	\else
	\begin{align}
	\begin{split}
	\max_{\mu_{[i]}\in\F_{V,d}^{(\lambda)}}\sum_{t=1}^TU_{\*x_t}^{(\varepsilon)}(\mu_{\*W_t})-U_{\*x_t}^{(\varepsilon)}(\mu)
	&\le\frac{VmT}{\gamma d\bar\psi}(b-a)^3+2\max\{\gamma-\lambda,0\}T\log2\\
	&\quad+\frac{2md}\gamma\sqrt{\left(1+\left(4\sqrt{\log(md)}+2\sqrt{2\log\frac T{\beta'}}\right)\sigma\sqrt{\lceil\log_2T\rceil}\min\{1,\tilde\varepsilon_m\}\right)T\log d}
	\end{split}
	\end{align}
	\fi
	w.p. $\ge1-\beta'$, where $\bar\psi$ is the harmonic mean of $\psi_{\*x_t}=\min_k\*x_{t[k+1]}-\*x_{t[k]}$ and $\tilde\varepsilon_m=(1+1_{m>1})\max_i\varepsilon_i$.
	For any $\varepsilon'\le2\log\frac1{\delta'}$ setting $\sigma=\frac1{\varepsilon'}\sqrt{2\lceil\log_2T\rceil\log\frac1{\delta'}}$ makes this procedure $(\varepsilon',\delta')$-DP.\looseness-1
\end{Thm}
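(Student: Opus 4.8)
The plan is to derive the bound in two layers. The first is an application of the Non-Euclidean DP-FTRL guarantee (Theorem~\ref{thm:dpftrl}) to the run of Algorithm~\ref{alg:dpftrl} on the convex domain $\Theta_\gamma=\{\*W\in\triangle_d^m:W_{ij}\ge\gamma/d\ \forall\,i,j\}$ with $\|\cdot\|=\|\cdot\|_1$ and the entropic regularizer $\phi(\*W)=m\langle\*W,\log\*W\rangle$; here $U_{\*x_t}^{(\varepsilon)}(\mu_{\*W})=\log\sum_i\langle\*s_{t,i},\*W_{[i]}\rangle^{-1}$ is convex on $\Theta_\gamma$ (a log-sum-exp of reciprocals of positive affine functions). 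The second layer accounts for the cost of approximating an arbitrary comparator in $\F_{V,d}^{(\lambda)}$ by a point of $\Theta_\gamma$.

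For the first layer I would read off the quantities fed to Theorem~\ref{thm:dpftrl}. The regularizer $\phi$ is $1$-strongly convex w.r.t. $\|\cdot\|_1$ on $\triangle_d^m$ (negative entropy on a single $d$-simplex is $1$-strongly convex w.r.t. $\|\cdot\|_1$ by Cauchy--Schwarz, so the sum over $m$ blocks is $\tfrac1m$-strongly convex and the prefactor $m$ restores the constant), and its range over $\Theta_\gamma$ is $\phi(\*W^\ast)-\phi(\*W_1)\le m^2\log d$, minimized at the uniform $\*W_1$. On $\Theta_\gamma$ one has $\|\nabla_{\*W}U_{\*x_t}^{(\varepsilon)}(\mu_{\*W})\|_\infty\le\|\nabla_{\*W}U_{\*x_t}^{(\varepsilon)}(\mu_{\*W})\|_1\le d/\gamma$ by Lemma~\ref{lem:lipschitz}, so the Lipschitz constant w.r.t. $\|\cdot\|_1$ is $L=d/\gamma$; by Lemma~\ref{lem:refined-sensitivity} the gradient's $\ell_2$-sensitivity is $\tfrac d\gamma\min\{2,e^{\tilde\varepsilon_m}-1\}\le\tfrac{2d}\gamma\min\{1,\tilde\varepsilon_m\}$ (using $e^x-1\le2x$ on $[0,1]$), and the Gaussian width of the unit $\ell_1$-ball in $\R^{md}$ is $\E\|\*z\|_\infty\le2\sqrt{\log(md)}$ while $C=1$ since $\|\cdot\|_\infty\le\|\cdot\|_2$. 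Substituting into Theorem~\ref{thm:dpftrl}, the regret against any $\*W^\ast\in\Theta_\gamma$ is at most $\tfrac{m^2\log d}\eta+\eta\tfrac{d^2}{\gamma^2}T\big(1+(4\sqrt{\log(md)}+2\sqrt{2\log\tfrac T{\beta'}})\sigma\sqrt{\lceil\log_2T\rceil}\min\{1,\tilde\varepsilon_m\}\big)$, and the stated $\eta$ balances these two terms, giving exactly the third term of the claimed regret w.p. $\ge1-\beta'$; the $(\varepsilon',\delta')$-DP claim follows directly from Theorem~\ref{thm:dpftrl} with the stated $\sigma$.

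For the second layer, fix $\mu\in\F_{V,d}^{(\lambda)}$, written $\mu_{[i]}=(1-\lambda)\mu_{0,i}+\tfrac\lambda{b-a}$, and \emph{robustify before discretizing}. If $\gamma\le\lambda$ then $\mu_{[i]}\ge\tfrac\gamma{b-a}$ pointwise already; otherwise take $\mu^{(\gamma)}_{[i]}=\tfrac{1-\gamma}{1-\lambda}\mu_{[i]}+\tfrac{\gamma-\lambda}{(1-\lambda)(b-a)}\ge\tfrac\gamma{b-a}$, which is still $V$-Lipschitz on each interval and has $\Psi_{\*x_t}^{(q_i,\varepsilon_i)}(\mu^{(\gamma)}_{[i]})\ge\tfrac{1-\gamma}{1-\lambda}\Psi_{\*x_t}^{(q_i,\varepsilon_i)}(\mu_{[i]})$, so monotonicity of $\log\sum_i(\cdot)^{-1}$ gives $U_{\*x_t}^{(\varepsilon)}(\mu^{(\gamma)})\le U_{\*x_t}^{(\varepsilon)}(\mu)+\log\tfrac{1-\lambda}{1-\gamma}\le U_{\*x_t}^{(\varepsilon)}(\mu)+2\max\{\gamma-\lambda,0\}\log2$ for $\gamma\le\tfrac12$. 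Then let $\bar\mu^{(\gamma)}_{[i]}$ be the density constant on each of the $d$ intervals equal to the average of $\mu^{(\gamma)}_{[i]}$ there: averaging preserves the lower bound $\tfrac\gamma{b-a}$ so $\bar\mu^{(\gamma)}\in\Theta_\gamma$, the pointwise density gap per interval is $\le V\tfrac{b-a}d$, and since both priors dominate $\tfrac\gamma{b-a}$ the optimal interval (length $\ge\psi_{\*x_t}$) forces the corresponding $\Psi_{\*x_t}^{(q_i,\varepsilon_i)}$ to exceed $\tfrac{\gamma\psi_{\*x_t}}{b-a}$; linearizing $-\log\Psi$ gives $\big|\log\Psi_{\*x_t}^{(q_i,\varepsilon_i)}(\mu^{(\gamma)}_{[i]})-\log\Psi_{\*x_t}^{(q_i,\varepsilon_i)}(\bar\mu^{(\gamma)}_{[i]})\big|\le\tfrac{V(b-a)^3}{\gamma d\psi_{\*x_t}}$ per quantile, and aggregating the per-quantile bound over the $m$ quantiles and over $t$ (using $\sum_t\psi_{\*x_t}^{-1}=T/\bar\psi$) gives the first term $\tfrac{VmT(b-a)^3}{\gamma d\bar\psi}$. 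Chaining $\sum_tU_{\*x_t}^{(\varepsilon)}(\mu_{\*W_t})-U_{\*x_t}^{(\varepsilon)}(\mu)$ through $\bar\mu^{(\gamma)}$ and $\mu^{(\gamma)}$---the $\bar\mu^{(\gamma)}$ gap controlled by the first-layer DP-FTRL bound, the other two by the displayed approximation costs---and minimizing over $\mu\in\F_{V,d}^{(\lambda)}$ yields the stated inequality.

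The step I expect to be the main obstacle is pinning down the discretization error: the dependence $\gamma^{-1}(b-a)^3\bar\psi^{-1}$ emerges only if one $\gamma$-robustifies \emph{first}, so that the lower bound $\Psi\ge\gamma\psi_{\*x_t}/(b-a)$ is available when linearizing $-\log\Psi$, and one must also check that robustification keeps the comparator in the piecewise-$V$-Lipschitz class without inflating its per-interval Lipschitz constant. The remaining work---tracking the constants into Theorem~\ref{thm:dpftrl}, in particular the $m^2\log d$ regularizer range that produces the $\eta\propto\gamma m/d$ scaling, the $\ell_\infty$ dual-norm and Gaussian-width bounds, and the step-size balancing---is routine once the norm pairing $(\|\cdot\|_1,\|\cdot\|_\infty)$ and the entropic mirror map are fixed.
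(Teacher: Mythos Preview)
Your two-layer plan coincides with the paper's proof: decompose the regret as (DP-FTRL regret on $\Theta_\gamma$) $+$ (discretization error) $+$ (robustification error), then apply Theorem~\ref{thm:dpftrl} with the entropic regularizer, Lemma~\ref{lem:lipschitz}, and Lemma~\ref{lem:refined-sensitivity} for the first summand, and a Lipschitz/lower-bound argument on $\Psi$ for the second. The DP-FTRL and discretization steps are correct and match the paper's constants.

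There is, however, a genuine slip in the robustification step. Your robustifier $\mu_{[i]}^{(\gamma)}=\tfrac{1-\gamma}{1-\lambda}\mu_{[i]}+\tfrac{\gamma-\lambda}{(1-\lambda)(b-a)}$ is fine (indeed it equals $(1-\gamma)\mu_{0,i}+\tfrac\gamma{b-a}$), and the bound $U_{\*x_t}^{(\varepsilon)}(\mu^{(\gamma)})\le U_{\*x_t}^{(\varepsilon)}(\mu)+\log\tfrac{1-\lambda}{1-\gamma}$ is correct. But the final numerical inequality $\log\tfrac{1-\lambda}{1-\gamma}\le 2(\gamma-\lambda)\log 2$ is \emph{false} for the full range $\gamma\in(0,1/2]$: by the mean value theorem the left side equals $\tfrac{\gamma-\lambda}{1-\xi}$ for some $\xi\in(\lambda,\gamma)$, which exceeds $2(\gamma-\lambda)\log 2$ whenever $\xi>1-\tfrac1{2\log 2}\approx 0.279$. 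Concretely, $\lambda=0.4,\gamma=0.5$ gives $\log(0.6/0.5)\approx 0.182>0.139\approx 2(0.1)\log 2$. So as written your robustness term can exceed $2\max\{\gamma-\lambda,0\}T\log 2$.

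The fix is to mix $\mu_{[i]}$ itself (rather than $\mu_{0,i}$) with the uniform, as the paper does: set $\rho_{[i]}=(1-\max\{\gamma-\lambda,0\})\mu_{[i]}+\tfrac{\max\{\gamma-\lambda,0\}}{b-a}$. Then $\Psi_{\*x_t}^{(q_i,\varepsilon_i)}(\rho_{[i]})\ge(1-\max\{\gamma-\lambda,0\})\Psi_{\*x_t}^{(q_i,\varepsilon_i)}(\mu_{[i]})$, so $U_{\*x_t}^{(\varepsilon)}(\rho)-U_{\*x_t}^{(\varepsilon)}(\mu)\le-\log(1-\max\{\gamma-\lambda,0\})\le 2\max\{\gamma-\lambda,0\}\log 2$, the last inequality holding on $[0,1/2]$ by convexity of $-\log(1-x)$ and equality at the endpoints. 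This $\rho$ is still piecewise $V$-Lipschitz and, after averaging over each subinterval, lands in $\Theta_\gamma$, so your discretization argument goes through unchanged.
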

\begin{proof}
	For set of $\gamma$-robust priors $\rho$ s.t.  $\rho_{[i]}=\min\{1-\gamma+\lambda,1\}\mu_{[i]}+\frac{\max\{\gamma-\lambda,0\}}{b-a}$ and $\*W\in\triangle_d^m$ s.t. $\*W_{[i,j]}=\frac{b-a}d\int_{a+\frac{b-a}d(j-1)}^{a+\frac{b-a}dj}\rho_{[i]}(o)do$ we can divide the regret into three components:
	\begin{equation}
	\sum_{t=1}^TU_{\*x_t}^{(\varepsilon)}(\mu_{\*W_t})-U_{\*x_t}^{(\varepsilon)}(\mu)
	=\sum_{t=1}^TU_{\*x_t}^{(\varepsilon)}(\mu_{\*W_t})-U_{\*x_t}^{(\varepsilon)}(\mu_{\*W})+\sum_{t=1}^TU_{\*x_t}^{(\varepsilon)}(\mu_{\*W})-U_{\*x_t}^{(\varepsilon)}(\rho)+\sum_{t=1}^TU_{\*x_t}^{(\varepsilon)}(\rho)-U_{\*x_t}^{(\varepsilon)}(\mu)
	\end{equation}
	The first summation is the regret of DP-FTRL with regularizer $\phi$, which is strongly convex w.r.t. $\|\cdot\|_1$.
	The Gaussian width of its unit ball is $2\sqrt{\log(md)}$, by Lemma~\ref{lem:lipschitz} the losses are $\frac d\gamma$-Lipschitz w.r.t. $\|\cdot\|_1$, and by Lemma~\ref{lem:refined-sensitivity} the $\ell_2$-sensitivity is $\Delta_2=\frac d\gamma\min\{2,e^{\tilde\varepsilon_m}-1\}\le\frac{2d}\gamma\min\{1,\tilde\varepsilon_m\}$, so applying Theorem~\ref{thm:dpftrl} yields the bound $\frac{m^2\log d}\eta+\frac{\eta d^2T}{\gamma^2}\left(1+\left(4\sqrt{\log d}+2\sqrt{2\log\frac T{\beta'}}\right)\sigma\sqrt{\lceil\log_2 T\rceil}\min\{1,\varepsilon\}\right)$.
	The second summation is a sum over the errors due to discretization, where we have
	\begin{align}
	\begin{split}
	\sum_{t=1}^TU_{\*x_t}^{(\varepsilon)}(\mu_{\*W})-U_{\*x_t}^{(\varepsilon)}(\rho)
	&=\sum_{t=1}^T\log\sum_{i=1}^m\langle\*s_{t,i},\*W_{[i]}\rangle^{-1}-\log\sum_{i=1}^m\frac1{\int_a^b\exp(-\varepsilon_i\Gap_{q_i}(\*x_t,o)/2)\rho_{[i]}(o)do}\\
	&\le\sum_{t=1}^T\sum_{i=1}^m\frac{\int_a^b\exp(-\frac{\varepsilon_i}2\Gap_{q_i}(\*x_t,o))\rho_{[i]}(o)do-\langle\*s_{t,i},\*W_{[i]}\rangle}{\langle\*s_{t,i},\*W_{[i]}\rangle}\\
	&\le\sum_{t=1}^T\sum_{i=1}^m\frac{\sum_{j=1}^d\int_{a+\frac{b-a}d(j-1)}^{a+\frac{b-a}dj}\exp(-\frac{\varepsilon_i}2\Gap_{q_i}(\*x_t,o))(\rho_{[i]}(o)-\mu_{\*W_{[i]}}(o))do}{\gamma\psi_{\*x_t}/(b-a)}\\
	&\le\sum_{t=1}^T\sum_{i=1}^m\frac{\sum_{j=1}^d\int_{a+\frac{b-a}d(j-1)}^{a+\frac{b-a}dj}|\rho_{[i]}(o)-\rho_{[i]}(o_{i,j})|do}{\gamma\psi_{\*x_t}/(b-a)}
	\le\frac{VmT}{\gamma d\bar\psi}(b-a)^3
	\end{split}
	\end{align}
	where the first inequality follows by concavity, the second by using the definition of $\*W$ to see that $\langle\*s_{t,i},\*W_{[i]}\rangle=\int_a^b\exp(-\frac{\varepsilon_i}2\Gap_{q_i}(\*x_t,o))\mu_{\*W_{[i]}}(o)do\ge\frac{\gamma\psi_{\*x_t}}{b-a}$, the third by H\"older's inequality and the mean value theorem for some $o_{i,j}\in(a+\frac{b-a}d(j-1),a+\frac{b-a}dj)$, and the fourth by the Lipschitzness of $\rho_{[i]}\in\F_{V,d}^{(\gamma)}$.
	The third summation is a sum over the errors due to $\gamma$-robustness, with the result following by $U_{\*x_t}^{(\varepsilon)}(\rho)-U_{\*x_t}^{(\varepsilon)}(\mu)\le U_{\*x_t}^{(\varepsilon)}(\mu)-\log(1-\max\{\gamma-\lambda,0\})-U_{\*x_t}^{(\varepsilon)}(\mu)\le2\max\{\gamma-\lambda,0\}\log2$.
\end{proof}

\newpage
Note that in the case of $V>0$ or $\lambda=0$ we will need to set $d=\omega_T(1)$ or $\gamma=o_T(1)$ in order to obtain sublinear regret.
Thus for these more difficult classes our extension of DP-FTRL to non-Euclidean regularizers yields improved rates, as in the Euclidean case the first term has an extra $\sqrt[4]d$-factor.
The following provides some specific upper bounds derived from Theorem~\ref{thm:quantile-regret}:
\begin{Cor}\label{cor:quantile-regret}
	For each of the following classes of priors there exist settings of $d$ (where needed) and $\gamma>0$ in Theorem~\ref{thm:quantile-regret} that guarantee obtain the following regret w.p. $\ge1-\beta'$:
	\begin{enumerate}
		\item $\lambda$-robust and discrete $\mu_{[i]}\in\F_{0,d}^{(\lambda)}$:  $\tilde\BigO\left(\frac{dm}\lambda\sqrt{\left(1+\frac{\min\{1,\tilde\varepsilon_m\}}{\varepsilon'}\right)T}\right)$
		\item $\lambda$-robust and $V$-Lipschitz $\mu_{[i]}\in\F_{V,1}^{(\lambda)}$: $\tilde\BigO\left(\frac m\lambda\sqrt{\frac V{\bar\psi}}\sqrt[4]{\left(1+\frac{\min\{1,\tilde\varepsilon_m\}}{\varepsilon'}\right)T^3}\right)$
		\item discrete $\mu_{[i]}\in\F_{0,d}$: $\tilde\BigO\left(\sqrt{dm}\sqrt[4]{\left(1+\frac{\min\{1,\tilde\varepsilon_m\}}{\varepsilon'}\right)T^3}\right)$
		\item $V$-Lipschitz $\mu_{[i]}\in\F_{V,1}$: $\tilde\BigO\left(\sqrt m\sqrt[4]{\frac V{\bar\psi}}\sqrt[8]{\left(1+\frac{\min\{1,\tilde\varepsilon_m\}}{\varepsilon'}\right)T^7}\right)$
	\end{enumerate}
\end{Cor}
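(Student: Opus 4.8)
The plan is to start from the regret bound of Theorem~\ref{thm:quantile-regret}, substitute the privacy-calibrated noise scale $\sigma=\frac1{\varepsilon'}\sqrt{2\lceil\log_2T\rceil\log\frac1{\delta'}}$, and absorb every factor that is polylogarithmic in $T,m,d,1/\beta'$, and $1/\delta'$ into $\tilde\BigO(\cdot)$. After this cleanup the bound collapses to a sum of three simple terms,
\[
\tilde\BigO\left(\frac{VmT}{\gamma d\bar\psi}+\max\{\gamma-\lambda,0\}\,T+\frac{md}{\gamma}\sqrt{MT}\right),\qquad M:=1+\frac{\min\{1,\tilde\varepsilon_m\}}{\varepsilon'},
\]
capturing respectively the discretization error, the $\gamma$-robustness slack, and the non-Euclidean DP-FTRL regret; the $(\varepsilon',\delta')$-DP claim is inherited verbatim from the last sentence of Theorem~\ref{thm:quantile-regret} and is unaffected by our choices of $d$ and $\gamma$. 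Note that $\F_{V,1}\subseteq\F_{V,d}$ and $\F_{0,d}\subseteq\F_{V,d}$ for every $d$ and every $V\ge0$, so in the two Lipschitz cases $d$ is ours to choose, whereas in the two discrete cases it is pinned to the number of pieces of the comparator class.

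For each of the four classes I would then pick $\gamma$ (and $d$ where free) to balance whichever terms survive. In case~1 ($\F_{0,d}^{(\lambda)}$) the comparators are piecewise constant so $V=0$ kills the first term and $\gamma=\lambda$ kills the second, leaving only $\tilde\BigO(\frac{dm}{\lambda}\sqrt{MT})$. In case~3 ($\F_{0,d}$, no robustness) we again have $V=0$ but now $\lambda=0$, so the live competition is between $\gamma T$ and $\frac{md}{\gamma}\sqrt{MT}$; balancing gives $\gamma\asymp\sqrt{md}\,M^{1/4}T^{-1/4}$ and value $\tilde\BigO(\sqrt{dm}\,(MT^3)^{1/4})$. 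In case~2 ($\F_{V,1}^{(\lambda)}$) I set $\gamma=\lambda$ and balance the discretization term against the DP-FTRL term over $d$, obtaining $d\asymp(V/\bar\psi)^{1/2}T^{1/4}M^{-1/4}$ and value $\tilde\BigO(\frac{m}{\lambda}\sqrt{V/\bar\psi}\,(MT^3)^{1/4})$. Case~4 ($\F_{V,1}$, no robustness) is the only one with all three terms live: I would first balance the discretization and DP-FTRL terms over $d$ holding $\gamma$ fixed — the same $d$ as in case~2 with $\lambda$ replaced by $\gamma$, making both terms $\tilde\BigO(\frac{m}{\gamma}\sqrt{V/\bar\psi}\,(MT^3)^{1/4})$ — and then balance that common value against $\gamma T$ to get $\gamma\asymp\sqrt m\,(V/\bar\psi)^{1/4}M^{1/8}T^{-1/8}$ and final value $\tilde\BigO(\sqrt m\,(V/\bar\psi)^{1/4}(MT^7)^{1/8})$, matching the four claimed rates.

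Along the way I would verify two side conditions: that the chosen $d$ satisfies $d\ge2$ and the chosen $\gamma$ satisfies $\gamma\in(0,1/2]$ — both hold once $T$ exceeds a constant, and for smaller $T$ the claimed $\tilde\BigO$ bounds are vacuous since the regret is trivially $\BigO(T)$ — and that the step-size $\eta$ prescribed in Theorem~\ref{thm:quantile-regret} is consistent with these parameter values. I do not expect any genuine obstacle: the argument is four rounds of AM--GM-style balancing with no new ideas, and the only real risk is bookkeeping the fractional exponents. Case~4, with its two nested optimizations (first over $d$, then over $\gamma$), is the one most prone to arithmetic slips, so that is the exponent accounting I would carry out most carefully.
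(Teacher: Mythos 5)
Your proposal is correct and follows the same route as the paper: simplify the three-term regret bound of Theorem~\ref{thm:quantile-regret} after substituting the DP-calibrated $\sigma$, then balance the discretization, robustness, and DP-FTRL terms case by case via AM--GM. The four final rates you derive match the corollary, and your balancing exponents check out (I verified all three terms equalize under your choices in cases 3 and 4). Two small remarks. First, in case 4 you write ``the same $d$ as in case~2 with $\lambda$ replaced by $\gamma$,'' but $\gamma$ cancels when balancing the discretization term against the DP-FTRL term, so the optimal $d$ is identical to case 2 and independent of $\gamma$; this doesn't affect your result. Second, your derived $d\asymp(V/\bar\psi)^{1/2}(T/M)^{1/4}$ for cases 2 and 4 disagrees in the sign of the $M$ exponent with the paper's tabulated $d=\lceil\sqrt{\tfrac{V(b-a)^3}{\bar\psi}\sqrt{MT}}\rceil$; if one plugs the paper's tabulated $d$ into the theorem bound, the DP-FTRL term picks up an extra $M^{1/2}$ factor and exceeds the claimed rate, so the paper's table appears to have a typo (it should read $\sqrt{T/M}$ inside) and your choice is the correct one.
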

Thus competing with $\lambda$-robust priors with discrete PDFs enjoys the fastest regret rate of $\tilde\BigO(\sqrt T)$, while either removing robustness or competing with any $V$-Lipschitz prior has regret $\tilde\BigO(T^{3/4})$, and doing both has regret $\tilde\BigO(T^{7/8})$.
When comparing to Lipschitz priors we also incur a dependence on the inverse of minimum datapoint separation, which may be small. 
A notable aspect of all the bounds is that the regret {\em improves} with small $\varepsilon$ due to the sensitivity analysis in Lemma~\ref{lem:refined-sensitivity}; 
indeed for $\varepsilon=\BigO(\varepsilon')$ the regret bound only has a $\BigO(\log\frac1{\delta'})$-dependence on the privacy guarantee.
Finally, for $\lambda$-robust priors we can also apply the $\log\frac{b-a}{\lambda\psi}$-boundedness of $-\log\Psi_{\*x}^{(q,\varepsilon)}(\mu)$ and standard online-to-batch conversion (e.g.~\citet[Proposition~1]{cesa-bianchi2004online2batch} to obtain the following sample complexity guarantee:\looseness-1
\begin{Cor}
	For any $\alpha>0$ and distribution $\D$ over finite datasets $\*x$ of $\psi$-separated points from $(a,b)$, if we run the algorithm in Theorem~\ref{thm:quantile-regret} on $T=\Omega\left(\frac{\log\frac1{\beta'}}{\alpha^2}\left(\frac{d^2m^2}{\lambda^2}\left(1+\frac{\min\{1,\tilde\varepsilon_m\}}{\varepsilon'}\right)+\log^2\frac1{\lambda\psi}\right)\right)$ i.i.d. samples from $\D$ then w.p. $\ge1-\beta'$ the average $\*{\hat W}=\frac1T\sum_{t=1}^T\*W_t$ of the resulting iterates satisfies $\E_{\*x\sim\D}\log\sum_{i=1}^m\frac1{\Psi_{\*x}^{(q_i,\varepsilon_i)}(\mu_{\*{\hat W}_{[i]}})}\le\min_{\mu_{[i]}\in\F_{0,d}^{(\lambda)}}\E_{\*x\sim\D}\log\sum_{i=1}^m\frac1{\Psi_{\*x}^{(q_i,\varepsilon_i)}(\mu_{[i]})}+\alpha$.
	For $\alpha$-suboptimality w.r.t. $\mu_{[i]}\in\F_{V,1}^{(\lambda)}$ the sample complexity is $\Omega\left(\frac{\log\frac1{\beta'}}{\alpha^2}\left(\frac{V^2m^2}{\lambda^4\psi^2\alpha^2}\left(1+\frac{\min\{1,\tilde\varepsilon_m\}}{\varepsilon'}\right)+\log^2\frac1{\lambda\psi}\right)\right)$.
\end{Cor}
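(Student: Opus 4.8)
The plan is to read this off from the regret bound of Theorem~\ref{thm:quantile-regret} (equivalently, the specialized rates of Corollary~\ref{cor:quantile-regret}) via a textbook online-to-batch conversion, so the only real work is verifying the two structural properties such a conversion needs---boundedness and convexity of the per-dataset losses $\ell_t(\cdot)=U_{\*x_t}^{(\varepsilon)}(\mu_{\cdot})$---and observing that the DP-FTRL iterates are independent of the fresh sample.

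\emph{Boundedness and convexity.} First I would record that any $\lambda$-robust prior, i.e.\ $\mu_{[i]}^{(\lambda)}=(1-\lambda)\mu_{[i]}+\frac{\lambda}{b-a}1_{(a,b)}$, satisfies $\Psi_{\*x}^{(q_i,\varepsilon_i)}(\mu_{[i]}^{(\lambda)})\ge\lambda\Psi_{\*x}^{(q_i,\varepsilon_i)}(\Unif)\ge\frac{\lambda\psi}{b-a}$ by linearity of $\Psi_{\*x}^{(q_i,\varepsilon_i)}$ and the uniform-prior bound $\Psi_{\*x}^{(q_i,\varepsilon_i)}(\Unif)\ge\psi_{\*x}/(b-a)\ge\psi/(b-a)$ from Section~\ref{sec:algorithm}; combined with $\Psi_{\*x}^{(q_i,\varepsilon_i)}(\cdot)\le1$ this gives $U_{\*x}^{(\varepsilon)}(\mu^{(\lambda)})=\log\sum_{i=1}^m1/\Psi_{\*x}^{(q_i,\varepsilon_i)}(\mu_{[i]}^{(\lambda)})\in[0,\log\frac{m(b-a)}{\lambda\psi}]=[0,\tilde\BigO(\log\frac1{\lambda\psi})]$. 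This range bound applies simultaneously to the comparator classes $\F_{0,d}^{(\lambda)}$ and $\F_{V,1}^{(\lambda)}$ and to the iterates: the $\gamma=\lambda$-robustness constraint of Theorem~\ref{thm:quantile-regret} forces every piecewise-constant density $\mu_{\*W_t}$ (and their average $\mu_{\*{\hat W}}$) to be at least $\lambda/(b-a)$ everywhere. Convexity of each $U_{\*x}^{(\varepsilon)}$ in $\*W$ is already known---it is the log-sum-exp of the convex maps $-\log\langle\*s_{t,i},\cdot\rangle$---so Jensen's inequality yields $U_{\*x}^{(\varepsilon)}(\mu_{\*{\hat W}})\le\frac1T\sum_{t=1}^TU_{\*x}^{(\varepsilon)}(\mu_{\*W_t})$.

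\emph{Online-to-batch.} Since the DP-FTRL iterate $\*W_t$ is a function only of $\*x_1,\dots,\*x_{t-1}$ and of fresh tree-aggregation noise independent of $\*x_t$, the sequence $M_t=U_{\*x_t}^{(\varepsilon)}(\mu_{\*W_t})-\E_{\*x\sim\D}U_{\*x}^{(\varepsilon)}(\mu_{\*W_t})$ is a martingale difference sequence with $|M_t|\le B:=\log\frac{m(b-a)}{\lambda\psi}$. Azuma--Hoeffding on $\sum_tM_t$, Hoeffding's inequality on $\frac1T\sum_tU_{\*x_t}^{(\varepsilon)}(\mu^\ast)$ for the fixed population-risk minimizer $\mu^\ast$ over the relevant class, Jensen, and the high-probability regret bound of Corollary~\ref{cor:quantile-regret} combine---this is precisely \citet[Proposition~1]{cesa-bianchi2004online2batch}---to give, w.p.\ $\ge1-\beta'$,
\begin{equation}
	\E_{\*x\sim\D}U_{\*x}^{(\varepsilon)}(\mu_{\*{\hat W}})
	\le\min_{\mu}\E_{\*x\sim\D}U_{\*x}^{(\varepsilon)}(\mu)+\frac{\mathrm{Regret}(T)}{T}+\BigO\left(B\sqrt{\log(1/\beta')/T}\right).
\end{equation}

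\emph{Tuning $T$ and the obstacle.} It then remains to force each of the last two terms below $\alpha/2$. For $\F_{0,d}^{(\lambda)}$, item~1 of Corollary~\ref{cor:quantile-regret} gives $\mathrm{Regret}(T)/T=\tilde\BigO\left(\frac{dm}{\lambda}\sqrt{(1+\min\{1,\tilde\varepsilon_m\}/\varepsilon')/T}\right)$, which is $\le\alpha/2$ once $T=\Omega\left(\frac{d^2m^2}{\lambda^2\alpha^2}(1+\min\{1,\tilde\varepsilon_m\}/\varepsilon')\right)$, while the concentration term needs $T=\Omega\left(\frac{\log^2(1/(\lambda\psi))\log(1/\beta')}{\alpha^2}\right)$; taking the maximum---and absorbing the extra $\log(1/\beta')$ that the DP-FTRL tail already contributes to $\mathrm{Regret}(T)$ into the common prefactor---gives the first claim. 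For $\F_{V,1}^{(\lambda)}$, item~2 gives $\mathrm{Regret}(T)/T=\tilde\BigO\left(\frac m\lambda\sqrt{V/\psi}\,(1+\min\{1,\tilde\varepsilon_m\}/\varepsilon')^{1/4}T^{-1/4}\right)$ (using $\bar\psi\ge\psi$ since every dataset in the support of $\D$ is $\psi$-separated), forcing $T$ polynomial in $Vm/(\lambda\psi\alpha)$ as stated, with the same concentration term. No step here is delicate; the only care needed is the independence/martingale bookkeeping and tracking the $\lambda$-robustness all the way into the loss range $B$, plus keeping the comparator $\mu^\ast$ data-independent so that plain Hoeffding---rather than a uniform bound over the class---suffices.
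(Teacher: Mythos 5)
Your proposal is correct and matches the paper's own route: the paper states the corollary as an application of its online-to-batch conversion (Theorem~\ref{thm:o2b}, which in turn cites the same Proposition~1 of \citet{cesa-bianchi2004online2batch}), with the loss range $B=\tilde\BigO(\log\frac1{\lambda\psi})$ coming from the same $\lambda$-robustness argument you use, convexity enabling Jensen for the averaged iterate, and $T$ tuned to balance the (high-probability) regret-over-$T$ term against the $B\sqrt{\log(1/\beta')/T}$ concentration term. The only distinction is presentational—you spell out the martingale bookkeeping that the cited proposition packages—so there is no substantive gap.
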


\newpage
\section{Section~\ref{sec:applications} details}\label{app:applications}

\subsection{Location-scale families}

A location-scale model is a distribution parameterized by a location $\nu\in\R$ and scale $\sigma\in\R_{\ge0}$ whose density has the form $\mu_{\nu,\sigma}(x)=\frac1\sigma f\left(\frac{x-\nu}\sigma\right)$ for some centered probability measure $f:\R\mapsto\R_{\ge0}$.

\subsubsection{Impossibility of simultaneous robustness and convexity}

\begin{Thm}\label{thm:impossibility}
	Let $f:\R\mapsto\R_{\ge0}$ be a centered probability measure and for each $\theta\in\Theta$ define $\mu_\theta(x)=f(x-\theta)$.
	\begin{enumerate}
		\item If $f$ is continuous then $U_{\*x}(\mu_\theta)$ is convex in $\theta$ for all sorted dataset $\*x\in\R^n$ if and only if $f$ is log-concave.
		\item There exist constants $a,b>0$ s.t. for any $r>0$, $\psi\in(0,\frac R{2n}]$, $q\ge\frac1n$, and $\theta\in\R$ there exists a sorted dataset $\*x\in(\theta\pm R)^n$ with $\min_{i\in[n-1]}\*x_{[i+1]}-\*x_{[i]}=\psi$ s.t. $U_{\*x}^{(q)}(\mu_\theta)=aR+\log\frac b\psi$.
	\end{enumerate}
\end{Thm}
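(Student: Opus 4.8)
The plan is to handle the two parts separately: Part 1 reduces to a classical log-concavity fact, and Part 2 to an intermediate-value argument built on the explicit datasets one can realize.

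\emph{Part 1.} Fix a sorted $\*x$ and a quantile $q$, let $(a_q,b_q]$ be the optimal interval, so $U_{\*x}^{(q)}(\mu_\theta)=-\log\int_{a_q}^{b_q} f(o-\theta)\,do$. For the ``if'' direction, suppose $f$ is log-concave. Then $(o,\theta)\mapsto f(o-\theta)$ is log-concave on $\R^2$ (the concave $\log f$ composed with an affine map), and multiplying by the indicator of the slab $\{a_q<o\le b_q\}$ keeps it log-concave; by the Pr\'ekopa--Leindler theorem its $o$-marginal $\theta\mapsto\int_{a_q}^{b_q} f(o-\theta)\,do$ is log-concave, i.e. $U_{\*x}^{(q)}(\mu_\theta)$ is convex in $\theta$. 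For $m>1$, $U_{\*x}(\mu_\theta)=\LSE\!\big(U_{\*x}^{(q_1)}(\mu_\theta),\dots,U_{\*x}^{(q_m)}(\mu_\theta)\big)$; since $\LSE$ is convex and nondecreasing in each coordinate, composing it with the convex maps $\theta\mapsto U_{\*x}^{(q_i)}(\mu_\theta)$ is again convex. For the ``only if'' direction I prove the contrapositive. If $f$ is continuous but not log-concave, I first claim there are $a<b$ with $\theta\mapsto\log\int_a^b f(o-\theta)\,do$ non-concave: otherwise, fixing $m\in\R$ and shrinking the window to $(m-\epsilon,m+\epsilon]$, the functions $\theta\mapsto\log\!\big(\tfrac1{2\epsilon}\int_{m-\epsilon}^{m+\epsilon}f(o-\theta)\,do\big)$ are concave (vertical shifts of the assumed-concave functions) and converge pointwise as $\epsilon\to0^+$, by continuity of $f$, to $\theta\mapsto\log f(m-\theta)$, forcing that to be concave for every $m$, hence $f$ log-concave --- contradiction. (One only needs to allow improper limits of value $-\infty$.) Given such $a<b$, take $q=1/n$ and the dataset with $\*x_{[1]}=a$, $\*x_{[2]}=b$, and $\*x_{[3]}<\dots<\*x_{[n]}$ any distinct points above $b$; its optimal interval is $(a,b]$, so $U_{\*x}^{(1/n)}(\mu_\theta)=-\log\int_a^b f(o-\theta)\,do$ is non-convex, and a fortiori $U_{\*x}$ is not convex for all $\*x$.

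\emph{Part 2.} (Here $f$ is log-concave --- needed for any $\Omega(R)$ lower bound to be possible.) By translation invariance of $\Gap$ and $\mu$, assume $\theta=0$; write $k=\lfloor qn\rfloor\in\{1,\dots,n-1\}$. For parameters $c\in\R$ and $L\ge\psi$, consider the dataset that puts the optimal interval at $(c,c+L]$ and places all other points at spacing exactly $\psi$ on each side: $\*x_{[k]}=c$, $\*x_{[k+1]}=c+L$, $\*x_{[k-j]}=c-j\psi$, $\*x_{[k+1+j]}=c+L+j\psi$. Its minimum gap is exactly $\psi$, it lies in $(-R,R)$ whenever $c-(k-1)\psi>-R$ and $c+L+(n-k-1)\psi<R$, and $U_{\*x}^{(q)}(\mu_0)=-\log\int_c^{c+L} f(u)\,du=:G(c,L)$ is jointly continuous. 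Since $\psi\le R/(2n)$, both $(k-1)\psi$ and $(n-k)\psi$ are $<R/2$, so (with $L=\psi$) the feasible $c$-interval contains $[-R/2,R/2]$; pushing $c$ to its left end places the window inside $[-R,-R/2]$, where the exponential tail bound $f(u)\le Ce^{-\lambda|u|}$ for log-concave densities gives $G\ge\tfrac\lambda2 R+\log\tfrac1{C\psi}$, whereas placing the window near the mode and, if needed, enlarging $L$ toward $\sim 2R$ makes $G$ as small as $\approx -\log(\psi f(0))$ or smaller. The plan is then to fix $a\in(0,\lambda/2)$ and $b>0$ so that $aR+\log(b/\psi)$ lies between $\inf G$ and $\sup G$ for \emph{every} admissible $R,\psi,q$; continuity and the intermediate value theorem then yield a feasible $(c,L)$ with $G(c,L)=aR+\log(b/\psi)$, and its dataset is the claimed $\*x$.

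\textbf{Main obstacle.} Part 1 is bookkeeping around Pr\'ekopa--Leindler and pointwise limits of concave functions. The real work is the last step of Part 2: producing a \emph{single} pair $(a,b)$, depending on $f$ only, for which $aR+\log(b/\psi)$ is simultaneously $\le\sup G$ and $\ge\inf G$ across all $R>0$, $\psi\in(0,R/(2n)]$, and $q$. The $\le\sup G$ side forces $a<\lambda/2$ (so the left-tail term beats the $aR$ term for large $R$) plus a mild constraint on $b$; the $\ge\inf G$ side forces $b$ large enough and, for small $R$, hinges on a quantitative comparison of the form $f(R)\le C'f(0)e^{-aR}$. This holds for the log-concave families of interest (Laplace, Gaussian), and one should note that it genuinely fails for a density with a long plateau at its mode --- so the write-up should either state this mild non-degeneracy assumption or specialize to the priors actually used.
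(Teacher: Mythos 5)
Your Part 1 ``if'' direction matches the paper's: both are applications of Pr\'{e}kopa--Leindler to the $\theta$-dependence of $\int_{a_q}^{b_q} f(o-\theta)\,do$, and both close out $m>1$ by composing with $\LSE$. Your Part 1 ``only if'' direction is correct but takes a genuinely different route. You argue by a limiting contradiction: if every window $\theta\mapsto\log\int_a^b f(o-\theta)\,do$ were concave, shrinking the window and using continuity of $f$ would force $\log f$ concave (taking care with improper $-\infty$ limits), then exhibit a dataset whose optimal interval is $(a,b]$. The paper instead works directly: non-log-concavity of a continuous $f$ gives an interval $[a,b]$ on which $f$ is strictly log-convex, and then for \emph{any} dataset whose optimal interval has length at most $\tfrac{b-a}{2}$ it picks two specific shifts $\theta,\theta'$ and verifies the failure of midpoint concavity via H\"older's inequality. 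Both are sound; the paper's version is more constructive (it characterizes exactly which datasets are non-degenerate witnesses, which is used in the surrounding discussion), while yours is a cleaner abstract reduction.

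For Part 2, you have misread the quantifier structure and are fighting a harder claim than the paper actually proves. Despite the ``$=$'' in the formal theorem statement, the paper's proof establishes only the lower bound $U_{\*x}^{(q)}(\mu_\theta)\ge \tfrac{c}{2}R+\log\tfrac{1}{C\psi}$ — and the surrounding text makes clear the intended content is the $\Omega(R)$ non-robustness lower bound, not an identity. The paper's argument is short: invoke the Cule--Samworth tail bound $\mu_\theta(x)\le C e^{-c|x-\theta|}$ for centered log-concave densities, take the extreme dataset $\*x_{[j]}=\theta+R-(n+1-j)\psi$ (equally spaced with gap $\psi$, all crammed near $\theta+R$), note $|\*x_{[\lfloor qn\rfloor]}-\theta|\ge R-n\psi\ge R/2$ since $\psi\le R/(2n)$, and bound $\Psi_{\*x}^{(q)}(\mu_\theta)\le C\psi e^{-cR/2}$. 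Your IVT construction aimed at hitting $aR+\log(b/\psi)$ \emph{exactly} is both unnecessary and, as you yourself flag, not achievable uniformly (the $\ge\inf G$ side genuinely fails for plateau densities). Rather than patching the equality with a non-degeneracy assumption, the fix is to prove what the paper actually needs: pick the one-sided extreme dataset and apply the log-concave tail bound directly.
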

\begin{proof}
	For the first direction of the first result, consider any $\theta,\theta'\in\R$ and $\lambda\in[0,1]$.
	We have that
	\begin{equation}
	U_{*x}^{(q)}(\mu_{\lambda\theta+(1-\lambda)\theta'})
	-\left(\lambda U_{\*x}^{(q)}(\mu_\theta)-(1-\lambda)\log U_{\*x}^{(q)}(\mu_{\theta'}\right)
	=\log\frac{\Psi_{\*x}^{(q)}(\mu_\theta)^\lambda\Psi_{\*x}^{(q)}(\mu_{\theta'})^{1-\lambda}}{\Psi_{\*x}^{(q)}\mu_{\lambda\theta+(1-\lambda)\theta'})}
	\end{equation}
	so it suffices to show that $\Psi_{\*x}{(q)}(\mu_{\lambda\theta+(1-\lambda)\theta'})\ge\Psi_{\*x}^{(q)}(\mu_\theta)^\lambda\Psi_{\*x}^{(q)}(\mu_{\theta'})^{1-\lambda}$.
	By the log-concavity of $f$ we have
	\begin{equation}
	\mu_{\lambda\theta+(1-\lambda)\theta'}(\lambda x+(1-\lambda)y)
	=f(\lambda(x-\theta)+(1-\lambda)(y-\theta'))
	\ge f(x-\theta)^\lambda f(y-\theta')^{1-\lambda}
	=\mu_\theta(x)^\lambda\mu_{\theta'}(y)^{1-\lambda}
	\end{equation}
	for all $x,y\in\R$.
	Therefore by the Pr\'{e}kopa-Leindler inequality we have that
	\begin{align}
	\begin{split}
	\Psi_{\*x}^{(q)}(\mu_{\lambda\theta+(1-\lambda)\theta'})
	=\int_{\*x_{[\lfloor qn\rfloor]}}^{\*x_{[\lfloor qn\rfloor+1]}}\mu_{\lambda\theta+(1-\lambda)\theta'}(x)dx
	&\ge\left(\int_{\*x_{[\lfloor qn\rfloor]}}^{\*x_{[\lfloor qn\rfloor+1]}}\mu_\theta(x)dx\right)^\lambda\left(\int_{\*x_{[\lfloor qn\rfloor]}}^{\*x_{[\lfloor qn\rfloor+1]}}\mu_{\theta'}(x)dx\right)^{1-\lambda}\\
	&=\Psi_{\*x}{(q)}(\mu_\theta)^\lambda\Psi_{\*x}^{(q)}(\mu_{\theta'})^{1-\lambda}
	\end{split}
	\end{align}
	For the second direction, by assumption $\exists~a<c,b>c$ s.t. $\sqrt{f(x)f(y)}>f(\frac{x+y}2)~\forall~x,y\in[a,b]$, i.e. $f$ is strictly log-convex on $[a,b]$.
	Let $\*x\in\R^n$ be any dataset s.t. $\*x_{[\lfloor qn\rfloor+1]}-\*x_{[\lfloor qn\rfloor]}\le\frac{b-a}2$ and set $\theta=\*x_{[\lfloor qn\rfloor]}-a$, $\theta'=\*x_{[\lfloor qn\rfloor]}-\frac{a+b}2$.
	Then we have
	\begin{align}
	\begin{split}
	\sqrt{\int_{\*x_{[\lfloor qn\rfloor]}}^{\*x_{[\lfloor qn\rfloor+1]}}\mu_\theta(x)dx\int_{\*x_{[\lfloor qn\rfloor]}}^{\*x_{[\lfloor qn\rfloor+1]}}\mu_{\theta'}(x)dx}
	&=\sqrt{\int_{\*x_{[\lfloor qn\rfloor]}}^{\*x_{[\lfloor qn\rfloor+1]}}\sqrt{\mu_\theta(x)}^2dx\int_{\*x_{[\lfloor qn\rfloor]}}^{\*x_{[\lfloor qn\rfloor+1]}}\sqrt{\mu_{\theta'}(x)}^2dx}\\
	&\ge\int_{\*x_{[\lfloor qn\rfloor]}}^{\*x_{[\lfloor qn\rfloor+1]}}\sqrt{\mu_\theta(x)\mu_{\theta'}(x)}dx\\
	&=\int_{\*x_{[\lfloor qn\rfloor]}}^{\*x_{[\lfloor qn\rfloor+1]}}\sqrt{f(x-\theta)f(x-\theta')}dx\\
	&>\int_{\*x_{[\lfloor qn\rfloor]}}^{\*x_{[\lfloor qn\rfloor+1]}}f\left(x-\frac{\theta+\theta'}2\right)dx
	=\int_{\*x_{[\lfloor qn\rfloor]}}^{\*x_{[\lfloor qn\rfloor+1]}}\mu_\frac{\theta+\theta'}2(x)dx
	\end{split}
	\end{align}
	where the first inequality is H\"older's and the second is due to the strict log-convexity of $f$ on $[a,b]$.
	Taking the logarithm of both sides followed by their negatives completes the proof.
	
	Finally, for the second result, since $f$ is centered and log-concave, by \citet[Lemma~1]{cule2010theoretical} there exist constants $C,c>0$ s.t. $\mu_\theta(x)\le C\exp(-c|x-\theta|)~\forall~\theta\in\R$.
	Let
	\ifdefined\arxiv
	\begin{equation}
		\*x=\begin{pmatrix}\theta+R-n\psi&\theta+R-(n-1)\psi&\cdots&\theta+R-2\psi&\theta+R-\psi\end{pmatrix}
	\end{equation}
	\else
	 $\*x=\begin{pmatrix}\theta+R-n\psi&\theta+R-(n-1)\psi&\cdots&\theta+R-2\psi&\theta+R-\psi\end{pmatrix}$, 
	 \fi
	 so that $|\*x_{[\lfloor qn\rfloor]}-\theta|\ge|\*x_{[1]}-\theta|=R-n\psi\ge\frac R2$.
	Then
	\begin{equation}
	\Psi_{\*x}^{(q)}(\mu_\theta)
	=\int_{\*x_{[\lfloor qn\rfloor]}}^{\*x_{[\lfloor qn\rfloor+1]}}\mu_\theta(x)dx
	\le C\psi\exp(-c|\*x_{[\lfloor qn\rfloor]}-\theta|)
	\le C\psi\exp(-cR/2)
	\end{equation}
	so $U_{\*x}^{(q)}(\mu)=-\log\Psi_{\*x}^{(q)}(\mu_\theta)\ge\log\frac1{C\psi}+\frac{cR}2$.
\end{proof}

Variants of the first result have been shown in the censored regression literature \cite{burridge1981note,pratt1981concavity}.
In fact, \citet{burridge1981note} shows convexity of $U_{\*x}^{(q)}(\mu_{\frac{\langle\*v,\*f\rangle}\phi,\frac1\phi})$ w.r.t. $(\*v,\phi)\in\R^d\times\R_{>0}$, i.e. simultaneous learning of a feature map and inverse scale. 
Convexity of $U_{\*x}
=-\log\Psi_{\*x}
=\log\sum_{i=1}^m\frac1{\Psi_{\*x}^{(q_i)}}
=\log\sum_{i=1}^m\exp(-\log\Psi_{\*x}^{(q_i)})$ follows because $\log\sum_{i=1}^me^{x_i}$ is convex and non-decreasing in each argument.
Note that for the converse direction, the dataset $\*x$ is not a degenerate case;
in-fact if $f$ is strictly log-convex over an interval $[a,b]$ then any dataset whose optimal interval has length smaller than $\frac{b-a}2$ will yield a non-convex $U_{\*x}^{(q)}(\mu_\theta)$.

\subsubsection{The case of the Laplacian}

For the Laplace prior with $a=\*x_{[\lfloor qn\rfloor]}$ and $b=\*x_{[\lfloor qn\rfloor+1]}$ we have
\ifdefined\arxiv
\begin{align}
\begin{split}
-\log&\Psi_{\*x}^{(q)}(\mu_{\frac\theta\phi,\frac1\phi})\\
&=\log2\\
&\quad-\log\left(\sign\left(b-\frac\theta\phi\right)\left(1-\exp\left(-\left|b-\frac\theta\phi\right|\phi\right)\right)-\sign\left(a-\frac\theta\phi\right)\left(1-\exp\left(-\left|a-\frac\theta\phi\right|\phi\right)\right)\right)
\end{split}
\end{align}
\else
\begin{align}
\begin{split}
-\log&\Psi_{\*x}^{(q)}(\mu_{\frac\theta\phi,\frac1\phi})\\
&=\log2-\log\left(\sign\left(b-\frac\theta\phi\right)\left(1-\exp\left(-\left|b-\frac\theta\phi\right|\phi\right)\right)-\sign\left(a-\frac\theta\phi\right)\left(1-\exp\left(-\left|a-\frac\theta\phi\right|\phi\right)\right)\right)
\end{split}
\end{align}
\fi
For $\theta<a\phi$ this simplifies to
\ifdefined\arxiv
\begin{align}
\begin{split}
\log2-\log\left(e^{\theta-a\phi}-e^{\theta-b\phi}\right)
&=\log2-\log\left((e^{\frac{b-a}2\phi}-e^{\frac{a-b}2\phi})e^{\theta-\frac{a+b}2\phi}\right)\\
&=\left|\theta-\frac{a+b}2\phi\right|-\log\left(\sinh\left(\frac{b-a}2\phi\right)\right)
\end{split}
\end{align}
\else
\begin{equation}
\log2-\log\left(e^{\theta-a\phi}-e^{\theta-b\phi}\right)
=\log2-\log\left((e^{\frac{b-a}2\phi}-e^{\frac{a-b}2\phi})e^{\theta-\frac{a+b}2\phi}\right)
=\left|\theta-\frac{a+b}2\phi\right|-\log\left(\sinh\left(\frac{b-a}2\phi\right)\right)
\end{equation}
\fi
and similarly for $\theta>b\phi$ it becomes
\ifdefined\arxiv
\begin{align}
\begin{split}
\log2-\log\left(e^{b\phi-\theta}-e^{a\phi-\theta}\right)
&=\log2-\log\left((e^{\frac{b-a}2\phi}-e^{\frac{a-b}2\phi})e^{\frac{a+b}2\phi-\theta}\right)\\
&=\left|\frac{a+b}2\phi-\theta\right|-\log\left(\sinh\left(\frac{b-a}2\phi\right)\right)
\end{split}
\end{align}
\else
\begin{equation}
\log2-\log\left(e^{b\phi-\theta}-e^{a\phi-\theta}\right)
=\log2-\log\left((e^{\frac{b-a}2\phi}-e^{\frac{a-b}2\phi})e^{\frac{a+b}2\phi-\theta}\right)
=\left|\frac{a+b}2\phi-\theta\right|-\log\left(\sinh\left(\frac{b-a}2\phi\right)\right)
\end{equation}
\fi
On the other hand for $\theta\in[a\phi,b\phi]$ it is
\begin{align}
\begin{split}
\log2-\log\left(2-e^{-|b\phi-\theta|}-e^{-|a\phi-\theta|}\right)
&=\log2-\log\left(2-e^{\theta-b\phi}-e^{a\phi-\theta}\right)\\
&=\log2-\log\left(e^{-\frac{b-a}2\phi}\left(2e^{\frac{b-a}2\phi}-e^{\theta-\frac{a+b}2\phi}-e^{\frac{a+b}2\phi-\theta}\right)\right)\\
&=\frac{b-a}2\phi+\log2-\log\left(2e^{\frac{b-a}2\phi}-e^{\theta-\frac{a+b}2\phi}-e^{\frac{a+b}2\phi-\theta}\right)\\
&=\frac{b-a}2\phi-\log\left(e^{\frac{b-a}2\phi}-\cosh\left(\theta-\frac{a+b}2\phi\right)\right)
\end{split}
\end{align}
Thus we have
\begin{equation}\label{eq:closed-form}
	U_{\*x}^{(q)}(\mu_{\frac\theta\phi,\frac1\phi})=
	\begin{cases}
		\frac{b-a}2\phi-\log\left(\exp\left(\frac{b-a}2\phi\right)-\cosh\left(\theta-\frac{a+b}2\phi\right)\right) & \textrm{if}\quad\theta\in[a\phi,b\phi]\\
		\left|\theta-\frac{a+b}2\phi\right|-\log\left(\sinh\left(\frac{b-a}2\phi\right)\right) & \textrm{else}\\
	\end{cases}
\end{equation}

Suppose $\*x\in[\pm B]^n$ and has the optimal interval has separation $\psi>0$,  $\frac\theta\phi\in[\pm B]$, and $\frac1\phi\in[\sigma_{\min},\sigma_{\max}]$.
Then $\phi\in[1/\sigma_{\max},1/\sigma_{\min}]$ and $\theta\in[\pm B/\sigma_{\min}]$, and so
\begin{equation}\label{eq:laplace-bound}
U_{\*x}^{(q)}(\mu_{\frac\theta\phi,\frac1\phi})
\le\frac{2B}{\sigma_{\min}}+\log\frac{2\sigma_{\max}}{\psi}	
\end{equation}

For $\theta\not\in[a\phi,b\phi]$, the derivative w.r.t. $\theta$ always has magnitude 1.
Within the interval, the derivative w.r.t. $\theta$ is $-\frac{\sinh(\frac{a+b}2\phi-\theta)}{\exp(\frac{b-a}2\phi)-\cosh(\theta-\frac{a+b}2\phi)}$, which attains its extrema at the endpoints $a\phi$ and $b\phi$, where its magnitude is also 1.
Outside the interval, the derivative w.r.t. $\phi$ has magnitude 
\ifdefined\arxiv
\begin{align}
\begin{split}
\left|\frac{a+b}2\sign\left(\frac{a+b}2\phi-\theta\right)-\frac{b-a}2\coth\left(\frac{b-a}2\phi\right)\right|
&\le\frac{|a+b|}2+\frac{b-a}2\coth\left(\frac{b-a}2\phi\right)\\
&\le\frac{|a+b|}2+\frac{b-a}2\left(\frac{2/\phi}{(b-a)}+1\right)\\
&=\frac{|a+b|}2+\frac{b-a}2+\frac1\phi
\end{split}
\end{align}
\else
\begin{align}
\begin{split}
\left|\frac{a+b}2\sign\left(\frac{a+b}2\phi-\theta\right)-\frac{b-a}2\coth\left(\frac{b-a}2\phi\right)\right|
&\le\frac{|a+b|}2+\frac{b-a}2\coth\left(\frac{b-a}2\phi\right)\\
&\le\frac{|a+b|}2+\frac{b-a}2\left(\frac{2/\phi}{(b-a)}+1\right)
=\frac{|a+b|}2+\frac{b-a}2+\frac1\phi
\end{split}
\end{align}
\fi
while inside the interval the derivative w.r.t. $\phi$ is $\frac{b-a}2-\frac{(b-a)\exp(\frac{b-a}2\phi)-(a+b)\sinh(\frac{a+b}2\phi-\theta)}{2\left(\exp(\frac{b-a}2\phi)-\cosh(\frac{a+b}2\phi-\theta)\right)}
$, which again attains its extrema at the endpoints $a\phi$ and $b\phi$, yielding magnitudes 
\ifdefined\arxiv
\begin{align}
\begin{split}
\frac{b-a}2+\frac{b-a}2\left(\coth\left(\frac{b-a}2\phi\right)+1\right)+\frac{|a+b|}2
&\le\frac{b-a}2\left(\frac{2/\phi}{(b-a)}+3\right)+\frac{|a+b|}2\\
&\le\frac1\phi+\frac32(b-a)+\frac{|a+b|}2
\end{split}
\end{align}
\else
\begin{equation}
	\frac{b-a}2+\frac{b-a}2\left(\coth\left(\frac{b-a}2\phi\right)+1\right)+\frac{|a+b|}2
	\le\frac{b-a}2\left(\frac{2/\phi}{(b-a)}+3\right)+\frac{|a+b|}2
	\le\frac1\phi+\frac32(b-a)+\frac{|a+b|}2
\end{equation}
\fi
Thus we have
\begin{equation}
|\partial_\theta U_{\*x}^{(q)}(\mu_{\frac\theta\phi,\frac1\phi})|
\le1
\qquad\textrm{and}\qquad
|\partial_\phi U_{\*x}^{(q)}(\mu_{\frac\theta\phi,\frac1\phi})|
\le4B+\sigma_{\max}
\end{equation}
\ifdefined\arxiv\newpage\else\fi

\subsection{Public-private release}

\subsubsection{Guarantees}

\begin{Thm}\label{thm:pubpri}
	Suppose for $N\ge n$ we have a private dataset $\*x\sim\D^n$ and a public dataset $\*x'\sim{\D'}^N$, both drawn from $\kappa$-bounded distributions over $[\pm B]$.
	Use i.i.d. draws from the public dataset to construct $T=\lfloor N/n\rfloor$ datasets $\*x_t'\sim{\D'}^n$ and run online gradient descent on the resulting losses $\ell_{\*x_t'}(\theta,\psi)=\LSE_i(\ell_{\*x_t}^{(q_i)}(\theta_{[i]},\phi_{[i]}))$ over the parameter space $\theta\in[\pm B/\sigma_{\min}]^m$ starting at $\theta=\*0_m$ and $\phi\in[1/\sigma_{\max},1/\sigma_{\min}]^m$ starting at the midpoint, with stepsize $B\sqrt{\frac mT}$ for $\theta$ and $\frac{\sigma_{\max}-\sigma_{\min}}{4B+\sigma_{\max}}\sqrt{\frac mT}$ for $\phi$, obtaining iterates $(\theta_1,\phi_1),\dots,(\theta_T,\phi_T)$.
	Return the priors $\mu_i=\mu_{\frac{\bar\theta_{[i]}}{\bar\phi_{[i]}},\frac1{\bar\phi_{[i]}}}$ for $\bar\theta=\frac1T\sum_{t=1}^T\theta_t$ and $\bar\phi=\frac1T\sum_{t=1}^T\phi_t$ the average of these iterates.
	Then $\mu'=\begin{pmatrix}\mu_1&\cdots&\mu_m\end{pmatrix}$ satisfies 
	\ifdefined\arxiv
	\begin{align}
	\begin{split}
	&\E_{\*x\sim\D^n}U_{\*x}(\mu')\\
	&\quad\le\min_{\mu\in\Lap_{B,\sigma_{\min},\sigma_{\max}}^m}\E_{\*x\sim\D^n}U_{\*x}(\mu)+2\left(\frac{2B}{\sigma_{\min}}+\log\frac{4\kappa m(n+1)N\sigma_{\max}}{\beta'}\right)\TV_q(\D,\D')\\
	&\qquad+(B+4B\sigma_{\max}+\sigma_{\max}^2)\sqrt{\frac{m(n+1)}N}+2\left(\frac{4B}{\sigma_{\min}}+\log\frac{4\kappa m(n+1)N\sigma_{\max}}{\beta'}\right)\sqrt{\frac{2(n+1)}N\log\frac4{\beta'}}\\
	&\qquad+\frac{(n+1)\beta'}N\left(3+\frac{4B}{\sigma_{\min}}+4\log\frac{2\kappa(n+1)N\sqrt{2m\sigma_{\max}}}{\beta'}\right)
	\end{split}
	\end{align}
	\else
	\begin{align}
	\begin{split}
	\E_{\*x\sim\D^n}U_{\*x}(\mu')
	&\le\min_{\mu\in\Lap_{B,\sigma_{\min},\sigma_{\max}}^m}\E_{\*x\sim\D^n}U_{\*x}(\mu)+2\left(\frac{2B}{\sigma_{\min}}+\log\frac{4\kappa m(n+1)N\sigma_{\max}}{\beta'}\right)\TV_q(\D,\D')\\
	&\quad+(B+4B\sigma_{\max}+\sigma_{\max}^2)\sqrt{\frac{m(n+1)}N}+2\left(\frac{4B}{\sigma_{\min}}+\log\frac{4\kappa m(n+1)N\sigma_{\max}}{\beta'}\right)\sqrt{\frac{2(n+1)}N\log\frac4{\beta'}}\\
	&\quad+\frac{(n+1)\beta'}N\left(3+\frac{4B}{\sigma_{\min}}+4\log\frac{2\kappa(n+1)N\sqrt{2m\sigma_{\max}}}{\beta'}\right)
	\end{split}
	\end{align}
	\fi
	where $\Lap_{B,\sigma_{\min},\sigma_{\max}}$ is the set of Laplace priors with locations in $[\pm B]$ and scales in $[\sigma_{\min},\sigma_{\max}]$.
\end{Thm}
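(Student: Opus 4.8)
\emph{Proof plan.}
The argument has three stages: (i) set up a convex, Lipschitz surrogate and bound the online gradient descent (OGD) regret; (ii) convert this, with high probability, into a population guarantee over the public sampling distribution ${\D'}^n$; and (iii) transfer that guarantee to $\D^n$ via a change of measure controlled by $\TV_q(\D,\D')$. For stage (i), by the reparameterization of \citet{burridge1981note} discussed before Theorem~\ref{thm:impossibility}, each single-quantile loss $\ell_{\*x}^{(q_i)}(\theta_{[i]},\phi_{[i]})=U_{\*x}^{(q_i)}(\mu_{\theta_{[i]}/\phi_{[i]},1/\phi_{[i]}})$ is convex for the Laplace family, and since $\LSE$ is convex and coordinatewise nondecreasing, $\ell_{\*x_t'}=\LSE_i\ell_{\*x_t'}^{(q_i)}$ is jointly convex on the box $[\pm B/\sigma_{\min}]^m\times[1/\sigma_{\max},1/\sigma_{\min}]^m$. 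The closed form~\eqref{eq:closed-form} and the per-coordinate derivative bounds proved in Appendix~\ref{app:applications} ($|\partial_\theta U_{\*x}^{(q)}|\le1$, $|\partial_\phi U_{\*x}^{(q)}|\le 4B+\sigma_{\max}$), passed through the chain rule for $\LSE$ (whose Jacobian row is a convex combination), show the loss is $1$-Lipschitz in the $\theta$-block and $(4B+\sigma_{\max})$-Lipschitz in the $\phi$-block w.r.t.\ $\ell_2$; running OGD as two independent blocks with the stated step sizes gives, by the textbook bound, $\sum_{t=1}^T\ell_{\*x_t'}(\theta_t,\phi_t)-\min_{(\theta,\phi)}\sum_{t=1}^T\ell_{\*x_t'}(\theta,\phi)=\BigO\big((B+4B\sigma_{\max}+\sigma_{\max}^2)\sqrt{mT}\big)$, the two rates separating the cost of locating $m$ quantiles from that of fitting scales of size up to $\sigma_{\max}$.

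For stage (ii), treating $\*x_1',\dots,\*x_T'$ as i.i.d.\ draws from ${\D'}^n$ (for sampling without replacement invoke exchangeability), each iterate $(\theta_t,\phi_t)$ is independent of $\*x_t'$, so $\E[\ell_{\*x_t'}(\theta_t,\phi_t)\mid(\theta_t,\phi_t)]=\E_{\*z\sim{\D'}^n}\ell_{\*z}(\theta_t,\phi_t)$; Jensen plus convexity then give $\E_{\*z\sim{\D'}^n}U_{\*z}(\mu')\le\min_{(\theta,\phi)}\E_{\*z\sim{\D'}^n}\ell_{\*z}(\theta,\phi)+\mathrm{Regret}/T$ in expectation, the standard online-to-batch bound \citep{cesa-bianchi2004online2batch}. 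The one genuinely delicate point is upgrading this to a high-probability statement via Azuma--Hoeffding on $\sum_t(\ell_{\*x_t'}(\theta_t,\phi_t)-\E_{\*z}\ell_{\*z}(\theta_t,\phi_t))$, because $U_{\*x}^{(q)}(\mu)=-\log\mu((\*x_{[\lfloor qn\rfloor]},\*x_{[\lfloor qn\rfloor+1]}])$ diverges when two order statistics nearly collide. I would condition on the event $\mathcal G$ that, over all $T$ public datasets and $m$ quantiles, the relevant gap $\*x_{t[\lfloor q_in\rfloor+1]}'-\*x_{t[\lfloor q_in\rfloor]}'$ exceeds some $\delta$; since $\D'$ has density at most $\kappa$, such a gap is below $\delta$ with probability $\BigO(\kappa n\delta)$, so a union bound over $\le mT\le mN/n$ events permits $\delta$ polynomially small --- of order $\beta'/(\kappa m(n+1)N)$ --- with $\Pr[\mathcal G^c]$ a small multiple of $\beta'$. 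On $\mathcal G$, bound~\eqref{eq:laplace-bound} gives $\ell_{\*x_t'}\le\frac{2B}{\sigma_{\min}}+\log\frac{2\sigma_{\max}}{\delta}$, which produces the $\log\frac{\kappa m(n+1)N\sigma_{\max}}{\beta'}$ factors; off $\mathcal G$ I would bound the contribution by integrating $-\log(\mathrm{gap})$ against its ($\kappa$-bounded) density, giving the $\frac{(n+1)\beta'}{N}(\cdots)$ correction.

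For stage (iii), the key structural fact is that $U_{\*x}(\mu)$ depends on $\*x$ only through the tuple $\{(\*x_{[\lfloor q_in\rfloor]},\*x_{[\lfloor q_in\rfloor+1]})\}_{i=1}^m$, whose laws under $\D^n$ and ${\D'}^n$ are precisely what $\TV_q$ compares; writing $g$ for $U_{\cdot}(\mu')$ as a function of this tuple and truncating again on the good-gap event, $|\E_{\*x\sim\D^n}g-\E_{\*x'\sim{\D'}^n}g|\le 2\|g\|_\infty\TV_q(\D,\D')+(\mathrm{tail})$ with $\|g\|_\infty\le\frac{2B}{\sigma_{\min}}+\log\frac{2\sigma_{\max}}{\delta}$ on that event, and the same inequality applied to the $\D^n$-optimal prior shows $\min_\mu\E_{\*x'\sim{\D'}^n}U_{\*x'}(\mu)\le\min_\mu\E_{\*x\sim\D^n}U_{\*x}(\mu)+(\text{the same }\TV\text{ term})$. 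Chaining these displays, and using $1/T\le(n+1)/N$ to rewrite $\mathrm{Regret}/T$ as $\BigO\big((B+4B\sigma_{\max}+\sigma_{\max}^2)\sqrt{m(n+1)/N}\big)$, yields the stated bound; the $\BigO(mN)$ runtime follows since each of the $T$ steps needs only the $2m$ relevant order statistics of an $n$-point set (linear-time selection, $\BigO(mn)$) and $\BigO(m)$ arithmetic via~\eqref{eq:closed-form}. The main obstacle is exactly this pervasive unboundedness of $U_{\*x}$ near colliding order statistics: every $\BigO$ above conceals a truncation whose failure probability must be controlled simultaneously on the public side (both for online-to-batch and for the comparator) and on the private side (for the change of measure), and it is that bookkeeping, rather than any single inequality, that is responsible for the logarithmic and $\beta'$-scaled correction terms in the theorem.
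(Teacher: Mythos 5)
Your plan is correct and takes essentially the same approach as the paper: OGD regret with separate step sizes for the $\theta$- and $\phi$-blocks, online-to-batch conversion over the public sampling distribution after a truncation on a small-minimum-gap event whose probability is controlled via $\kappa$-boundedness, and a TV-based change of measure to the private distribution (applied both to the learned $\mu'$ and to the comparator $\mu^\ast$). The only cosmetic difference is that the paper packages the truncation by explicitly introducing the conditional distribution $\D_\psi'$ on $\{\psi_{\*z}\ge\psi\}$ and comparing $\E_{{\D'}^n}$ against $\E_{{\D_\psi'}^n}$, whereas you condition on a good-gap event $\mathcal G$ and integrate the $-\log(\mathrm{gap})$ tail off $\mathcal G$; these yield the same bookkeeping and the same $\log(\kappa m (n+1) N\sigma_{\max}/\beta')$ and $\beta'$-scaled correction terms.
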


\begin{proof}
	Define ${\D_\psi'}^n$ to be the conditional distribution over $\*z\sim{\D'}^n$ s.t. $\psi_{\*z}\ge\psi$, with associated density $\rho_\psi'(\*z)=\frac{\rho'(\*z)1_{\psi_{\*z}\ge\psi}}{1-p_\psi'}$, where $p_\psi'=\int_{\psi_{\*z}<\psi}\rho'(\*z)\le\kappa n^2\psi$.
	Then we have for any $\mu^\ast\in\Lap_{B,\sigma_{\min},\sigma_{\max}}^m$ that
	\ifdefined\arxiv
	\begin{align}
	\begin{split}
	&\E_{\*z\sim\D^n}U_{\*x}(\mu')\\
	&\quad=\E_{\*z\sim\D^n}U_{\*z}(\mu')-\E_{\*z\sim{\D'}^n}U_{\*z}(\mu')+\E_{\*z\sim{\D'}^n}U_{\*z}(\mu')-\E_{\*z\sim{\D_\psi'}^n}U_{\*z}(\mu')+\E_{\*z\sim{\D_\psi'}^n}U_{\*z}(\mu')\\
	&\quad\le\int U_{\*z}(\mu')(\rho(\*z)-\rho'(\*z))+\int U_{\*z}(\mu')(\rho'(\*z)-\rho_\psi'(\*x))+\E_{\*z\sim{\D_\psi'}^n}U_{\*z}(\mu^\ast)+\Err_\psi\\
	&\quad\le\E_{\*z\sim\D^n}U_{\*x}(\mu^\ast)+\int(U_{\*z}(\mu')+U_{\*x}(\mu^\ast))|\rho(\*z)-\rho'(\*x)|+\int(U_{\*z}(\mu')+U_{\*z}(\mu^\ast))|\rho'(\*z)-\rho_\psi'(\*z)|+\Err_\psi
	\end{split}
	\end{align}
	\else
	\begin{align}
	\begin{split}
		\E_{\*z\sim\D^n}U_{\*x}(\mu')
		&=\E_{\*z\sim\D^n}U_{\*z}(\mu')-\E_{\*z\sim{\D'}^n}U_{\*z}(\mu')+\E_{\*z\sim{\D'}^n}U_{\*z}(\mu')-\E_{\*z\sim{\D_\psi'}^n}U_{\*z}(\mu')+\E_{\*z\sim{\D_\psi'}^n}U_{\*z}(\mu')\\
		&\le\int U_{\*z}(\mu')(\rho(\*z)-\rho'(\*z))+\int U_{\*z}(\mu')(\rho'(\*z)-\rho_\psi'(\*x))+\E_{\*z\sim{\D_\psi'}^n}U_{\*z}(\mu^\ast)+\Err_\psi\\
		&\le\E_{\*z\sim\D^n}U_{\*x}(\mu^\ast)+\int(U_{\*z}(\mu')+U_{\*x}(\mu^\ast))|\rho(\*z)-\rho'(\*x)|+\int(U_{\*z}(\mu')+U_{\*z}(\mu^\ast))|\rho'(\*z)-\rho_\psi'(\*z)|+\Err_\psi
	\end{split}
	\end{align}
	\fi
	where $\Err_\psi$ is the error of running online gradient descent with the specified step-sizes on samples $\*z_t'\sim{\D_\psi'}^n$ for $t=1,\dots,T$.
	Now if $\*z$ has entries drawn i.i.d. from a $\kappa$-bounded distribution $\D^n$ (or ${\D'}^n$), then we have that 
	\begin{equation}
		\int_0^\psi\rho_{\psi_{\*z}}(y)dy=\Pr(\psi_{\*z}
		\le\psi:\*z\sim\D^n)\le n(n-1)\max_{z\in\R}\Pr(|z-z'|\le\psi:z'\sim\D)
		\le\kappa n^2\psi
	\end{equation}
	where $\rho_{\psi_{\*z}}$ is the density of $\psi_{\*z}$ for $\*z\sim\D^n$ (not to be confused with the conditional density $\rho_\psi$ over $\*z$);
	the same holds for the analog $\rho_{\psi_{\*z}}'$ for ${\D'}^n$.
	Since this holds for all $\psi\ge0$ and $\log\frac1y$ is monotonically decreasing on $y>0$, this means the worst-case measure that $\rho_{\psi_{\*z}}$ can be is constant over $[0,\psi]$ and thus $\int_0^\psi\rho_{\psi_{\*z}}(y)\log\frac1ydy\le\kappa n^2\int_0^\psi\log\frac1ydy=\kappa n^2\psi(1+\log\frac1\psi)$, and similarly for $\rho_{\psi_{\*z}}'$.
	We then bound the first integral, noting that $U_{\*z}=\LSE_i(U_{\*z}^{(q_i)})\le\max_iU_{\*z}^{(q_i)}+\log m\le\frac{2B}{\sigma_{\min}}+\log\frac{2m\sigma_{\max}}{\psi_{\*z}}$ and that the r.v. $\psi_{\*z}$ depends only on the joint distribution over the order statistics of $\D^n$ and ${\D'}^n$:
	\ifdefined\arxiv
	\begin{align}
	\begin{split}
	\int(U_{\*z}(\mu')+U_{\*z}(\mu^\ast))&|\rho(\*z)-\rho'(\*z)|\\
	&\le\int\left(\frac{2B}{\sigma_{\min}}+\log\frac{2m\sigma_{\max}}{\psi_{\*z}}\right)|\rho(\*z)-\rho'(\*z)|\\
	&\le2\left(\frac{2B}{\sigma_{\min}}+\log\frac{2m\sigma_{\max}}\psi\right)\TV_q(\D,\D')+\int_{\psi_{\*z}<\psi}|\rho(\*z)-\rho'(\*z)|\log\frac1{\psi_{\*z}}\\
	&\le2\left(\frac{2B}{\sigma_{\min}}+\log\frac{2m\sigma_{\max}}\psi\right)\TV_q(\D,\D')+\int_0^\psi(\rho_{\psi_{\*z}}(y)+\rho_{\psi_{\*z}}'(y))\log\frac1ydy\\
	&\le2\left(\frac{2B}{\sigma_{\min}}+\log\frac{2m\sigma_{\max}}\psi\right)\TV_q(\D,\D')+2\kappa n^2\psi\left(1+\log\frac1\psi\right)\vspace{-1mm}
	\end{split}
	\end{align}
	\else
	\begin{align}
	\begin{split}
		\int(U_{\*z}(\mu')+U_{\*z}(\mu^\ast))|\rho(\*z)-\rho'(\*z)|
		&\le\int\left(\frac{2B}{\sigma_{\min}}+\log\frac{2m\sigma_{\max}}{\psi_{\*z}}\right)|\rho(\*z)-\rho'(\*z)|\\
		&\le2\left(\frac{2B}{\sigma_{\min}}+\log\frac{2m\sigma_{\max}}\psi\right)\TV_q(\D,\D')+\int_{\psi_{\*z}<\psi}|\rho(\*z)-\rho'(\*z)|\log\frac1{\psi_{\*z}}\\
		&\le2\left(\frac{2B}{\sigma_{\min}}+\log\frac{2m\sigma_{\max}}\psi\right)\TV_q(\D,\D')+\int_0^\psi(\rho_{\psi_{\*z}}(y)+\rho_{\psi_{\*z}}'(y))\log\frac1ydy\\
		&\le2\left(\frac{2B}{\sigma_{\min}}+\log\frac{2m\sigma_{\max}}\psi\right)\TV_q(\D,\D')+2\kappa n^2\psi\left(1+\log\frac1\psi\right)\vspace{-1mm}
	\end{split}
	\end{align}
	\fi
	For the second integral we have for $p_\psi'=\int_{\psi_{\*z}<\psi}\rho'(\*z)\le\kappa n^2\psi$ that\vspace{-1mm}
	\begin{align}
	\begin{split}
		\int(U_{\*z}(\mu')&+U_{\*z}(\mu^\ast))|\rho'(\*z)-\rho_\psi'(\*z)|\\
		&=\int_{\psi_{\*z}\ge\psi}(U_{\*x}(\mu')+U_{\*z}(\mu^\ast))\left|\rho'(\*z)-\frac{\rho'(\*z)}{1-p_\psi'}\right|+\int_{\psi_{\*z}<\psi}(U_{\*z}(\mu')+U_{\*z}(\mu^\ast))\rho'(\*z)\\
		&=\frac{2p_\psi'}{1-p_\psi'}\int_{\psi_{\*z}\ge\psi}\left(\frac{2B}{\sigma_{\min}}+\log\frac{2m\sigma_{\max}}\psi\right)\rho'(\*z)+\int_{\psi_{\*z}<\psi}\left(\frac{2B}{\sigma_{\min}}+\log\frac{2m\sigma_{\max}}{\psi_{\*z}}\right)\rho'(\*z)\\
		&=2p_\psi'\left(\frac{4B}{\sigma_{\min}}+\log\frac{4m^2\sigma_{\max}^2}\psi\right)+\int_{\psi_{\*z}<\psi}\rho'(\*z)\log\frac1{\psi_{\*z}}\\
		&\le2\kappa n^2\psi\left(\frac{4B}{\sigma_{\min}}+\log\frac{4m^2\sigma_{\max}^2}\psi\right)+\kappa n^2\psi\left(1+\log\frac1\psi\right)\vspace{-1mm}
	\end{split}
	\end{align}
	Finally, we bound $\Err_\psi$.
	By $\kappa$-boundedness of $\D'$, the probability that $\exists~t\in[T]$ s.t. $\psi_{\*z_t'}<\psi~\forall~t\in[T]$ is at most $\kappa n^2T\psi$, so if we set $\psi=\frac{\beta'}{2\kappa n^2T}$ then w.p. $\ge1-\beta'/2$ the sampling $\*z_t'$ from $\*x'$ as specified is equivalent to rejection sampling from ${\D_\psi'}^n$, on which the functions $U_{\*z}$ are bounded by $\frac{2B}{\sigma_{\min}}+\log\frac{2m\sigma_{\max}}\psi$.
	Therefore with probability $\ge 1-\beta'/2$ by \citet[Theorem~2.21]{shalev-shwartz2011oco} and Theorem~\ref{thm:o2b} we have that w.p. $1-\beta'/2$\vspace{-1mm}
	\begin{align}
	\begin{split}
	\Err_\psi
	&\le(B+(\sigma_{\max}-\sigma_{\min})(4B+\sigma_{\max}))\sqrt{\frac mT}+2\left(\frac{4B}{\sigma_{\min}}+\log\frac{2m\sigma_{\max}}\psi\right)\sqrt{\frac2T\log\frac4{\beta'}}\\
	&=(B+4B\sigma_{\max}+\sigma_{\max}^2)\sqrt{\frac{m(n+1)}N}+2\left(\frac{4B}{\sigma_{\min}}+\log\frac{2m\sigma_{\max}}\psi\right)\sqrt{\frac{2(n+1)}N\log\frac4{\beta'}}\vspace{-1mm}
	\end{split}
	\end{align}
	Combining terms and substituting the selected value for $\psi$ yields the result.
\end{proof}
\ifdefined\arxiv\newpage\else\fi

\subsubsection{Experimental details}\label{sec:pubpri-details}

For our public-private experiments we evaluate several methods on the Adult (``age" and ``hours" categories) and Goodreads (``rating" and ``page count" categories).
For the former we use the train set as the public data, while for the latter we use the ``History" genre as the public data and the ``Poetry" genre as the private data~\cite{wan2018goodreads}.
The public data are used to fit Laplace location and scale parameters using the COCOB optimizer run until progress stops.
We use the implementation here: \url{https://github.com/anandsaha/nips.cocob.pytorch}.
All evaluations are averages of forty trials.\looseness-1

We use the following reasonable guesses for locations $\nu$, scales $\sigma$, and quantile ranges $[a,b]$ for these distributions:
\ifdefined\arxiv
\begin{itemize}[itemsep=1pt]
\else
\begin{itemize}[leftmargin=*,topsep=-2pt,noitemsep]\setlength\itemsep{1pt}
\fi
	\item age: $\nu=40$, $\sigma=5$, $a=10$, $b=120$
	\item hours: $\nu=40$, $\sigma=2$, $a=0$, $b=168$
	\item rating: $\nu=2.5$, $\sigma=0.5$, $a=0$, $b=5$
	\item page count: $\nu=200$, $\sigma=25$, $a=0$, $b=\frac{1000}{1-q}$
\end{itemize}
Note that, here and elsewhere, using $q$-dependent range for $b$ only helps the Uniform prior, which is the baseline.
The scales $\sigma$ are used to set the scale parameter of the Cauchy distribution for \texttt{public quantiles}---its location is fixed by the public quantiles.
Meanwhile the locations $\nu$ are used to set to {\em scale} parameter of the half-Cauchy prior used to mix with \texttt{PubFit} for robustness (using coefficient 0.1 on the robust prior).
We choose this prior because the data are all nonnegative.

\subsection{Sequential release}

\subsubsection{Guarantees}

\begin{Thm}\label{thm:sequential}
	Consider a sequence of datasets $\*x_t\in[\pm R]^{n_t}$ and associated feature vectors $\*f_t\in[\pm F]^d$.
	Suppose we set the component priors $\mu_{t,i}$ of $\mu_t$ as the Laplace distributions $\mu_{t,i}=\mu_{\frac{\langle\*v_{t,i}\*f_t\rangle}{\phi_{t,i}},\frac1{\phi_{t,i}}}$, where $\*v_{t,i}\in[\pm B/\sigma_{\min}]^d$ and $\phi_i\in[1/\sigma_{\max},1/\sigma_{\min}]$ are determined by separate runs of DP-FTRL with budgets $(\varepsilon'/2,\delta'/2)$ and step-sizes $\eta_1=\frac B{F\sigma_{\min}}\sqrt{\frac{2m\varepsilon_1'}{\lceil\log_2(T+1)\rceil T\left(1+\sqrt{2md\log\frac T{\beta'}\log\frac1{\delta'}}\right)}}$, and $\eta_2=\frac{1/\sigma_{\min}}{B+\sigma_{\max}}\sqrt{\frac{m\varepsilon_2'}{2\lceil\log_2(T+1)\rceil T\left(1+\sqrt{2m\log\frac T{\beta'}\log\frac1{\delta'}}\right)}}$.
	Then we have regret
	\ifdefined\arxiv
	\begin{align}
	\begin{split}
		\max_{\begin{smallmatrix}\*w_i\in[\pm B]^d\\\sigma_i\in[\sigma_{\min},\sigma_{\max}]\end{smallmatrix}}&\sum_{t=1}^TU_{\*x_t}(\mu_t)-U_{\*x_t}(\mu_{\langle\*w_i,\*f_t\rangle,\sigma_i})\\
		&\le\frac{B(F+1)+\sigma_{\max}}{\sigma_{\min}}\sqrt{md\lceil\log_2(T+1)\rceil T\left(4+\frac8{\varepsilon'}\sqrt{2md\log\frac T{\beta'}\log\frac2{\delta'}}\right)}
	\end{split}
	\end{align}
	\else
	\begin{equation}
		\max_{\begin{smallmatrix}\*w_i\in[\pm B]^d\\\sigma_i\in[\sigma_{\min},\sigma_{\max}]\end{smallmatrix}}\sum_{t=1}^TU_{\*x_t}(\mu_t)-U_{\*x_t}(\mu_{\langle\*w_i,\*f_t\rangle,\sigma_i})
		\le\frac{B(F+1)+\sigma_{\max}}{\sigma_{\min}}\sqrt{md\lceil\log_2(T+1)\rceil T\left(4+\frac8{\varepsilon'}\sqrt{2md\log\frac T{\beta'}\log\frac2{\delta'}}\right)}
	\end{equation}
	\fi
	For sufficiently small $\varepsilon'$ (including $\varepsilon'\le 1$) we can instead simplify the regret to
	\begin{equation}
		\frac4{\sigma_{\min}}\left(BFd^\frac34+B+\sigma_{\max}\right)\sqrt{\frac{m\lceil\log_2(T+1)\rceil T}{\varepsilon'}\sqrt{2m\log\frac T{\beta'}\log\frac2{\delta'}}}
	\end{equation}
\end{Thm}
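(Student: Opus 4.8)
The plan is to recognize the procedure as two coupled runs of DP-FTRL on one convex loss and to add up their linearized regrets. First I would assemble the structural facts already established: for the Laplace base measure $f$, which is log-concave, the reparameterized single-quantile loss $\ell_{\*x_t}^{(q_i)}(\langle\*v_i,\*f_t\rangle,\phi_i)=U_{\*x_t}^{(q_i)}(\mu_{\langle\*v_i,\*f_t\rangle/\phi_i,1/\phi_i})$ is convex in $(\*v_i,\phi_i)$ (Burridge's reparameterization composed with the affine map $\*v_i\mapsto\langle\*v_i,\*f_t\rangle$), so the full loss $\ell_{\*x_t}=U_{\*x_t}(\mu_t)=\log\sum_i\exp(\ell_{\*x_t}^{(q_i)})$ is convex in the stacked variable because $\log$-sum-$\exp$ is convex and coordinatewise nondecreasing. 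I would also record, from the closed-form expression~\eqref{eq:closed-form}, the gradient bounds $|\partial_\theta\ell_{\*x_t}^{(q)}|\le1$ and $|\partial_\phi\ell_{\*x_t}^{(q)}|\le 4B+\sigma_{\max}$, together with the fact that $\partial_{\theta_i}\ell_{\*x_t}$ is a softmax weight times $\partial_\theta\ell_{\*x_t}^{(q_i)}$; since the softmax weights lie in the simplex, the $\*v$-gradient block has norm controlled by $\sup_t\|\*f_t\|$ and the $\phi$-block by $4B+\sigma_{\max}$.

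Writing $u_t=(\{\*v_{t,i}\},\{\phi_{t,i}\})$ for the round-$t$ iterate and $u^\ast=(\{\*w_i/\sigma_i\},\{1/\sigma_i\})$ for the reparameterized comparator, convexity gives $\sum_t\ell_{\*x_t}(u_t)-\ell_{\*x_t}(u^\ast)\le\sum_t\langle\nabla\ell_{\*x_t}(u_t),u_t-u^\ast\rangle=\sum_t\langle\nabla_{\*v}\ell_{\*x_t},\*v_t-\*v^\ast\rangle+\sum_t\langle\nabla_\phi\ell_{\*x_t},\phi_t-\phi^\ast\rangle$, and since the $\*v$-block and the $\phi$-block are driven by independent DP-FTRL instances with quadratic regularizers, each of these two sums is exactly the linearized regret of the corresponding run, which Theorem~\ref{thm:dpftrl} controls. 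For the $\*v$-block I would substitute the regularizer gap $\tfrac12\|\*v^\ast\|_2^2$ over the stated domain, the Lipschitz constant $L=\sup_t\|\*f_t\|$, the Gaussian width $G=\BigO(\sqrt{md})$ with $C=1$, and the $\ell_2$-gradient-sensitivity $\Delta_2\le 2L$ (changing one entry shifts every $\Gap_{q_i}$ by at most one, which at worst flips a gradient's sign, as the surrounding discussion notes). For the $\phi$-block the analogues are the interval $[1/\sigma_{\max},1/\sigma_{\min}]$, Lipschitz constant $\BigO(B+\sigma_{\max})$, Gaussian width $\BigO(\sqrt m)$, and sensitivity $\BigO(B+\sigma_{\max})$. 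Taking the noise scale $\sigma=\tfrac2{\varepsilon'}\sqrt{2\lceil\log_2 T\rceil\log\tfrac2{\delta'}}$ in each run makes it $(\varepsilon'/2,\delta'/2)$-DP by Theorem~\ref{thm:dpftrl}, so basic composition makes the whole procedure $(\varepsilon',\delta')$-DP.

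It then remains to substitute the stated step-sizes $\eta_1,\eta_2$, which are precisely the minimizers of the resulting ``$A/\eta+\eta B T$''-type bounds, so that each block contributes $2\sqrt{AB T}$ with the block-specific constants; the $\*v$-block produces the $BFd^{3/4}$-type prefactor and the $\phi$-block the $B+\sigma_{\max}$ prefactor, and bounding both by the common factor $\sqrt{(m\lceil\log_2(T+1)\rceil T/\varepsilon')\sqrt{2m\log(T/\beta')\log(2/\delta')}}$ collapses the two $\BigO(\sqrt T)$ terms into the displayed form. The ``sufficiently small $\varepsilon'$'' simplification holds because $\sigma\propto1/\varepsilon'$, so inside $L\bigl(L+(G+C\sqrt{2\log(T/\beta')})\sigma\Delta_2\sqrt{\lceil\log_2 T\rceil}\bigr)T$ the pure-$L^2$ term is dominated by the noise term and the ``$1+$'' under the outer square root can be discarded.

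The step I expect to be the main obstacle is pinning down the $\ell_2$-sensitivities, since these fix the exact polynomial dependence on $d$ and $m$: the surviving bound $\Delta_2\le 2L$ is driven by the fact that a single-entry change can preserve a gradient's magnitude while flipping its sign, and (as the text remarks) it is unclear whether anything sharper is available for $U_{\*x_t}$ over location--scale priors. A secondary bookkeeping point is keeping the two distinct regularizer scales straight in the regret algebra and checking that the comparator class---locations $\langle\*w_i,\*f_t\rangle$ with $\*w_i\in[\pm B]^d$ and scales in $[\sigma_{\min},\sigma_{\max}]$---maps, under $(\*v_i,\phi_i)=(\*w_i/\sigma_i,1/\sigma_i)$, into the domains over which the two DP-FTRL runs operate.
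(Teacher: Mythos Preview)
Your proposal is correct and follows essentially the same approach as the paper's proof: both linearize the convex $\LSE$ loss and split the linearized regret into a $\*v$-block and a $\phi$-block, apply the DP-FTRL guarantee (Theorem~\ref{thm:dpftrl}) separately to each with the Euclidean regularizer, bound the block gradients via the softmax-weight observation (yielding $\sum_j\|\nabla_{\*v_j}\ell_{\*x_t}\|_2^2\le F^2d$ and $\sum_j(\partial_{\phi_j}\ell_{\*x_t})^2\le(4B+\sigma_{\max})^2$), and then substitute the stated step-sizes to balance the two $A/\eta+\eta BT$ terms. Your remarks about the sensitivity bookkeeping and the comparator reparameterization are also on target; the paper handles these implicitly, using $\Delta_2\le 2L$ and the domains $\*v_i\in[\pm B/\sigma_{\min}]^d$, $\phi_i\in[1/\sigma_{\max},1/\sigma_{\min}]$ exactly as you describe.
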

\begin{proof}
	Note that
	\begin{equation}
		\sum_{j=1}^m\|\nabla_{\*v_j}\LSE_i(\ell_{\*x_t,\*f_t}^{(q_i)})\|_2^2
		\le\|\*f_t\|_2^2\sum_{j=1}^m\left(\frac{\exp(\ell_{\*x_t,\*f_t}^{(q_j)})}{\sum_{i=1}^m\exp(\ell_{\*x_t,\*f_t}^{(q_i)})}\right)^2
		\le F^2d
	\end{equation}
	and 
	\begin{equation}
		\sum_{j=1}^m(\partial_{\phi_j}\LSE_i(\ell_{\*x_t,\*f_t}^{(q_i)}))^2
		\le(4B+\sigma_{\max})^2\sum_{j=1}^m\left(\frac{\exp(\ell_{\*x_t,\*f_t}^{(q_j)})}{\sum_{i=1}^m\exp(\ell_{\*x_t,\*f_t}^{(q_i)})}\right)^2
		\le(4B+\sigma_{\max})^2
	\end{equation}
	and so applying Theorem~\ref{thm:dpftrl} twice with the assumed budgets and step-sizes yields
	\ifdefined\arxiv
	\begin{align}
	\begin{split}
	&\max_{\begin{smallmatrix}\*w_i\in[\pm B]^d\\\sigma_i\in[\sigma_{\min},\sigma_{\max}]\end{smallmatrix}}\sum_{t=1}^TU_{\*x_t}(\mu_t)-U_{\*x_t}(\mu_{\langle\*w_i,\*f_t\rangle,\sigma_i})\\
	&\quad=\max_{\begin{smallmatrix}\*v_i\in[\pm\frac B{\sigma_{\min}}]^d\\\phi_i\in[\frac1\sigma_{\max},\frac1\sigma_{\min}]\end{smallmatrix}}\sum_{t=1}^T\LSE_i(\ell_{\*x_t,\*f_t}^{(q_i)}(\*v_{t,i},\phi_{t,i}))-\LSE_i(\ell_{\*x_t,\*f_t}^{(q_i)}(\*v_i,\phi_i))\\
	&\quad\le\sum_{i=1}^m\frac{\|\*v_{1,i}-\*v_i\|_2^2}{2\eta_1}+\eta_1\lceil\log_2(T+1)\rceil T\left(1+\frac2{\varepsilon'}\sqrt{2md\log\frac T{\beta'}\log\frac2{\delta'}}\right)\sum_{j=1}^m\|\nabla_{\*v_j}\LSE_i(\ell_{\*x_t,\*f_t}^{(q_i)})\|_2^2\\
	&\quad\qquad+\sum_{i=1}^m\frac{(\phi_{1,i}-\phi_i)^2}{2\eta_2}+\eta_2\lceil\log_2(T+1)\rceil T\left(1+\frac2{\varepsilon'}\sqrt{2m\log\frac T{\beta'}\log\frac2{\delta'}}\right)\sum_{j=1}^m(\partial_{\phi_j}\LSE_i(\ell_{\*x_t,\*f_t}^{(q_i)}))^2\\
	&\quad\le\frac{2B^2md}{\eta_1\sigma_{\min}^2}+\eta_1\lceil\log_2(T+1)\rceil TF^2d\left(1+\frac2{\varepsilon'}\sqrt{2md\log\frac T{\beta'}\log\frac2{\delta'}}\right)\\
	&\quad\qquad+\frac m{2\eta_2\sigma_{\min}^2}+\eta_2\lceil\log_2(T+1)\rceil T(B+\sigma_{\max})^2\left(1+\frac2{\varepsilon'}\sqrt{2m\log\frac T{\beta'}\log\frac2{\delta'}}\right)\\
	&\quad\le\frac{2BF}{\sigma_{\min}}\sqrt{2md\lceil\log_2(T+1)\rceil T\left(1+\frac2{\varepsilon'}\sqrt{2md\log\frac T{\beta'}\log\frac2{\delta'}}\right)}\\
	&\quad\qquad+\frac2{\sigma_{\min}}(B+\sigma_{\max})\sqrt{2m\lceil\log_2(T+1)\rceil T\left(1+\frac2{\varepsilon'}\sqrt{2m\log\frac T{\beta'}\log\frac2{\delta'}}\right)}\\
	&\quad\le\frac2{\sigma_{\min}}\left(B(F+1)+\sigma_{\max}\right)\sqrt{md\lceil\log_2(T+1)\rceil T\left(1+\frac2{\varepsilon'}\sqrt{2md\log\frac T{\beta'}\log\frac2{\delta'}}\right)}
	\end{split}
	\end{align}
	\else
	\begin{align}
	\begin{split}
		\max_{\begin{smallmatrix}\*w_i\in[\pm B]^d\\\sigma_i\in[\sigma_{\min},\sigma_{\max}]\end{smallmatrix}}&\sum_{t=1}^TU_{\*x_t}(\mu_t)-U_{\*x_t}(\mu_{\langle\*w_i,\*f_t\rangle,\sigma_i})
		=\max_{\begin{smallmatrix}\*v_i\in[\pm\frac B{\sigma_{\min}}]^d\\\phi_i\in[\frac1\sigma_{\max},\frac1\sigma_{\min}]\end{smallmatrix}}\sum_{t=1}^T\LSE_i(\ell_{\*x_t,\*f_t}^{(q_i)}(\*v_{t,i},\phi_{t,i}))-\LSE_i(\ell_{\*x_t,\*f_t}^{(q_i)}(\*v_i,\phi_i))\\
		&\le\sum_{i=1}^m\frac{\|\*v_{1,i}-\*v_i\|_2^2}{2\eta_1}+\eta_1\lceil\log_2(T+1)\rceil T\left(1+\frac2{\varepsilon'}\sqrt{2md\log\frac T{\beta'}\log\frac2{\delta'}}\right)\sum_{j=1}^m\|\nabla_{\*v_j}\LSE_i(\ell_{\*x_t,\*f_t}^{(q_i)})\|_2^2\\
		&\qquad+\sum_{i=1}^m\frac{(\phi_{1,i}-\phi_i)^2}{2\eta_2}+\eta_2\lceil\log_2(T+1)\rceil T\left(1+\frac2{\varepsilon'}\sqrt{2m\log\frac T{\beta'}\log\frac2{\delta'}}\right)\sum_{j=1}^m(\partial_{\phi_j}\LSE_i(\ell_{\*x_t,\*f_t}^{(q_i)}))^2\\
		&\le\frac{2B^2md}{\eta_1\sigma_{\min}^2}+\eta_1\lceil\log_2(T+1)\rceil TF^2d\left(1+\frac2{\varepsilon'}\sqrt{2md\log\frac T{\beta'}\log\frac2{\delta'}}\right)\\
		&\qquad+\frac m{2\eta_2\sigma_{\min}^2}+\eta_2\lceil\log_2(T+1)\rceil T(B+\sigma_{\max})^2\left(1+\frac2{\varepsilon'}\sqrt{2m\log\frac T{\beta'}\log\frac2{\delta'}}\right)\\
		&\le\frac{2BF}{\sigma_{\min}}\sqrt{2md\lceil\log_2(T+1)\rceil T\left(1+\frac2{\varepsilon'}\sqrt{2md\log\frac T{\beta'}\log\frac2{\delta'}}\right)}\\
		&\qquad+\frac2{\sigma_{\min}}(B+\sigma_{\max})\sqrt{2m\lceil\log_2(T+1)\rceil T\left(1+\frac2{\varepsilon'}\sqrt{2m\log\frac T{\beta'}\log\frac2{\delta'}}\right)}\\
		&\le\frac2{\sigma_{\min}}\left(B(F+1)+\sigma_{\max}\right)\sqrt{md\lceil\log_2(T+1)\rceil T\left(1+\frac2{\varepsilon'}\sqrt{2md\log\frac T{\beta'}\log\frac2{\delta'}}\right)}
	\end{split}
	\end{align}
	\fi
\end{proof}
\ifdefined\arxiv\newpage\else\fi

\subsubsection{Experimental details}\label{sec:online-details}

For sequential release we consider the following tasks:
\begin{itemize}
	\item Synthetic is a stationary dataset generation scheme in which we randomly sample a one standard Gaussian vector $\*a$ for each feature dimension (we use ten) and another $\*b$ of size $m+2$, which we sort.
	On each day $t$ of $T$ we sample the public feature vector $\*f_t$, also from a standard normal, and the ``ground truth" quantiles $q_i$ on that day are then set by $\langle\*a,\*f_t\rangle+\*b_{[i+1]}$.
	We generate the actual data by sampling from the uniform distributions on $[\langle\*a,\*f_t\rangle+\*b_{[i]},\langle\*a,\*f_t\rangle+\*b_{[i+1]}]$.
	The number of points we sample is determined by $\lfloor 100/(m+1)\rfloor$ plus different Poisson-distributed random variable for each;
	in the ``noiseless" setting used in Figure~\ref{fig:timeplots} (left) the Poisson's scale is zero, so the ``ground truth" quantiles are correct for the dataset, while for Figure~\ref{fig:online-logepsplots} (left) we use a Poisson with scale five.
	For the noiseless setting we use 100K timesteps, while for the noisy setting we use 2500.
	\item CitiBike consists of data downloaded from here: \url{https://s3.amazonaws.com/tripdata/index.html},
	We take the period from September 2015 through November 2022, which is roughly 2500 days, although days with less than ten trips---seemingly data errors---are ignored.
	For each day we include a feature vector containing seven dimensions for the day of the week, one dimension for a sinusoidal encoding of the day of the year, and six weather features from the Central Park station downloaded from here \url{https://www.ncei.noaa.gov/cdo-web/}, specifically average wind speed, precipitation, snowfall, snow depth, maximum temperature, and minimum temperature.
	These are scaled to lie within similar ranges.
	\item BBC consists of Reddit's \texttt{worldnews} corpus downloaded from here: \url{https://zissou.infosci.cornell.edu/convokit/datasets/subreddit-corpus/corpus-zipped/}.
	We find all conversations corresponding to a post of a BBC article, specified by the domain \url{bbc.co.uk}, and collect those with at least ten comments.
	We compute the Flesch readability score of each comment using the package here \url{https://github.com/textstat/textstat}.
	The datasets for computing quantiles are then the collection of scores for each headline;
	the size is roughly 10K, corresponding to articles between 2008 and 2018.
	As features we combine a seven-dimensional day-of-the-week encoding, sinusoidal features for the day of the year and the time of day of the post, information about the post itself (whether it is gilded, its own Flesch score, and the number of tokens), and finally a 25-dimensional embedding of the title, set using a normalized sum of GloVe embeddings~\citep{pennington2014glove} of the tokens, excluding English stop-words via NLTK~\citep{loper2002nltk}.
\end{itemize}

We again use reasonable guesses of data information to set the static priors, and to initialized the learning schemes.
\ifdefined\arxiv
\begin{itemize}[itemsep=1pt]
\else
\begin{itemize}[leftmargin=*,topsep=-2pt,noitemsep]\setlength\itemsep{1pt}
\fi
	\item Synthetic: $\nu=0$, $\sigma=1$, $a=-100$, $b=100$
	\item CitiBike: $\nu=10$, $\sigma=1$, $a=0$, $b=50/(1-q)$
	\item BBC: $\nu=50$, $\sigma=10$, $a=-100-100/(1-q)$, $b=100+100q$
\end{itemize}
We use $a$ and $b$ for the static Uniform distributions, $\nu$ and $\sigma$ for the static Cauchy distributions, in the case of nonnegative data (CitiBike) we use $\nu$ for the {\em scale} of the half-Cauchy distribution, and for the learning schemes we initialize their Laplace priors to be centered at $\nu$ with scale $\sigma$.
We again use the COCOB optimizer for non-private and proxy learning, and for robustness we mix with the Cauchy (or half-Cauchy for nonnegative data) with coefficient 0.1 on the robust prior.
For the \texttt{PubPrev} method, we set its scale using $\sigma$.
For DP-FTRL, we heavily tune it to show the possibility of learning on the synthetic task;
the implementation is adapted from the one here: \url{https://github.com/google-research/DP-FTRL}.
All results are reported as averages over forty trials.\looseness-1

\newpage
\section{Additional proofs for multiple quantile release}

\begin{Lem}\label{lem:error}
    In Algorithm~\ref{alg:quantiles}, for any $i\in[m]$ we have
    \begin{enumerate}
        \item $\Gap_{\tilde q_i}(\*{\hat x}_i,o)
        \le\Gap_{q_i}(\*x,o)+\hat\gamma_i~\forall~o\in\R$
        \item $\Gap_{q_i}(\*x,o)
        \le\Gap_{\tilde q_i}(\*{\hat x}_i,o)+\hat\gamma_i~\forall~o\in[\hat a_i,\hat b_i]$
    \end{enumerate}
    where $\hat\gamma_i=(1-\tilde q_i)\Gap_{\underline q_i}(\*x,\hat a_i)+\tilde q_i\Gap_{\overline q_i}(\*x,\hat b_i)$.
\end{Lem}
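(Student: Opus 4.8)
The plan is to rewrite both $\Gap$ values in terms of the rank function $N(o)=|\{j:\*x_{[j]}<o\}|$ of the full dataset, so that the two inequalities become elementary once I compare two scalar ``target ranks''. Recall that $\Gap_{q_i}(\*x,o)=|N(o)-\lfloor q_in\rfloor|$, and likewise $\Gap_{\tilde q_i}(\*{\hat x}_i,o)=|\hat N_i(o)-\lfloor\tilde q_i\hat n_i\rfloor|$, where $\hat N_i$ is the rank function of $\*{\hat x}_i$ and $\hat n_i=|\*{\hat x}_i|$.

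First I would set $A_i=N(\hat a_i)$ and $B_i=N(\hat b_i)$, the ranks at which the two ancestor estimates fall in $\*x$ (with $N(a),N(b)$ in the root-edge cases; when $\hat a_i\ge\hat b_i$ the subdataset is empty and both inequalities hold trivially). Since $\*{\hat x}_i$ is exactly the entries of $\*x$ in $(\hat a_i,\hat b_i)$, we have $\hat n_i=B_i-A_i$ and $(\*{\hat x}_i)_{[k]}=\*x_{[A_i+k]}$, so for every $o\in\R$
\begin{equation}
\hat N_i(o)=\min\{\max\{N(o)-A_i,\,0\},\,\hat n_i\},\qquad\text{equivalently}\qquad\hat N_i(o)+A_i=\Proj_{[A_i,B_i]}(N(o)).
\end{equation}
Writing $\tau_i=A_i+\lfloor\tilde q_i\hat n_i\rfloor$ for the ``induced'' target rank, this gives $\Gap_{\tilde q_i}(\*{\hat x}_i,o)=|\Proj_{[A_i,B_i]}(N(o))-\tau_i|$. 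Since $\tilde q_i\in(0,1)$ forces $0\le\lfloor\tilde q_i\hat n_i\rfloor\le\hat n_i$, we get $\tau_i\in[A_i,B_i]$, hence clamping $N(o)$ to $[A_i,B_i]$ can only bring it closer to $\tau_i$ (or leave it unchanged):
\begin{equation}
|\Proj_{[A_i,B_i]}(N(o))-\tau_i|\le|N(o)-\tau_i|\qquad\forall~o\in\R,
\end{equation}
with equality whenever $N(o)\in[A_i,B_i]$ — in particular, by monotonicity of $N$, for all $o\in[\hat a_i,\hat b_i]$.

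The heart of the argument is then the scalar bound $|\tau_i-\lfloor q_in\rfloor|\le\hat\gamma_i$. To obtain it I would write $\tau_i=(1-\tilde q_i)A_i+\tilde q_iB_i-(\tilde q_i\hat n_i-\lfloor\tilde q_i\hat n_i\rfloor)$, substitute $A_i=\lfloor\underline q_in\rfloor+(A_i-\lfloor\underline q_in\rfloor)$ and likewise for $B_i$, and invoke the defining identity $(1-\tilde q_i)\underline q_i+\tilde q_i\overline q_i=q_i$ so that $(1-\tilde q_i)\lfloor\underline q_in\rfloor+\tilde q_i\lfloor\overline q_in\rfloor$ matches $\lfloor q_in\rfloor$ up to fractional-part corrections. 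What is left over is $(1-\tilde q_i)(A_i-\lfloor\underline q_in\rfloor)+\tilde q_i(B_i-\lfloor\overline q_in\rfloor)$, whose magnitude is at most $(1-\tilde q_i)\Gap_{\underline q_i}(\*x,\hat a_i)+\tilde q_i\Gap_{\overline q_i}(\*x,\hat b_i)=\hat\gamma_i$ — which is exactly why $\hat\gamma_i$ is this particular convex combination. The hard part will be the clean accounting of the several $\lfloor\cdot\rfloor$'s, using that $\tau_i-\lfloor q_in\rfloor\in\Z$ and, where a fully tight constant is needed, the mild fact that $q_in,\underline q_in,\overline q_in$ are integers (as in the uniform dyadic-quantile regime); otherwise one absorbs an $\BigO(1)$ slack that is harmless for the downstream recurrences.

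With these two ingredients both claims are immediate triangle inequalities. For arbitrary $o\in\R$,
\begin{equation}
\Gap_{\tilde q_i}(\*{\hat x}_i,o)=|\Proj_{[A_i,B_i]}(N(o))-\tau_i|\le|N(o)-\tau_i|\le|N(o)-\lfloor q_in\rfloor|+|\tau_i-\lfloor q_in\rfloor|\le\Gap_{q_i}(\*x,o)+\hat\gamma_i,
\end{equation}
which is the first statement; and for $o\in[\hat a_i,\hat b_i]$ the clamp is the identity, so $\Gap_{\tilde q_i}(\*{\hat x}_i,o)=|N(o)-\tau_i|$ and
\begin{equation}
\Gap_{q_i}(\*x,o)=|N(o)-\lfloor q_in\rfloor|\le|N(o)-\tau_i|+|\tau_i-\lfloor q_in\rfloor|\le\Gap_{\tilde q_i}(\*{\hat x}_i,o)+\hat\gamma_i,
\end{equation}
which is the second. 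So the only genuine obstacle is the floor/rounding bookkeeping in the target comparison; the reduction to that single scalar comparison, via the projection--rank identity, is routine.
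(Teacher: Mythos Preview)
Your approach is correct and is essentially a cleaner repackaging of the paper's argument. The paper proves the same scalar comparison $|\tau_i-\lfloor q_in\rfloor|\le\hat\gamma_i$ (written there as $|\lfloor\tilde q_i(\lfloor\overline q_in\rfloor-\lfloor\underline q_in\rfloor)\rfloor+\lfloor\underline q_in\rfloor-A_i-\lfloor\tilde q_i(B_i-A_i)\rfloor|\le\hat\gamma_i$) and then runs the same triangle inequality, but it handles the cases $o<\hat a_i$, $o\in[\hat a_i,\hat b_i]$, and $o>\hat b_i$ separately, writing out $\Gap_{\tilde q_i}(\*{\hat x}_i,o)$ explicitly in each case. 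Your projection identity $\hat N_i(o)+A_i=\Proj_{[A_i,B_i]}(N(o))$ together with the one-line fact that projecting onto an interval containing $\tau_i$ cannot increase $|\,\cdot-\tau_i|$ collapses those three cases into a single line for part~1, and makes it transparent why part~2 only holds on $[\hat a_i,\hat b_i]$ (that is exactly where the clamp is the identity). So the underlying mechanism is identical; what you gain is a unified treatment of the boundary cases and a clearer explanation of where the asymmetry between the two inequalities comes from.

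On the floor bookkeeping: the paper is no more careful than you are---it passes from the floored expression directly to $(1-\tilde q_i)\Gap_{\underline q_i}(\*x,\hat a_i)+\tilde q_i\Gap_{\overline q_i}(\*x,\hat b_i)$ by a bare ``triangle inequality,'' effectively ignoring the fractional parts. Your remark that this is exact when $q_in,\underline q_in,\overline q_in\in\Z$ (as in the uniform dyadic setting used downstream) and otherwise contributes an $\BigO(1)$ slack absorbed by the later recurrences is an accurate description of what is actually being claimed.
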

\begin{proof}
    For $o\in[\hat a_i,\hat b_i]$ we apply the triangle inequality twice to get
    \begin{align}
        \begin{split}
            \Gap_{\tilde q_i}(\*{\hat x}_i,o)
            &=|\max_{\*{\hat x}_{[j]}<o}j-\lfloor\tilde q_i\hat n_i\rfloor|\\
            &=|\max_{\*{\hat x}_{[j]}<o}j+\max_{\*x_{[j]}<\hat a_i}j-\lfloor q_in\rfloor+\lfloor q_in\rfloor-\max_{\*x_{[j]}<\hat a_i}j-\lfloor\tilde q_i\hat n_i\rfloor|\\
            &\le\Gap_{q_i}(\*x,o)+\left|\lfloor\tilde q_i(\lfloor\overline q_in\rfloor-\lfloor\underline q_in\rfloor)\rfloor+\lfloor\underline q_in\rfloor-\max_{\*x_{[j]}<\hat a_i}j-\lfloor\tilde q_i(\max_{\*x_{[j]}<\hat b_i}j-\max_{\*x_{[j]}<\hat a_i}j)\rfloor\right|\\
            &\le\Gap_{q_i}(\*x,o)+(1-\tilde q_i)\Gap_{\underline q_i}(\*x,\hat a_i)+\tilde q_i\Gap_{\overline q_i}(\*x,\hat b_i)
        \end{split}
    \end{align}
    and again to get
    \begin{align}
        \begin{split}
            \Gap_{q_i}(\*x,o)
            &=|\max_{\*x_{[j]}<o}j-\lfloor q_in\rfloor|\\
            &=|\max_{\*{\hat x}_{[j]}<o}j+\max_{\*x_{[j]}<\hat a_i}j-\lfloor\tilde q_i\hat n_i\rfloor+\lfloor\tilde q_i\hat n_i\rfloor-\lfloor q_in\rfloor|\\
            &\le\Gap_{\tilde q_i}(\*{\hat x}_i,o)+\left|\max_{\*x_{[j]}<\hat a_i}j-\lfloor\tilde q_i(\max_{\*x_{[j]}<\hat b_i}j+\max_{\*x_{[j]}<\hat a_i}j)\rfloor-\lfloor\tilde q_i(\lfloor\overline q_in\rfloor-\lfloor\underline q_in\rfloor)\rfloor-\lfloor\underline q_in\rfloor\right|\\
            &\le\Gap_{\tilde q_i}(\*{\hat x}_i,o)+(1-\tilde q_i)\Gap_{\underline q_i}(\*x,\hat a_i)+\tilde q_i\Gap_{\overline q_i}(\*x,\hat b_i)
        \end{split}
    \end{align}
    For $o<\hat a_i$ we use the fact that $\max_{\*x_{[j]}<o}j\le\max_{\*x_{[j]}<\hat a_i}j$ and the triangle inequality to get
    \begin{align}
        \begin{split}
            \Gap_{\tilde q_i}(\*{\hat x}_i,o)
            &=\lfloor\tilde q_i\hat n_i\rfloor\\
            &=\lfloor\tilde q_i(\max_{\*x_{[j]}<\hat b_i}j-\max_{\*x_{[j]}<\hat a_i}j)\rfloor\\
            &\le\lfloor\tilde q_i\max_{\*x_{[j]}<\hat b_i}j\rfloor+\lfloor(1-\tilde q_i)\max_{\*x_{[j]}<\hat a_i}j\rfloor-\max_{\*x_{[j]}<o}j\\
            &=\lfloor\tilde q_i\max_{\*x_{[j]}<\hat b_i}j\rfloor+\lfloor(1-\tilde q_i)\max_{\*x_{[j]}<\hat a_i}j\rfloor-\max_{\*x_{[j]}<o}j+\lfloor q_in\rfloor|-\lfloor\tilde q_i(\lfloor\overline q_in\rfloor-\lfloor\underline q_in\rfloor)\rfloor-\lfloor\underline q_in\rfloor\\
            &\le\Gap_{q_i}(\*x,o)+(1-\tilde q_i)\Gap_{\underline q_i}(\*x,\hat a_i)+\tilde q_i\Gap_{\overline q_i}(\*x,\hat b_i)
        \end{split}
    \end{align}
    For $o>\hat b_i$ we use the fact that $\max_{\*x_{[j]}<\hat b_i}j\le\max_{\*x_{[j]}<o}j$ and the triangle inequality to get
    \begin{align}
        \begin{split}
            \Gap_{\tilde q_i}(\*{\hat x}_i,o)
            &=\lfloor(1-\tilde q_i)\hat n_i\rfloor\\
            &=\lfloor(1-\tilde q_i)(\max_{\*x_{[j]}<\hat b_i}j-\max_{\*x_{[j]}<\hat a_i}j)\rfloor\\
            &\le\max_{\*x_{[j]}<o}j-\lfloor\tilde q_i\max_{\*x_{[j]}<\hat b_i}j-\lfloor(1-\tilde q_i)\max_{\*x_{[j]}<\hat a_i}j\\
            &=\max_{\*x_{[j]}<o}j-\lfloor\tilde q_i\max_{\*x_{[j]}<\hat b_i}j-\lfloor(1-\tilde q_i)\max_{\*x_{[j]}<\hat a_i}j-\lfloor q_in\rfloor+\lfloor\tilde q_i(\lfloor\overline q_in\rfloor-\lfloor\underline q_in\rfloor)\rfloor+\lfloor\underline q_in\rfloor\\
            &\le\Gap_{q_i}(\*x,o)+(1-\tilde q_i)\Gap_{\underline q_i}(\*x,\hat a_i)+\tilde q_i\Gap_{\overline q_i}(\*x,\hat b_i)
        \end{split}
    \end{align}
\end{proof}

\begin{Lem}\label{lem:empirical}
    For any $\gamma>0$ the estimate $o_i$ of the quantile $q_i$ by Algorithm~\ref{alg:quantiles} satisfies 
    \begin{equation}
        Pr\{\Gap_{q_i}(\*x,o_i)\ge\gamma\}\le\frac{\exp\left(\varepsilon_i(\hat\gamma_i-\gamma)/2\right)}{\Psi_{\*{\hat x}_i}^{(\tilde q_i,\varepsilon_i)}(\hat\mu_i)}
    \end{equation}
\end{Lem}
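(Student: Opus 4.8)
The plan is to reduce Lemma~\ref{lem:empirical} to the single-quantile analysis already carried out inside the proof of Lemma~\ref{lem:quantile}, applied to the sub-problem that Algorithm~\ref{alg:quantiles} solves at node $i$, and then to transport the resulting bound back to the original dataset $\*x$ via Lemma~\ref{lem:error}. Fix $i\in[m]$ and condition on all the randomness of the algorithm at the ancestors of node $i$ in the tree. This conditioning freezes the endpoints $\hat a_i,\hat b_i$, hence the sub-dataset $\*{\hat x}_i$ of points of $\*x$ lying in $(\hat a_i,\hat b_i)$, the conditional prior $\hat\mu_i$ (the probability measure obtained by restricting $\mu_i$ to $[\hat a_i,\hat b_i]$ and renormalizing), the rescaled target quantile $\tilde q_i$, and the offset $\hat\gamma_i=(1-\tilde q_i)\Gap_{\underline q_i}(\*x,\hat a_i)+\tilde q_i\Gap_{\overline q_i}(\*x,\hat b_i)$. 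Given all of this, node $i$ produces $o_i$, sampled from a density supported on $[\hat a_i,\hat b_i]$, by running the exponential mechanism with utility $-\Gap_{\tilde q_i}(\*{\hat x}_i,\cdot)$ (which has sensitivity one) and base measure $\hat\mu_i$ at budget $\varepsilon_i$.

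First I would rerun the computation of Equation~\eqref{eq:quantile} in the proof of Lemma~\ref{lem:quantile}, but with the triple $(\*{\hat x}_i,\tilde q_i,\hat\mu_i)$ and budget $\varepsilon_i$ in place of $(\*x,q,\mu)$ and $\varepsilon$; nothing in that derivation uses anything specific about the dataset, quantile, or base measure, so it yields, for every $\gamma'\in\R$,
\begin{equation}
\Pr\{\Gap_{\tilde q_i}(\*{\hat x}_i,o_i)\ge\gamma'\}\le\frac{\exp(-\varepsilon_i\gamma'/2)}{\Psi_{\*{\hat x}_i}^{(\tilde q_i,\varepsilon_i)}(\hat\mu_i)},
\end{equation}
the case $\gamma'\le0$ being trivial since the right-hand side is then at least one (recall $\Psi_{\*{\hat x}_i}^{(\tilde q_i,\varepsilon_i)}(\hat\mu_i)=\int\exp(-\varepsilon_i\Gap_{\tilde q_i}(\*{\hat x}_i,o)/2)\hat\mu_i(o)do\le1$). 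Next I would invoke the second part of Lemma~\ref{lem:error}: because $o_i$ lies in $[\hat a_i,\hat b_i]$, we have $\Gap_{q_i}(\*x,o_i)\le\Gap_{\tilde q_i}(\*{\hat x}_i,o_i)+\hat\gamma_i$, so the event $\{\Gap_{q_i}(\*x,o_i)\ge\gamma\}$ is contained in $\{\Gap_{\tilde q_i}(\*{\hat x}_i,o_i)\ge\gamma-\hat\gamma_i\}$. Setting $\gamma'=\gamma-\hat\gamma_i$ in the display above then gives exactly
\begin{equation}
\Pr\{\Gap_{q_i}(\*x,o_i)\ge\gamma\}\le\frac{\exp(-\varepsilon_i(\gamma-\hat\gamma_i)/2)}{\Psi_{\*{\hat x}_i}^{(\tilde q_i,\varepsilon_i)}(\hat\mu_i)}=\frac{\exp(\varepsilon_i(\hat\gamma_i-\gamma)/2)}{\Psi_{\*{\hat x}_i}^{(\tilde q_i,\varepsilon_i)}(\hat\mu_i)},
\end{equation}
and since this holds for every realization of the ancestors' outputs the claim follows.

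I expect the only real subtlety to be the probabilistic bookkeeping: $\hat\gamma_i$, $\hat\mu_i$, $\*{\hat x}_i$, and $\Psi_{\*{\hat x}_i}^{(\tilde q_i,\varepsilon_i)}(\hat\mu_i)$ are themselves random, being determined by the mechanism outputs at $i$'s ancestors, so the inequality must be understood as holding conditionally on those outputs (equivalently, pointwise over each realization of them) --- which is precisely the form in which Lemma~\ref{lem:empirical} is later used, e.g.\ when assigning per-node failure probabilities and union-bounding in the proof of Lemma~\ref{lem:empirical-quantiles}. A minor point to state carefully is that $o_i\in[\hat a_i,\hat b_i]$ almost surely, which is what licenses using the $o\in[\hat a_i,\hat b_i]$ branch of Lemma~\ref{lem:error} rather than the weaker part~1; beyond this, the argument is a direct substitution into an already-proven computation.
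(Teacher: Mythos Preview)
Your proposal is correct and follows essentially the same approach as the paper: apply the single-quantile exponential-mechanism tail bound to the sub-problem $(\*{\hat x}_i,\tilde q_i,\hat\mu_i,\varepsilon_i)$ and then use part~2 of Lemma~\ref{lem:error} (valid since $o_i\in[\hat a_i,\hat b_i]$) to convert the $\Gap_{\tilde q_i}$ event into a $\Gap_{q_i}$ event with the $\hat\gamma_i$ shift. The only cosmetic difference is that the paper carries out both steps inside a single sum over the intervals $\hat I_j^{(i)}$, while you factor the argument as ``sub-problem tail bound'' followed by ``event inclusion''; one small inaccuracy is your parenthetical describing $\hat\mu_i$ as the renormalized restriction, which is only the \texttt{conditional} adaptation---the lemma and your argument apply equally to the \texttt{edge} adaptation, so this does not affect the proof.
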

\begin{proof}
    We use $k_i$ to denote the interval $\hat I_k^{(j)}$ sampled at index $i$ in the algorithm and note that $o_i$ corresponds to the released number $o$ at that index.
    Since $o_i\in[\hat a_i,\hat b_i]$, applying Lemma~\ref{lem:error} yields
    \begin{align}
        \begin{split}
            \Pr\{\Gap_{q_i}(\*x,o_i)\ge\gamma\}
            &=\sum_{j=0}^{\hat n_i}\Pr\{k_i=j\}1_{\Gap_{q_i}(\*x,\hat I_j^{(i)})\ge\gamma}\\
            &=\sum_{j=0}^{n_i}\frac{\exp(-\varepsilon\Gap_{\tilde q_i}(\*{\hat x}_i,\hat I_j^{(i)})/2)\hat\mu_i(\hat I_j^{(i)})1_{\Gap_{q_i}(\*x,\hat I_j^{(i)})\ge\gamma}}{\sum_{l=0}^{\hat n_i}\exp(\varepsilon u_{\tilde q_i}(\*{\hat x}_i,\hat I_l^{(i)})/2)\hat\mu_i(\hat I_l)}\\
            &\le\frac{\exp(\varepsilon\hat\gamma_i/2)}{\Psi_{\*{\hat x}_i}^{(\tilde q_i,\varepsilon_i)}(\hat\mu_i)}\sum_{j=0}^{n_i}\exp(-\varepsilon\Gap_{q_i}(\*x,\hat I_j^{(i)})/2)\hat\mu_i(\hat I_j^{(i)})1_{\Gap_{q_i}(\*x,\hat I_j^{(i)})\ge\gamma}\\
            &\le\frac{\exp(\varepsilon(\hat\gamma_i-\gamma)/2)}{\Psi_{\*{\hat x}_i}^{(\tilde q_i,\varepsilon_i)}(\hat\mu_i)}
        \end{split}
    \end{align}
\end{proof}

\begin{Lem}\label{lem:prior}
    For any $\gamma>0$ the estimate $o_i$ of the quantile $q_i$ by Algorithm~\ref{alg:quantiles} with edge-based prior adaptation satisfies
    \begin{equation}
        \Pr\{\Gap_{q_i}(\*x,o_i)\ge\gamma\}
        \le\frac{\exp(\varepsilon(\hat\gamma_i-\gamma/2))}{\Psi_{\*x}^{(q_i,\varepsilon_i)}(\mu_i)}
    \end{equation}
\end{Lem}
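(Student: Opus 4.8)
\emph{Proof plan.} The idea is to reproduce the numerator bound from the proof of Lemma~\ref{lem:empirical} and to replace its trivial treatment of the denominator by an argument that uses exactly what edge-based adaptation buys us. Writing $k_i$ for the index of the interval sampled at step $i$ and using $o_i\in\hat I_{k_i}^{(i)}\subseteq[\hat a_i,\hat b_i]$, I would start from
\begin{equation}
\Pr\{\Gap_{q_i}(\*x,o_i)\ge\gamma\}
=\frac{\sum_{j:\,\Gap_{q_i}(\*x,\hat I_j^{(i)})\ge\gamma}\exp(-\varepsilon_i\Gap_{\tilde q_i}(\*{\hat x}_i,\hat I_j^{(i)})/2)\,\hat\mu_i(\hat I_j^{(i)})}{\Psi_{\*{\hat x}_i}^{(\tilde q_i,\varepsilon_i)}(\hat\mu_i)}.
\end{equation}
Since every $\hat I_j^{(i)}$ lies in $[\hat a_i,\hat b_i]$, the second part of Lemma~\ref{lem:error} gives $\Gap_{\tilde q_i}(\*{\hat x}_i,\hat I_j^{(i)})\ge\Gap_{q_i}(\*x,\hat I_j^{(i)})-\hat\gamma_i$, so on the event $\Gap_{q_i}(\*x,\hat I_j^{(i)})\ge\gamma$ each summand in the numerator is at most $\exp(\varepsilon_i\hat\gamma_i/2)\exp(-\varepsilon_i\gamma/2)\hat\mu_i(\hat I_j^{(i)})$; summing and using that edge-based adaptation preserves total mass, $\sum_j\hat\mu_i(\hat I_j^{(i)})=\mu_i(\R)=1$, bounds the numerator by $\exp(\varepsilon_i(\hat\gamma_i-\gamma)/2)$, exactly as in Lemma~\ref{lem:empirical}.

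The only genuinely new step is the lower bound $\Psi_{\*{\hat x}_i}^{(\tilde q_i,\varepsilon_i)}(\hat\mu_i)\ge\exp(-\varepsilon_i\hat\gamma_i/2)\,\Psi_{\*x}^{(q_i,\varepsilon_i)}(\mu_i)$ on the denominator; combined with the numerator estimate this gives the claimed $\exp(\varepsilon_i(\hat\gamma_i-\gamma/2))/\Psi_{\*x}^{(q_i,\varepsilon_i)}(\mu_i)$ (here $\varepsilon_i$ is the per-step budget). To prove it, write $\*{\hat x}_i=(\*x_{[p+1]},\dots,\*x_{[p+\hat n_i]})$, so the interior subintervals coincide with original ones, $\hat I_k^{(i)}=I_{p+k}$ for $1\le k\le\hat n_i-1$, while the extreme ones are $\hat I_0^{(i)}=[\hat a_i,\*x_{[p+1]}]$ and $\hat I_{\hat n_i}^{(i)}=(\*x_{[p+\hat n_i]},\hat b_i]$. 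By the definition of edge-based adaptation the left tail mass $\mu_i((-\infty,\hat a_i))$ is placed on the atom at $\hat a_i$ and the right tail mass on the atom at $\hat b_i$, so $\hat\mu_i(\hat I_0^{(i)})=\mu_i((-\infty,\*x_{[p+1]}])=\sum_{j=0}^p\mu_i(I_j)$, $\hat\mu_i(\hat I_{\hat n_i}^{(i)})=\mu_i((\*x_{[p+\hat n_i]},\infty))=\sum_{j=p+\hat n_i}^n\mu_i(I_j)$, and $\hat\mu_i(\hat I_k^{(i)})=\mu_i(I_{p+k})$ otherwise. Hence, defining $\kappa(j):=0$ for $j\le p$, $\kappa(j):=j-p$ for $p<j<p+\hat n_i$, and $\kappa(j):=\hat n_i$ for $j\ge p+\hat n_i$, we have $\hat\mu_i(\hat I_k^{(i)})=\sum_{j:\kappa(j)=k}\mu_i(I_j)$ for every $k$, whence $\Psi_{\*{\hat x}_i}^{(\tilde q_i,\varepsilon_i)}(\hat\mu_i)=\sum_j\exp(-\varepsilon_i\Gap_{\tilde q_i}(\*{\hat x}_i,\hat I_{\kappa(j)}^{(i)})/2)\mu_i(I_j)$.

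It then remains to show $\Gap_{\tilde q_i}(\*{\hat x}_i,\hat I_{\kappa(j)}^{(i)})\le\Gap_{q_i}(\*x,I_j)+\hat\gamma_i$ for each $j$, which immediately yields the desired lower bound. For interior $j$ this is clear since $\hat I_{\kappa(j)}^{(i)}=I_j$ and the first part of Lemma~\ref{lem:error} applies. For $j\le p$ (resp.\ $j\ge p+\hat n_i$) the key observation is that $\Gap_{\tilde q_i}(\*{\hat x}_i,\cdot)$ is constant on the whole plateau $o\le\*x_{[p+1]}$ (resp.\ $o\ge\*x_{[p+\hat n_i]}$) and there equals its value on $\hat I_0^{(i)}$ (resp.\ $\hat I_{\hat n_i}^{(i)}$); picking any $o\in I_j$ and applying the first part of Lemma~\ref{lem:error}---which holds for \emph{all} $o\in\R$, unlike the second part---gives $\Gap_{\tilde q_i}(\*{\hat x}_i,\hat I_{\kappa(j)}^{(i)})=\Gap_{\tilde q_i}(\*{\hat x}_i,o)\le\Gap_{q_i}(\*x,o)+\hat\gamma_i=\Gap_{q_i}(\*x,I_j)+\hat\gamma_i$.

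The main obstacle is this denominator step: getting the mass accounting exactly right (the equality $\hat\mu_i(\hat I_k^{(i)})=\sum_{j:\kappa(j)=k}\mu_i(I_j)$, using that the tails land precisely on the two extreme subintervals), and recognizing that the pointwise comparison needed on the plateaus outside $[\hat a_i,\hat b_i]$ is exactly the first part of Lemma~\ref{lem:error}, whose validity for all $o\in\R$ is what separates this analysis from the conditional one and is responsible for the $\hat\gamma_i$ (rather than $\hat\gamma_i/2$) coefficient. Degenerate cases---$p=0$ or $p+\hat n_i=n$, $\hat a_i$ or $\hat b_i$ coinciding with a data point, or a non-normalized $\mu_i$---require a line of care but do not change the argument.
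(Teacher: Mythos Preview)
Your proposal is correct and follows essentially the same approach as the paper: bound the numerator exactly as in Lemma~\ref{lem:empirical} via the second part of Lemma~\ref{lem:error}, then lower-bound the normalizer $\Psi_{\*{\hat x}_i}^{(\tilde q_i,\varepsilon_i)}(\hat\mu_i)$ by regrouping the edge-based mass onto the original intervals $I_j$ and applying the first part of Lemma~\ref{lem:error} (valid for all $o\in\R$) to obtain $\Psi_{\*{\hat x}_i}^{(\tilde q_i,\varepsilon_i)}(\hat\mu_i)\ge\exp(-\varepsilon_i\hat\gamma_i/2)\,\Psi_{\*x}^{(q_i,\varepsilon_i)}(\mu_i)$. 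Your map $\kappa$ and the explicit plateau observation simply spell out what the paper's three-range decomposition of the sum does implicitly.
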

\begin{proof}
    Applying Lemma~\ref{lem:error} yields the following lower bound on $\Psi_{\tilde q_i}^{(\varepsilon_i)}(\*{\hat x}_i,\hat\mu_i)$:
    \begin{align}
        \begin{split}
            \sum_{l=0}^{\hat n_i}\exp(\varepsilon u_{\tilde q_i}(\*{\hat x}_i,\hat I_l^{(i)})/2)\hat\mu_i(\hat I_l^{(i)})
            &=\exp(\varepsilon u_{\tilde q_i}(\*{\hat x}_i,\hat I_0^{(i)})/2)\mu_i((-\infty,\hat a_i])+\exp(\varepsilon u_{\tilde q_i}(\*{\hat x}_i,\hat I_{\hat n_i}^{(i)})/2)\mu_i([\hat b_i,\infty))\\
            &\qquad+\sum_{l=0}^{\hat n_i}\exp(\varepsilon u_{\tilde q_i}(\*{\hat x}_i, \hat I_l^{(i)})/2)\mu_i(\hat I_l)\\
            &=\sum_{l=0}^{\max_{\*x_{[j]}<\hat a_i}j}\exp(-\varepsilon\Gap_{\tilde q_i}(\*{\hat x}_i,I_l\cap(-\infty,\hat a_i])/2)\mu_i(I_l\cap(-\infty,\hat a_i])\\
            &\qquad+\sum_{l=\max_{\*x_{[j]}<\hat b_i}j}^{n}\exp(-\varepsilon\Gap_{\tilde q_i}(\*{\hat x}_i,I_l\cap[\hat b_i,\infty))/2)\mu_i(I_l\cap[\hat b_i,\infty))\\
            &\qquad+\sum_{l=\max_{\*x_{[j]}<\hat a_i}j}^{\max_{\*x_{[j]}<\hat b_i}j}\exp(-\varepsilon\Gap_{\tilde q_i}(\*{\hat x}_i,I_l\cap[\hat a_i,\hat b_i])\mu_i(I_l\cap[\hat a_i,\hat b_i])\\
            &\ge\Psi_{\*x}^{(q_i,\varepsilon_i)}(\mu_i)\exp(-\varepsilon\hat\gamma_i/2)
        \end{split}
    \end{align}
    Substituting into Lemma~\ref{lem:empirical-quantiles} yields the result.
\end{proof}

\newpage
\begin{Lem}\label{lem:shuffle}
    Suppose $q_0<q_1$ are two quantiles and $o_0>o_1$.
    Then 
    \begin{equation}
        \max_{i=0,1}\Gap_{q_i}(\*x,o_i)
        \ge\max_{i=0,1}\Gap_{q_i}(\*x,o_{1-i})
    \end{equation}
\end{Lem}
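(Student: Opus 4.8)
The plan is to reduce everything to counting order statistics below the two outputs. Write $r_i = \max_{\*x_{[j]} < o_i} j$ for $i=0,1$, so that $\Gap_{q_i}(\*x, o_i) = |r_i - \lfloor q_i n\rfloor|$ by~\eqref{eq:gap}. Since $o_0 > o_1$ we have $r_0 \ge r_1$; also $\lfloor q_0 n\rfloor \le \lfloor q_1 n\rfloor$ because $q_0 < q_1$. Thus if we swap the outputs, the relevant quantities become $\Gap_{q_0}(\*x, o_1) = |r_1 - \lfloor q_0 n\rfloor|$ and $\Gap_{q_1}(\*x, o_0) = |r_0 - \lfloor q_1 n\rfloor|$. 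The claim is then purely a statement about real numbers: given $a \le b$ (playing the role of $\lfloor q_0 n\rfloor \le \lfloor q_1 n\rfloor$) and $c \le d$ (playing the role of $r_1 \le r_0$), we have $\max\{|d-a|, |c-b|\} \ge \max\{|c-a|, |d-b|\}$.

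First I would isolate and prove this elementary inequality. The point is that the "crossed" pairing $(d,a),(c,b)$ is more spread out than the "aligned" pairing $(c,a),(d,b)$. Concretely, $|d - a| \ge |d - b|$ would follow if $a$ is not larger than $b$... but we need a little care since $d-a$ could be negative. The clean way is a short case analysis on the relative positions of the interval $[c,d]$ and the interval $[a,b]$. For instance, one shows $|d - a| \ge |c - a|$ whenever $d \ge c$ and $d \ge a$ or symmetric situations, handling at most three or four configurations (intervals disjoint, nested, or overlapping). Alternatively, and more slickly: for any reals, $\max\{|d-a|,|c-b|\} \ge \tfrac12(|d-a| + |c-b|) \ge \tfrac12|(d-a) - (c-b)| = \tfrac12|(d-c) + (b-a)| = \tfrac12((d-c)+(b-a))$ using $d\ge c$, $b\ge a$; and likewise $\max\{|d-a|,|c-b|\} \ge \tfrac12((d-a)+(b-c))$ is not immediate — so I would instead just do the bounded case analysis, which is completely routine.

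The main obstacle — such as it is — is bookkeeping: being careful that $\Gap$ is defined via $\lfloor q_i n\rfloor$, an integer, and via $\max_{\*x_{[j]} < o} j$, so that the translation from the quantile statement to the two-pairs-of-reals statement is exact and uses only $o_0 > o_1 \Rightarrow r_0 \ge r_1$ and $q_0 < q_1 \Rightarrow \lfloor q_0 n\rfloor \le \lfloor q_1 n\rfloor$. No monotonicity of the data points themselves beyond sortedness is needed. Once the reduction is in place, the inequality $\max\{|d-a|,|c-b|\} \ge \max\{|c-a|,|d-b|\}$ for $a\le b$, $c\le d$ closes the argument, and this is where the (short) case work lives; I would split on whether $d \le b$ or $d > b$, and within each on the sign of the relevant differences, which covers all cases in two or three lines each.
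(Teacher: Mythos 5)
Your reduction is exactly right: setting $r_i=\max_{\*x_{[j]}<o_i}j$, the hypotheses give $r_0\ge r_1$ and $\lfloor q_0n\rfloor\le\lfloor q_1n\rfloor$, and the claim becomes the elementary inequality $\max\{|d-a|,|c-b|\}\ge\max\{|c-a|,|d-b|\}$ for $a\le b$, $c\le d$, which indeed holds by a short case analysis on the signs of $c-a$ and $d-b$. This is precisely the route the paper takes (its four cases are exactly the sign combinations of $\lfloor q_0n\rfloor-r_1$ and $\lfloor q_1n\rfloor-r_0$), so your plan matches the paper's proof.
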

\begin{proof}
    We consider four cases.
    If $\lfloor q_0|\*x|\rfloor\le\max_{\*x_{[j]}<o_1}j$ and $\lfloor q_1|X|\rfloor\le\max_{\*x_{[j]}<o_0}j$ then 
    \begin{equation}
        \lfloor q_0|\*x|\rfloor
        \le\min\{\lfloor q_1|\*x|\rfloor,\max_{\*x_{[j]}<o_1}j\}
        \le\max\{\lfloor q_1|\*x|\rfloor,\max_{\*x_{[j]}<o_1}j\}
        \le\max_{\*x_{[j]}<o_0}j
    \end{equation}
    and so
    \begin{equation}
        \max_{i=0,1}\Gap_{q_i}(\*x,o_i)
        =\max_{\*x_{[j]}<o_0}j-\lfloor q_0|\*x|\rfloor
        \ge\max_{i=0,1}\Gap_{q_i}(X,o_{i-1})
    \end{equation}
    If $\lfloor q_0|X|\rfloor\le\max_{\*x_{[j]}<o_1}j$ and $\lfloor q_1|\*x|\rfloor>\max_{\*x_{[j]}<o_0}j$ then 
    \begin{equation}
        \lfloor q_0|\*x|\rfloor
        \le\max_{\*x_{[j]}<o_1}j
        \le\max_{\*x_{[j]}<o_0}j
        <\lfloor q_1|\*x|\rfloor
    \end{equation}
    and so both improve after swapping.
    If $\lfloor q_0|\*x|\rfloor>\max_{\*x_{[j]}<o_1}j$ and $\lfloor q_1|\*x|\rfloor>\max_{\*x_{[j]}<o_0}j$ then
    \begin{equation}
        \max_{\*x_{[j]}<o_1}j
        \le\min\{\lfloor q_0|\*x|\rfloor,\max_{\*x_{[j]}<o_0}j\}
        \le\max\{\lfloor q_0|\*x|\rfloor,\max_{\*x_{[j]}<o_0}j\}
        \le\lfloor q_1|\*x|\rfloor
    \end{equation}
    and so
    \begin{equation}
        \max_{i=0,1}\Gap_{q_i}(\*x,o_i)
        =\max_{\*x_{[j]}<o_1}j-\lfloor q_1|\*x|\rfloor
        \ge\max_{i=0,1}\Gap_{q_i}(\*x,o_{i-1})
    \end{equation}
    Finally, if $\lfloor q_0|\*x|\rfloor>\max_{\*x_{[j]}<o_1}j$ and $\lfloor q_1|\*x|\rfloor\le\max_{\*x_{[j]}<o_0}j$ then
    \begin{equation}
        \max_{\*x_{[j]}<o_1}j
        <\lfloor q_0|\*x|\rfloor
        \le\lfloor q_1|\*x|\rfloor
        \le\max_{\*x_{[j]}<o_0}j
    \end{equation}
    so swapping will make the new largest error for each quantile at most as large as the other quantile's current error.
\end{proof}

\newpage
\section{Additional proofs for online learning}

\subsection{Online-to-batch conversion}\label{sec:o2b}

\begin{Thm}\label{thm:o2b}
    Suppose an online algorithm sees a sequence $\ell_{\*x_1}(\cdot),\dots,\ell_{\*x_T}(\cdot):\Theta\mapsto[0,B]$ of convex losses whose data $\*x_1,\dots,\*x_T$ are drawn i.i.d. from some distribution $\D$, and let $\theta_1,\dots,\theta_T$ be its predictions.
    If $\max_{\theta\in\Theta}\sum_{t=1}^T\ell_{\*x_t}(\theta_t)-\ell_{\*x_t}(\theta)\le R_T$, $\hat\theta=\frac1T\sum_{t=1}^T\theta_t$, and $T=\Omega\left(T_\alpha+\frac{B^2}{\alpha^2}\log\frac1{\beta'}\right)$ for $T_\alpha=\min_{2R_T\le T\alpha}T$, then w.p. $\ge1-\beta'$
    \ifdefined\arxiv$\else
    \begin{equation}
    \fi
        \E_{\*x\sim\D}\ell_{\*x}(\hat\theta)
        \le\min_{\theta\in\Theta}\E_{\*x\sim\D}\ell_{\*x}(\theta)+\alpha
    \ifdefined\arxiv$.\else
    \end{equation}
    \fi
\end{Thm}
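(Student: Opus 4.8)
The plan is to run the textbook high-probability online-to-batch conversion (cf.\ \citet{cesa-bianchi2004online2batch}): use convexity to make the averaged iterate a legitimate comparator, use a martingale concentration bound to pass from the iterates' population risk to their realized losses, invoke the regret hypothesis for the optimization error, use a Hoeffding bound for the fixed population minimizer, and finally pick $T$ large enough to drive every resulting error term below $\alpha$.

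First I would write $L(\theta)=\E_{\*x\sim\D}\ell_{\*x}(\theta)$ and note that, by convexity of each $\ell_{\*x}$ in its argument, $\ell_{\*x}(\hat\theta)\le\frac1T\sum_{t=1}^T\ell_{\*x}(\theta_t)$, so $L(\hat\theta)\le\frac1T\sum_{t=1}^TL(\theta_t)$. Next, since the algorithm is online, $\theta_t$ is a function of $\*x_1,\dots,\*x_{t-1}$ only, so $Z_t:=L(\theta_t)-\ell_{\*x_t}(\theta_t)$ is a martingale difference sequence with $|Z_t|\le B$; Azuma--Hoeffding then gives, w.p.\ $\ge1-\beta'/2$, that $\frac1T\sum_tL(\theta_t)\le\frac1T\sum_t\ell_{\*x_t}(\theta_t)+B\sqrt{2\log(2/\beta')/T}$. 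Letting $\theta^\ast\in\argmin_{\theta\in\Theta}L(\theta)$, which is deterministic, the regret hypothesis applied at $\theta=\theta^\ast$ gives $\frac1T\sum_t\ell_{\*x_t}(\theta_t)\le\frac1T\sum_t\ell_{\*x_t}(\theta^\ast)+R_T/T$ with no failure probability, and since $\ell_{\*x_1}(\theta^\ast),\dots,\ell_{\*x_T}(\theta^\ast)$ are i.i.d.\ in $[0,B]$ with mean $L(\theta^\ast)$, a Hoeffding bound gives w.p.\ $\ge1-\beta'/2$ that $\frac1T\sum_t\ell_{\*x_t}(\theta^\ast)\le L(\theta^\ast)+B\sqrt{\log(2/\beta')/(2T)}$.

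I would then chain these four inequalities, union-bound the two random events, and conclude that w.p.\ $\ge1-\beta'$, $L(\hat\theta)\le L(\theta^\ast)+R_T/T+B\sqrt{2\log(2/\beta')/T}+B\sqrt{\log(2/\beta')/(2T)}$. Taking $T\ge T_\alpha$ makes $R_T/T\le\alpha/2$, and taking $T=\Omega(B^2\log(1/\beta')/\alpha^2)$ makes the two concentration terms sum to at most $\alpha/2$, so $L(\hat\theta)\le L(\theta^\ast)+\alpha=\min_{\theta\in\Theta}L(\theta)+\alpha$, as claimed. The main thing to be careful about is not a deep obstacle but the bookkeeping: getting the filtration right so that $(Z_t)$ is genuinely a martingale difference sequence, tracking constants so that the regret and concentration pieces each fit under $\alpha/2$ (which is exactly what sets the precise $\Omega(\cdot)$ in the hypothesis), and interpreting the fixed-point definition $T_\alpha=\min\{T:2R_T\le T\alpha\}$, for which one implicitly uses that $R_T/T$ is eventually non-increasing, as it is for the $\tilde\BigO(\sqrt T)$ regret bounds actually used in the paper.
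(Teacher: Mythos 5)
Your proof is correct and follows essentially the same route as the paper: Jensen's inequality for $\hat\theta$, an Azuma/Cesa-Bianchi martingale concentration step to pass from population to empirical losses of the iterates, the regret bound at the population minimizer $\theta^\ast$, and a Hoeffding bound for $\theta^\ast$, then choosing $T$ large enough to drive each term below $\alpha$. Your explicit unpacking of \citet[Proposition~1]{cesa-bianchi2004online2batch} into an Azuma bound and the note about $R_T/T$ needing to be eventually non-increasing for the $T_\alpha$ definition to behave as intended are both accurate observations but do not change the substance of the argument.
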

\begin{proof}
	This is a formalization of a standard procedure;
	we follow the argument in~\citet[Lemma~A.1]{khodak2022awp}.
    Applying Jensen’s inequality, \citet[Proposition~1]{cesa-bianchi2004online2batch}, the assumption that regret is $\le R_T$, and Hoeffding's inequality yields
    \ifdefined\arxiv
    \begin{align}
    \begin{split}
    \E_{\*x\sim\D}\ell_{\*x}(\hat\theta)
    \le\frac1T\sum_{t=1}^T\E_{\*x\sim\D}\ell_{\*x}(\theta_t)
    &\le\frac1T\sum_{t=1}^T\ell_{\*x_t}(\theta_t)+B\sqrt{\frac2T\log\frac2{\beta'}}\\
    &\le\min_{\theta\in\Theta}\frac1T\sum_{t=1}^T\ell_{\*x_t}(\theta)+\frac{R_T}T+B\sqrt{\frac2T\log\frac2{\beta'}}\\
    &\le\min_{\theta\in\Theta}\E_{\*x\sim\D}\ell_{\*x}(\theta)+\frac{R_T}T+2B\sqrt{\frac2T\log\frac2{\beta'}}
    \end{split}
    \end{align}
    \else
    \begin{align}
        \begin{split}
            \E_{\*x\sim\D}\ell_{\*x}(\hat\theta)
            \le\frac1T\sum_{t=1}^T\E_{\*x\sim\D}\ell_{\*x}(\theta_t)
            \le\frac1T\sum_{t=1}^T\ell_{\*x_t}(\theta_t)+B\sqrt{\frac2T\log\frac2{\beta'}}
            &\le\min_{\theta\in\Theta}\frac1T\sum_{t=1}^T\ell_{\*x_t}(\theta)+\frac{R_T}T+B\sqrt{\frac2T\log\frac2{\beta'}}\\
            &\le\min_{\theta\in\Theta}\E_{\*x\sim\D}\ell_{\*x}(\theta)+\frac{R_T}T+2B\sqrt{\frac2T\log\frac2{\beta'}}
        \end{split}
    \end{align}
    \fi
    w.p. $\ge1-\beta'$. 
    Substituting the lower bound on $T$ yields the result.
\end{proof}

\subsection{Negative log-inner-product losses}\label{sec:overlap}

For functions of the form $f_t(\mu)=-\log\int_a^bs_t(o)\mu(o)do$, \citet{balcan2021ltl} showed $\tilde\BigO(T^{3/4})$ regret for the case $s_t(o)\in\{0,1\}~\forall~o\in[a,b]$ using a variant of exponentiated gradient with a dynamic discretization.
Notably their algorithm can be extended to (non-privately) learn $-\log\Psi_{\*x_t}^{(q)}(\mu)$, since $s_t$ in this case is one on the optimal interval and zero elsewhere.
However, the changing discretization and dependence of the analysis on the range of $s_t$ suggests it may be difficult to privatize their approach.
The discretized form $-\log\langle\*s_t,\*w\rangle$ is more heavily studied, arising in portfolio management~\cite{cover1991universal}.
It enjoys the exp-concavity property, leading to $\BigO(d\log T)$ regret using the EWOO method~\cite{hazan2007logarithmic}.
However, EWOO requires maintaining and sampling from a distribution defined by a product of inner products, which is inefficient and similarly difficult to privatize.
Other algorithms, e.g. adaptive FTAL~\cite{hazan2007logarithmic}, also attain logarithmic regret for exp-concave functions, but the only private variant we know of is non-adaptive and only guarantees $\BigO(\sqrt T)$-regret for non-strongly-convex losses~\cite{smith2013optimal}.
The adaptivity, which is itself data-dependent, seems critical for taking advantage of exp-concavity.\looseness-1

\begin{Lem}\label{lem:lipschitz}
	If $f_t(\mu_{\*W})=-\log\sum_{i=1}^m\frac{1/m}{\langle\*s_{t,i},\*W_{[i]}\rangle}$ for $\*s_{t,i}\in\R_{\ge0}^d$ then $\|\nabla_{\*W}f_t(\mu_{\*W})\|_1\le d/\gamma~\forall~\*W\in\triangle_d^m$ s.t. $\*W_{[i,j]}\ge\gamma/d~\forall~i,j$ for some $\gamma\in(0,1]$.
\end{Lem}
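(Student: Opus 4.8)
The plan is to differentiate $f_t(\mu_{\*W})$ coordinate-wise, observe that every coordinate of the gradient is nonnegative so that the $\ell_1$-norm is a plain sum, and then use the lower bound $\*W_{[i,j]}\ge\gamma/d$ to collapse that sum. First I would set $\psi_i=\langle\*s_{t,i},\*W_{[i]}\rangle$ and $S=\sum_{i=1}^m\frac{1/m}{\psi_i}$, so that $f_t(\mu_{\*W})=-\log S$; since $\*s_{t,i}\in\R_{\ge0}^d$ and $\*W_{[i,j]}\ge\gamma/d>0$, each $\psi_i>0$, hence $S>0$ and all the quantities below are well-defined.

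Next I would carry out the (routine) differentiation. Writing $\psi_i$ as the linear form $\sum_{j=1}^d\*s_{t,i[j]}\*W_{[i,j]}$ and applying the chain rule gives
\[
\frac{\partial f_t(\mu_{\*W})}{\partial\*W_{[i,j]}}=\frac1{mS}\cdot\frac{\*s_{t,i[j]}}{\psi_i^2}\ge0,
\]
so that, summing nonnegative terms,
\[
\|\nabla_{\*W}f_t(\mu_{\*W})\|_1=\frac1{mS}\sum_{i=1}^m\frac1{\psi_i^2}\sum_{j=1}^d\*s_{t,i[j]}.
\]

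The key step is to bound the inner sum $\sum_{j=1}^d\*s_{t,i[j]}$. Because $\*W_{[i,j]}\ge\gamma/d$ and $\*s_{t,i[j]}\ge0$ for every $j$, we have $\psi_i=\sum_{j=1}^d\*s_{t,i[j]}\*W_{[i,j]}\ge\frac\gamma d\sum_{j=1}^d\*s_{t,i[j]}$, i.e. $\sum_{j=1}^d\*s_{t,i[j]}\le\frac d\gamma\psi_i$. Substituting this bound and using $\sum_{i=1}^m\psi_i^{-1}=mS$ yields
\[
\|\nabla_{\*W}f_t(\mu_{\*W})\|_1\le\frac d\gamma\cdot\frac1{mS}\sum_{i=1}^m\frac1{\psi_i}=\frac d\gamma,
\]
which is the claim. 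There is no real obstacle here—the argument is a short computation—so the only things to be careful about are recording that all gradient coordinates are nonnegative (which is what lets the $\ell_1$-norm cancel against the definition of $S$) and noting that the constraint $\*W_{[i,j]}\ge\gamma/d$ is precisely what converts $\sum_j\*s_{t,i[j]}$ into a multiple of $\psi_i$. If one also wants the analogous bound for a single quantile ($m=1$), it is the same computation with the outer sum over $i$ removed.
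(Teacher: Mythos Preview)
Your proof is correct and in fact more direct than the paper's. Both arguments start from the same expression for the $\ell_1$-norm of the gradient,
\[
\|\nabla_{\*W}f_t(\mu_{\*W})\|_1=\left(\sum_{i=1}^m\frac1{\langle\*s_{t,i},\*W_{[i]}\rangle}\right)^{-1}\sum_{i=1}^m\sum_{j=1}^d\frac{\*s_{t,i[j]}}{\langle\*s_{t,i},\*W_{[i]}\rangle^2},
\]
but diverge in how they collapse the inner sum over $j$. The paper routes through an intermediate step, invoking Sedrakyan's inequality to replace $\sum_j\*s_{t,i[j]}/\psi_i^2$ by $1/\langle\*s_{t,i},\*W_{[i]}\odot\*W_{[i]}\rangle$ and only afterwards using $\*W_{[i,j]}\ge\gamma/d$. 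You instead apply the entrywise lower bound immediately to obtain $\sum_j\*s_{t,i[j]}\le(d/\gamma)\psi_i$, which cancels one power of $\psi_i$ and lets the outer sum collapse exactly against the normalizing factor. Your route is shorter and avoids any named inequality; moreover, Cauchy--Schwarz (equivalently Sedrakyan) actually yields $\sum_j\*s_{t,i[j]}/\psi_i^2\ge1/\langle\*s_{t,i},\*W_{[i]}\odot\*W_{[i]}\rangle$, so the paper's intermediate inequality is stated in the wrong direction---a simple check with $\*s=(1,1)$, $\*W_{[i]}=(3/4,1/4)$ gives $2$ on the left and $8/5$ on the right. Your argument sidesteps that issue entirely while still landing on the correct bound $d/\gamma$.
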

\begin{proof}
	\begin{align}
	\begin{split}
	\|\nabla_{\*W}f_t(\mu_{\*W})\|_1
	=\sum_{i=1}^m\|\nabla_{\*W_{[i]}}f_t(\mu_{\*W})\|_1
	&=\left(\sum_{i=1}^m\frac1{\langle\*s_{t,i},\*W_{[i]}\rangle}\right)^{-1}\sum_{i=1}^m\sum_{j=1}^d\frac{\*s_{t,i[j]}}{\langle\*s_{t,i},\*W_{[i]}\rangle^2}\\
	&\le\left(\sum_{i=1}^m\frac1{\langle\*s_{t,i},\*W_{[i]}\rangle}\right)^{-1}\sum_{i=1}^m\frac1{\langle\*s_{t,i},\*W_{[i]}\odot\*W_{[i]}\rangle}
	\le d/\gamma
	\end{split}
	\end{align}
	where the first inequality follows by Sedrakyan's inequality and the second by $\*W_{[i,j]}\ge\gamma/d$.\looseness-1
\end{proof}

\newpage
\subsubsection{Proof of Lemma~\ref{lem:refined-sensitivity} for $m>1$}\label{sec:refined-sensitivity}

\begin{proof}
		Let $\*{\tilde x}_t$ be a neighboring dataset of $\*x_t$ constructed by adding or removing a single element, and let $U_{\*{\tilde x}_t}^{(\varepsilon)}$ be the corresponding loss function.
		We note that changing from $\*x_t$ to $\*{\tilde x}_t$ changes the value of $\Gap_{q_i}(\*x_t,o)$ at any point $o\in[a,b]$ by at most $\pm1$ and so the value of the exponential score at any point $o\in[a,b]$ is changed by at most a multiplicative factor $\exp(-\varepsilon_i/2)$ in either direction.
		Therefore\looseness-1
		\ifdefined\arxiv\vspace{-2mm}\else\fi
		\begin{align}
		\begin{split}
		\*{\tilde s}_{t,i[j]}
		&=\int_{a+\frac{b-a}d(j-1)}^{a+\frac{b-a}dj}\exp(-\varepsilon_i\Gap_{q_i}(\*{\tilde x}_t,o)/2)do\\
		&\in\exp(\pm\varepsilon_i/2)\int_{a+\frac{b-a}d(j-1)}^{a+\frac{b-a}dj}\exp(-\varepsilon_i\Gap_{q_i}(\*x_t,o)/2)do
		=\exp(\pm\varepsilon_i/2)\*s_{t,i[j]}\ifdefined\arxiv\vspace{-3mm}\else\fi
		\end{split}
		\end{align}
		where $\pm$ indicates the interval between values.
		\ifdefined\arxiv\vspace{-1mm}
		\begin{align}
		\begin{split}
		&\|\nabla_{\*W}U_{\*x_t}^{(\varepsilon)}(\*W)-\nabla_{\*W}U_{\*{\tilde x}_t}^{(\varepsilon)}(\*W)\|_F\\
		&\quad=\sqrt{\sum_{i=1}^m\sum_{j=1}^d\left(\left(\sum_{i'=1}^m\frac1{\langle\*s_{t,i'},\*W_{[i']}\rangle}\right)^{-1}\frac{\*s_{t,i[j]}}{\langle\*s_{t,i},\*W_{[i]}\rangle^2}-\left(\sum_{i'=1}^m\frac1{\langle\*{\tilde s}_{t,i'},\*W_{[i']}\rangle}\right)^{-1}\frac{\*{\tilde s}_{t,i[j]}}{\langle\*{\tilde s}_{t,i},\*W_{[i]}\rangle^2}\right)^2}\\
		&\quad=\left(\sum_{i'=1}^m\frac1{\langle\*s_{t,i'},\*W_{[i']}\rangle}\right)^{-1}\sqrt{\sum_{i=1}^m\sum_{j=1}^d\left(\frac{\*s_{t,i[j]}}{\langle\*s_{t,i},\*W_{[i]}\rangle^2}-\frac{\*{\tilde s}_{t,i[j]}}{\langle\*{\tilde s}_{t,i},\*W_{[i]}\rangle^2}\frac{\sum_{i'=1}^m\frac1{\langle\*s_{t,i'},\*W_{[i']}\rangle}}{\sum_{i'=1}^m\frac1{\langle\*{\tilde s}_{t,i'},\*W_{[i']}\rangle}}\right)^2}\\
		&\quad=\left(\sum_{i'=1}^m\frac1{\langle\*s_{t,i'},\*W_{[i']}\rangle}\right)^{-1}\sqrt{\sum_{i=1}^m\sum_{j=1}^d\frac{\*s_{t,i[j]}^2}{\langle\*W_{t,i},\*W_{[i]}\rangle^4}\left(1-\frac{\langle\*g_{t,i},\*x_{[i]}\rangle^2}{\langle\*{\tilde s}_{t,i},\*W_{[i]}\rangle^2}\frac{\sum_{i'=1}^m\frac{\*{\tilde s}_{t,i[j]}}{\langle\*s_{t,i'},\*W_{[i']}\rangle}}{\sum_{i'=1}^m\frac{\*s_{t,i[j]}}{\langle\*{\tilde s}_{t,i'},\*W_{[i']}\rangle}}\right)^2}\\
		&\quad\le\left(\sum_{i'=1}^m\frac1{\langle\*s_{t,i'},\*W_{[i']}\rangle}\right)^{-1}\sum_{i=1}^m\sum_{j=1}^d\frac{\*s_{t,i[j]}}{\langle\*s_{t,i},\*W_{[i]}\rangle^2}|1-\kappa_{i,j}|
		\le\frac d\gamma\max_{i,j}|1-\kappa_{i,j}|\vspace{-2mm}
		\end{split}
		\end{align}
		\else
		\begin{align}
		\begin{split}
		\|\nabla_{\*W}U_{\*x_t}^{(\varepsilon)}&(\*W)-\nabla_{\*W}U_{\*{\tilde x}_t}^{(\varepsilon)}(\*W)\|_F\\
		&=\sqrt{\sum_{i=1}^m\sum_{j=1}^d\left(\left(\sum_{i'=1}^m\frac1{\langle\*s_{t,i'},\*W_{[i']}\rangle}\right)^{-1}\frac{\*s_{t,i[j]}}{\langle\*s_{t,i},\*W_{[i]}\rangle^2}-\left(\sum_{i'=1}^m\frac1{\langle\*{\tilde s}_{t,i'},\*W_{[i']}\rangle}\right)^{-1}\frac{\*{\tilde s}_{t,i[j]}}{\langle\*{\tilde s}_{t,i},\*W_{[i]}\rangle^2}\right)^2}\\
		&=\left(\sum_{i'=1}^m\frac1{\langle\*s_{t,i'},\*W_{[i']}\rangle}\right)^{-1}\sqrt{\sum_{i=1}^m\sum_{j=1}^d\left(\frac{\*s_{t,i[j]}}{\langle\*s_{t,i},\*W_{[i]}\rangle^2}-\frac{\*{\tilde s}_{t,i[j]}}{\langle\*{\tilde s}_{t,i},\*W_{[i]}\rangle^2}\frac{\sum_{i'=1}^m\frac1{\langle\*s_{t,i'},\*W_{[i']}\rangle}}{\sum_{i'=1}^m\frac1{\langle\*{\tilde s}_{t,i'},\*W_{[i']}\rangle}}\right)^2}\\
		&=\left(\sum_{i'=1}^m\frac1{\langle\*s_{t,i'},\*W_{[i']}\rangle}\right)^{-1}\sqrt{\sum_{i=1}^m\sum_{j=1}^d\frac{\*s_{t,i[j]}^2}{\langle\*W_{t,i},\*W_{[i]}\rangle^4}\left(1-\frac{\langle\*g_{t,i},\*x_{[i]}\rangle^2}{\langle\*{\tilde s}_{t,i},\*W_{[i]}\rangle^2}\frac{\sum_{i'=1}^m\frac{\*{\tilde s}_{t,i[j]}}{\langle\*s_{t,i'},\*W_{[i']}\rangle}}{\sum_{i'=1}^m\frac{\*s_{t,i[j]}}{\langle\*{\tilde s}_{t,i'},\*W_{[i']}\rangle}}\right)^2}\\
		&\le\left(\sum_{i'=1}^m\frac1{\langle\*s_{t,i'},\*W_{[i']}\rangle}\right)^{-1}\sum_{i=1}^m\sum_{j=1}^d\frac{\*s_{t,i[j]}}{\langle\*s_{t,i},\*W_{[i]}\rangle^2}|1-\kappa_{i,j}|
		\le\frac d\gamma\max_{i,j}|1-\kappa_{i,j}|
		\end{split}
		\end{align}
		\fi
		where we have
		\ifdefined\arxiv\vspace{-2mm}
		\begin{align}
		\begin{split}
		\kappa_{i,j}
		=\frac{\langle\*s_{t,i},\*W_{[i]}\rangle^2}{\langle\*{\tilde s}_{t,i},\*x_{[i]}\rangle^2}\frac{\sum\limits_{i'=1}^m\frac{\*{\tilde s}_{t,i[j]}}{\langle\*s_{t,i'},\*W_{[i']}\rangle}}{\sum\limits_{i'=1}^m\frac{\*s_{t,i[j]}}{\langle\*{\tilde s}_{t,i'},\*W_{[i']}\rangle}}
		\in\frac{\langle\*s_{t,i},\*W_{[i]}\rangle^2}{\langle\*s_{t,i},\*W_{[i]}\rangle^2e^{\pm\varepsilon_i}}\frac{\sum\limits_{i'=1}^m\frac{\*s_{t,i[j]}e^{\pm\varepsilon_{i'}/2}}{\langle\*s_{t,i'},\*W_{[i']}\rangle}}{\sum\limits_{i'=1}^m\frac{\*s_{t,i[j]}}{\langle\*s_{t,i'},\*W_{[i']}\rangle e^{\pm\varepsilon_{i'}/2}}}
		=e^{\pm2\max_i\varepsilon_i}\ifdefined\arxiv\vspace{-3mm}\else\fi
		\end{split}
		\end{align}
		\else
		\begin{align}
		\begin{split}
		\kappa_{i,j}
		=\frac{\langle\*s_{t,i},\*W_{[i]}\rangle^2}{\langle\*{\tilde s}_{t,i},\*x_{[i]}\rangle^2}\frac{\sum\limits_{i'=1}^m\frac{\*{\tilde s}_{t,i[j]}}{\langle\*s_{t,i'},\*W_{[i']}\rangle}}{\sum\limits_{i'=1}^m\frac{\*s_{t,i[j]}}{\langle\*{\tilde s}_{t,i'},\*W_{[i']}\rangle}}
		\in\frac{\langle\*s_{t,i},\*W_{[i]}\rangle^2}{\langle\*s_{t,i},\*W_{[i]}\rangle^2\exp(\pm\varepsilon_i)}\frac{\sum\limits_{i'=1}^m\frac{\*s_{t,i[j]}\exp(\pm\frac{\varepsilon_{i'}}2)}{\langle\*s_{t,i'},\*W_{[i']}\rangle}}{\sum\limits_{i'=1}^m\frac{\*s_{t,i[j]}}{\langle\*s_{t,i'},\*W_{[i']}\rangle\exp(\pm\frac{\varepsilon_{i'}}2)}}
		=\exp(\pm2\max_i\varepsilon_i)\ifdefined\arxiv\vspace{-3mm}\else\fi
		\end{split}
		\end{align}
		\fi
		Substituting into the previous inequality and taking the minimum with the $\ell_1$ bound on the gradient of the losses from Lemma~\ref{lem:lipschitz} yields the result.
\end{proof}

\subsubsection{Settings of $\gamma$ and $d$ for Corollary~\ref{cor:quantile-regret}}
	
\begin{enumerate}[itemsep=1pt]
	\item $\lambda$-robust and discrete $\mu_{[i]}\in\F_{0,d}^{(\lambda)}$: $\gamma=\lambda$
	\item $\lambda$-robust and $V$-Lipschitz $\mu_{[i]}\in\F_{V,1}^{(\lambda)}$: $\gamma=\lambda$ and $d=\left\lceil\sqrt{\frac{V(b-a)^3}{\bar\psi}\sqrt{\left(1+\frac{\min\{1,\tilde\varepsilon_m\}}{\varepsilon'}\right)T}}\right\rceil$
	\item discrete $\mu_{[i]}\in\F_{0,d}$: $\gamma=\sqrt{md}\sqrt[4]{\frac{1+\min\{1,\tilde\varepsilon_m\}/\varepsilon'}T}$
	\item $V$-Lipschitz $\mu_{[i]}\in\F_{V,1}$:  $\gamma=\sqrt m\sqrt[4]{\frac{V(b-a)^3}{\bar\psi}}\sqrt[8]{\frac{1+\min\{1,\tilde\varepsilon_m\}/\varepsilon'}T}$ and \\ $d=\left\lceil\sqrt{\frac{V(b-a)^3}{\bar\psi}\sqrt{\left(1+\frac{\min\{1,\tilde\varepsilon_m\}}{\varepsilon'}\right)T}}\right\rceil$
\end{enumerate}

\begin{algorithm}[!t]
	\DontPrintSemicolon
	\KwIn{sorted unrepeated data $\*x\in(a,b)^n$, ordered quantiles $q_1,\dots,q_m\in(0,1)$,\\ priors $\mu_1,\dots,\mu_m:\R\mapsto\R_{\ge0}$, prior adaptation rule $r\in$\texttt{\{conditional,edge\}},\\ privacy parameters $\varepsilon_1,\dots,\varepsilon_m>0$, branching factor $K\ge2$}
	\SetKwProg{Method}{Method}{:}{}
	\SetKwFunction{quantile}{quantile}
	\tcp{runs single-quantile algorithm on datapoints $\*{\hat x}$}
	\Method{\quantile{$\*{\hat x}$, $q$, $\varepsilon$, $\mu$}}{
		\KwOut{$o\in(a,b)$ w.p. $\propto\exp(-\varepsilon\Gap_q(\*{\hat x},o)/2)\mu(o)$}
	}
	\SetKwFunction{recurse}{recurse}
	\Method{\recurse{$\*j$, $\underline q$, $\overline q$, $\hat a$, $\hat b$}}{
		\tcp{determines $K-1$ indices $\*i$ whose quantiles to compute at this node}
		\uIf{$|\*j|\ge K$}{
			$\*i\gets\begin{pmatrix}\*j_{[\lceil|\*j|/K\rceil]},\cdots,\*j_{[\lceil(K-1)|\*j|/K\rceil]}\end{pmatrix}$\\
		}
		\Else{
			$\*i\gets\*j$\\
		}
		\tcp{restricts dataset to the interval $(\hat a,\hat b)$}
		$\underline k_{\*i}\gets\min_{\*x_{[k]}>\hat a}k$\\
		$\overline k_{\*i}\gets\max_{\*x_{[k]}<\hat b}k$\\
		$\*{\hat x}_{\*i}\gets\begin{pmatrix}\*x_{[\underline k_{\*i}]},\cdots,\*x_{[\overline k_{\*i}]}\end{pmatrix}$\\
		\tcp{sets relative quantiles $\tilde q_i$ and restricts priors to the interval $[\hat a,\hat b]$}
		\For{$j=1,\dots,|\*i|$}{
			$\tilde q_{\*i_{[j]}}\gets(q_{\*i_{[j]}}-\underline q)/(\overline q-\underline q)$\\
			\uIf{$r=${ \em\texttt{conditional}}}{
				$\hat\mu_{\*i_{[j]}}(o)\gets\frac{\mu_{\*i_{[j]}}(o)}{\mu_{\*i_{[j]}}([\hat a,\hat b])}1_{o\in[\hat a,\hat b]}$\\
			}
			\Else{
				$\hat\mu_{\*i_{[j]}}(o)\gets\mu_{\*i_{[j]}}(o)1_{o\in(\hat a,\hat b)}+\mu_{\*i_{[j]}}((-\infty,\hat a])\delta(o-\hat a)+\mu_{\*i_{[j]}}([\hat b,\infty))\delta(o-\hat b)$\\
			}
		}
		\tcp{computes $K-1$ quantiles $\*o_{\*i}$ and sorts the results}
		$\*o_{\*i}\gets\begin{pmatrix}$\quantile{$\*{\hat x}_{\*i},\tilde q_{\*i_{[1]}},\varepsilon_{\*i_{[1]}}/|\*i|,\hat\mu_{\*i_{[1]}}$} , $\cdots$ , \quantile{$\*{\hat x}_{\*i},\tilde q_{\*i_{[|\*i|]}},\varepsilon_{\*i_{[|\*i|]}}/|\*i|,\hat\mu_{\*i_{[|\*i|]}}$}$\end{pmatrix}$\\
		$\*o_{\*i}\gets$\texttt{sort(}$\*o_{\*i}$\texttt{)}\\
		\tcp{recursively computes remaining indices on the $K$ intervals induced by $\*o_{\*i}$}
		\uIf{$|\*j|<K$}{
			$\*o\gets\*o_i$
		}
		\Else{
			$\*o\gets$ \texttt{concat(}\recurse{$\begin{pmatrix}\*j_{[1]},\cdots,\*j_{[\lceil|\*j|/K\rceil-1]}\end{pmatrix},\underline q,q_{\*i_{[1]}},\hat a,\*o_{[1]}$},$\begin{pmatrix}\*o_{[1]}\end{pmatrix}$\texttt{)}\\
			\For{$j=2,\dots,|\*i|$}{
				$\*o\gets$\texttt{concat(}$\*o,$ \recurse{$\begin{pmatrix}\*j_{[\lceil(j-1)|\*j|/K\rceil+1]},\cdots,\*j_{[\lceil j|\*j|/K\rceil-1]}\end{pmatrix},q_{\*i_{[j-1]}},q_{\*i_{[j]}},\*o_{[j-1]},\*o_{[j]}$}\texttt{)}\\
				$\*o\gets$\texttt{concat(}$\*o,\begin{pmatrix}\*o_{[j]}\end{pmatrix}$\texttt{)}\\
			}
			$\*o\gets$ \texttt{concat(}$\*o$, \recurse{$\begin{pmatrix}\*j_{[\lceil(K-1)|\*j|/K\rceil+1]},\cdots,\*j_{[|\*j|]}\end{pmatrix},q_{\*i_{[K-1]}},\overline q,\*o_{[K-1]},\hat b$}\texttt{)}\\
		}
		\KwOut{$\*o$}
	}
	\KwOut{\recurse{$\begin{pmatrix}1,\cdots,m\end{pmatrix},0,1,-\infty,\infty$}}
	\caption{\label{alg:quantiles}\texttt{ApproximateQuantiles} with predictions}
\end{algorithm}

\end{document}